\documentclass{CSML}
\pdfoutput=1

\usepackage{lastpage}

\lmcsheading{}{1--\pageref{LastPage}}{}{}%
{Dec.~30, 2015}{Feb.~27, 2018}{}

\usepackage{mathrsfs}
\usepackage[USenglish]{babel}
\usepackage{theoremref} 
\usepackage{extarrows}
\usepackage{color}
\usepackage{xspace}
\usetikzlibrary{arrows,shapes,automata,backgrounds,decorations}
\usepackage{graphicx}
\usepackage{url}
\usepackage{xifthen}
\usepackage{hyperref}
\hypersetup{hidelinks}
\usepackage{wasysym,amsmath,mathtools,amssymb}

\usepackage{DynamicCausality}


\title{Dynamic Causality in Event Structures}
\author[Y. Arbach]{Youssef Arbach}
\address{\vskip-6pt}
\author[D. Karcher]{David S. Karcher}
\address{\vskip-6pt}
\author[K. Peters]{Kirstin Peters}
\address{\vskip-6pt}
\author[U. Nestmann]{Uwe Nestmann}
\address{Technische Universit\"at Berlin, Germany}
\email{youssef.arbach@tu-berlin.de}
\email{david.s.karcher@tu-berlin.de}
\email{kirstin.peters@tu-berlin.de}
\email{uwe.nestmann@tu-berlin.de}

\thanks{Supported by the DFG Research Training Group SOAMED}	

\keywords{Event Structures, Causality, Dynamicity, Concurrency}
\subjclass{F Theory of Computation, F.1.1 Models of Computation}

\begin{document}

\begin{abstract}
	Event Structures (ESs) address the representation of direct relationships between individual events, usually capturing the notions of causality and conflict. Up to now, such relationships have been static, \ie they cannot change during a system run. Thus, the common ESs only model a static view on systems.
	We make causality dynamic by allowing causal dependencies between some events to be changed by occurrences of other events. We first model and study the case in which events may entail the \emph{removal} of causal dependencies, then we consider the \emph{addition} of causal dependencies, and finally we combine both approaches in the so-called \emph{Dynamic Causality ESs}. For all three newly defined types of ESs, we study their expressive power in comparison to the well-known \emph{Prime} ESs, \emph{Dual} ESs, \emph{Extended Bundle} ESs, and ESs for \emph{Resolvable Conflicts}. Interestingly, Dynamic Causality ESs subsume Extended Bundle ESs and Dual ESs but are incomparable with ESs for Resolvable Conflicts.
\end{abstract}

\maketitle

\section{Introduction}
\textit{Motivation.}
Modern process-aware systems emphasize the need for flexibility into their design to adapt to changes in their environment \cite{Reichert:EnhanceFlexibility}. One form of flexibility is the ability to change the work-flow during the run-time of the system deviating from the default path, due to changes in
regulations or to exceptions.
Such changes could be ad hoc or captured at the build time of the system
\cite{Reichert:ForwardBackwardJumps}. For instance---as taken from
\cite{Reichert:EnhanceFlexibility}---\emph{during the treatment process, and for a particular patient, a planned computer tomography must not be performed due to the fact that she has a cardiac pacemaker. Instead, an X-ray activity shall be performed.}
In this paper, we provide a formal model that can be used for such scenarios, showing what is the \emph{regular} execution path and what is the \emph{exceptional} one \cite{Reichert:ForwardBackwardJumps}. In the Conclusions section, we highlight the advantages of our model over other static-causality models \wrt such scenarios.

\vspace*{0.5em}
\noindent
\textit{Concurrency Model.} 
In many formalisms for concurrent systems, as \eg in Petri nets or different process calculi, causality is a derived concept (\cite{Baldan2004129,VaraccaExtrusion}). As a consequence, adding dynamicity to the causality relation usually requires the duplication of modeled actions, \ie, transitions in Petri nets or parts of processes in process calculi. Event Structures (ESs) model causality more directly. They usually address statically defined relationships that constrain the possible occurrences of events, typically represented as \emph{causality} (for precedence) and \emph{conflict} (for choice). An event is a single occurrence of an action; it cannot be repeated. ESs were first used to give semantics to Petri nets \cite{Winskel:Thesis}, then to process calculi \cite{flowES,Langerak:Thesis}, and recently to model quantum strategies and games \cite{QuantumES}. The semantics of an ES itself is usually provided by the sets of traces compatible with the constraints, or by configuration-based sets of events, possibly in their partially-ordered variant (\emph{posets}).

\vspace*{0.5em}
\noindent
\textit{Overview.}
We study the idea---motivated by application scenarios---of events changing the causal dependencies of other events. 
In order to deal with dynamicity in causality, usually 
duplications of events are used (see \eg\cite{VaraccaExtrusion},
where copies of the same event have the same label, but different
dependencies). In this paper, we want to express dynamic changes of causality more directly without duplications. We allow dependencies to change  during a system run, by modifying the causality itself. In this way, we avoid duplications of events, and keep the model simple and more intuitive.
We separate the idea of dropping (shrinking) causality from adding (growing) causality and study each one separately first, and then combine them into so-called Dynamic Causality ESs (DCESs). 

\vspace*{0.5em}
\noindent
\textit{Example.}
\begin{figure}
	\centering
		\begin{tikzpicture}[node distance=3cm,>=stealth']
			\node (plow) [circle,fill,minimum size=4pt,inner sep=0pt, label=left:plow]{};
			\node (water) [below of=plow,circle,fill,minimum size=4pt,inner sep=0pt, label=left:water]{};
			\node (plant) [right of=plow,circle,fill,minimum size=4pt,inner sep=0pt, label=above:plant]{};
			\node (harvest) [right of=plant,circle,fill,minimum size=4pt,inner sep=0pt, label=above:harvest]{};
			\draw [->,thick] (plow) to node {} (plant);
			\draw [->,thick] (water) to node {} (plant);
			\draw [->,thick] (plant) to node {} (harvest);
			\node (rain) [right of=water,circle,fill,minimum size=4pt,inner sep=0pt, label=right:rain]{};
			\draw[dropping]	(1.5,-1.5) -- (rain);
			\node (pestin) [right of=harvest,circle,fill,minimum size=4pt,inner sep=0pt, label=right:pest infestation]{};
			\node (pestcon) [below of=pestin,circle,fill,minimum size=4pt,inner sep=0pt, label=right:pest control]{};
			
			\draw [->,dotted,thick] (pestcon) to node {} (harvest);
			\draw[adding]	(pestin) -- (7.5,-1.5);
		\end{tikzpicture}
 	\caption{An example Dynamic Causality ES (DCES). Events are represented as dots, dependencies as solid arrows. The fact that an event can insert or delete a dependency is represented by an arrow between the event and the dependency, with a filled arrowhead for insertion and an empty arrowhead for deletion. Initially absent dependencies, which may be added, are dotted.}
	\label{fig:DCESex}
\end{figure}
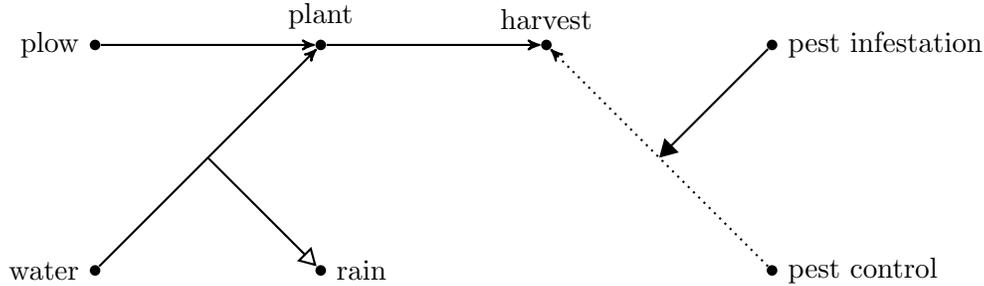
Figure \ref{fig:DCESex} presents an example DCES: In the regular work-flow, after plowing and watering, some crop can be planted and finally harvested. Exceptional behavior changes those dependencies: First, rain deletes the necessity of watering for planting. Second, a pest infestation inserts a new precondition|pest control|for harvesting. Note that pest control could also be done prophylactically, but it becomes mandatory after a pest infestation.

\vspace*{0.5em}
\noindent
\textit{Related Work.}
Kuske and Morin in \cite{localTraces} worked on local independence, using local traces.
There, actions can be independent from each other after a given history.
By contrast, in our work we provide a mechanism for independence of events through the growing and shrinking causality, while this related work abstracts from the way actions become independent.
In \cite{ResolvableConflict}, van Glabbeek and Plotkin introduced Event Structures for Resolvable Conflicts (RCESs), where conflicts can be resolved or created by the occurrence of other events. This dynamicity of conflicts is complementary to our approach.
As visualized in Figure~\ref{fig:landscape}, DCESs and RCESs are incomparable but---similarly to RCESs---DCESs are more expressive than many other types of ESs.
In \cite{DCRgraphs}, Hildebrandt \etal present a generalized version of ESs, DCR-Graphs, for similar purposes as discussed here. They can, \eg, express that a response is needed after an action. Causal dependencies, however, are static in their approach, \ie, the representation of dynamic changes---as present in systems with a regular execution path and some exceptional execution paths---requires the duplication of actions.

\vspace*{0.5em}
\noindent
\textit{Structure.}
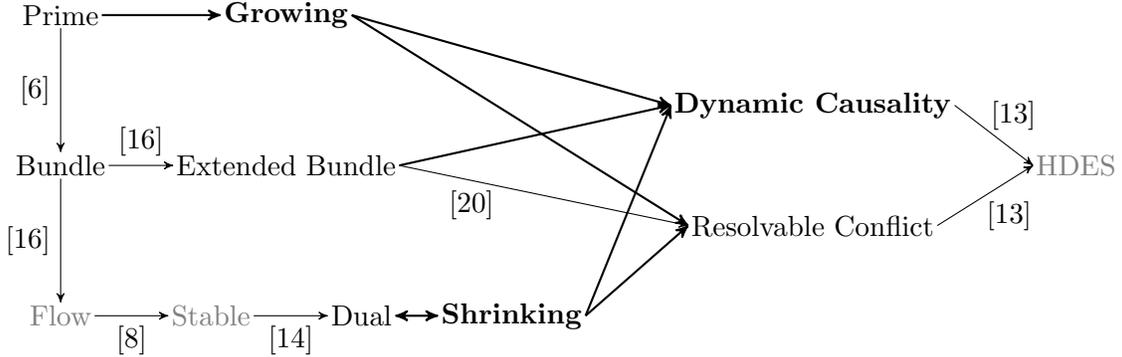
\begin{figure}
	\centering
	\begin{tikzpicture}[node distance=1.5cm,>=stealth']
		\node[inner sep=1pt] (CES) {Prime};
		\node[inner sep=1pt] (BES) [node distance=2cm,below of=CES] {Bundle};
		\node[inner sep=1pt] (EBES) [node distance=3cm, right of=BES] {Extended Bundle};
		\node[inner sep=1pt] (FLOW) [node distance=2cm, below of=BES] {\color{gray}Flow};
		\node[inner sep=1pt] (STABLE) [node distance=2cm, right of=FLOW] {\color{gray}Stable};
		\node[inner sep=1pt] (DES) [node distance=2cm, right of=STABLE] {Dual}; 
		\node[inner sep=1pt] (SES) [node distance=2cm, right of=DES] {\textbf{Shrinking}};
		\node[inner sep=1pt] (GES)[node distance=3cm, right of=CES] {\textbf{Growing}};
		\node (aux1) [node distance=7cm, right of=EBES] {};
		\node[inner sep=1pt] (RESCON) [node distance=0.8cm, below of=aux1] {Resolvable Conflict};
		\node[inner sep=1pt] (DCES) [node distance=0.8cm, above of=aux1] {\textbf{Dynamic Causality}};
		\node[inner sep=1pt] (HDES) [node distance=3.5cm, right of =aux1] {\color{gray}{HDES}};
		\draw [->, thick] (GES.east) to (DCES.west);	
		\draw [->, thick] (SES.east) to (DCES.west);	
		\draw [->] (CES) -- node [left] {\cite{FLOW}} ++ (BES);
		\draw [->] (BES) -- node [above] {\cite{Langerak:Thesis}} ++ (EBES);
		
		\draw [->] (BES) -- node [left] {\cite{Langerak:Thesis}} ++ (FLOW);
		\draw [->] (FLOW) -- node [below] {\cite{flowES}} ++ (STABLE);
		\draw [->] (STABLE) -- node [below] {\cite{Katoen:Thesis}} ++ (DES);
		\draw [->, thick] (SES.east) to (RESCON.west);
		\draw [->, thick] (EBES.east) to (DCES.west);
		\draw [->] (EBES.east) -- node [near start, below] {\cite{ResolvableConflict}} (RESCON.west);
		\draw [->, thick] (CES.east) to (GES.west);
		\draw [->, thick] (GES.east) to (RESCON.west);
		\draw [->] (DCES.east)  -- node [near end, above =4pt] {\cite{HDES}}  (HDES.west);
		\draw [->] (RESCON.east)  -- node [near end, below =4pt] {\cite{HDES}}  (HDES.west);
		\draw[<->, thick] (DES) to (SES);
	\end{tikzpicture}
	\vspace*{-1em}
	\caption{Landscape of Event Structures (newly defined ESs are bold)}
	\label{fig:landscape}
\end{figure}
In \S\ref{sec:techPre}, we define a relaxed version of \emph{Prime ESs} \cite{Winskel:IntroToES}, denoted as relaxed Prime ESs (rPESs), and recap the definitions for \emph{Bundle ESs} (BESs) \cite{Langerak:Thesis}, \emph{Extended Bundle ESs} (EBESs) \cite{Langerak:Thesis}, \emph{Dual ES} (DESs) \cite{Langerak97causalambiguity}, and \emph{ESs for Resolvable Conflicts} (RCESs) \cite{ResolvableConflict}.

In \S\ref{sec:Semantics}, we investigate the various kinds of semantic models that are used by the aforementioned ESs in order to compare them with respect to their expressive power.

In \S\ref{sec:SES}, we define \emph{Shrinking cau\-sa\-li\-ty Event Structures} (SESs); we show that SESs are strictly less expressive than RCESs, have the same expressive power as DESs, and thus are strictly more expressive than rPESs, PESs, BESs, \emph{Flow Event Structures} (FESs) \cite{flowES} and \emph{Stable Event Structures} (StESs) \cite{Winskel:IntroToES}, and are incomparable to EBESs.

In \S\ref{sec:GES}, we define \emph{Growing causality Event Structures} (GESs); we show that GESs are strictly less expressive than RCESs and strictly more expressive than rPESs and PESs.

In \S\ref{sec:DCES}, we combine both aforementioned concepts within the \emph{Dynamic Causality Event Structures} (DCESs) and show that they are strictly more expressive than EBESs, which are incomparable to SESs and GESs. 
Although RCESs are shown to be more expressive than GESs and SESs, they are incomparable with DCESs. To complete the picture, note that the \emph{set-based Higher order Dynamic causality Event Structures} (HDES) \cite{HDES} are then strictly more expressive than both RCESs and DCESs (\cf \S\ref{sec:HDES}).

We defer some of the more technical lemmata and some technical definitions to the Appendix.
The relations among the various classes of ESs are summarized in Figure~\ref{fig:landscape}, where an arrow from one class to another means that the first is less expressive than the second. Here, bold arrows indicate newly derived results, whereas thin arrows indicate results taken from literature and are augmented with the respective reference. The bold structures are newly defined in this paper and the gray ones are added to complete the picture but are discussed here only briefly.

In \S\ref{sec:conclusion}, we summarize the contributions and show the limitations of other static-causality models \wrt our example, and conclude with future work.

This paper is an extended version of \cite{dynamicCausality15}, with a special focus on the different type of used equivalences and an alternative, equivalent, and more intuitive transition definition for SESs and GESs. For the DCESs, we omit the old transition definition and only state the new one, which simplifies most proofs. We also omit the conflict relation from the definitions of the GESs and DCESs, as it can there be expressed as a derived concept.

\section{Technical Preliminaries}\label{sec:techPre}

We investigate the idea of dynamically evolving dependencies between events. Therefore, we want to allow for the occurrence of events to create new causal dependencies between events or to remove such dependencies. We base our extension on (a relaxed variant of) Prime ESs, because it represents a very simple causality model.
In the following, we shortly revisit the main definitions of the types of ESs from the literature that we compare with. ESs are sometimes augmented with labels for events, \eg, to relate several events and an action in another formalism (\cite{Langerak:Thesis, flowES}). Since our results are not influenced by the presence of labels, we restrict our attention to unlabeled ESs.

\subsection{Relaxed Prime Event Structures}

A Prime Event Structure (PES) \cite{Winskel:IntroToES} consists of a set of events and two relations describing conflicts and causal dependencies. To cover the intuition that events causally depending on an infinite number of other events can never occur, Winskel \cite{Winskel:IntroToES} requires PESs to satisfy the \emph{axiom of finite causes}.
Additionally, the enabling relation is assumed to be a partial order, \ie, it is reflexive, transitive, and anti-symmetric. Furthermore, the concept of \emph{conflict heredity} is required: an event~$a$ that is in conflict with another event~$b$ is also in conflict with all causal successors of~$b$.

If we allow causal dependencies to be added or dropped dynamically, it is hard to maintain the conflict heredity as well as transitivity and reflexivity of enabling.
Thus, we define \emph{relaxed Prime Event Structures} (rPES) where we omit the axiom of conflict heredity (as \eg in \cite{boudolCastellani88}) and do not require that enabling is a partial order.

\begin{defi} 	\label{def:rPES}\label{def:PES}
  A \emph{relaxed Prime Event Structure (rPES)} is a triple $ \pi = \left( E, \confOp, \enabOp \right) $, where
  \begin{itemize}
  \item $ E $ is a set of so-called \emph{events},
  \item $ \confOp \subseteq E^2 $ is an irreflexive symmetric relation (the \emph{conflict} relation), and
  \item $ \enabOpT \subseteq E^2 $ is the \emph{enabling} relation.
  \end{itemize}
  
\end{defi}

\noindent
Note that, in comparison to PES, we also omit the finite causes property; its intention will instead be provided through the constraint on finite configurations (see, \eg, Definition~\ref{def:SEStrans}).
Note that rPESs have the same expressive power as PESs in~\cite{Winskel:IntroToES} \wrt finite configurations, since the axiom of finite causes is trivially satisfied for finite configurations and on the other hand the concept of conflict heredity does not influence the expressive power, but only ensures that syntactic and semantic conflicts coincide.

The computation state of a process that is modeled as a rPES is represented by the set of events that have occurred.
Naturally, such sets are required to be consistent with the causality and conflict relations of the given rPES.

\begin{defi} 
  \label{def:rpesConfigs}\label{def:pesConfigs}
  Let $ \pi = \left( E, \confOp, \enabOp \right) $ be a rPES.\\
  A set of events $ C \subseteq E $ is a \emph{configuration} of $ \pi $ if it is
  \begin{itemize}
  \item \emph{conflict-free}, \ie,
    $ \forall e, e' \in C \logdot \neg \left( \conf{e}{e'} \right) $,
  \item \emph{downward-closed}, \ie,
    $ \forall e, e' \in E \logdot \enab{e}{e'} \land e' \in C
    \implies e \in C $, and 
  \item \emph{free of enabling cycles}, \ie, $\enabOp^\ast \cap \; C^2 $ is antisymmetric.
  \end{itemize}
  We denote the set of configurations of $ \pi $ by $ \configurations{\pi} $.\\
  For each $C\in\configurations{\pi}$, 
  we define the partial order 
  ${\leq_C}\mathrel{:=} {{(\enabOp^*)}\cap {C^2}}$.
  
\end{defi}

An event $ e $ is called \emph{impossible} in a rPES $\pi$ if it does not occur in any configuration of $\pi$. Events can be impossible because of (1)~enabling cycles, (2)~an overlap between the enabling and the conflict relation, or (3)~impossible predecessors.

In the relaxed version of the PESs, and in contrast to the original PES,  the following two properties do not hold due to impossible events and the lack of conflict heredity.

\begin{defi}[Fullness and faithfulness, \cite{DBLP:conf/icalp/Winskel82,DBLP:conf/litp/Boudol90}] \label{def:ESFullFaithful}
An ES is called \emph{full} if every event is possible, \ie, occurs in a configuration. 
An ES is called \emph{faithful} if the syntactic conflict relation $\confOp$ on possible events coincides with the semantic conflict relation $\confOp_s$ given by:
\[e \mathrel{\confOp_s} e' \iff \forall \text{ finite } C\logdot \{e,e'\}\not\subseteq C\]
\end{defi}

\begin{lem}\label{lma:rPESnotFullFaithful}
There are rPES that are not full or faithful.
\end{lem}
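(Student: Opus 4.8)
The plan is to establish both failures by exhibiting small explicit witnesses and checking them directly against Definition~\ref{def:rpesConfigs}, exploiting precisely the two features that separate rPESs from genuine PESs: the absence of a partial-order requirement on enabling and the absence of conflict heredity. Since this is an existence statement, there is no deep argument to construct, only two counterexamples to design and verify.

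For the failure of fullness I would take $\pi = (\{a,b\}, \confOp, \enabOp)$ with $\conf{a}{b}$ and $\enab{a}{b}$, so that enabling and conflict overlap on the pair $(a,b)$ --- this instantiates cause~(2) in the list of reasons an event can be impossible. The key step is to show that $b$ is impossible: any configuration $C \ni b$ must contain $a$ by downward-closure (since $\enab{a}{b}$), but then $\{a,b\} \subseteq C$ contradicts conflict-freeness (since $\conf{a}{b}$); hence $b$ lies in no configuration, while $\{a\}$ already witnesses that $a$ is possible, so $\pi$ is not full. An enabling cycle, e.g.\ $\enab{a}{b}$ together with $\enab{b}{a}$, gives an alternative witness via the antisymmetry clause (cause~(1)), and chaining an impossible event into a further enabling edge gives a third (cause~(3)); each is routine.

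For the failure of faithfulness I would take $\pi' = (\{a,b,c\}, \confOp, \enabOp)$ with $\conf{b}{c}$ and $\enab{c}{a}$, and nothing else. First I compute $\configurations{\pi'} = \{\emptyset, \{c\}, \{c,a\}, \{b\}\}$, which shows that $a$, $b$, and $c$ are each possible, so faithfulness is genuinely at stake on these events. The crucial observation is that the only configuration containing $a$ is $\{c,a\}$ --- downward-closure drags in $c$ whenever $a$ is present --- and this configuration excludes $b$ because $\conf{b}{c}$; symmetrically the only configuration containing $b$ is $\{b\}$, which excludes $a$. Hence $a$ and $b$ never co-occur in a finite configuration, so $a \mathrel{\confOp_s} b$ semantically, whereas $\neg(\conf{a}{b})$ syntactically. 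The two relations disagree on the possible events $a,b$, so $\pi'$ is not faithful. The disjoint union of $\pi$ and $\pi'$ then yields a single rPES failing both properties, should one prefer that reading of the statement.

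The step requiring the most care is the faithfulness witness. One direction of the equivalence always holds --- syntactic conflict forbids co-occurrence in every conflict-free configuration and therefore implies semantic conflict --- so non-faithfulness can arise \emph{only} from a semantic conflict that is not syntactic, which forces me to make two events incompatible indirectly rather than by a conflict edge. Here that incompatibility is routed through the enabling edge $\enab{c}{a}$ into the conflict pair $(b,c)$: requiring $a$ pulls in $c$, which then excludes $b$. I must also verify that $a$ and $b$ are each possible, since faithfulness quantifies only over possible events. The real obstacle is thus engineering events that are possible individually yet never jointly --- precisely the configuration-level shadow of the omitted conflict-heredity axiom, which is why this phenomenon cannot occur in ordinary PESs.
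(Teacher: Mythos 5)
Your proposal is correct and takes essentially the same approach as the paper: both proofs exhibit small explicit witnesses, one where an event is made impossible (the paper uses a three-event enabling cycle, you use an enabling--conflict overlap, both among the listed causes of impossibility) and one where downward-closure routes a required cause into a conflict pair so that two possible events never co-occur without a syntactic conflict edge (your faithfulness witness is the paper's example up to relabeling). All verification steps check out against Definition~\ref{def:rpesConfigs}.
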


\begin{proof}
For fullness, consider $\pi=(\{a,b,c\},\emptyset,\{(a,b),(b,c),(c,a)\})$, a rPES with three events, no conflicts, and a dependency cycle. In $\pi$, we only have the empty set as a configuration; thus, $\pi$ is not full.

Now for faithfulness, consider $\pi'=(\{a,b,c\},\{(a,c)\},\{(a,b)\})$, a rPES with three events, a conflict between $a$ and $c$, and where $b$ depends on $a$ (this is the rPES in Fig.~\ref{fig:GESconflict} on page~\pageref{fig:GESconflict}). There is no syntactic conflict between $b$ and $c$, but there is a semantic conflict because any configuration containing $b$ and $c$ should both contain $a$, to be downward-closed, and not contain $a$, to be conflict-free; thus, $\pi'$ is not faithfull.
\end{proof}

The set of causes of an event $e$, following the reflexive and transitive closure of the enabling relation backwards, is denoted by $\rtCauses{e}$. Moreover, if $ D $ is a configuration and $ d \in D $, then $ \rtCausesConf{d}{D} $ is the restriction of $ \rtCauses{d} $ to events in $ D $.

\begin{defi}[\cite{ILARIA}]
	\label{def:rtCauses}
	Let $\pi=(E,\confOp,\enabOp)$ be a rPES and $e\in E$ then
	\[ \rtCauses{e} := \Set{ e'\in E \mid e' \enabOp^{\ast} e }. \]
	Let $D\in\configurations{\pi}$ and $d\in D$ then 
	\[ \rtCausesConf{d}{D} := \Set{d'\in D\mid d' \leq_D d}. \]
\end{defi}

For rPESs, the following lemmas hold.

\begin{lem}[Primality, \cite{ILARIA}]
  \label{lem:primality}
  Let $\pi= (E,\confOp,\enabOp)$ be a rPES and $e$ be possible in $\pi$. Then:
  \begin{itemize}
  \item $\lceil{e}\rceil_C=\lceil{e}\rceil$ for all $C\in\configurations{\pi}$ such that $e$ in $C$,
  \item $\lceil{e}\rceil\in \configurations{\pi}$, and 
  \item $\lceil{e}\rceil$ is the minimal configuration in $\configurations{\pi}$ containing $e$.
  \end{itemize}
\end{lem}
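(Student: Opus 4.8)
The plan is to isolate a single containment fact and derive all three bullet points from it. The key claim I would establish first is: \emph{for every $C \in \configurations{\pi}$ with $e \in C$, we have $\lceil e \rceil \subseteq C$.} Unfolding the definitions, $\lceil e \rceil = \Set{e' \in E \mid e' \enabOp^* e}$, so this amounts to showing that downward-closure---which Definition~\ref{def:rpesConfigs} postulates only for single enabling steps---propagates along the reflexive-transitive closure $\enabOp^*$. I would prove this by induction on the length of an enabling chain witnessing $e' \enabOp^* e$, say $e' = e_0$ with $\enab{e_0}{e_1}, \ldots, \enab{e_{n-1}}{e_n}$ and $e_n = e$. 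The base case $n = 0$ is $e' = e \in C$; in the inductive step, $\enab{e_{n-1}}{e_n}$ together with $e_n \in C$ forces $e_{n-1} \in C$ by downward-closure, and the induction hypothesis applied to the shorter chain ending at $e_{n-1}$ yields $e' \in C$.

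Given this containment fact, the three items follow quickly. For the first bullet I would compute $\lceil e \rceil_C = \Set{d' \in C \mid d' \leq_C e} = \Set{d' \in C \mid d' \enabOp^* e} = \lceil e \rceil \cap C$, where $e \in C$ lets the side condition $(d', e) \in C^2$ collapse to $d' \in C$; the containment fact gives $\lceil e \rceil \subseteq C$, so $\lceil e \rceil \cap C = \lceil e \rceil$ and the two coincide. The third bullet is the containment fact restated: reflexivity of $\enabOp^*$ gives $e \in \lceil e \rceil$, and $\lceil e \rceil \subseteq C$ for every configuration $C$ containing $e$ is precisely minimality---provided we already know $\lceil e \rceil$ is itself a configuration.

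Establishing that remaining point, the second bullet $\lceil e \rceil \in \configurations{\pi}$, is where the hypothesis that $e$ is \emph{possible} enters. Possibility supplies a witness $C \in \configurations{\pi}$ with $e \in C$, whence $\lceil e \rceil \subseteq C$ by the containment fact. Conflict-freeness and freedom from enabling cycles are then inherited for free, since a subset of a conflict-free set is conflict-free and a sub-relation of the antisymmetric relation $\enabOp^* \cap C^2$ is again antisymmetric. Downward-closure of $\lceil e \rceil$ I would verify directly: if $\enab{e''}{e'}$ and $e' \in \lceil e \rceil$, then $e' \enabOp^* e$, so $e'' \enabOp^* e$ and hence $e'' \in \lceil e \rceil$. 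The only real subtlety is that, \emph{without} the possibility assumption, $\lceil e \rceil$ need not be a configuration at all---it could harbour a conflict or an enabling cycle---so the witness $C$ is indispensable; beyond that, the inductive lifting of downward-closure is the single step needing care but is otherwise routine.
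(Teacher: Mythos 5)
Your proof is correct and follows essentially the same route as the paper's (which is stated very tersely): establish that $\lceil e\rceil$ is contained in any configuration holding $e$, observe that $\lceil e\rceil_C$ is the restriction of $\lceil e\rceil$ to $C$, and let $\lceil e\rceil$ inherit conflict-freeness and acyclicity from a witness configuration while checking downward-closure directly. Your version is in fact more careful than the paper's, since you make explicit the inductive propagation of downward-closure along $\enabOp^*$, which the paper's one-line justification of the first bullet silently assumes.
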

\begin{proof}
  The property $\lceil{e}\rceil_C=\lceil{e}\rceil$ follows from the fact that 
  ${\leq_C}\mathrel{:=} {{(\enabOp^*)}\cap {C^2}}$,  
  while the property $\lceil{e}\rceil\in \configurations{\pi}$ follows from the fact that there is at least one configuration $C\in\configurations{\pi}$ such that $e\in C$, and that $\lceil{e}\rceil$ inherits the required  properties from $C$. Minimality is implied by the fact that $\lceil{e}\rceil_C=\lceil{e}\rceil$ for any $C\in\configurations{\pi}$, since any configuration $C'\subset \lceil{e}\rceil_C$ would miss some event $e'\in \lceil{e}\rceil$ and thus would fail to satisfy $\lceil{e}\rceil_{C'}=\lceil{e}\rceil$.
\end{proof}
\begin{lem}[Stability, \cite{ILARIA}]
	\label{lem:stability}
	Let $\pi= (E,\confOp,\enabOp)$ be a rPES and 
        let $C_1,C_2\in\configurations{\pi}$ such that $(C_1\cup C_2)\in\configurations{\pi}$. 
        Then: 
        \begin{itemize}
        \item $(C_1\cap C_2)\in\configurations{\pi}$, and
        \item for all $e\in (C_1\cap C_2)$, 
          we have $\lceil{e}\rceil_{C_1\cap C_2}=\lceil{e}\rceil_{C_1}=\lceil{e}\rceil_{C_2}$.
        \end{itemize}

\end{lem}
\begin{proof}
	Assume $C_1,C_2, (C_1\cup C_2)\in\configurations{\pi}$. We start by showing that also $(C_1\cap C_2)\in\configurations{\pi}$. The properties that $(C_1\cap C_2)$ is conflict-free and ${\enabOp}\cap{(C_1\cap C_2)^2}$ is acyclic  follow from the analogue properties of $(C_1\cup C_2)$, while the downward-closure of $(C_1\cap C_2)$ follows  from the downward-closure of both $C_1$ and $C_2$. Therefore $(C_1\cap C_2)\in\configurations{\pi}$. Now, we can use primality to conclude  that $\lceil{e}\rceil_{C_1\cap C_2}=\lceil{e}\rceil_{C_1}=\lceil{e}\rceil_{C_2}$ for any $e\in (C_1\cap C_2)$.
\end{proof}

\subsection{Bundle, Extended Bundle and Dual Event Structures}
\label{sec:BES}
\label{sec:EBES}
\label{sec:DES}

The rPESs are simple but also limited. They do not allow one to describe an optional or conditional enabling of events. Bundle event structures (BESs)---among others---were designed to overcome these limitations \cite{Langerak:Thesis}.
\emph{Bundles} are pairs $ \left( X, e \right) $, denoted as $\buEn{X}{e}$, where $X$ is a set of events and $e$ is the event pointed by that bundle.
A bundle is satisfied when one event of $ X $ occurs. An event is enabled when all bundles pointing to it are satisfied. This \emph{disjunctive causality} allows for optionality in enabling events.

\begin{defi}[\cite{Langerak:Thesis}]
	\label{def:BES} 
	A \emph{Bundle Event Structure (BES)} is a triple $ \beta = \left( E, \confOp, \buEnOp \right) $, where
        \begin{itemize}
        \item $ E $ is a set of \emph{events}, 
        \item $ \confOp \subseteq E^2 $ is an irreflexive symmetric relation (the \emph{conflict} relation), and 
        \item $ \buEnOpT \subseteq {\Powerset{E} \times E} $ is the
          \emph{enabling} relation, satisfying the following \emph{stability} condition: \\
          $ \buEn{X}{e} $ implies that for all
          $ e_1, e_2 \in X $ with $ e_1 \neq e_2 $ we have
          $ \conf{e_1}{e_2} $.
        \end{itemize}

\end{defi}

\noindent
Figure~\ref{fig:exampleBES}(a) shows an example of a BES. The solid arrows denote causality, \ie, reflect the enabling relation, the line between the arrows indicates a bundle, and the dashed line denotes a mutual conflict.

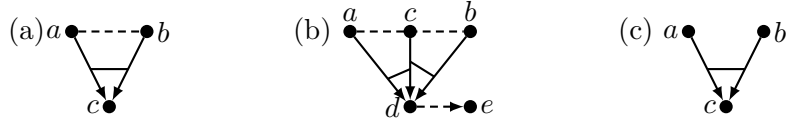
\begin{figure}[t]
	\centering
	\begin{tikzpicture}[bend angle=30]
		\event{b2}{0.3}{1.5}{left}{$ a $};
		\event{b3}{1.3}{1.5}{right}{$ b $};
		\event{b4}{0.8}{0.5}{left}{$ c $};
		\draw[enablingPES] (b2) edge (b4);
		\draw[enablingPES] (b3) edge (b4);
		\draw[thick] (0.55, 1) -- (1.05, 1);
		\draw[conflictPES] (b2) edge (b3);
		\node (a) at (-0.3, 1.5) {(a)};
		\event{a1}{4}{1.5}{above}{$ a $};
		\event{a2}{5.6}{1.5}{above}{$ b $};
		\event{a3}{4.8}{1.5}{above}{$ c $};
		\event{a4}{4.8}{0.5}{left}{$ d $};
		\event{a5}{5.6}{0.5}{right}{$ e $};
		\draw[enablingPES] (a1) edge (a4);
		\draw[enablingPES] (a2) edge (a4);
		\draw[enablingPES] (a3) edge (a4);
		\draw[thick]	(4.5, 0.87) -- (4.8, 1);
		\draw[thick]	(4.8, 1.1) -- (5.12, 0.9);
		\draw[conflictPES] (a1) edge (a2);
		\draw[conflictEBES] (a4) edge (a5);
		\node (c) at (3.5, 1.5) {(b)};
		\event{d1}{8.5}{1.5}{left}{$ a $};
		\event{d2}{9.5}{1.5}{right}{$ b $};
		\event{d3}{9}{0.5}{left}{$ c $};
		\draw[enablingPES] (d1) edge (d3);
		\draw[enablingPES] (d2) edge (d3);
		\draw[thick] (8.75, 1) -- (9.25, 1);
		\node (d) at (7.8, 1.5) {(c)};
	\end{tikzpicture}
	\vspace*{-1em}
	\caption{A Bundle ES, an Extended Bundle ES, and a Dual ES.}
	\label{fig:exampleBES}
\end{figure}

A configuration of a BES is again a conflict-free set of events that is
downward-closed. Therefore, the stability condition avoids causal
ambiguity \cite{Langerak97causalambiguity}. To exclude sets of events that result from enabling cycles, we use (event) traces (called \emph{proving sequences} in \cite{Langerak:Thesis}).
For a sequence $ t = e_1 \cdots e_n $ of events, let $ \overline{t} = \Set{
  e_1, \ldots, e_n } $ and $ t_i = e_1 \cdots e_i $ for
$1\leq{i}\leq{n}$. Let $ \epsilon $ denote the empty sequence.
Let $\B{e}\deff\Set{ X \subseteq E \mid \buEn{X}{e}}$.

\begin{defi}[\cite{Langerak:Thesis}]
  \label{def:BEStrace}
  Let $ \beta = \left( E, \confOp, \buEnOp \right) $ be a BES. \\
  A \emph{trace} is a sequence of distinct events $ t = e_1 \cdots e_n $ with $ \overline{t} \subseteq E $ that respects
  \begin{itemize}
  \item conflicts, \ie, $ \forall 1 \leq i, j \leq n \logdot \neg \left( \conf{e_i}{e_j} \right) $, and 
  \item bundle satisfaction, 
    \ie,
    $ \forall 1 \leq i \leq n \logdot 
    \forall X\in\B{e_i} \logdot
    \overline{t_{i - 1}} \cap X \neq \emptyset $.
  \end{itemize}
  A set of events $ C \subseteq E $ is a \emph{configuration} of $ \beta $ if there is a trace $ t $ such that $ C = \overline{t} $.
\end{defi}

\noindent
This trace-based definition of a configuration will be the same for Extended Bundle and Dual ESs.
Let $ \traces{\beta} $ denote the set of traces and $ \configurations{\beta} $ the set of configurations of $ \beta $.

A partially ordered set, or \emph{poset}, is a pair $ \left( A, \leq \right) $, where $ A $ is a finite set and $ \leq $ is a \emph{partial order} over $ A $.
Posets are used as a semantic model for several kinds of ESs and also other models of concurrency \cite{posetsForConfigurations}. 
For example, and in contrast to mere configurations, if $A$ is a set of \emph{events}, then the poset $ \left( A, \leq \right) $ does not only record which events have happened, but the order $\leq$ also captures their precedence relations.

A poset represents a set of system runs, differing for permutation of independent events. To describe the semantics of the entire ES, families of posets \cite{posetsForConfigurations} with a prefix relation are used. According to Rensink~\cite{posetsForConfigurations}, families of posets provide a convenient underlying structure for models of concurrency, and are at least as expressive as families of configurations.

To obtain the posets of a BES, we augment each of its configurations with a partial order.
Let $ \beta = \left( E, \confOp, \buEnOp \right) $ be a BES, $ C \in \configurations{\beta} $, and $ e,e' \in C $. Then $ e \prec_C e' $ holds if $ \exists X \subseteq E \logdot e \in X \land \buEn{X}{e'} $.
Let $ \leq_C $ be the reflexive and transitive closure of $ \prec_C $.
It is proved in \cite{Langerak:Thesis} that $ \leq_{C} $ is a partial order over $ C $. Let $\posets{\beta}$ denote the set of posets of $ \beta $.
Each linearization (obeying the defined precedence relations) of a given poset of a BES (or EBES) yields an event \emph{trace} of that structure (\cite{Langerak:Thesis}).

\vspace{0.5em}
\noindent
The first extension of BESs that we consider are \emph{Extended Bundle Event Structures} (EBESs) from \cite{Langerak:Thesis}. The conflict relation $\#$ is replaced by  a \emph{disabling} relation. An event $ e_1 $ disables another event $ e_2 $, meaning that the occurrence of $ e_1 $ precludes any subsequent occurrence of  $ e_2 $ afterwards.
The symmetric conflict $ \# $ can be modeled through mutual disabling.  
Therefore, EBESs are a generalization of BESs, and thus are more expressive \cite{Langerak:Thesis}.

\begin{defi}[\cite{Langerak:Thesis}]
	\label{def:EBES} 
	An \emph{Extended Bundle Event Structure (EBES)} is a triple $ \xi = \left( E, \disaOp, \buEnOp \right) $, where
        \begin{itemize}
        \item $ E $ is a set of \emph{events}, 
        \item $ \disaOpT \subseteq E^2 $ is
          the irreflexive \emph{disabling} relation, and
        \item $ \buEnOpT \subseteq \Powerset{E} \times E $ is the
          \emph{enabling} relation satisfying the following \emph{stability} condition: \\
          $ \buEn{X}{e} $ implies that for all
          $ e_1, e_2 \in X $ with $ e_1 \neq e_2 $ we have
          $ \disa{e_1}{e_2} $.
        \end{itemize}

\end{defi}

\noindent
Stability again ensures that two distinct events within a bundle set are in mutual disabling.
Figure~\ref{fig:exampleBES}(b) shows an EBES with the two bundles $ \buEn{\Set{ a, c }}{d} $ and $ \buEn{\Set{ b, c }}{d} $.
The dashed lines denote again mutual disabling as required by stability. A disabling $ \disa{d}{e} $, to be read ``$ e $ disables $ d $'', is represented by a dashed arrow.

\begin{defi} [\cite{Langerak:Thesis}]
  \label{def:EBESconf}
  Let $ \xi = \left( E, \disaOp, \buEnOp \right) $ be an EBES.\\
  A \emph{trace} is a sequence of distinct events $ t = e_1 \cdots
  e_n $ with $ \overline{t} \subseteq E $ that respects
  \begin{itemize}
  \item disabling, \ie,
    $ \forall 1 \leq i, j \leq n \logdot \disa{e_i}{e_j} \implies i <
    j $, and 
  \item bundle satisfaction, 
    \ie,
    $ \forall 1 \leq i \leq n \logdot 
    \forall X\in\B{e_i} \logdot
    \overline{t_{i - 1}} \cap X \neq \emptyset $.
        \end{itemize}

\end{defi}

We adapt the definitions of configurations and traces of BESs accordingly.
For $ C \in \configurations{\xi} $ and $ e, e' \in C $, let $e\prec_C e'$ 
if $ \disa{e}{e'} $ or
$ \exists X\in\B{e'} \logdot e \in X $.
Again $ \leq_C $ denotes the reflexive and transitive closure of $ \prec_C $, and $\posets{\xi}$ denotes the set of posets of $\xi$.

\vspace{0.5em}
\noindent
\emph{Dual Event Structures} (DESs) are obtained
by dropping the stability condition of BESs. This leads to causal ambiguity, \ie,
given a trace and one of its events, it is not always possible to determine
which other events caused this event.
The definition of DESs differs between \cite{Katoen:Thesis} (based on EBESs) and \cite{Langerak97causalambiguity} (based on BESs).
Here, we rely on~\cite{Langerak97causalambiguity}.

\begin{defi} [\cite{Langerak97causalambiguity}]
	\label{def:DES}
	A \emph{Dual Event Structure (DES)} is a triple $ \delta = \left( E, \confOp, \buEnOp \right) $, where
        \begin{itemize}
        \item $ E $ is a set of \emph{events}, 
        \item $ \confOp \subseteq E^2 $ is an irreflexive symmetric relation (the \emph{conflict} relation), and 
        \item $ \buEnOpT \subseteq \Powerset{E} \times E $ is the \emph{enabling} relation.
        \end{itemize}
\end{defi}

\noindent
Figure~\ref{fig:exampleBES}(c) shows a DES with one bundle, namely $ \buEn{\Set{ a, b}}{c} $. It is a relaxed version of the BES in Figure~\ref{fig:exampleBES}(a) since the stability condition (which enforced a conflict between $a$ and $b$) is dropped, and there is no conflict.

The definitions of traces and configurations are relaxed accordingly.
\begin{defi}[\cite{Langerak97causalambiguity}]
  \label{def:DESconf}
  Let $ \delta = \left( E, \confOp, \buEnOp \right) $ be a DES. \\
  A \emph{trace} is a sequence of distinct events $ t = e_1 \cdots e_n $ 
  with $ \overline{t} \subseteq E $ that respects
  \begin{itemize}
  \item conflicts, \ie,
    $ \forall 1 \leq i, j \leq n \logdot \neg \left( \conf{e_i}{e_j}
    \right) $, and
  \item bundle satisfaction, 
    \ie,
    $ \forall 1 \leq i \leq n \logdot 
    \forall X\in\B{e_i} \logdot
    \overline{t_{i - 1}} \cap X \neq \emptyset $.
  \end{itemize}
\end{defi}

Because of the causal ambiguity, the definition of $ \leq_C $ is difficult and the behavior of a DES \wrt a configuration cannot be described by a single poset anymore.
In \cite{Langerak97causalambiguity}, the authors illustrate various possible interpretations of causality. The authors defined five different intensional posets: liberal, bundle satisfaction, minimal, early and late posets. 
They show the equivalence of the behavioral semantics, and that the early causality and trace equivalence coincide. Thus, we concentrate on early causality in the following. The remaining intensional partial order semantics are discussed in Appendix \ref{app:partialOrderSemantics}. 

To capture causal ambiguity, we have to consider all traces of a configuration to obtain its posets.
In essence, the approach defines causes of events~$e$ in a given trace~$t$ as not necessarily uniquely defined sets~$U$ of events. 
Posets are then derived by $c \prec e$ for $c\in{U}$ for all events in the trace and their possible causes as reflexive-transitive closure of $\prec$.

\begin{defi}[\cite{Langerak97causalambiguity}]
  \label{def:DESposets}
  Let $ \delta = \left( E, \confOp, \buEnOp \right) $ be a DES, 
  let $ t = e_1 \cdots e_n $ be one of its traces.
  
  For two sets $ U_1,U_2 $ of events of $ t $, we refer to $ U_1 $ as \emph{earlier} than $ U_2 $, if the largest index in $ U_1 \setminus U_2 $ is smaller than the largest index in $ U_2 \setminus U_1 $.
  
  A set $U\subseteq{E}$ is a \emph{cause} of $ e_i $ in $ t $ (\ie, for $ 1 \leq i \leq n $), if 
  \begin{itemize}
  \item $ \forall e \in U \logdot \exists\, 1 \leq j < i \logdot e = e_j $, 
  \item $ \forall X\in\B{e_i} \logdot X \cap U \neq \emptyset $, and
  \item  $ U $ is the earliest set satisfying the previous two conditions.
  \end{itemize}
  Let $ \posetsEar{t} $ be the set of posets obtained this way for $ t $.
\end{defi}

Note that, 
for BESs, EBESs, and DESs, families of posets are the most discriminating semantics studied in the literature. So, in these cases, we consider two ESs as behaviorally equivalent if they have the same set of posets.

\begin{defi}
	\label{def:posetsEq}
	Let $ \mu_1,\mu_2 $ be either of BESs, EBESs, or DESs. They are called \emph{poset equivalent}, written $ \posetsEq{\mu_1}{\mu_2} $ if $ \mu_1 $ and $ \mu_2 $ have the same set of posets.
\end{defi}

\subsection{Event Structures for Resolvable Conflicts}

Event Structures for Resolvable Conflicts (RCES) were introduced in \cite{ResolvableConflict} to generalize former types of ESs and to give semantics to general Petri Nets. They allow us to model the case where $ a $ and $ c $ cannot occur together until $ b $ takes place, \ie, initially $ a $ and $ c $ are in conflict, and they stay in conflict until the occurrence of $ b $ resolves it.
An RCES consists of a set of events and an \emph{enabling} relation between sets of events. The latter serves to provide witnesses for transitions between configurations $X$ and $Y$: for each subset $Z$ of  configuration $Y$, there must be a witnessing subset $W$ of the preceding configuration $X$, where $W$ enables $Z$. The enabling relation also allows us to implicitly model conflicts between events, as we will see in the example below.

\begin{defi} [\cite{ResolvableConflict}]
	\label{def:RCES}
	An \emph{Event Structure for Resolvable Conflicts (RCES)} is a pair $ \rho = \left( E, \enrcD \right) $, where $ E $ is a set of \emph{events} and $ {\enrcOp} \subseteq \Powerset{E}^2 $ is the \emph{enabling relation}.
\end{defi}

In \cite{ResolvableConflict}, several versions of configurations are defined. Here, we consider only configurations that are both reachable and finite.

\begin{defi} [\cite{ResolvableConflict}]
  \label{def:ResConfTrans}
  \label{def:ResConfConf}
  Let $ \rho = \left( E, \enrcOp \right) $ be an RCES and $ X, Y \subseteq E $.
  Then:
  \begin{align*}
    \transRC[\rho]{X}{Y} \iff( X \subset Y \land \forall Z \subseteq Y \logdot \exists W \subseteq 	X \logdot W \vdash Z )
  \end{align*}
  If no confusion is possible, we sometimes omit the subscript~$\rho$.
  
  The set of \emph{(reachable) configurations} of $ \rho $ is defined as 
  \begin{displaymath}
    \configurations{\rho} = 
    \Set{ X \subseteq E \mid
      \emptyset \mathrel{\transRCOp[\rho]^*} X 
      \land X \text{ is finite} } 
\end{displaymath}
  where $ \transRCOp[\rho]^* $ is the reflexive and transitive closure of $ \transRCOp[\rho] $. 
\end{defi}

\noindent
Note the difference between $ \subset $ and $ \subseteq $ in the above definition.

As an example, consider the RCES $ \rho = \left( E, \enrcOp \right) $, where $ E = \Set{ a, b, c } $, with $ \enrc{\Set{ b }}{\Set{ a, c }} $, and with $ \enrc{\emptyset}{X} $ iff $ X \subseteq E $ and $ X \neq \Set{ a, c } $.
It models the initial conflict between $ a $ and $ c $ that can be resolved by $ b $.
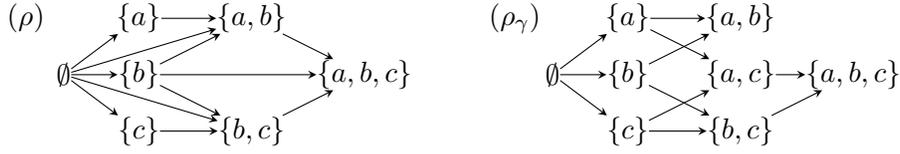
\begin{figure}[t]
	\centering
	\begin{tikzpicture}
		\node[config] (c1) at (0, 0.75) {$ \emptyset $};
		\node(l1) at (-0.5,1.5) {$(\rho)$};
		\node[config] (c2) at (1, 1.5) {$ \Set{ a } $};
		\node[config] (c4) at (1, 0.75) {$ \Set{ b } $};
		\node[config] (c3) at (1, 0) {$ \Set{ c } $};
		\node[config] (c5) at (2.5, 1.5) {$ \Set{ a, b } $};
		\node[config] (c6) at (2.5, 0) {$ \Set{ b, c } $};
		\node[config] (c7) at (4, 0.75) {$ \Set{ a, b, c } $};
		\draw[-stealth] (c1) -- (c2);
		\draw[-stealth] (c1) -- (c3);
		\draw[-stealth] (c1) -- (c4);
		\draw[-stealth] (c1) -- (c5);
		\draw[-stealth, bend right=45] (c1) -- (c6);
		\draw[-stealth] (c2) -- (c5);
		\draw[-stealth] (c3) -- (c6);
		\draw[-stealth] (c4) -- (c5);
		\draw[-stealth] (c4) -- (c6);
		\draw[-stealth] (c4) -- (c7);
		\draw[-stealth] (c5) -- (c7);
		\draw[-stealth] (c6) -- (c7);
		
		\node[config] (empt) at (6.5, 0.75) {$ \emptyset $};
		\node (l2) at (6,1.5) {$(\rho_\gamma)$};
		\node[config] (b) at (7.5, 0.75) {$ \Set{b} $};
		\node[config] (a) at (7.5, 1.5) {$ \Set{a} $};
		\node[config] (c) at (7.5, 0) {$ \Set{c} $};
		\node[config] (ab) at (9, 1.5) {$ \Set{a,b} $};
		\node[config] (ac) at (9, 0.75) {$ \Set{a,c} $};
		\node[config] (bc) at (9, 0) {$ \Set{b,c} $};
		\node[config] (abc) at (10.5, 0.75) {$ \Set{a, b, c} $};
		\draw[-stealth] (empt) -- (a);
		\draw[-stealth] (empt) -- (b);
		\draw[-stealth] (empt) -- (c);
		\draw[-stealth] (a) -- (ab);
		\draw[-stealth] (a) -- (ac);
		\draw[-stealth] (b) -- (ab);
		\draw[-stealth] (b) -- (bc);
		\draw[-stealth] (c) -- (ac);
		\draw[-stealth] (c) -- (bc);
		\draw[-stealth] (ac) -- (abc);
		\draw[-stealth] (bc) -- (abc);
 	\end{tikzpicture}
 	\vspace*{-1em}
 	\caption{Transition graphs of RCESs with resolvable conflict ($\rho$) and disabling ($\rho_\gamma$).}
 	\label{fig:transgraphs}
\end{figure}
In Figure~\ref{fig:transgraphs}($ \rho $), the respective transition graph is shown, \ie, the nodes are all reachable configurations of $ \rho $ and the directed edges represent $ \transRCOp[\rho] $. Note that, because of $\{a,c\}\subset\{a,b,c\}$ and $\emptyset \not\vdash \Set{ a, c }$, there is no transition from $\emptyset$ to $\{a,b,c\}$.

We consider two RCESs as equivalent if they have the same transition graphs. Note that, since we consider only \emph{reachable} configurations, the tran\-si\-tion equi\-va\-len\-ce defined below is denoted as \emph{reachable} transition equivalence in \cite{ResolvableConflict}.

\begin{defi} [\cite{ResolvableConflict}]\label{def:transeqrc}
  Two RCESs 
  $ \rho_1 = \left( E_1, \enrcOp_1 \right) $ and
  $ \rho_2 = \left( E_2, \enrcOp_2 \right) $ 
  are called \emph{(reachable) transition equivalent}, 
  written $ \transEq{\rho_1}{\rho_2}$, 
  if $ E_1 = E_2 $ 
  and ${\transRCOp[\rho_1] \cap \left( \configurations{\rho_1} \right)^2} 
  = {\transRCOp[\rho_2] \cap \left( \configurations{\rho_2} \right)^2} $.
\end{defi}

For RCESs, transition equivalence is the most discriminating sensible semantics studied in the literature, so we consider two RCESs \emph{behaviorally equivalent} if they have the same reachable transition graphs. 
In analogy with poset equivalence, we may lift this notion to compare arbitrary types of ESs that are equipped with a transition relation.

\begin{defi}
	\label{def:transEq}
	Let $ \mu_1,\mu_2 $ be of either type of ESs with transition relation. They are called \emph{transition equivalent}, written $ \transEq{\mu_1}{\mu_2} $, if they have the same reachable transition~graphs.
\end{defi}

\section{Semantics of Event Structures}
\label{sec:Semantics}

We observe that, due to the different expressive power of the aforementioned types of event structures, different kinds of semantic models are used to compare elements of the same type. Relaxed Prime ESs are compared \wrt\ finite configurations. BESs, EBESs, and DESs are compared \wrt\ families of posets. RCESs are compared \wrt\ their transition graphs. In order to build a hierarchy later on, we relate these semantical models. Fortunately, the three semantical models only grow with the expressive power of the respective ESs. In other words, these three semantical models form a hierarchy on their own with respect to their power to discriminate event structures.

\subsection{Posets}

Now, we naturally extend the definition of posets to rPESs. Let $ \pi = \left( E, \confOp, \enabOp \right) $ be a rPES, $ C \in \configurations{\pi} $, and $ e, e' \in C $. Then $ e \prec_C e' $ if $ \enab{e}{e'} $. Again $ \leq_C $ is the reflexive and transitive closure of $ \prec_C $. Then let $ \posets{\pi} $ denote the sets of posets of $ \pi $. By Definition~\ref{def:posetsEq}, $ \posetsEq{\pi_1}{\pi_2} $ tells us that $ \posets{\pi_1} = \posets{\pi_2} $. Note that for rPESs, there is no need to distinguish between different definitions of posets as done in \cite{Langerak97causalambiguity} for DESs. Accordingly, posets in rPESs are a direct consequence of the enabling relation. Since the effect of the enabling relation is already captured by the set of configurations, the consideration of posets does not add discriminating power in this case.

\begin{lem}
	\label{lem:confToPosets}
	Let $ \pi_1 = \left( E_1, \confOp_1, \enabOp_1 \right) $ and $ \pi_2 = \left( E_2, \confOp_2, \enabOp_2 \right) $ be rPESs.\\
	Then $ \configurations{\pi_1} = \configurations{\pi_2} \Longleftrightarrow \posetsEq{\pi_1}{\pi_2} $.
\end{lem}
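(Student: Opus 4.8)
The plan is to prove the two implications separately; the backward direction is immediate while the forward direction carries the content.

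For $\posetsEq{\pi_1}{\pi_2} \Rightarrow \configurations{\pi_1} = \configurations{\pi_2}$, I would note that each configuration $C$ determines exactly one partial order $\leq_C$, so the map $C \mapsto (C, \leq_C)$ is a bijection from $\configurations{\pi}$ onto $\posets{\pi}$ whose inverse is projection onto the carrier set. Hence $\posets{\pi_1} = \posets{\pi_2}$ yields $\configurations{\pi_1} = \configurations{\pi_2}$ by reading off the first component of every poset.

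For the converse, the key step is to recover $\leq_C$ from the set of configurations alone. Unfolding the definitions $\leq_C = (\enabOp^*) \cap C^2$ and $\rtCauses{e'} = \Set{e'' \in E \mid e'' \enabOp^{\ast} e'}$, I would first observe that for $e, e' \in C$ one has $e \leq_C e'$ iff $e \in \rtCauses{e'}$. I would then apply Lemma~\ref{lem:primality}: since every event occurring in a configuration is possible, $\rtCauses{e'} = \lceil{e'}\rceil$ is itself a configuration and is the minimal one containing $e'$. Being minimal, it is contained in every configuration containing $e'$ and is itself such a configuration, so $\lceil{e'}\rceil = \bigcap \Set{ D \in \configurations{\pi} \mid e' \in D }$. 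This expression depends only on $\configurations{\pi}$.

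Combining the two observations, assume $\configurations{\pi_1} = \configurations{\pi_2} =: \mathcal{C}$. For any $C \in \mathcal{C}$ and $e, e' \in C$, the relation $e \leq_C e'$ is equivalent, in either structure, to $e \in \bigcap \Set{ D \in \mathcal{C} \mid e' \in D }$. Thus the orders induced on $C$ by $\pi_1$ and $\pi_2$ coincide, and therefore $\posets{\pi_1} = \posets{\pi_2}$, i.e.\ $\posetsEq{\pi_1}{\pi_2}$. The main obstacle I anticipate is the characterization $\lceil{e'}\rceil = \bigcap \Set{ D \mid e' \in D }$, which is exactly where the relaxed setting (impossible events and a non--partial-order enabling relation) could cause trouble; primality is the tool that neutralizes it, since it guarantees that minimal configurations exist for possible events and agree across all configurations containing such an event.
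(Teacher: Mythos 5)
Your proof is correct and follows essentially the same route as the paper's: the forward direction is the same immediate projection argument, and the reverse direction hinges on Lemma~\ref{lem:primality} to show that $\lceil e'\rceil$ is determined by the configuration set alone. The only cosmetic difference is that you recover $\lceil e'\rceil$ explicitly as $\bigcap\Set{D\in\configurations{\pi}\mid e'\in D}$ and argue directly, whereas the paper phrases the same use of minimality as a proof by contradiction.
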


\begin{proof}[Proof \cite{ILARIA}]
	Let $ \leq_C^i $ be the reflexive and transitive closure of $ \prec_C $ \wrt\ $ \pi_i $. Thus, $ \leq_C^i $ is the reflexive and transitive closure of $ \enabOp_i $ restricted on events in $ C $.
	
	The implication $ \posetsEq{\pi_1}{\pi_2} \Longrightarrow \configurations{\pi_1} = \configurations{\pi_2} $ is straightforward, since, by Definition, $ P \in \posets{\pi_i} $ if and only if $ P = \left( C, \leq_C^i \right) $ for some $ C \in \configurations{\pi_i} $.
	
	We prove the reverse implication $ \configurations{\pi_1} = \configurations{\pi_2} \Longrightarrow \posetsEq{\pi_1}{\pi_2} $. Assume $ \configurations{\pi_1} = \configurations{\pi_2} $. Suppose that in some $ C \in \configurations{\pi_i} $, the two partial order relations $ \leq_C^i $ are different. Without loss of generality, we may assume that this is because for some $ e, e' \in C $, we have $ e \leq_C^1 e' $ and $ e \not\leq_C^2 e' $.
	Let $ \rtCausesConf{e'}{C}^i = \Set{ e'' \in C \mid e'' \leq_C^i e' } $ (see Definition~\ref{def:rtCauses}). Then $ e \in \rtCausesConf{e'}{C}^1 $ and $ e \notin \rtCausesConf{e'}{C}^2 $.
	Let now $ \rtCauses{e'}^1 = \Set{ e'' \in E_1 \mid e'' \enabOp_1^{*} e' } $. By Lemma~\ref{lem:primality} (primality), $ \rtCausesConf{e'}{C}^1 = \rtCauses{e'}^1 $ and $ \rtCauses{e'}^1 = \rtCausesConf{e'}{C'}^1 $ for any configuration $ C' \in \configurations{\pi_1} $ that contains $ e' $. This implies $ \rtCausesConf{e'}{C}^2 \notin \configurations{\pi_1} $, which contradicts the hypothesis $ \configurations{\pi_1} = \configurations{\pi_2} $.
\end{proof}

%

Since the definitions of posets coincide with respect to early, liberal, bundle satisfaction, minimal, and late causality, this result holds with respect to posets defined by either of these notions.
Also note that, as shown in \cite{Langerak97causalambiguity}, posets are more discriminating for DESs with respect to liberal, bundle satisfaction, minimal, and late causality; but coincide with sets of configurations for BESs and DESs, where posets are defined based on early causality.

\subsection{Transition Graphs}

For a transition-based ES with a few additional properties, there is a natural embedding into RCESs.

\begin{defi}
  \label{def:translationIntoRCES}
  Let $ \mu $ be an ES with event set $E$ and transition relation $ \transOp $ 
  that is
  \begin{itemize}
  \item \emph{strict}, \ie, $ \trans{X}{Y} $ implies $ X \subset Y $, and
  \item \emph{dense}, \ie, $ X' \subseteq X \subset Y \subseteq Y' $ implies that $ \trans{X'}{Y'} \implies \trans{X}{Y} $
  \end{itemize}
  for all configurations $ X, Y, X', Y' $ of $ \mu $.\\
  Then, $ \rces{\mu} := \left( E, \Set{ ({X},{Z}) \mid \exists Y \subseteq E \logdot \trans{X}{Y} \land Z \subseteq Y } \right) $.
\end{defi}


\noindent
By Definition~\ref{def:RCES}, the resulting structure $ \rces{\mu} $ is a RCES.

We show that $ \rces{\mu} $ is transition equivalent to $ \mu $.

\begin{lem}
	\label{lma:TransInRCES}
	Let $ \mu $ be an ES that is strict and dense according to Definition~\ref{def:translationIntoRCES}.\\ 
	Then, 
        $ {\transRCOp[\rces{\mu}]}={\transRCOp[\mu]}$.
\end{lem}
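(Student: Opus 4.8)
The plan is to prove the equality of relations $\transRCOp[\rces{\mu}] = \transRCOp[\mu]$ by establishing the biconditional $\transRC[\rces{\mu}]{X}{Y} \iff \trans{X}{Y}$ separately in each direction, for configurations $X,Y$ of $\mu$. Throughout I would unfold the RCES transition from Definition~\ref{def:ResConfTrans} and the enabling relation of $\rces{\mu}$ from Definition~\ref{def:translationIntoRCES}: recall that in $\rces{\mu}$ we have $\enrc{W}{Z}$ exactly when there is some $Y'$ with $\trans{W}{Y'}$ and $Z \subseteq Y'$.

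For the inclusion $\transRCOp[\mu] \subseteq \transRCOp[\rces{\mu}]$, I would assume $\trans{X}{Y}$. Strictness immediately yields $X \subset Y$, the first conjunct required by $\transRC[\rces{\mu}]{X}{Y}$. For the second conjunct I must exhibit, for every $Z \subseteq Y$, a witness $W \subseteq X$ with $\enrc{W}{Z}$; the natural choice is $W := X$, since the defining transition $\trans{X}{Y}$ together with $Z \subseteq Y$ witnesses $\enrc{X}{Z}$ (take $Y' := Y$). This direction uses only strictness and is essentially immediate.

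The converse inclusion $\transRCOp[\rces{\mu}] \subseteq \transRCOp[\mu]$ is where the real work lies, and it is the step that uses density. Assuming $\transRC[\rces{\mu}]{X}{Y}$, the key idea is to instantiate the universally quantified $Z$ at its largest admissible value $Z := Y$. This produces a witness $W \subseteq X$ with $\enrc{W}{Y}$, i.e.\ a configuration $Y'$ with $\trans{W}{Y'}$ and $Y \subseteq Y'$. Combining this with the first conjunct $X \subset Y$ of the RCES transition yields the chain $W \subseteq X \subset Y \subseteq Y'$, which is precisely the hypothesis of the density condition; applying density to $\trans{W}{Y'}$ then delivers $\trans{X}{Y}$, as desired.

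The main obstacle---and the only genuinely nontrivial step---is this backward direction: one has to spot that choosing $Z = Y$ (rather than a proper subset) is exactly what generates the four-set inclusion chain demanded by density, and one must check that all of $W, X, Y, Y'$ are configurations of $\mu$ so that density is applicable to them (here $W$ and $Y'$ are configurations because $\trans{W}{Y'}$ holds, and $X,Y$ are configurations by assumption). Once $Z = Y$ is fixed, the remainder is purely a matter of matching the quantifier structure of Definition~\ref{def:ResConfTrans} against the shape of the density axiom.
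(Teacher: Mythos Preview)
Your proposal is correct and follows essentially the same route as the paper: the forward direction uses strictness and the uniform witness $W := X$, while the backward direction instantiates $Z := Y$ to obtain $W \subseteq X \subset Y \subseteq Y'$ with $\trans{W}{Y'}$ and then invokes density. The paper's argument is identical up to renaming ($X'$ for your $W$); your extra remark about verifying that the four sets are configurations is a reasonable caution given how density is phrased in Definition~\ref{def:translationIntoRCES}, though the paper does not explicitly address this point.
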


\begin{proof}  
  Assume $ \trans[\mu]{X}{Y} $. Then, by Definition~\ref{def:translationIntoRCES}, $ X \subset Y $ and $ \enrc{X}{Z} $ for all $ Z \subseteq Y $. Then, by Definition~\ref{def:ResConfTrans}, also $ \transRC[\rces{\mu}]{X}{Y} $.
  
  Assume $ \transRC[\rces{\mu}]{X}{Y} $. Then, by Definition~\ref{def:ResConfTrans}, $ X \subset Y $ and there is $ X' \subseteq X $ such that $ \enrc{X'}{Y} $.
  By the construction of $ \rces{\cdot} $ in Definition~\ref{def:translationIntoRCES}, there is $ Y' $ such that $ \trans[\mu]{X'}{Y'} $ and $ Y \subseteq Y' $. Thus, $ X' \subseteq X \subset Y \subseteq Y' $ and $ \trans[\mu]{X'}{Y'} $. Then, as $\transOp[\mu]$ is dense, also $ \trans[\mu]{X}{Y} $.
\end{proof}

Next, we show that there is a natural way for|PESs, BESs, EBESs, and DESs|to derive a transition relation from the subset relation between posets.

\begin{defi}
	\label{def:translationIntoRCES2}
	Let $ \mu $ be a rPES, BES, EBES, or DES with event set $E$.\\
        We define the transition relation $ {\mapsto} \subseteq \Powerset{E}^2 $ as follows:\\
	$ \trans{X}{Y} $ if there are two posets $ \left( X, \leq_1 \right) $ and $ \left( Y, \leq_2 \right) $ of $ \mu $ such that $ X \subset Y $ and ${\leq_1}\subseteq{\leq_2}$.
\end{defi}

By construction, the above-defined transition relation is strict and dense as required in Definition~\ref{def:translationIntoRCES}.
Accordingly, we associate each rPES, BES, EBES and DES with a corresponding transition equivalent RCES using the Definitions~\ref{def:translationIntoRCES} and \ref{def:translationIntoRCES2}.

\begin{lem}
  \label{lem:posetsToTransitions}
  Let $ \mu_1 $ and $ \mu_2 $ be either of type rPES, BES, EBES, or DES, respectively.\\
  Then, $ \posetsEq{\mu_1}{\mu_2} \Longleftrightarrow \transEq{\mu_1}{\mu_2} $.
\end{lem}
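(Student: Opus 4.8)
The plan is to exploit the fact that, by Definition~\ref{def:translationIntoRCES2}, the transition relation $\transOp$ of a rPES, BES, EBES, or DES is defined \emph{solely} from its family of posets: $\trans{X}{Y}$ records that a poset $(X,\leq_1)$ of $\mu$ sits below a poset $(Y,\leq_2)$ of $\mu$ in the prefix order ($X\subset Y$ and $\leq_1\subseteq\leq_2$). The forward implication $\posetsEq{\mu_1}{\mu_2}\Rightarrow\transEq{\mu_1}{\mu_2}$ is then immediate: if $\posets{\mu_1}=\posets{\mu_2}$, the two transition relations coincide verbatim, and since the reachable configurations are exactly those reachable from $\emptyset$ along $\transOp$, the two reachable transition graphs coincide. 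Hence all the work is in the converse.

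For $\transEq{\mu_1}{\mu_2}\Rightarrow\posetsEq{\mu_1}{\mu_2}$ I would recover the posets from the transition graph, distinguishing the cases. For rPESs this is cheapest: every configuration is reachable (linearise it along $\leq_C$; each prefix is again a configuration and yields a single-event transition), so the node set of the graph equals $\configurations{\mu}$; equal graphs therefore give $\configurations{\mu_1}=\configurations{\mu_2}$, and Lemma~\ref{lem:confToPosets} upgrades this to $\posetsEq{\mu_1}{\mu_2}$. For BESs and EBESs each configuration $C$ carries a \emph{unique} poset $(C,\leq_C)$, and I would reconstruct $\leq_C$ pointwise from the transitions \emph{into} $C$: I would show that the sets $X$ with $\trans{X}{C}$ are exactly the downward-closed subsets of $(C,\leq_C)$, so that for distinct $e,e'\in C$ one has $e\leq_C e'$ iff every such $X$ containing $e'$ also contains $e$. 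This is a property of the graph alone, so equal graphs yield equal orders over equal configurations, that is, equal posets.

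The DES case is the main obstacle, because causal ambiguity makes a single configuration carry several posets, and the ideal characterisation above returns only the \emph{union} of their order ideals, which cannot be separated. Here I would instead route through traces: a sequence $e_1\cdots e_n$ is a trace of $\mu$ iff the configurations $\emptyset,\{e_1\},\{e_1,e_2\},\dots,\{e_1,\dots,e_n\}$ form a chain in which each step is a single-event transition of $\transOp$. The direction ``trace $\Rightarrow$ chain'' is routine, since the early cause of each $e_i$ is computed from its predecessors in $t$ only and is therefore insensitive to truncating $t$, so each prefix poset is a prefix of the next. The delicate direction is ``chain $\Rightarrow$ trace'': each singleton step $\trans{C_{i-1}}{C_i}$ is witnessed by possibly \emph{different} posets, and I must argue that $e_i$ being maximal and addable on top of the ideal $C_{i-1}$ forces the bundles of $e_i$ to be satisfied by $C_{i-1}$, so that the whole sequence meets the conflict and bundle-satisfaction conditions of Definition~\ref{def:DESconf} and is a genuine trace. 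Once traces are shown to be readable off the graph, equal graphs give equal trace sets, and the cited coincidence of early causality with trace equivalence \cite{Langerak97causalambiguity} yields $\posetsEq{\mu_1}{\mu_2}$ for the early-causality posets we use. The same trace/chain correspondence re-proves the BES and EBES cases uniformly if one prefers to avoid the ideal argument; the genuinely new difficulty relative to those is precisely the non-uniqueness of the DES poset, which is absorbed by the Langerak equivalence.
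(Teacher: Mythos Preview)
Your proposal is correct and, for the crucial DES case, follows the same route as the paper: recover the trace set from the transition graph and then invoke the coincidence of early-causality poset equivalence with trace equivalence from \cite{Langerak97causalambiguity}. The paper's proof is considerably terser---it simply asserts that transition equivalence yields ``the same configurations and traces'' and cites \cite{Langerak97causalambiguity}---whereas you actually spell out the trace/chain correspondence that justifies this step. In that sense your argument is more complete than the paper's own.

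Where you differ is in the treatment of the non-ambiguous cases. The paper handles rPES, BES, EBES and DES uniformly via the trace argument; you instead take shortcuts: Lemma~\ref{lem:confToPosets} for rPES (cleaner, and arguably more appropriate than citing \cite{Langerak97causalambiguity} for a structure that paper does not treat), and a direct reconstruction of $\leq_C$ from the order ideals $\{X\mid \trans{X}{C}\}$ for BES/EBES. The latter works because in BES/EBES the relation $\prec_C$ is defined structurally (from bundles and disabling) and hence restricts along sub-configurations, making the downward-closed subsets of $(C,\leq_C)$ exactly the sub-configurations $X$ with $\leq_X\subseteq\leq_C$; but it is more work than needed, as you yourself note. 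The uniform trace route you sketch at the end is closer to the paper and suffices for all four types.
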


\begin{proof}
  Assume $ \posetsEq{\mu_1}{\mu_2} $. Then, $ \mu_1 $ and $ \mu_2 $ have the same posets and thus the same subset relations between these posets. Then, by Definition~\ref{def:translationIntoRCES2}, $ \mu_1 $ and $ \mu_2 $ are associated with the same transition relation.
  Immediately, we get $ \transEq{\mu_1}{\mu_2} $.
  
  Assume $ \transEq{\mu_1}{\mu_2} $. Then, $ \mu_1 $ and $ \mu_2 $ have the same transitions between sets of configurations. By Definition~\ref{def:ResConfConf}, $ \mu_1 $ and $ \mu_2 $ then have the same configurations and traces. As shown in \cite{Langerak97causalambiguity}, this implies that $ \posetsEq{\mu_1}{\mu_2} $ for all ESs without causal ambiguity (\ie, for rPESs, BESs, EBESs, DESs, and for DESs with posets based on early causality).
\end{proof}

If posets are based on liberal, bundle satisfaction, minimal, or late causality, then the implication $ \posetsEq{\mu_1}{\mu_2} \implies \transEq{\mu_1}{\mu_2} $ still holds, because its proof does not depend on the kind of posets that are used.

\section{Shrinking Causality}
\label{sec:SES}

Now, we add a new relation that represents the removal of causal dependencies as a ternary relation between events $ \shrinkingCausality \subseteq E^3 $. For instance, $ \left( d, c, t \right) \in \shrinkingCausality $, also written as \drops{d}{c}{t}, models that the cause $ c $ is dropped from the set of causal predecessors of the target $ t $ by the occurrence of the so-called \emph{dropper} event $ d $. By definition, we ensure $d\notin\{c,t\}$; this is different \wrt\ the version in \cite{dynamicCausality15}.

The dropping is visualized in Figure~\ref{fig:SESExample} by an arrow with an empty head from the dependency $ \enab{c}{t} $ to its dropper $ d $.
We add this relation to rPESs and denote the result as \emph{shrinking-causality} event structures.

\begin{defi}
	\label{def:SES}
	A \emph{Shrinking Causality Event Structure (SES)} is a pair $ \sigma = \left( \pi, \shrinkingCausality \right) $, where $ \pi = \left( E, \confOp, \enabOp \right) $ is a rPES and $ \shrinkingCausality \subseteq E^3 $ is the \emph{shrinking causality} relation such that, for all $ c,d,t \in E $, \drops{d}{c}{t} implies $ \enab{c}{t} $ and $d\notin\{c,t\}$.
\end{defi}

\noindent
Sometimes we flatten $ \left( \pi, \shrinkingCausality \right) $ into $ \left( E, \#, \rightarrow, \shrinkingCausality \right) $.
For \drops{m}{c}{t}, we call $ m $ the \emph{modifier}, $ t $ the \emph{target}, and $ c $ the \emph{cause}. We denote the set of all modifiers dropping $ c \rightarrow t $ by \droppers{c}{t}.
We refer to the set of \emph{dropped causes} of an event \wrt\ a specific history by the function $ \dcD : \Powerset{E} \times E \to \Powerset{E} $ defined as: $ \dc{H}{e} = \Set{ e' \mid \exists d \in H \logdot \drops{d}{e'}{e} } $. 
We refer to the \emph{initial causes} of an event by the function $ \icD :E \to \Powerset{E} $ such that: $ \ic{e} = \Set{ e' \mid \enab{e'}{e} } $.

The semantics of a SES can be defined based on posets similar to BESs, EBESs, and DESs, or based on a transition relation similar to RCESs. We consider both.

\begin{defi}
  \label{def:SEStraceDef}
  \label{def:SESposets}
  \label{def:SEStrans}
  \label{def:SESconf}
  Let $ \sigma = \left( E, \confOp, \enabOp, \shrinkingCausality \right) $ be a SES.
  
  \begin{enumerate}
  \item
    
    A \emph{trace} of $ \sigma $ is a sequence of distinct events $ t = e_1 \cdots e_n $ with $ \overline{t} \subseteq E $ such that 
    \begin{itemize}
    \item $\forall 1 \leq i, j \leq n \logdot \neg \left( \conf{e_i}{e_j} \right)$ and
    \item $\forall 1 \leq i \leq n \logdot \left( \ic{e_i} \setminus \dc{\overline{t_{i - 1}}}{e_i} \right) \subseteq \overline{t_{i - 1}}$.
    \end{itemize}
    Then, $ C \subseteq E $ is a \emph{trace-based configuration} of $ \sigma $, if there is a trace $ t $ with $ C = \overline{t} $.
    $ \configTraces{\sigma} $ denotes the set of trace-based configurations, and $ \traces{\sigma} $ the set of traces of~$ \sigma $.
    
  \item 

    Let $ t = e_1 \cdots e_n \in \traces{\sigma} $ and $ 1 \leq i \leq n $. 
    A set $ U $ is a cause of $ e_i $ in $ t $ if 
    \begin{itemize}
    \item $\forall e \in U \logdot \exists 1 \leq j < i \logdot e = e_j $,
    \item $\left( \ic{e_i} \setminus \dc{\overline{t_{i - 1}}}{e_i} \right) \subseteq U$,  
      and
    \item $ U $ is the earliest set satisfying the previous two conditions.
    \end{itemize}
    Let $ \posetsSES{t} $ be the set of posets obtained this way for $ t $.
    
  \item
    
    Let $ X, Y \subseteq E $. 
    Then, $ \transS{X}{Y} $ iff 
    \begin{itemize}
    \item $X \subset Y$,
    \item $\forall e, e' \in Y \logdot \neg ( \conf{e}{e'} )$, and
    \item $\forall e \in Y{\setminus}X \logdot \left( \ic{e} \setminus \dc{X}{e} \right) \subseteq X$.
    \end{itemize}
    
  \item
    
    The set of all configurations of $ \sigma $ is 
    $ \configurations{\sigma} := \Set{ X \subseteq E \mid \emptyset \! \transSOp^* \! X \land X \text{ is finite} } $, \\
    where $ \transSOp^* $ is the reflexive and transitive closure of $ \transSOp $.

  \item

    For $X\subseteq Y\subseteq E$, we define the set of dropped dependencies as
    \[\dropped{X,Y}:=\left\{(c,t)\mid\exists d\in Y\setminus X\logdot \drops{d}{c}{t}  \right\}.\]

  \end{enumerate}
\end{defi}

\begin{figure}[t]
	\centering
	\begin{tikzpicture}
		\event{a1}{-.5}{0.75}{left}{$ c $};
		\event{a2}{.5}{0.75}{right}{$ t $};
		\event{a3}{-.5}{0}{left}{$ d $};
		\draw[enablingPES] (a1) edge (a2);
		\draw[dropping]	(0, 0.75) -- (a3);
	\end{tikzpicture}
	\vspace*{-1em}
	\caption{Dropper visualization in SESs.}
	\label{fig:SESExample}
\end{figure}
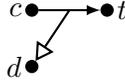

\noindent
We use the Definitions~\ref{def:posetsEq} and \ref{def:transEq} to obtain $ \posetsEq{}{} $ and $ \transEq{}{} $ for SESs.
The combination of initial and dropped causes ensures that for each $ e_i \in \overline{t} $, all its initial causes are either preceding $ e_i $ or dropped by other events preceding $ e_i $. Note that, as for DESs, we concentrate on early causality.
We consider the reachable and finite configurations \wrt $ \transSOp $ as well as configurations based on the traces. Note that both definitions coincide. (See Lemma \ref{lem:SESconf})

\subsection{An Alternative Transition Definition}

In the following, we provide an alternative transition definition for SESs. Thereafter, we prove the equivalence of both definitions.

\begin{defi}\label{def:SEStransalt}
  Let $ \sigma = \left( E, \confOp, \incaus, \shrinkingCausality \right) $ be a SES and
  $X,Y\subseteq E$.
  
  Then, $\transSa{(X,\caus{X})}{(Y,\caus{Y})}$ if 
  \begin{enumerate}
  \item $X\subset Y$, 
  \item $ \forall e, e' \in Y \logdot \neg ( \conf{e}{e'} ) $,
  \item
    \begin{enumerate}
    \item $ \forall e \in Y {\setminus} X \logdot \left\{e'\mid
        (e',e)\in{\caus{X}}\right\} \subseteq X $, and
    \item ${\caus{Y}} := {{\caus{X}}{\setminus}\dropped{X,Y}}$.
    \end{enumerate}
  \end{enumerate}

\end{defi}

\noindent
This definition stresses the \emph{state} of the causality relation due to dynamic shrinking: all of the current causality information is explicitly accessible, instead of hidden in the $ \dcD $ function. In order to prove the equivalence of both transition definitions, the following lemma provides an explicit, non-recursive definition of the current state of the causality relation.

\begin{lem}\label{lma:SEScaus}
	Let $ \sigma = \left( E, \confOp, \incaus, \shrinkingCausality \right) $ be a SES and $(X,\caus{X})$ be a state of $\sigma$.\\
	Then, ${\caus{X}} = {{\incaus}\setminus{\dropped{\emptyset,X}}}$.
\end{lem}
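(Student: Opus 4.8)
The plan is to prove the identity by induction on the number of steps in a transition sequence from the initial state $(\emptyset,\incaus)$ to $(X,\caus{X})$ under the relation of Definition~\ref{def:SEStransalt}. Here I read ``state of $\sigma$'' as a \emph{reachable} state, so that such a sequence exists; the induction is then available because the alternative transition relation updates the causality component deterministically via clause~(3b). The initial state pairs $\emptyset$ with the unmodified causality $\incaus$, which is the right anchor for the claim since no dropper has occurred yet. The point of the lemma is to replace the step-by-step recursion $\caus{Y}:=\caus{X}\setminus\dropped{X,Y}$ by a closed form depending only on the static data $\incaus$ and $\shrinkingCausality$.

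For the base case $X=\emptyset$, note that $\dropped{\emptyset,\emptyset}=\emptyset$, since there is no $d\in\emptyset\setminus\emptyset$ witnessing a drop; hence $\incaus\setminus\dropped{\emptyset,\emptyset}=\incaus=\caus{\emptyset}$, as required. For the inductive step, assume the identity holds for $(X,\caus{X})$ and consider a transition $\transSa{(X,\caus{X})}{(Y,\caus{Y})}$. By clause~(3b) of Definition~\ref{def:SEStransalt} we have $\caus{Y}=\caus{X}\setminus\dropped{X,Y}$, and applying the induction hypothesis yields $\caus{Y}=\incaus\setminus\bigl(\dropped{\emptyset,X}\cup\dropped{X,Y}\bigr)$.

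The only step requiring an actual argument, rather than unfolding definitions, is the additivity of $\dropped{\cdot,\cdot}$ along a transition, namely $\dropped{\emptyset,X}\cup\dropped{X,Y}=\dropped{\emptyset,Y}$. This follows from clause~(1), which gives $X\subset Y$, so that $Y$ is the disjoint union of $X$ and $Y\setminus X$. A pair $(c,t)$ lies in $\dropped{\emptyset,Y}$ exactly when some $d\in Y$ satisfies $\drops{d}{c}{t}$, and such a $d$ lies either in $X$ or in $Y\setminus X$; thus the existential quantifier over $d\in Y$ in the definition of $\dropped{\emptyset,Y}$ splits into the disjunction of the quantifiers over $d\in X$ (giving $\dropped{\emptyset,X}$) and $d\in Y\setminus X$ (giving $\dropped{X,Y}$). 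Substituting this decomposition into the expression from the previous paragraph gives $\caus{Y}=\incaus\setminus\dropped{\emptyset,Y}$, which closes the induction.

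I do not expect a genuine obstacle here: the result is essentially bookkeeping about how successive drops accumulate. The two subtleties worth stating explicitly are that the induction must range over reachable states (so that the transition sequence exists), and that the base case correctly records $\caus{\emptyset}=\incaus$; once these are fixed, the additivity of $\dropped{\cdot,\cdot}$ does all the work.
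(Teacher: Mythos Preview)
Your proof is correct and follows essentially the same approach as the paper's: both unfold the recursive update $\caus{Y}=\caus{X}\setminus\dropped{X,Y}$ along a transition sequence from $(\emptyset,\incaus)$ and then collapse the nested differences using the additivity $\dropped{\emptyset,X}\cup\dropped{X,Y}=\dropped{\emptyset,Y}$. The paper presents this as a single unrolling followed by an appeal to ``basic set theory'' and the definition of $\dropped{\cdot,\cdot}$, whereas you phrase it as an explicit induction and spell out the additivity argument; the content is the same.
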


\begin{proof}
  Let $\emptyset=:C_0,C_1,\dots,C_{n-1},C_n:=X$ be such that
  $\transSa{(C_i,\caus{C_i})}{(C_{i+1},\caus{C_{i+1}})}$ for all $1\leq{i}<{n}$.
  By Definition~\ref{def:SEStransalt}, we get 
  $ {\caus{X}} = (\dots(\incaus \setminus \dropped{C_0,C_1}) \setminus \dots ) \setminus \dropped{C_{n-1},X} $, and 
  by basic set theory and  Definition of \ref{def:SEStrans}(5), 
  we have $ {\caus{X}} = {\incaus} \setminus \dropped{\emptyset,X} $. 
\end{proof}

Now we prove that both transition definitions of SESs coincide.

\begin{thm}
	Let $ \sigma = \left( E, \confOp, \incaus, \shrinkingCausality \right) $ be a SES and $ X, Y \subseteq E $. Then \[\transS{X}{Y}\iff\transSa{(X,\caus{X})}{(Y,\caus{Y})}.\]
\end{thm}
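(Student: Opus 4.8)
The plan is to prove the biconditional by showing that the two transition definitions impose logically equivalent conditions. Both definitions share the same first two requirements, namely $X \subset Y$ and conflict-freeness of $Y$, so these can be set aside immediately; the entire content of the theorem lies in reconciling the third condition of Definition~\ref{def:SEStrans}(3), which speaks about $\ic{e} \setminus \dc{X}{e}$, with condition (3)(a) of Definition~\ref{def:SEStransalt}, which speaks about the set $\left\{e' \mid (e',e) \in \caus{X}\right\}$. The key tool is Lemma~\ref{lma:SEScaus}, which gives the explicit characterization ${\caus{X}} = {\incaus} \setminus \dropped{\emptyset,X}$, turning the recursively-defined causality state into something directly comparable with the $\icD$ and $\dcD$ functions.

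First I would unfold the definitions of the relevant functions. By definition, $\ic{e} = \Set{e' \mid \enab{e'}{e}}$, which is exactly the set of $e'$ with $(e',e) \in {\incaus}$. Similarly, $\dc{X}{e} = \Set{e' \mid \exists d \in X \logdot \drops{d}{e'}{e}}$ collects causes of $e$ dropped by some modifier already present in $X$. The goal is to establish the set identity
\[
\left\{e' \mid (e',e) \in \caus{X}\right\} = \ic{e} \setminus \dc{X}{e}
\]
for every $e \in Y \setminus X$, because once this holds the two third-conditions become literally the same statement $(\ic{e} \setminus \dc{X}{e}) \subseteq X$. Using Lemma~\ref{lma:SEScaus}, the left-hand side equals $\Set{e' \mid (e',e) \in {\incaus} \land (e',e) \notin \dropped{\emptyset,X}}$, so I would show that $(e',e) \in \dropped{\emptyset,X}$ is equivalent to $e' \in \dc{X}{e}$ (under the standing assumption $(e',e)\in{\incaus}$). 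This follows directly from the definition $\dropped{\emptyset,X} = \Set{(c,t) \mid \exists d \in X \logdot \drops{d}{c}{t}}$, which matches $\dc{X}{e}$ with $t = e$ and $c = e'$.

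Having established the set identity, the forward and backward implications collapse into a single symmetric argument: the first two conditions are shared verbatim, and the third conditions are provably equivalent via the identity above. For Definition~\ref{def:SEStransalt} there is the additional bookkeeping clause (3)(b), ${\caus{Y}} := {\caus{X}} \setminus \dropped{X,Y}$, but this merely defines the resulting causality state rather than imposing a constraint, and its consistency with $\caus{X} = {\incaus} \setminus \dropped{\emptyset,X}$ is exactly what Lemma~\ref{lma:SEScaus} guarantees (since $\dropped{\emptyset,X} \cup \dropped{X,Y} = \dropped{\emptyset,Y}$ by basic set theory on the difference sets). I expect the main obstacle to be a purely notational one: carefully tracking that the target coordinate is fixed to $e$ throughout and that the standing hypothesis $(e',e) \in {\incaus}$ is available (which is justified because $\shrinkingCausality$-triples only drop genuine dependencies, i.e., $\drops{d}{c}{t}$ implies $\enab{c}{t}$ by Definition~\ref{def:SES}). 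There is no genuine mathematical difficulty once Lemma~\ref{lma:SEScaus} is invoked; the theorem is essentially a repackaging of that lemma together with unfolding the definitions of $\icD$, $\dcD$, and $\droppedOp$.
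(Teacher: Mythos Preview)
Your proposal is correct and follows essentially the same approach as the paper's own proof: both isolate the shared conditions, reduce the remaining work to the set identity $\ic{e}\setminus\dc{X}{e}=\{e'\mid(e',e)\in\caus{X}\}$ via Lemma~\ref{lma:SEScaus}, and verify the update clause $\caus{Y}=\caus{X}\setminus\dropped{X,Y}$ by splitting $\dropped{\emptyset,Y}$ into $\dropped{\emptyset,X}\cup\dropped{X,Y}$. Your closing remark that $\drops{d}{c}{t}$ implies $\enab{c}{t}$ is true but not actually needed, since the equivalence $(e',e)\in\dropped{\emptyset,X}\iff e'\in\dc{X}{e}$ holds directly by unfolding the definitions without any side hypothesis.
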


\begin{proof}
  Since the first conditions in the transition definitions of the Definitions~\ref{def:SEStrans} and \ref{def:SEStransalt} coincide, we show $\forall e\in Y\setminus X\logdot (\left( \ic{e} \setminus \dc{X}{e} \right)= \left\{e'\mid (e',e)\in{\caus{X}}\right\} )$. This follows from definitions of $\dcD$ and $\icD$ and Lemma~\ref{lma:SEScaus}:
  \begin{align*}
    \ic{e} \setminus \dc{X}{e}&\\
    =&\left\{e'\mid (e',e)\in{\incaus}\right\}\setminus\left\{c\mid\exists d\in X\logdot\drops{d}{c}{e}\right\}\\%
    =&\left\{e'\mid (e',e)\in{\incaus} \wedge\, e'\notin \left\{c\mid\exists d\in X\logdot\drops{d}{c}{e}\right\} \right\}\\
    =&\left\{e'\mid (e',e)\in{\incaus} \wedge\, (e',e)\notin \left\{(c,e)\mid\exists d\in X\logdot\drops{d}{c}{e}\right\} \right\}\\
    =&\left\{e'\mid (e',e)\in{\incaus}\setminus\left\{(c,t)\mid\exists d\in X\logdot \drops{d}{c}{t}\right\}\right\}\\
    =&\left\{e'\mid (e',e)\in{\incaus}\setminus\dropped{\emptyset,X}\right\} \\
    =&\left\{e'\mid (e',e)\in{\caus{X}}\right\}
  \end{align*}
  By Lemma~\ref{lma:SEScaus}, we have ${\caus{Y}} = {{\incaus}\setminus{\dropped{\emptyset,Y}}}$ and ${\caus{X}} = {{\incaus}\setminus{\dropped{\emptyset,X}}}$. We can split $\dropped{\emptyset,Y}$ into $\dropped{\emptyset,X}$ and $\dropped{X,Y}$ and get $ {\caus{Y}} = {\caus{X}} \setminus {\dropped{X,Y}} $. Thus, $ \transS{X}{Y} \iff \transSa{(X,\caus{X})}{(Y,\caus{Y})} $.
\end{proof}

\subsection{SESs versus DESs}

We show that SESs are as expressive as DESs
by the definition of mutual encodings that result in structures with equivalent behaviors.

Consider the shrinking causality \drops{d}{c}{t}. It models the case that initially $ t $ causally depends on $ c $, which can be dropped by the occurrence of $ d $. Thus, for $ t $ to be enabled, either $ c $ occurs or $ d $ does. This is a disjunctive causality as modeled by DESs. In fact, \drops{d}{c}{t} corresponds to the bundle $ \buEn{\Set{ c, d }}{t} $. We prove that we can map each SES into a DES with the same behavior and vice versa.
To translate a SES into a DES, we create a bundle for each initial causal dependence and add all its droppers to the bundle set.

\begin{defi}
  \label{def:SESintoDES}
  Let $ \sigma = \left( E, \confOp, \enabOp, \shrinkingCausality \right) $ be a SES.
  \\
  Then, $ \des{\sigma} := \left( E, \confOp, \buEnOp \right) $, where $ \buEn{S}{y} $ iff 
  \begin{itemize}
  \item $ S \subseteq E $, 
  \item $ y \in E $, and
  \item $ \exists x \in E \logdot \enab{x}{y} \land S = (\Set{x} \cup \droppers{x}{y}) $.
  \end{itemize}
\end{defi}

We use Definition \ref{def:SESintoDES} to show that for each SES there is a DES with exactly the same traces and configurations.

\begin{lem}
  \label{lem:SEStoDES}
  Let $\sigma$ be a SES.
  Then, $\des{\sigma}$ is a DES. 
\end{lem}

\begin{proof}
  Let $ \sigma = \left( E, \confOp, \enabOp, \shrinkingCausality \right) $ be a SES. By Definitions~\ref{def:SES} and \ref{def:PES}, $ \confOp \subseteq E^2 $ is irreflexive and symmetric. Hence, by Definitions~\ref{def:DES} and \ref{def:SESintoDES}, $ \delta = \des{\sigma} $ is a DES.
\end{proof}

\begin{lem}
  \label{lem:SEStoDES-Semantik}
  Let $\sigma$ be a SES.
  Then,
  $ \traces{\sigma} = \traces{\des{\sigma}} $ and
  $ \configurations{\sigma} = \configurations{\des{\sigma}} $.
\end{lem}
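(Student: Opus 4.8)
The plan is to prove the two equalities in turn, deriving the configuration equality from the trace equality. Since $\des{\sigma}$ shares both the event set $E$ and the conflict relation $\confOp$ of $\sigma$ (by Definition~\ref{def:SESintoDES}), and since the conflict clause in Definition~\ref{def:DESconf} is verbatim the conflict clause in Definition~\ref{def:SEStraceDef}(1), the only clause that can distinguish $\traces{\sigma}$ from $\traces{\des{\sigma}}$ is the enabling one. So the whole argument reduces to showing, for every sequence of distinct conflict-free events $t = e_1 \cdots e_n$ and every index $i$, that the SES enabling condition at position $i$ is equivalent to the DES bundle-satisfaction condition at position $i$.

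First I would unfold the bundle structure of $\des{\sigma}$. By Definition~\ref{def:SESintoDES}, the bundles pointing to a fixed event $e_i$ are exactly the sets $\Set{x}\cup\droppers{x}{e_i}$ ranging over the initial causes $x\in\ic{e_i}$; that is, $\B{e_i}=\Set{\Set{x}\cup\droppers{x}{e_i}\mid x\in\ic{e_i}}$. Writing $H:=\overline{t_{i - 1}}$ for the history, the bundle-satisfaction requirement $\forall X\in\B{e_i}\logdot H\cap X\neq\emptyset$ therefore becomes: for every $x\in\ic{e_i}$, either $x\in H$ or $H\cap\droppers{x}{e_i}\neq\emptyset$. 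By the definitions of $\droppers{\cdot}{\cdot}$ and of $\dcD$, the latter disjunct says precisely $x\in\dc{H}{e_i}$, so the DES condition reads ``every initial cause of $e_i$ has either occurred in $H$ or been dropped by $H$''.

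Next I would rewrite the SES condition $(\ic{e_i}\setminus\dc{H}{e_i})\subseteq H$ the same way: it asserts that every $x\in\ic{e_i}$ with $x\notin\dc{H}{e_i}$ satisfies $x\in H$, i.e.\ that every initial cause of $e_i$ lies in $\dc{H}{e_i}$ or in $H$. This is the same statement obtained above, so the enabling clauses coincide for every position $i$, and hence $\traces{\sigma}=\traces{\des{\sigma}}$. (The map $x\mapsto\Set{x}\cup\droppers{x}{e_i}$ from $\ic{e_i}$ onto $\B{e_i}$ need not be injective, but both formulations demand that every member of the family $\B{e_i}$ meet $H$, so this causes no mismatch.)

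For the configurations, the DES side is by definition trace-based, so $\configurations{\des{\sigma}}=\Set{\overline{t}\mid t\in\traces{\des{\sigma}}}$; combined with the trace equality just established this gives $\configurations{\des{\sigma}}=\Set{\overline{t}\mid t\in\traces{\sigma}}=\configTraces{\sigma}$. It then remains to bridge the trace-based SES configurations with the transition-based ones, which is exactly the content of Lemma~\ref{lem:SESconf} ($\configTraces{\sigma}=\configurations{\sigma}$); invoking it yields $\configurations{\sigma}=\configurations{\des{\sigma}}$. The only genuinely delicate point is the set-theoretic reformulation of the enabling clauses---in particular matching ``$x$ is not dropped by $H$'' in the SES condition with ``the dropper part $\droppers{x}{e_i}$ of the bundle for $x$ is unmet by $H$'' in the DES condition---but once both are phrased as ``each initial cause has occurred or been dropped'', the equivalence is immediate and the rest is bookkeeping.
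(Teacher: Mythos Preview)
Your proposal is correct and follows essentially the same approach as the paper: reduce the trace equality to matching the enabling clauses (the event set and conflict relation being shared by construction), unfold both the SES condition $(\ic{e_i}\setminus\dc{H}{e_i})\subseteq H$ and the DES bundle-satisfaction condition to the common form ``every initial cause of $e_i$ lies in $H$ or has a dropper in $H$'', and then derive the configuration equality from the trace equality via Lemma~\ref{lem:SESconf}. Your explicit remark about the possible non-injectivity of $x\mapsto\Set{x}\cup\droppers{x}{e_i}$ is a nice clarification that the paper leaves implicit.
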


\begin{proof}
  Let $ \sigma = \left( E, \confOp, \enabOp, \shrinkingCausality \right) $ be a SES. 
  Let $ t = e_1 \cdots e_n $.

  By Definition~\ref{def:SEStraceDef}, $ t \in \traces{\sigma} $ iff $ \overline{t} \subseteq E $ and $ \neg \left( \conf{e_i}{e_j} \right) $ and also it holds $ ( \ic{e_i} \setminus \dc{\overline{t_{i - 1}}}{e_i} ) \subseteq \overline{t_{i - 1}} $ for all $ 1 \leq i, j \leq n $.
	Since $ \dc{H}{e} = \{ e' \mid \exists d \in H \logdot \drops{d}{e'}{e} \} $ and $ \ic{e} = \Set{ e' \mid \enab{e'}{e} } $, we have $ \left( \ic{e_i} \setminus \dc{\overline{t_{i - 1}}}{e_i} \right) \subseteq \overline{t_{i - 1}} $ iff $ \forall e' \in E \logdot \enab{e'}{e_i} \implies e' \in \overline{t_{i - 1}} \lor \exists d \in \overline{t_{i - 1}} \logdot \drops{d}{e'}{e_i} $ for all $ 1 \leq i \leq n $.
	By Definition~\ref{def:SESintoDES}, then $ t \in \traces{\sigma} $ iff $ \overline{t} \subseteq E $, $ \neg \left( \conf{e_i}{e_j} \right) $, and $ \buEn{X}{e_i} \implies \overline{t_{i - 1}} \cap X \neq \emptyset $ for all $ 1 \leq i, j \leq n $ and all $ X \subseteq E $.
	Hence, by the definition of traces in \S\ref{sec:DES}, $ t \in
	\traces{\sigma} $ iff $ t \in \traces{\delta} $, \ie $ \traces{\sigma} = \traces{\delta} $.
	
	Because of $ \configTraces{\sigma} = \configurations{\sigma} $ (\cf Lemma~\ref{lem:SESconf}), \S\ref{sec:DES}, and Definition~\ref{def:SEStraceDef}, then also $ \configurations{\delta} = \configTraces{\sigma} = \configurations{\sigma} $.
\end{proof}

The encoding also preserve posets.

\begin{thm}
	\label{thm:SESintoDES}
        Let $ \sigma $ be a SES.
        Then, $ \posetsEq {\sigma }{\des{\sigma}} $.
\end{thm}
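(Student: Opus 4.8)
The plan is to prove the trace-local statement that $\sigma$ and $\des{\sigma}$ assign the same poset to every trace, and then lift this to the whole family. By Lemma~\ref{lem:SEStoDES-Semantik} we already have $\traces{\sigma}=\traces{\des{\sigma}}$, so the two structures agree on which sequences to interpret. For both (early-causality) semantics the poset of a trace $t=e_1\cdots e_n$ is the reflexive–transitive closure of the relation $c\prec e_i$, with $c$ ranging over the cause of $e_i$; hence it suffices to show that, for every such $t$ and every index $i$, the cause of $e_i$ computed for $\sigma$ by Definition~\ref{def:SESposets}(2) coincides with the cause of $e_i$ computed for $\des{\sigma}$ by Definition~\ref{def:DESposets}. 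Granting this, $\posetsSES{t}=\posetsEar{t}$ for each trace $t$, and taking the union over the common trace set yields $\posets{\sigma}=\posets{\des{\sigma}}$, i.e.\ $\posetsEq{\sigma}{\des{\sigma}}$.

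For the cause coincidence I would fix $t$ and $i$ and compare the two families of candidate cause sets $U\subseteq\overline{t_{i-1}}$. The bridge is the translation in Definition~\ref{def:SESintoDES}: the bundles pointing at $e_i$ in $\des{\sigma}$ are exactly the sets $\Set{x}\cup\droppers{x}{e_i}$ with $x\in\ic{e_i}$. Thus $U$ meets every bundle of $e_i$, i.e.\ $X\cap U\neq\emptyset$ for all $X\in\B{e_i}$, precisely when every initial cause $x\in\ic{e_i}$ either lies in $U$ or is dropped by some modifier $d\in U$ with $\drops{d}{x}{e_i}$. Rewriting the latter with the definitions of $\ic{\cdot}$ and $\dc{\cdot}{\cdot}$ turns the DES bundle-hitting condition into the SES enabling condition $\ic{e_i}\setminus\dc{U}{e_i}\subseteq U$ on the witnessing set $U$. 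Once the two admissibility predicates are seen to single out the same subsets of $\overline{t_{i-1}}$, the $\prec$-earliest such set is identical on both sides, since the ``earliest'' ordering depends only on the positions of events in $t$, which the two structures share; so the selected cause sets agree.

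The step I expect to be the real obstacle is precisely this reconciliation, because the two enabling conditions are phrased with respect to different histories: the bundle-hitting condition (and its rewriting above) speaks about the chosen witness $U$, whereas Definition~\ref{def:SESposets}(2) removes dropped causes relative to the entire prefix $\overline{t_{i-1}}$. I would handle it by restricting attention to candidate sets $U\subseteq\overline{t_{i-1}}$ and using the validity of $t$ to show that the $\prec$-earliest set containing all initial causes of $e_i$ not dropped within $U$ is exactly the $\prec$-earliest set meeting all bundles of $e_i$. Packaging this equivalence as an auxiliary lemma, and combining it with the bundle description from Definition~\ref{def:SESintoDES}, gives the position-by-position equality of causes and hence the claimed poset equivalence.
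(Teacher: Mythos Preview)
Your approach is essentially the paper's: both reduce poset equivalence to showing that, for each common trace $t$ and each index $i$, the admissible cause sets $U\subseteq\overline{t_{i-1}}$ coincide, by unfolding the bundles of $\des{\sigma}$ from Definition~\ref{def:SESintoDES} and matching the bundle-hitting condition $\forall X\in\B{e_i}\logdot X\cap U\neq\emptyset$ against the SES condition.

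The discrepancy you single out as ``the real obstacle'' is not resolved by the paper; the paper's proof simply writes ``Definition~\ref{def:SESposets} requires $(\ic{e_i}\setminus\dc{U}{e_i})\subseteq U$'', i.e.\ it silently replaces $\dc{\overline{t_{i-1}}}{e_i}$ by $\dc{U}{e_i}$. This is almost certainly the intended reading of Definition~\ref{def:SESposets}(2): with the literal $\overline{t_{i-1}}$ the SES cause is always the fixed set $\ic{e_i}\setminus\dc{\overline{t_{i-1}}}{e_i}$, which need not equal the DES early cause (e.g.\ $\ic{e}=\{a\}$, $\drops{d}{a}{e}$, trace $a\,d\,e$: SES cause $\emptyset$, DES cause $\{a\}$), so the theorem would fail as stated. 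Once you adopt the $\dc{U}{e_i}$ reading, your bridge equivalence is exactly the one the paper uses, no auxiliary lemma about reconciling histories is needed, and the ``earliest'' selection then trivially agrees because the candidate families are identical.
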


\begin{proof}
	Let $ \sigma = \left( E, \confOp, \enabOp, \shrinkingCausality \right) $ be a SES.
	By Lemmas~\ref{lem:SEStoDES} and~\ref{lem:SEStoDES-Semantik}, $ \delta = \des{\sigma} = \left( E, \confOp, \buEnOp \right) $ is a DES such that $ \traces{\sigma} = \traces{\delta} $ and $ \configurations{\sigma} = \configurations{\delta} $.
	Let $ t = e_1 \cdots e_n \in \traces{\sigma} $, $ 1 \leq i \leq n $, and the bundles $ \buEn{X_1}{e_i}, \ldots, \buEn{X_m}{e_i} $ all bundles pointing to $ e_i $. For $ U $ to be a cause for $ e_i $, Definition~\ref{def:SESposets} requires $ ( \ic{e_i} \setminus \dc{U}{e_i} ) \subseteq U $. Since $ \dc{H}{e} = \Set{ e' \mid \exists d \in H \logdot \drops{d}{e'}{e} } $ and $ \ic{e} = \{e'\mid\enab{e'}{e}\} $, this condition holds iff the condition $ \enab{e'}{e_i} \implies e' \in U \lor \exists d \in U \logdot \drops{d}{e'}{e_i} $ holds for all $ e' \in E $.
	By Definition~\ref{def:SESintoDES}, then $ ( \forall 1 \leq k \leq n \logdot X_k \cap U \neq \emptyset ) \iff ( ( \ic{e_i} \setminus \dc{U}{e_i} ) \subseteq U ) $.
	So, by Definitions~\ref{def:DESposets} and \ref{def:SESposets}, $ \posetsEq {\sigma }{\delta}$.
\end{proof}

In the opposite direction, we map each DES into a set of similar SESs such that each SES in this set has the same behavior as the DES.
Intuitively, we have to choose an initial dependency for each bundle and to translate the remainder of the bundle set into droppers for that dependency. Unfortunately, the bundles that point to the same event are not necessarily disjoint. Consider for example $ \buEn{\Set{ a, b }}{e} $ and $ \buEn{\Set{ b, c }}{e} $. If we choose $ \enab{b}{e} $ as initial dependency for both bundles to be dropped as \drops{a}{b}{e} and \drops{c}{b}{e}, then $ \Set{ a, e } $ is a configuration of the resulting SES but not of the original DES. So, we have to ensure that we choose distinct events as initial causes for all bundles pointing to the same event.
Therefore, for each bundle $ \buEn{X_i}{e} $, we choose a fresh event $ x_i $ as initial cause $ \enab{x_i}{e} $, make it impossible by a self-loop $ \enab{x_i}{x_i} $, and add all events $ d $ of the bundle $ X_i $ as droppers \drops{d}{x_i}{e}.
Note that, in order to translate a DES into a SES, we have to introduce additional events, \ie, it is not always possible to translate a DES into a SES without additional impossible events (\cf Lemma~\ref{lem:DESninSES}).

\begin{defi}
  \label{def:DESintoSES}
  Let $ \delta = \left( E, \confOp, \buEnOp \right) $ be a DES, 
  $ \Set{ X_i }_{i \in I} $ an enumeration of its bundles, and 
  $ \Set{ x_i }_{i \in I} $ a set of fresh events, \ie $ \Set{ x_i }_{i \in I} \cap E = \emptyset $.
  Then, $ \ses{\delta} := \left( E', \confOp, \enabOp, \shrinkingCausality \right) $ with 
  \begin{itemize}
  \item $ E' := E \cup \Set{ x_i }_{i \in I} $,
  \item $ \enabOpT := \Set{ \enab{x_i}{e} \mid \buEn{X_i}{e} } \cup \Set{ \enab{x_i}{x_i} \mid i \in I } $, and
  \item $ \shrinkingCausality := \Set{ \drops{d}{x_i}{e} \mid d \in X_i \land \buEn{X_i}{e} } $.
  \end{itemize}

\end{defi}

\noindent
Because the $ x_i $ are fresh, there are no droppers for the self-loops $ \enab{x_i}{x_i} $ in $ \ses{\delta} $. So the encoding ensures that all events in $ \Set{ x_i }_{i \in I} $ remain impossible forever in the resulting SES. In fact, we show that the DES and its encoding have the very same traces and configurations.

\begin{lem}
	\label{lem:DEStoSES}
	For each DES $ \delta $ there is a SES $ \sigma $, namely $ \sigma = \ses{\delta} $, such that $ \traces{\delta} = \traces{\sigma} $ and $ \configurations{\delta} = \configurations{\sigma} $.
\end{lem}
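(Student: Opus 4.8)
The plan is to prove both set equalities by relating traces of $\delta$ directly to traces of $\sigma=\ses{\delta}$ via Definition~\ref{def:SEStraceDef}(1) and Definition~\ref{def:DESconf}, just as in the proof of Lemma~\ref{lem:SEStoDES-Semantik} but in the reverse direction. First I would fix $\delta=(E,\confOp,\buEnOp)$ with $\ses{\delta}=(E',\confOp,\enabOp,\shrinkingCausality)$ as in Definition~\ref{def:DESintoSES}, and take an arbitrary sequence $t=e_1\cdots e_n$ of distinct events. Since the conflict relation $\confOp$ is identical in both structures, the conflict clause in both trace definitions coincides verbatim, so the whole argument reduces to showing that the bundle-satisfaction clause of $\delta$ is equivalent to the shrinking-causality clause $(\ic{e_i}\setminus\dc{\overline{t_{i-1}}}{e_i})\subseteq\overline{t_{i-1}}$ of $\sigma$.

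The core computation unwinds both sides. On the SES side, for a target event $e_i\in E$, Definition~\ref{def:DESintoSES} gives $\ic{e_i}=\Set{x_j\mid\buEn{X_j}{e_i}}$ (each initial cause of $e_i$ is a fresh event $x_j$, one per bundle pointing to $e_i$), and $\dc{\overline{t_{i-1}}}{e_i}=\Set{x_j\mid\exists d\in\overline{t_{i-1}}\logdot d\in X_j\land\buEn{X_j}{e_i}}$. Hence $x_j\in\ic{e_i}\setminus\dc{\overline{t_{i-1}}}{e_i}$ exactly when the bundle $\buEn{X_j}{e_i}$ is not yet satisfied by $\overline{t_{i-1}}$, i.e.\ when $\overline{t_{i-1}}\cap X_j=\emptyset$. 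Since every $x_j$ is fresh and impossible (the self-loop $\enab{x_j}{x_j}$ guarantees $x_j\notin\overline{t}$ for any trace), the inclusion $(\ic{e_i}\setminus\dc{\overline{t_{i-1}}}{e_i})\subseteq\overline{t_{i-1}}$ holds if and only if that set is empty, which happens exactly when $\overline{t_{i-1}}\cap X_j\neq\emptyset$ for every bundle $\buEn{X_j}{e_i}$—precisely the bundle-satisfaction condition of Definition~\ref{def:DESconf}. Thus $t\in\traces{\delta}\iff t\in\traces{\ses{\delta}}$.

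For the configuration equality I would then invoke Lemma~\ref{lem:SESconf} (so that $\configTraces{\sigma}=\configurations{\sigma}$) together with the trace-based definition of configurations for DESs: since $\traces{\delta}=\traces{\sigma}$ and a configuration on either side is just $\overline{t}$ for a trace $t$, we immediately get $\configurations{\delta}=\configurations{\sigma}$.

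The step I expect to be the main obstacle is justifying that the two trace definitions genuinely range over the same event domain despite $E'\supsetneq E$. I must argue that no trace of $\ses{\delta}$ can ever contain a fresh event $x_j$: because $\enab{x_j}{x_j}$ is an enabling self-loop and there is no dropper targeting it, any attempt to include $x_j$ in a trace would require $x_j\in\ic{x_j}\setminus\dc{\cdot}{x_j}\subseteq\overline{t_{i-1}}$, forcing $x_j$ to precede itself—impossible for a sequence of distinct events. This confines every trace of $\ses{\delta}$ to $E$, making the domains match and the equivalence above meaningful; I would state this observation explicitly before running the bundle-satisfaction equivalence.
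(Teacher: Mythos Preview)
Your proposal is correct and follows essentially the same approach as the paper: both arguments unwind $\icD$ and $\dcD$ via Definition~\ref{def:DESintoSES} to reduce the SES causality clause to bundle satisfaction, explicitly argue that the self-loops on the fresh $x_i$ force $\overline{t}\subseteq E$ for every trace, and then invoke Lemma~\ref{lem:SESconf} for the configuration equality. The only minor addition in the paper is an explicit opening check that $\ses{\delta}$ is indeed a SES (irreflexivity and symmetry of $\confOp$ carry over), which you could mention in one line.
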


\begin{proof}
	Let $ \delta = \left( E, \confOp, \buEnOp \right) $ be a DES. By Definition~\ref{def:DES}, $ \confOp \subseteq E^2 $ is irreflexive and symmetric. Hence, by the Definitions~\ref{def:SES}, \ref{def:PES}, and \ref{def:DESintoSES}, $ \sigma = \ses{\delta} = \left( E', \confOp, \enabOp, \shrinkingCausality \right) $ is a SES.	
	
	Let $ t = e_1 \cdots e_n $.	Then, by Definition~\ref{def:SEStraceDef}, $ t \in \traces{\sigma} $ iff $ \overline{t} \subseteq E $ and $ \neg ( \conf{e_i}{e_j} ) $ and also $ ( \ic{e_i} \setminus \dc{\overline{t_{i - 1}}}{e_i} ) \subseteq \overline{t_{i - 1}} $ for all $ 1 \leq i, j \leq n $. Note that we have $ \overline{t} \subseteq E $ instead of $ \overline{t} \subseteq E' $, because all events in $ t $ have to be distinct and for all events in $ E' \setminus E $ there is an initial self-loop but no dropper.
	Since $ \dc{H}{e} = \Set{ e' \mid \exists d \in H \logdot \drops{d}{e'}{e} } $ and $ \ic{e} = \Set{ e' \mid \enab{e'}{e} } $, $ \left( \ic{e_i} \setminus \dc{\overline{t_{i - 1}}}{e_i} \right)$ $ \subseteq \overline{t_{i - 1}} $ iff $ \forall e' \in E \logdot \enab{e'}{e_i} \implies e' \in \overline{t_{i - 1}} \lor \exists d \in \overline{t_{i - 1}} \logdot \drops{d}{e'}{e_i} $ for all $ 1 \leq i \leq n $.
	By Definition~\ref{def:DESintoSES}, then $ t \in \traces{\sigma} $ iff $ \overline{t} \subseteq E $, $ \neg \left( \conf{e_i}{e_j} \right) $, and $ \buEn{X}{e_i} \implies \overline{t_{i - 1}} \cap X \neq \emptyset $ for all $ 1 \leq i, j \leq n $ and all $ X \subseteq E $.
	Hence, by the Definition of traces in \S\ref{sec:DES}, $ t \in
	\traces{\sigma} $ iff $ t \in \traces{\delta} $, \ie $ \traces{\sigma} = \traces{\delta} $.
	
	Because of $ \configTraces{\sigma} = \configurations{\sigma} $ (\cf Lemma~\ref{lem:SESconf}), the Definition of
	configurations in \S\ref{sec:DES}, and Definition~\ref{def:SEStraceDef}, then
	also $ \configurations{\delta} = \configTraces{\sigma} = \configurations{\sigma} $.
\end{proof}

Moreover the DES and its encoding have exactly the same posets.

\begin{thm}
	\label{thm:DESintoSES}
	For each DES $ \delta $ there is a SES $ \sigma  = \ses{\delta} $ such that $  \posetsEq{\delta} {\sigma} $.
\end{thm}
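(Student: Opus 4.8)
The plan is to mirror the argument of Theorem~\ref{thm:SESintoDES}, now in the reverse direction. By Lemma~\ref{lem:DEStoSES} we already know that $\sigma = \ses{\delta}$ is a SES with $\traces{\delta} = \traces{\sigma}$ and $\configurations{\delta} = \configurations{\sigma}$, so the only thing left is to check that the poset (early-causality) semantics agree. Since both $\posetsEar{t}$ and $\posetsSES{t}$ are obtained trace-by-trace from a common set of traces, it suffices to fix an arbitrary trace $t = e_1 \cdots e_n$ and a position $1 \le i \le n$ and to show that the \emph{cause} sets $U$ of $e_i$ in $t$ according to Definition~\ref{def:DESposets} and according to Definition~\ref{def:SESposets} coincide.

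The first observation I would record is that every event occurring in a trace of $\sigma$ lies in $E$: each fresh event $x_j$ carries the self-loop $\enab{x_j}{x_j}$ and has no dropper, so it is impossible and never appears in $t$. Consequently any candidate cause $U$ — which by both definitions consists of events appearing before $e_i$ in $t$ — satisfies $U \subseteq E$. The key computation is then to evaluate the SES condition $\left( \ic{e_i} \setminus \dc{U}{e_i} \right) \subseteq U$ under the translation of Definition~\ref{def:DESintoSES}. There $\ic{e_i} = \Set{ x_j \mid \buEn{X_j}{e_i} }$, and since $\drops{d}{x_j}{e_i}$ holds exactly for $d \in X_j$, we get $\dc{U}{e_i} = \Set{ x_j \mid \buEn{X_j}{e_i} \land U \cap X_j \neq \emptyset }$, hence $\ic{e_i} \setminus \dc{U}{e_i} = \Set{ x_j \mid \buEn{X_j}{e_i} \land U \cap X_j = \emptyset }$.

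The crucial step is that this leftover set consists solely of fresh events, while $U \subseteq E$. Therefore $\left( \ic{e_i} \setminus \dc{U}{e_i} \right) \subseteq U$ can hold only if the leftover set is empty, i.e.\ iff $U \cap X_j \neq \emptyset$ for every bundle $\buEn{X_j}{e_i}$ — which is precisely the DES cause requirement $\forall X \in \B{e_i} \logdot X \cap U \neq \emptyset$ of Definition~\ref{def:DESposets}. Since the remaining two clauses (membership of $U$ among the earlier events of $t$, and the selection of the \emph{earliest} such $U$ with respect to the same index ordering on events of $t$) are verbatim identical in both definitions, the two notions of cause agree, so $\posetsSES{t} = \posetsEar{t}$. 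Ranging over all traces then gives $\posets{\sigma} = \posets{\delta}$, that is $\posetsEq{\delta}{\sigma}$.

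I expect the main obstacle to be exactly the bookkeeping around the fresh events: one must argue cleanly that they never enter a trace (hence never enter a cause $U$), so that the indirection ``initial cause $x_j$, dropped by every member of $X_j$'' collapses to the plain bundle-satisfaction test. The slightly delicate point is that the non-emptiness of the bundle intersection is forced not by a positive requirement but by the impossibility of placing a fresh event into $U$; this is what makes the translation faithful despite overlapping bundles (the issue flagged before Definition~\ref{def:DESintoSES}), since each bundle $X_j$ receives its own dedicated fresh initial cause $x_j$.
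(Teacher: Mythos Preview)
Your proof is correct and follows essentially the same route as the paper's own argument: invoke Lemma~\ref{lem:DEStoSES} for trace and configuration equality, then, for a fixed trace and position, show that the SES cause condition $(\ic{e_i}\setminus\dc{U}{e_i})\subseteq U$ is equivalent to the DES bundle-satisfaction condition, with the remaining clauses identical. Your proposal is in fact a bit more explicit than the paper about why the fresh events $x_j$ can never appear in $U$, which is exactly the point that forces the equivalence.
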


\begin{proof}
	Let $ \delta = \left( E, \confOp, \buEnOp \right) $ be a DES. By Lemma~\ref{lem:DEStoSES}, $ \sigma = \ses{\delta} = \left( E, \confOp, \enabOp, \shrinkingCausality \right) $ is a SES such that $ \traces{\delta} = \traces{\sigma} $ and $ \configurations{\delta} = \configurations{\sigma} $.\\
	Let $ t = e_1 \cdots e_n \in \traces{\delta} $, $ 1 \leq i \leq n $, and the bundles $ \buEn{X_1}{e_i}, \ldots, \buEn{X_m}{e_i} $ all bundles pointing to $ e_i $.
	For $ U $ to be a cause for $ e_i $ Definition~\ref{def:SESposets} requires $ ( \ic{e_i} \setminus \dc{U}{e_i} ) \subseteq U $.
	Since $ \dc{H}{e} = \Set{ e' \mid \exists d \in H \logdot \drops{d}{e'}{e} } $ and $ \ic{e} = \{ e' \mid \enab{e'}{e} \} $, this condition holds iff the condition $ \enab{e'}{e_i} \implies e' \in U \lor \exists d \in U \logdot \drops{d}{e'}{e_i} $ holds for all $ e' \in E $.
	By Definition~\ref{def:DESintoSES}, then $ ( \forall 1 \leq k \leq n \logdot X_k \cap U \neq \emptyset) $ iff $ \left( \left( \ic{e_i} \setminus \dc{U}{e_i} \right) \subseteq U \right) $.
	So, by the Definitions~\ref{def:DESposets} and \ref{def:SESposets}, $  \posetsEq{\delta}{\sigma} $.
\end{proof}

Thus SESs and DESs have the same expressive power.

\begin{thm}
	\label{thm:SESvsDES}
	SESs are as expressive as DESs.
\end{thm}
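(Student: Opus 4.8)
The plan is to observe that the two preceding theorems already supply both directions of a mutual, behavior-preserving translation, so the statement follows by simply combining them. Since SESs and DESs are compared with respect to families of posets (poset equivalence, Definition~\ref{def:posetsEq}), proving that the two classes are equally expressive amounts to exhibiting, for every SES a poset-equivalent DES, and for every DES a poset-equivalent SES.

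For the first inclusion I would invoke Theorem~\ref{thm:SESintoDES}: for an arbitrary SES $\sigma$, the construction $\des{\sigma}$ yields a DES with $\posetsEq{\sigma}{\des{\sigma}}$. Hence every behavior (family of posets) realizable by a SES is already realizable by a DES, so SESs are no more expressive than DESs. For the converse I would invoke Theorem~\ref{thm:DESintoSES}: for an arbitrary DES $\delta$, the construction $\ses{\delta}$ yields a SES with $\posetsEq{\delta}{\ses{\delta}}$. Hence every behavior realizable by a DES is realizable by a SES, so DESs are no more expressive than SESs. Combining the two inclusions gives equal expressive power.

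The step that deserves the most care---though it has in effect already been discharged inside Theorem~\ref{thm:DESintoSES}---is the $\ses{\cdot}$ direction, because that encoding enlarges the event set with the fresh events $x_i$. One must check that these added events do not perturb the semantics; this is precisely why the construction makes each $x_i$ impossible, via the self-loop $\enab{x_i}{x_i}$ for which there is no dropper, so that no $x_i$ ever appears in a configuration or poset and the two families of posets genuinely coincide. With that subtlety handled upstream, the present theorem is an immediate corollary of Theorems~\ref{thm:SESintoDES} and~\ref{thm:DESintoSES}, and I would state the proof as nothing more than that combination.
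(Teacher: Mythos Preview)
Your proposal is correct and matches the paper's own proof, which simply reads ``By the Theorems~\ref{thm:SESintoDES} and \ref{thm:DESintoSES}.'' Your additional remark about the fresh impossible events $x_i$ in the $\ses{\cdot}$ encoding is accurate and, as you note, already discharged inside Theorem~\ref{thm:DESintoSES}.
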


\begin{proof}
	By the Theorems~\ref{thm:SESintoDES} and \ref{thm:DESintoSES}.
\end{proof}

Another ES with disjunctive causality are the Stable Event Structures (StESs) \cite{Winskel:IntroToES}.
Katoen~\cite{Katoen:Thesis} proves that StESs are strictly less expressive than DESs. Note that for this particular expressiveness result it does not matter whether the definition of DESs is based on BESs (as here and in \cite{Langerak97causalambiguity}) or EBESs (as in \cite{Katoen:Thesis}).
Then, \cite{flowES} introduces the Flow Event Structures (FESs) and proves that PESs are strictly less expressive than FESs, and that FESs are strictly less expressive than StESs.
Since we restrict our attention to finite configurations and because of Definition~\ref{def:pesConfigs}, PESs and rPESs have the same expressive power.
Finally, that BESs are strictly less expressive than FESs was shown in \cite{FLOW} and (more directly) in \cite{Langerak:Thesis}.
With the above theorem we conclude that SESs are strictly more expressive than StESs, FESs, BESs, PESs and rPESs.

\begin{cor}
	SESs are strictly more expressive than rPESs, BESs, FESs, and StESs.
\end{cor}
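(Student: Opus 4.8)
The plan is to assemble the corollary purely from the equivalence of SESs and DESs (Theorem~\ref{thm:SESvsDES}) together with the strict separations for DESs and the other structures that are recalled in the paragraph preceding the statement. The backbone of the argument is the expressiveness chain
\[
\text{rPES} \;\cong\; \text{PES} \;<\; \text{FES} \;<\; \text{StES} \;<\; \text{DES} \;\cong\; \text{SES},
\]
supplemented by the separate inclusion $\text{BES} < \text{FES}$, where $\cong$ abbreviates ``has the same expressive power as'' and $<$ abbreviates ``is strictly less expressive than''. So the corollary reduces to a transitivity argument over this chain.

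First I would record the two equivalences that anchor the chain. The right-hand one, $\text{DES} \cong \text{SES}$, is exactly Theorem~\ref{thm:SESvsDES}, obtained from the behavior-preserving mutual encodings $\des{\cdot}$ and $\ses{\cdot}$ established in Theorems~\ref{thm:SESintoDES} and~\ref{thm:DESintoSES}. The left-hand one, $\text{rPES} \cong \text{PES}$, was justified just after Definition~\ref{def:rPES}: restricting to finite configurations makes the axiom of finite causes vacuous and conflict heredity semantically inert, so PESs and rPESs induce the same families of finite configurations. These two equivalences let me transfer every statement about PES to rPES on the left and every statement about DES to SES on the right.

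Next I would invoke the cited strict separations in sequence: $\text{PES} < \text{FES}$ and $\text{FES} < \text{StES}$ from \cite{flowES}, $\text{BES} < \text{FES}$ from \cite{FLOW} and \cite{Langerak:Thesis}, and $\text{StES} < \text{DES}$ from Katoen~\cite{Katoen:Thesis}. Transitivity then places each of $\text{rPES}$, $\text{PES}$, $\text{BES}$, $\text{FES}$, $\text{StES}$ strictly below $\text{DES}$, and substituting $\text{DES} \cong \text{SES}$ yields the claim. Concretely, each step has two halves: a weak inclusion (every structure of the smaller class translates into a behaviorally equivalent SES, obtained by composing the literature encoding into a DES with $\ses{\cdot}$) and a witness of strictness (a DES, hence by Theorem~\ref{thm:SESvsDES} a SES, whose behavior no structure of the smaller class realizes).

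The main obstacle I anticipate is not any individual separation---those are cited---but ensuring that ``strictly more expressive'' composes soundly across classes that are compared under \emph{different} semantic models: finite configurations for rPESs, families of posets for BESs, FESs, StESs, and DESs, and transition graphs for RCESs. This coherence is precisely what \S\ref{sec:Semantics} secures: Lemma~\ref{lem:confToPosets} and Lemma~\ref{lem:posetsToTransitions} show that, for the ambiguity-free structures in play, configuration equivalence, poset equivalence, and transition equivalence coincide, so the three semantic models agree on where each class sits in the hierarchy. Once this agreement is in hand, the corollary follows as a direct transitivity argument along the chain displayed above.
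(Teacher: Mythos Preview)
Your proposal is correct and follows essentially the same approach as the paper: assemble the chain $\text{rPES}\cong\text{PES}<\text{FES}<\text{StES}<\text{DES}\cong\text{SES}$ together with $\text{BES}<\text{FES}$ from the cited literature, then conclude by transitivity using Theorem~\ref{thm:SESvsDES}. Your explicit invocation of \S\ref{sec:Semantics} to justify coherence of the different semantic models is a welcome clarification, but it does not change the underlying argument.
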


In \cite{Langerak97causalambiguity}, the authors prove that for DESs equivalence \wrt posets based on early causality coincides with trace equivalence. Since SESs are as expressive as DESs \wrt families of posets based on early causality, the same correspondence holds for SESs.

\begin{cor}
	\label{col:SESequiv}
	Let $ \sigma_1, \sigma_2 $ be two SESs. Then $ \posetsEq{\sigma_1 }{\sigma_2} $ iff $ \traces{\sigma_1} = \traces{\sigma_2} $.
\end{cor}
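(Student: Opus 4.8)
The plan is to reduce the claim for SESs to the corresponding, already-established fact for DESs by passing through the translation $\des{\cdot}$ of Definition~\ref{def:SESintoDES}. The external ingredient I rely on is the result of \cite{Langerak97causalambiguity} (recalled in the paragraph preceding the corollary) that, for DESs, poset equivalence based on early causality coincides with trace equivalence. Since the poset semantics of both SESs (Definition~\ref{def:SESposets}) and DESs (Definition~\ref{def:DESposets}) are taken \emph{with respect to early causality}, this DES result can be transported across the encoding.

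First I would, given arbitrary SESs $\sigma_1,\sigma_2$, form the DESs $\des{\sigma_1}$ and $\des{\sigma_2}$. The two facts that make the transport work are already available: Theorem~\ref{thm:SESintoDES} gives $\posetsEq{\sigma_i}{\des{\sigma_i}}$, i.e.\ $\sigma_i$ and $\des{\sigma_i}$ have the very same set of posets, and Lemma~\ref{lem:SEStoDES-Semantik} gives $\traces{\sigma_i}=\traces{\des{\sigma_i}}$, for $i\in\{1,2\}$. Together these let me replace each SES by its DES image without changing either the poset set or the trace set.

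The proof then consists of chaining equivalences. From Theorem~\ref{thm:SESintoDES} we have $\posetsEq{\sigma_1}{\sigma_2}$ iff $\posetsEq{\des{\sigma_1}}{\des{\sigma_2}}$, because $\sigma_i$ and $\des{\sigma_i}$ carry identical poset sets. By the DES result of \cite{Langerak97causalambiguity}, $\posetsEq{\des{\sigma_1}}{\des{\sigma_2}}$ holds iff $\traces{\des{\sigma_1}}=\traces{\des{\sigma_2}}$. Finally, Lemma~\ref{lem:SEStoDES-Semantik} turns the latter into $\traces{\sigma_1}=\traces{\sigma_2}$. Composing these three biconditionals yields exactly $\posetsEq{\sigma_1}{\sigma_2}\iff\traces{\sigma_1}=\traces{\sigma_2}$.

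The step I expect to require the most care is not any calculation but the bookkeeping of \emph{which} poset notion is in play: the \cite{Langerak97causalambiguity} coincidence of poset and trace equivalence holds only for early causality (for liberal, bundle-satisfaction, minimal, and late causality posets are strictly more discriminating, as noted after Lemma~\ref{lem:confToPosets}). I would therefore make explicit that $\posetsEq{}{}$ in Definition~\ref{def:posetsEq}, as applied here to both SESs and DESs, refers to the early-causality posets, so that the DES equivalence is genuinely the one being invoked. Once that alignment is stated, the remainder is the routine composition of the two transport lemmas with the cited DES fact.
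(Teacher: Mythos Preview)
Your proof is correct and follows essentially the same approach as the paper, which simply remarks (in the text preceding the corollary) that the DES coincidence of early-causality poset equivalence and trace equivalence from \cite{Langerak97causalambiguity} carries over because SESs and DESs have the same expressive power. You have merely made explicit the transport mechanism via $\des{\cdot}$ using Theorem~\ref{thm:SESintoDES} and Lemma~\ref{lem:SEStoDES-Semantik}, which is exactly what underlies the paper's one-line justification.
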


In Appendix~\ref{app:partialOrderSemantics}, we show that each SES and its encoding as well as each DES and its encoding have the same set of posets considering liberal, minimal, and late causality. 
Thus, the concepts of SESs and DESs are not only behaviorally equivalent but---except for the additional impossible events---also structurally closely related.

Note that $ \posetsEqOp $, $ \transEqOp $, and trace equivalence coincide on SES.

\begin{thm}
	\label{thm:SESequiv}
	Let $ \sigma, \sigma' $ be two SESs.
	Then $ \posetsEq {\sigma }{\sigma'} $ iff $  \transEq{\sigma}{\sigma'} $ iff $ \traces{\sigma} = \traces{\sigma'} $.
\end{thm}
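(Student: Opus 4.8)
The plan is to prove a three-way equivalence by establishing a cycle of implications. I would chain the statements as $\posetsEq{\sigma}{\sigma'} \iff \traces{\sigma}=\traces{\sigma'}$ together with $\traces{\sigma}=\traces{\sigma'} \iff \transEq{\sigma}{\sigma'}$, since each of the two biconditionals is easier to handle in isolation than a direct cycle through all three.

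The first biconditional, $\posetsEq{\sigma}{\sigma'} \iff \traces{\sigma}=\traces{\sigma'}$, is already in hand: it is exactly Corollary~\ref{col:SESequiv}, which states that poset equivalence for SESs (based on early causality) coincides with trace equivalence. So for this half I would simply invoke that corollary and need no new argument. The real work lies in the second biconditional, relating trace equivalence to transition equivalence $\transEqOp$. First I would note that by Definition~\ref{def:SESconf}(4), the configurations of a SES are precisely those reachable from $\emptyset$ via $\transSOp$, so the transition graph and the set of configurations determine each other in the expected way, and by Lemma~\ref{lem:SESconf} the trace-based configurations coincide with these reachable configurations, i.e.\ $\configTraces{\sigma}=\configurations{\sigma}$.

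For the direction $\traces{\sigma}=\traces{\sigma'} \implies \transEq{\sigma}{\sigma'}$, I would argue that equal trace sets force equal configuration sets (each configuration is some $\overline{t}$), and then show that the transition relation $\transSOp$ between two configurations $X$ and $Y$ is recoverable from the traces: $\transS{X}{Y}$ holds precisely when there is a trace witnessing $X$ that can be extended along events of $Y\setminus X$ to a trace witnessing $Y$, which is determined by the conflict, $\icD$, and $\dcD$ data appearing in the trace condition of Definition~\ref{def:SEStraceDef}(1) and the transition condition of Definition~\ref{def:SEStrans}(3). For the reverse direction $\transEq{\sigma}{\sigma'} \implies \traces{\sigma}=\traces{\sigma'}$, I would use that a trace $e_1\cdots e_n$ induces a chain $\emptyset \transSOp \{e_1\} \transSOp \cdots \transSOp \overline{t}$ of single-event transitions (each prefix is a configuration), so that equal reachable transition graphs yield the same single-step extensions and hence the same set of traces by induction on trace length.

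The main obstacle I anticipate is making precise the claim that single-event transitions recover exactly the traces, i.e.\ that every trace decomposes into a maximal chain of single-step transitions and, conversely, every such chain reassembles into a trace. The subtlety is that $\transSOp$ as defined permits $X\subset Y$ with $Y\setminus X$ of arbitrary size, so I would need to confirm that a transition $\transS{X}{Y}$ with $\lvert Y\setminus X\rvert \geq 2$ can always be factored through intermediate configurations adding one event at a time, respecting the condition $\left(\ic{e}\setminus\dc{X}{e}\right)\subseteq X$ for each added event; since $\dcD$ is monotone in its history argument and dropping only removes causal obligations, adding events one by one preserves enabledness, so the factorization should go through. Once this bookkeeping is settled, the induction closing the equivalence is routine.
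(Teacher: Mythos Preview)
Your proposal is correct and follows essentially the same route as the paper: both invoke Corollary~\ref{col:SESequiv} for the poset/trace equivalence and then argue the trace/transition equivalence via the correspondence between traces and chains of transitions, using Lemma~\ref{lem:SESconf} to identify trace-based and reachable configurations. The only difference is packaging: the paper isolates the key trace--transition correspondence as Lemma~\ref{lem:SEStransTraces} (characterizing $\transS{X}{Y}$ by the existence of traces $t_1,t_2$ with $\overline{t_1}=X$, $\overline{t_2}=Y$ and the independence condition $\forall e\in Y\setminus X\logdot(\ic{e}\setminus\dc{X}{e})\subseteq X$), whereas you sketch this correspondence inline; your anticipated ``main obstacle'' about factoring multi-step transitions through single events is precisely what that lemma settles.
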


\begin{proof}
	By Corollary~\ref{col:SESequiv}, $  \posetsEq {\sigma}{\sigma' }$ iff $ \traces{\sigma} = \traces{\sigma'} $.
	
	If $ \configurations{\sigma} \neq \configurations{\sigma'} $ then, because of $ \configTraces{\sigma} = \configurations{\sigma} $ (\cf Lemma~\ref{lem:SESconf}) and the Definitions~\ref{def:SESposets} and \ref{def:SEStrans}, $ \posetsNEq{\sigma } {\sigma'} $ and $\transNEq{\sigma}{\sigma'} $. Hence assume $ \configurations{\sigma} = \configurations{\sigma'} $.
	Note that, by Definition~\ref{def:SEStraceDef} and $ \configTraces{\sigma} = \configurations{\sigma} $ (\cf Lemma~\ref{lem:SESconf}), for all $ C \in \configurations{\sigma} $ there is a trace $ t \in \traces{\sigma} $ such that $ \overline{t} = C $. Moreover, for every trace $ t \in \traces{\sigma} $ except the empty trace there is a sub-trace $ t' \in \traces{\sigma} $ and a sequence of events $ e_1 \cdots e_m $ such that $ t = t' e_1 \cdots e_m $ and $ \forall e \in \Set{ e_1, \ldots, e_m } \logdot \left( \ic{e} \setminus \dc{\overline{t'}}{e} \right) \subseteq \overline{t'} $.
	Thus, by Lemma~\ref{lem:SEStransTraces}, $ \traces{\sigma} = \traces{\sigma'} $ iff $  \transEq {\sigma}{\sigma'}$.
\end{proof}

\subsection{SESs versus EBESs}

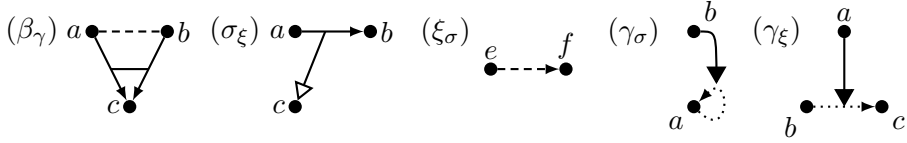
\begin{figure}[t]
	\centering
	\begin{tikzpicture}[bend angle=30]
		\event{b2}{0.3}{1.0}{left}{$ a $};
		\event{b3}{1.3}{1.0}{right}{$ b $};
		\event{b4}{0.8}{0}{left}{$ c $};
		\draw[enablingPES] (b2) edge (b4);
		\draw[enablingPES] (b3) edge (b4);
		\draw[thick] (0.55, 0.5) -- (1.05, 0.5);
		\draw[conflictPES] (b2) edge (b3);
		\node (g) at (-0.5, 1) {$(\beta_\gamma)$};
		\event{a1}{3}{1}{left}{$ a $};
		\event{a2}{4}{1}{right}{$ b $};
		\event{a3}{3}{0}{left}{$ c $};
		\node (a) at (2.2, 1) {$ (\sigma_\xi) $};
		\draw[enablingPES] (a1) edge (a2);
		\draw[dropping]	(3.4, 1) -- (a3);
		\event{b1}{5.6}{0.5}{above}{$ e $};
		\event{b2}{6.6}{0.5}{above}{$ f $};
		\node (b) at (5, 1) {$ (\xi_\sigma) $};
		\draw[conflictEBES] (b1) edge (b2);
		\event{ba}{8.3}{1}{above right}{$ b $};
		\event{bb}{8.3}{0}{below left}{$ a $};
		\node (gs) at (7.5, 1) {$(\gamma_\sigma)$};
		\draw[enablingAbsent] (bb) .. controls (8.8, -0.5) and (8.8, 0.5) .. (bb);
		\draw[adding] (ba) .. controls (8.6, 1) .. (8.6, 0.3);
		\event{ca}{9.8}{0}{below left}{$ b $};
		\event{cb}{10.8}{0}{below right}{$ c $};
		\event{cc}{10.3}{1}{above}{$ a $};
		\node (gx) at (9.4, 1) {$(\gamma_\xi)$};
		\draw[enablingAbsent] (ca) -- (cb);
		\draw[adding] (cc) -- (10.3, 0);
	\end{tikzpicture}
	\caption{Counterexamples.}
	\label{fig:counterExamples}
\end{figure}

SESs allow us to model disjunctive causality. As an example, consider the dropping of a causality as in $ \sigma_\xi $ of Figure~\ref{fig:counterExamples} (or Figure~\ref{fig:SESExample}). Such a disjunctive causality is not possible in EBESs.
On the other hand, the asymmetric conflict of an EBES cannot be modeled with a SES. As an example, consider $ \xi_\sigma $ of Figure~\ref{fig:counterExamples}, where $ f $ cannot precede $ e $.
Hence, EBESs and SESs are incomparable.

\begin{thm}
	\label{thm:SESvsEBES}
	SESs and EBESs are incomparable.
\end{thm}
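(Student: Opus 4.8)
The plan is to read ``incomparable'' as the conjunction of two non-existence statements and to establish each by exhibiting the witness already displayed in Figure~\ref{fig:counterExamples}. Concretely, I would show (A) that the SES $\sigma_\xi$, with events $\Set{a,b,c}$, enabling $\enab{a}{b}$, and the single dropping $\drops{c}{a}{b}$, admits no poset-equivalent EBES, and (B) that the EBES $\xi_\sigma$, with events $\Set{e,f}$ and the single disabling $\disa{e}{f}$ (and no bundles), admits no poset-equivalent SES. Since posets determine configurations, I would first record the configurations of each witness: for $\sigma_\xi$ these are $\emptyset,\Set{a},\Set{c},\Set{a,c},\Set{a,b},\Set{b,c},\Set{a,b,c}$ --- note that $\Set{b}\notin\configurations{\sigma_\xi}$ and that $a,c$ are incomparable in the poset of $\Set{a,c}$ --- while for $\xi_\sigma$ they are $\emptyset,\Set{e},\Set{f},\Set{e,f}$, with $e$ strictly below $f$ in the poset of $\Set{e,f}$.

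For direction (A), I would argue by contradiction: assume an EBES $\xi$ with $\posetsEq{\sigma_\xi}{\xi}$, so $\configurations{\xi}=\configurations{\sigma_\xi}$. Because $b$ is possible (it lies in $\Set{a,b}$) but $\Set{b}\notin\configurations{\xi}$, the event $b$ must carry at least one bundle, and every $X\in\B{b}$ is nonempty (otherwise $b$ would be impossible); hence $b$ can never be the first event of a trace. Consequently the only trace yielding $\Set{a,b}$ is $ab$, so by the bundle-satisfaction clause of Definition~\ref{def:EBESconf} every $X\in\B{b}$ meets $\Set{a}$, i.e. $a\in X$; symmetrically, the only trace yielding $\Set{b,c}$ is $cb$, forcing $c\in X$ for every $X\in\B{b}$. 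Fixing any bundle $\buEn{X}{b}$ we thus obtain $\Set{a,c}\subseteq X$, and the stability condition of Definition~\ref{def:EBES} yields $\disa{a}{c}$ and $\disa{c}{a}$. But mutual disabling makes $\Set{a,c}$ unreachable, contradicting $\Set{a,c}\in\configurations{\xi}$.

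For direction (B), I would again assume an SES $\sigma$ with $\posetsEq{\xi_\sigma}{\sigma}$ and exploit that causality in an SES can only be inherited from the enabling relation, since dropping only removes causes and never creates precedence. As $\Set{f}\in\configurations{\sigma}$, the transition $\transS{\emptyset}{\Set{f}}$ forces $\ic{f}\setminus\dc{\emptyset}{f}=\ic{f}=\emptyset$ by Definition~\ref{def:SEStrans}, so $f$ has no initial causes. By Definition~\ref{def:SESposets} the cause set of $f$ in any trace is a subset of $\ic{f}=\emptyset$, so $f$ is incomparable to every other event in every poset of $\sigma$; in particular $\sigma$ cannot realize the poset of $\Set{e,f}$ in which $e<f$. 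This contradicts $\posetsEq{\xi_\sigma}{\sigma}$.

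The main obstacle is direction (A): one must rule out \emph{all} conceivable EBES encodings, including those using additional (necessarily impossible) events, rather than just the ``obvious'' bundle $\buEn{\Set{a,c}}{b}$. The crux is turning the semantic facts ``$\Set{b}$ is unreachable'' and ``$\Set{a,b},\Set{b,c}$ arise only as $ab$ and $cb$'' into the syntactic statement that every bundle on $b$ contains both $a$ and $c$, after which the stability condition collides with the concurrency of $a$ and $c$ witnessed by the configuration $\Set{a,c}$. Direction (B) is comparatively routine, resting on the monotonicity of SES causality.
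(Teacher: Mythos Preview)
Your proposal is correct, and both directions reach the same witnesses as the paper, but the arguments you run are organized differently from the paper's own proof of this theorem.

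For direction (A), the paper argues at the level of traces: since in $\traces{\sigma_\xi}$ every pair from $\{a,b,c\}$ occurs in either order, no disabling among these three is allowed; stability then forbids any bundle set on $b$ containing two of them, and bundles whose sets meet $\{a,b,c\}$ in at most one element are ruled out trace by trace, leaving no way to block the trace $b$. Your argument instead forces every bundle on $b$ to contain both $a$ and $c$ (from the unique traces $ab$ and $cb$ realizing $\{a,b\}$ and $\{b,c\}$), and then lets stability clash with the configuration $\{a,c\}$. This is exactly the argument the paper later uses in Lemma~\ref{lma:DCESninEBES}, and it is arguably the cleaner of the two; it also makes the role of additional impossible events transparent, which is the obstacle you rightly flag.

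For direction (B), the paper derives $\ic{e}=\ic{f}=\emptyset$ and $\neg(\conf{e}{f})$, concludes $fe\in\traces{\sigma}$, and obtains a trace-level contradiction. You instead argue via posets: from $\ic{f}=\emptyset$ you conclude that the cause set of $f$ in every trace is empty, so no SES poset on $\{e,f\}$ can place $e$ strictly below $f$. This is a valid alternative. One small correction of phrasing: the cause set is not ``a subset of $\ic{f}$'' by definition; rather, it must contain $\ic{f}\setminus\dc{\overline{t_{i-1}}}{f}=\emptyset$ and be the \emph{earliest} such set, which forces it to be $\emptyset$. The conclusion you draw is unaffected.
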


\begin{proof}
	Let $ \sigma_{\xi} = \left( \Set{ a, b, c }, \emptyset, \Set{ \enab{a}{b} }, \Set{ \drops{c}{a}{b} } \right) $ be the SES that is depicted in Figure~\ref{fig:counterExamples}.
	Assume there is some EBES $ \xi = \left( E, \disaOp, \buEnOp \right) $ such that $ \traces{\sigma_\xi} = \traces{\xi} $.
	By Definition~\ref{def:SEStraceDef}, $ \traces{\sigma_\xi} = \Set{ \epsilon, a, c, ab, ac, ca, cb, abc, acb, cab, cba } $, \ie $ b $ cannot occur first.
	By Definition~\ref{def:EBESconf}, a disabling $ \disa{x}{y} $ implies that $ y $ can never precedes $ x $.
	Thus, we have $ \disaOp \cap \Set{ a, b, c }^2 = \emptyset $, because within $ \traces{\sigma_\xi} $ each pair of events of $ \Set{ a, b, c } $ occur in any order.
	Similarly, we have $ \buEnOp \cap \{ \buEn{X}{e} \mid e \in \Set{ a, b, c } \land X \cap \Set{ a, b, c } = \emptyset \} = \emptyset $, because $ \buEn{x}{y} $ implies that $ x $ always has to precede $ y $.
	Moreover, by Definition~\ref{def:EBESconf}, adding impossible events as causes or using them within the disabling relation does not influence the set of traces.
	Thus, there is no EBES $ \xi $ with the same traces as $ \sigma_{\xi} $. By Definition~\ref{def:EBESconf} and the definition of posets in EBESs, there is then no EBES $ \xi $ with the same configurations or posets as $ \sigma_{\xi} $.
	
	Let $ \xi_{\sigma} = \left( \Set{ e, f }, \Set{ \disa{e}{f} }, \emptyset \right) $ be the EBES that is depicted in Figure~\ref{fig:counterExamples}.
	Assume that there is some SES $ \sigma = \left( E, \confOp, \enabOp, \shrinkingCausality \right) $ such that $ \traces{\xi_{\sigma}} = \traces{\sigma} $.
	According to \S\ref{sec:EBES}, we have $ \traces{\xi_{\sigma}} = \Set{ \epsilon, e, f, ef } $.
	By Definition~\ref{def:SEStraceDef} and because of the traces $ e $ and $ f $, there are no initial causes for $ e $ and f, \ie $ \enabOp \cap \Set{ \enab{x}{y} \mid y \in \Set{ e, f } } = \emptyset $.
	Moreover, $ \confOp \cap \Set{ e, f }^2 = \emptyset $, because of the trace $ ef $ and because conflicts cannot be dropped.
	Thus $ fe \in \traces{\sigma} $ but $ fe \notin \traces{\xi_{\sigma}} $, \ie there is no SES $ \sigma $ with the same traces as $ \xi_{\sigma} $. 
        Then, by the Definitions~\ref{def:SEStraceDef} and \ref{def:SESposets}, there is no SES $ \sigma $ with the same configurations or families of posets as $ \xi_{\sigma} $.
\end{proof}

\subsection{SESs versus RCESs}

SESs are strictly less expressive than RCESs. Each SES can be translated into a transition-equivalent RCES.

\begin{lem}\label{lma:SESinRCES}
	For each SES $ \sigma $ there is a RCES $ \rho $ such that $  \transEq{\sigma}{\rho}$.
\end{lem}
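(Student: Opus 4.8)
The plan is to reuse the generic translation from strict, dense transition systems into RCESs. A SES $\sigma$ already carries a native transition relation $\transSOp$ (Definition~\ref{def:SEStrans}(3)), so it suffices to check that $\transSOp$ is strict and dense in the sense of Definition~\ref{def:translationIntoRCES}; then $\rho := \rces{\sigma}$ is the desired RCES and $\transEq{\sigma}{\rho}$ follows from Lemma~\ref{lma:TransInRCES}. Note that $\rces{\sigma}$ is built over the same event set $E$ as $\sigma$, which takes care of the event-set clause of transition equivalence.

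Strictness is immediate: the very first clause of the definition of $\transS{X}{Y}$ demands $X \subset Y$. For density, assume $X' \subseteq X \subset Y \subseteq Y'$ and $\transS{X'}{Y'}$; I must derive $\transS{X}{Y}$. The first clause holds by hypothesis, and the conflict-freeness clause for $Y$ follows because $Y \subseteq Y'$ and $Y'$ is conflict-free by $\transS{X'}{Y'}$. The only delicate clause is the causal one, namely $( \ic{e} \setminus \dc{X}{e} ) \subseteq X$ for every $e \in Y \setminus X$. Here the key observation is that $\dcD$ is monotone in its history argument: since $\dc{H}{e} = \Set{ e' \mid \exists d \in H \logdot \drops{d}{e'}{e} }$, enlarging $H$ can only drop more causes, so $X' \subseteq X$ gives $\dc{X'}{e} \subseteq \dc{X}{e}$. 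Because $e \notin X$ and $X' \subseteq X$ force $e \notin X'$, while $e \in Y \subseteq Y'$, we have $e \in Y' \setminus X'$, so the causal clause of $\transS{X'}{Y'}$ yields $( \ic{e} \setminus \dc{X'}{e} ) \subseteq X'$. Combining monotonicity with $X' \subseteq X$ gives $\ic{e} \setminus \dc{X}{e} \subseteq \ic{e} \setminus \dc{X'}{e} \subseteq X' \subseteq X$, as required.

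With strictness and density established, Definition~\ref{def:translationIntoRCES} produces the RCES $\rho = \rces{\sigma}$, and Lemma~\ref{lma:TransInRCES} gives ${\transRCOp[\rho]} = {\transSOp}$, i.e.\ $\sigma$ and $\rho$ have literally the same transition relation. Consequently they share the same reachable configurations (both generated from $\emptyset$ by the reflexive-transitive closure) and hence the same reachable transition graph, which is exactly $\transEq{\sigma}{\rho}$ by Definitions~\ref{def:transeqrc} and \ref{def:transEq}. The main obstacle is the density check, and within it the monotonicity of $\dcD$ in its history argument; once that is pinned down, everything else is bookkeeping inherited from the generic lemma.
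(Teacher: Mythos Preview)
Your proof is correct and follows essentially the same approach as the paper: verify that $\transSOp$ is strict and dense, then invoke Definition~\ref{def:translationIntoRCES} and Lemma~\ref{lma:TransInRCES} with $\rho = \rces{\sigma}$. The paper's proof is the same argument with the variable names for the inner and outer pairs swapped, and your explicit justification of the monotonicity of $\dcD$ and of why $e \in Y' \setminus X'$ makes the density check slightly more transparent than the paper's version.
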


\begin{proof}
	By the Definitions~\ref{def:SEStrans} and \ref{def:SESconf}, $ \transS{X}{Y} $ implies $ X \subseteq Y $  for all $ X, Y \in \configurations{\sigma} $.
	
	Assume that $ X \subseteq X' \subseteq Y' \subseteq Y $.
	Then, by Definition~\ref{def:SEStrans}, $ \transS{X}{Y} $ implies that $ \forall e, e' \in Y \logdot \neg \left( \conf{e}{e'} \right) $ and $ \forall e \in Y \setminus X \logdot \left( \ic{e} \setminus \dc{X}{e} \right) \subseteq X $.
	Then $ X \subseteq X' $ implies that $ \left( \ic{e} \setminus \dc{X'}{e} \right) \subseteq \left( \ic{e} \setminus \dc{X}{e} \right) $. Then, because of $ Y' \subseteq Y $, $ \forall e, e' \in Y' \logdot \neg \left( \conf{e}{e'} \right) $ and $ \forall e \in Y' \setminus X' \logdot \left( \ic{e} \setminus \dc{X'}{e} \right) \subseteq X' $.
	By Definition~\ref{def:SEStrans}, we then have $ \transS{X'}{Y'} $.
	
	Thus, $ \sigma $ satisfies the conditions of Definition~\ref{def:translationIntoRCES}. Then by Lemma~\ref{lma:TransInRCES}, $ \rho = \rces{\sigma} $ is a RCES such that $\transEq{\sigma}{\rho}$.
\end{proof}

On the other hand, there are RCESs that cannot be translated into a transition-equivalent SES.
As a counterexample, we use the RCES $ \rho_{\sigma}=(E,\enrcOp) $, where $ E =  \{ e , f \}$ and  $ \enrcOp =\{ (\emptyset,\{ e \}) , (\emptyset,\{ f \}) , (\{ f \},\{ e, f\}) \} $, which captures disabling in an EBES.

\begin{lem}
\label{lma:SESinRCESstrictly} 
	There is no transition-equivalent SES to the RCES $ \rho_{\sigma}=(E,\enrcOp) $, where $ E =  \{ e , f \}$ and  $ \enrcOp =\{ (\emptyset,\{ e \}) , (\emptyset,\{ f \}) , (\{ f \},\{ e, f\}) \} $.
\end{lem}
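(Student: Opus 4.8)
The plan is to first read off the reachable transition graph of $\rho_\sigma$ explicitly, and then assume a transition-equivalent SES exists and show that it is \emph{forced} to contain a transition that $\rho_\sigma$ lacks. The underlying intuition is the same asymmetry already exploited in Theorem~\ref{thm:SESvsEBES}: $\rho_\sigma$ encodes the disabling of an EBES, and a SES cannot realize it.

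First I would compute $\configurations{\rho_\sigma}$ and the transitions via Definition~\ref{def:ResConfTrans}. From $\emptyset$, both singletons are reachable: the nonempty subsets of $\{e\}$ and $\{f\}$ have the witnesses $\enrc{\emptyset}{\{e\}}$ and $\enrc{\emptyset}{\{f\}}$. The full set $\{e,f\}$ is reachable \emph{only} through $\{f\}$: the witness $\enrc{\{f\}}{\{e,f\}}$ gives $\transRC[\rho_\sigma]{\{f\}}{\{e,f\}}$, while from $\{e\}$ no $W\subseteq\{e\}$ satisfies $W\vdash\{e,f\}$, so $\transRC[\rho_\sigma]{\{e\}}{\{e,f\}}$ \emph{fails}. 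Thus the reachable configurations are $\emptyset,\{e\},\{f\},\{e,f\}$, and the crucial point is this asymmetry between $\{e\}$ and $\{f\}$.

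Next I would assume, toward a contradiction, a SES $\sigma=\left(E,\confOp,\enabOp,\shrinkingCausality\right)$ with $\transEq{\sigma}{\rho_\sigma}$; the argument will be insensitive to extra (necessarily impossible) events in $E$, since these never enter a reachable configuration. From $\transS{\emptyset}{\{e\}}$, condition~(3) of Definition~\ref{def:SEStrans} applied with empty history yields $\left(\ic{e}\setminus\dc{\emptyset}{e}\right)\subseteq\emptyset$; since $\dc{\emptyset}{e}=\emptyset$, this forces $\ic{e}=\emptyset$, and symmetrically $\transS{\emptyset}{\{f\}}$ gives $\ic{f}=\emptyset$. So neither $e$ nor $f$ has any initial cause. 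From $\transS{\{f\}}{\{e,f\}}$, the conflict-freeness condition gives $\neg(\conf{e}{f})$.

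Finally I would verify the three conditions of $\transS{\{e\}}{\{e,f\}}$ in $\sigma$: strictness $\{e\}\subset\{e,f\}$ is immediate; conflict-freeness of $\{e,f\}$ holds because $\neg(\conf{e}{f})$ and $\confOp$ is irreflexive; and condition~(3) for the single new event $f$ reduces to $\left(\ic{f}\setminus\dc{\{e\}}{f}\right)\subseteq\{e\}$, which is trivial since $\ic{f}=\emptyset$. Hence $\sigma$ \emph{necessarily} has $\transS{\{e\}}{\{e,f\}}$. As $\{e\}$ and $\{e,f\}$ are reachable configurations of both structures, this edge lies in the reachable transition graph of $\sigma$ but not of $\rho_\sigma$, contradicting $\transEq{\sigma}{\rho_\sigma}$. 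The main obstacle is conceptual rather than computational: once $e$ and $f$ carry no initial causes and are not in conflict, the SES transition rule treats them completely symmetrically, so a SES cannot produce the asymmetric behavior in which $\{f\}$ extends to $\{e,f\}$ while $\{e\}$ does not; the only technical care needed is to read the subset-witness quantifier of Definition~\ref{def:ResConfTrans} correctly and to confirm that adding impossible events to $\sigma$ cannot help, since the derivations of $\ic{e}=\ic{f}=\emptyset$ and $\neg(\conf{e}{f})$ use only transitions among subsets of $\{e,f\}$.
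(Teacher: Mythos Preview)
Your argument follows the same strategy as the paper's: from the shared reachable transitions you deduce that $e$ and $f$ carry no initial causes and are not in conflict, and then exhibit a transition the SES is forced to admit but $\rho_\sigma$ lacks. The one substantive difference is \emph{which} transition you pick. You use $\transS{\{e\}}{\{e,f\}}$; the paper uses $\transS{\{f\}}{\{e,f\}}$. With the enabling relation as stated (together with the tacit $(\emptyset,\emptyset)\in{\enrcOp}$, without which $\rho_\sigma$ would have no transitions at all and the lemma would be vacuous), your choice is the one that actually works: the pair $(\{f\},\{e,f\})\in{\enrcOp}$ witnesses $Z=\{e,f\}$ from $X=\{f\}$, so $\transRC[\rho_\sigma]{\{f\}}{\{e,f\}}$ \emph{does} hold, whereas $\transRC[\rho_\sigma]{\{e\}}{\{e,f\}}$ fails because no $W\subseteq\{e\}$ enables $\{e,f\}$. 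The paper's final step thus appears to have $e$ and $f$ interchanged relative to its own $\enrcOp$; your version gives the correct orientation. Your remark that additional impossible events in the SES cannot rescue the situation is a useful point the paper leaves implicit. The only minor slip is that you restrict to ``nonempty subsets'' when checking transitions from $\emptyset$, while Definition~\ref{def:ResConfTrans} quantifies over all $Z\subseteq Y$; this is really a defect in the stated $\enrcOp$ (missing $(\emptyset,\emptyset)$) rather than in your reasoning.
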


\begin{proof}
	Assume a SES $ \sigma = \left( E, \confOp, \enabOp, \shrinkingCausality \right) $ such that $ \transEq{\sigma }{\rho_{\sigma}} $. Then $ \configurations{\sigma} = \configurations{\rho_{\sigma}} $.
	By Definition~\ref{def:SEStraceDef} and $ \configTraces{\sigma} = \configurations{\sigma} $ (\cf Lemma~\ref{lem:SESconf}) and because of the configuration $ \Set{ e, f } \in \configurations{\rho_{\sigma}} $, the events $ e $ and $ f $ cannot be in conflict with each other, \ie $ \confOp \cap \Set{ e, f }^2 = \emptyset $.
	Moreover, because of the configurations $ \Set{ e }, \Set{ f } \in \configurations{\rho_{\sigma}} $, there are no initial causes for $ e $ and $ f $, \ie $ \enabOp \cap \Set{ \enab{x}{y} \mid y \in \Set{ e, f } } = \emptyset $.
	Note that the relation $ \shrinkingCausality $ cannot disable events.
	Thus, we have $ \forall a, b \in \Set{ e, f } \logdot \neg \left( \conf{a}{b} \right) $ and $ \left( \ic{e} \setminus \dc{\Set{ f }}{e} \right) = \emptyset \subseteq \Set{ f } $.
	But then, by Definition~\ref{def:SEStrans}, we have $ \transS{\Set{ f }}{\Set{ e, f }} $.
	Since $ \transRC{\Set{ f }}{\Set{ e, f }} $ does not hold, this violates our assumption, \ie there is no SES that is transition equivalent to $ \rho_{\sigma} $.
\end{proof}

Hence, SESs are strictly less expressive than RCESs.

\begin{thm}
	\label{thm:SESinRCESstrictly}
	SESs are strictly less expressive than RCESs.
\end{thm}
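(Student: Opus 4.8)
The plan is to read ``strictly less expressive'' as the conjunction of two claims and to discharge each one with an already-established lemma, so that the proof becomes a two-line combination. The first claim is that every SES is matched, up to transition equivalence, by some RCES; the second is that at least one RCES is matched by \emph{no} SES. Together these say that the class of SESs embeds into the class of RCESs \wrt the transition-graph semantics (Definition~\ref{def:transEq}), but that the embedding is proper.

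For the first claim I would simply invoke Lemma~\ref{lma:SESinRCES}: given an arbitrary SES $\sigma$, the embedding $\rces{\cdot}$ of Definition~\ref{def:translationIntoRCES}---which applies because $\transSOp$ was shown to be strict and dense---yields an RCES $\rho = \rces{\sigma}$ with $\transEq{\sigma}{\rho}$. Hence RCESs are at least as expressive as SESs. For the second claim I would invoke Lemma~\ref{lma:SESinRCESstrictly}, which exhibits the concrete RCES $\rho_\sigma$ encoding EBES-style asymmetric disabling between $e$ and $f$ and shows that no SES is transition equivalent to it. The conjunction of the two lemmas is exactly the statement of the theorem, so no further work is needed at this point.

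The genuinely hard step is the separation, and it has already been carried out inside Lemma~\ref{lma:SESinRCESstrictly}; if I were proving the theorem from scratch, this is where the effort would go. The crux is that the shrinking-causality relation $\shrinkingCausality$ can only \emph{remove} preconditions and can never forbid an event from occurring after another---it cannot express asymmetric disabling, in the same spirit as the SES-versus-EBES incomparability of Theorem~\ref{thm:SESvsEBES}. Concretely, in any candidate SES $\sigma$ with $\transEq{\sigma}{\rho_\sigma}$ the events $e$ and $f$ would have to be conflict-free and free of initial causes (to match the singleton configurations), which forces the transition $\transS{\Set{f}}{\Set{e,f}}$; but $\rho_\sigma$ deliberately withholds $\transRC{\Set{f}}{\Set{e,f}}$, yielding the contradiction. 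This asymmetry is the single obstacle on which the whole separation turns.
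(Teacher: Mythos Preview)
Your proof is correct and matches the paper's own argument exactly: the theorem is derived as an immediate corollary of Lemma~\ref{lma:SESinRCES} (every SES has a transition-equivalent RCES) together with Lemma~\ref{lma:SESinRCESstrictly} (the RCES $\rho_\sigma$ has no transition-equivalent SES). Your additional explanation of why the separation works is accurate and goes a bit beyond what the paper spells out at this point, but the overall structure is identical.
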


\begin{proof}
	Follows from the Lemmata~\ref{lma:SESinRCESstrictly} and \ref{lma:SESinRCES}.
\end{proof}

\section{Growing Causality}
\label{sec:GES}

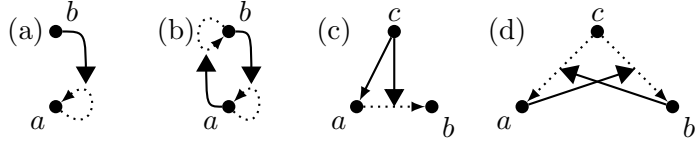
\begin{figure}[t]
	\centering
	\begin{tikzpicture}
		\event{ba}{1.8}{1}{above right}{$ b $};
		\event{bb}{1.8}{0}{below left}{$ a $};
		\node (b) at (1.4, 1) {(a)};
		\draw[enablingAbsent] (bb) .. controls (2.4, -0.5) and (2.4, 0.5) .. (bb);
		\draw[adding] (ba) .. controls (2.2, 1) .. (2.2, 0.3);
		\event{aa}{4.1}{0}{below left}{$ a $};
		\event{ab}{4.1}{1}{above right}{$ b $};
		\node (a) at (3.4, 1) {(b)};
		\draw[enablingAbsent] (aa) .. controls (4.6, -0.5) and (4.6, 0.5) .. (aa);
		\draw[adding] (ab) .. controls (4.4, 1) .. (4.4, 0.3);
		\draw[enablingAbsent] (ab) .. controls (3.6, 1.5) and (3.6, 0.5) .. (ab);
		\draw[adding] (aa) .. controls (3.8, 0) .. (3.8, 0.7);
		\event{da}{5.8}{0}{below left}{$ a $};
		\event{db}{6.8}{0}{below right}{$ b $};
		\event{dc}{6.3}{1}{above}{$ c $};
		\node (d) at (5.5, 1) {(c)};
		\draw[enablingAbsent] (da) -- (db);
		\draw[enablingPES] (dc) -- (da);
		\draw[adding] (dc) -- (6.3, 0);
		\event{ea}{8}{0}{below left}{$ a $};
		\event{eb}{10}{0}{below right}{$ b $};
		\event{ec}{9}{1}{above}{$ c $};
		\node (e) at (7.8, 1) {(d)};
		\draw[enablingAbsent] (ec) -- (eb);
		\draw[adding] (ea) -- (9.5, 0.5);
		\draw[enablingAbsent] (ec) -- (ea);
		\draw[adding] (eb) -- (8.5, 0.5);
	\end{tikzpicture}
	\vspace*{-1em}
	\caption{GESs modeling disabling, conflict, temporary disabling, and resolvable conflicts.}
	\label{fig:GESExamples}
\end{figure}

Like with SESs, we express our extension for growing causality on rPESs by a new relation: we use $ \growingCausality \subseteq E^3 $, where $ \left( a, c, t \right) \in \growingCausality $, also denoted as \addcause{a}{c}{t}, models the fact that the occurrence of the so-called \emph{adder} $ a $ adds $ c $ as a cause for the target $ t $. Thus, $ a $ is a condition for the causal dependency $ \enab{c}{t} $. By definition, we will ensure that $a\notin\{c,t\}$, which is a difference to the version in \cite{dynamicCausality15}.
The relation $\growingCausality$ is visualized in the example of Figure~\ref{fig:GESExamples}(c) by an arrow with a filled head from the modifier $ c $ to the added dependency $ \enab{a}{b} $; to  denote that this dependency does not initially exist, it is depicted as a dotted arrow. (In this example, there is an additional causality $ \enab{c}{a} $.)

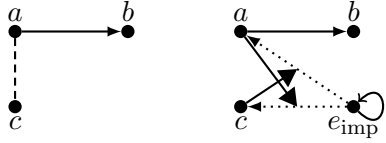
\begin{figure}[t]
	\centering
	\begin{tikzpicture}[]
		\event{a}{0}{1}{above}{$ a $};
		\event{b}{1.5}{1}{above}{$ b $};
		\event{c}{0}{0}{below}{$ c $};
		\draw[conflictPES] (a) edge (c);
		\draw[enablingPES] (a) edge (b);
		\event{aa}{3}{1}{above}{$ a $};
		\event{bb}{4.5}{1}{above}{$ b $};
		\event{cc}{3}{0}{below}{$ c $};
		\event{i}{4.5}{0}{below}{$ e_{\text{imp}} $};
 		\draw[enablingPES] (aa) edge (bb);
		\draw[enablingAbsent] (i) -- (cc);
		\draw[enablingAbsent] (i) -- (aa);
		\draw[enablingPES,->] (i) .. controls (5, -0.5) and (5, 0.5) .. (i);
		\draw[adding] (aa) -- (3.75,0);
		\draw[adding] (cc) -- (3.75,0.5);
	\end{tikzpicture}
	\caption{A rPES with binary conflict (and no inherited conflict) and a simulating GES.}
	\label{fig:GESconflict}
\end{figure}

In comparison to a rPES, we can model a binary conflict between $ a $ and $ c $ by a mutual disabling, and a disabling of an event by an addition of an impossible cause: assume a fresh impossible event $ e_{\text{imp}} $ with $ \enab{e_{\text{imp}}}{e_{\text{imp}}} $ and the mutual disabling $ \adds{b}{e_{\text{imp}}}{a} $ and $ \adds{a}{e_{\text{imp}}}{b} $ (see Figure~\ref{fig:GESconflict}). Thus, we omit the conflict relation in the Definition of GESs.

\begin{defi}\label{def:GES}
	A \emph{Growing Causality Event Structure (GES)} is a triple $ \gamma = \left( E,\enabOp, \growingCausality \right)$,  where $ E$ is a set of events, ${\enabOp}\subseteq{E^2}$ the initial causality relation, and $ \growingCausality \subseteq E^3 $ is the \emph{growing causality} relation such that, for all $a,c,t \in E$, $\addcause{a}{c}{t}$ implies $\lnot (c \rightarrow t)$ and $ a\notin\{c,t\}$.
\end{defi}

We refer to the causes added to an event \wrt a specific history by the function $ \acD :\Powerset{E} \times E \to \Powerset{E} $, defined as $\ac{H}{e} = \Set{ e' \mid \exists a \in H \logdot \addcause{a}{e'}{e} }$, and to the initial causality by the function $ \icD $ as defined in \S\ref{sec:SES}.
Similar to the RCESs the behavior of a GES can be defined by a transition relation.

\begin{defi}
	\label{def:GEStraceDef}
	\label{def:GESNoConcurTransition}
	\label{def:GESConfigurations}
	\label{def:GEStrans}
	Let $ \gamma = \left( E, \enabOp, \growingCausality \right) $ be a GES.
	\begin{itemize}[noitemsep]
		\item A \emph{trace} of $ \gamma $ is a sequence of distinct events $ t	= e_1 \cdots e_n $ with $ \overline{t} \subseteq E $ such that \[\forall 1 \leq i \leq n \logdot  \left(\ic{e_i} \cup \ac{\overline{t_{i-1}}}{e_i} \right) \subseteq \overline{t_{i-1}}.\]Then $ C \subseteq E $ is a \emph{trace-based configuration} of $ \gamma $ if there is a trace $ t $ such that $ C = \overline{t} $.	The set of traces of $ \gamma $ is denoted by $ \traces{\gamma} $ and the set	of its trace-based configurations is denoted by $ \configTraces{\gamma} $.
		\item Let $ X, Y \subseteq E $.	Then $ \transG{X}{Y} $ iff 
			\begin{enumerate}
				\item$X \subset Y$
				\item$\forall e \in Y \setminus X \logdot	( \ic{e} \cup \ac{X}{e} ) \subseteq X $
				\item$\forall t, a \in Y\setminus X\logdot\forall c\in E\logdot	\addcause{a}{c}{t} \implies c \in X$.
		\end{enumerate}
		\item  The set of all configurations of $ \gamma $ is $ \configurations{\gamma} = \Set{ X \subseteq E \mid \emptyset \! \transGOp^* \! X \land X \text{ is finite } } $, where $ \transGOp^*$ is the reflexive and transitive closure of $ \transGOp $.
		\item For $X\subseteq Y\subseteq E$  we define the set of added dependencies as \[\added{X,Y}:=\left\{(c,t)\mid\exists a\in Y\setminus X\logdot \adds{a}{c}{t}  \right\}.\]
	\end{itemize}
\end{defi}

\noindent
The last condition in the transition definition prevents the concurrent occurrence of a target and its modifier since they are not independent. The exception is when the cause has already occurred; in that case, the modifier does not change the target's predecessors. 
Again, we consider the reachable and finite configurations, and show in the Appendix (\cf Lemma~\ref{lma:GESConfigEquivalence}) that the definitions of reachable and trace-based configurations coincide.
We use Definition~\ref{def:transEq} to obtain $ \transEq{}{} $ for GESs and consider two GESs as equally expressive if they are transition equivalent.

\subsection{An Alternative Transition Definition}

In the following, we provide an alternative transition definition for GESs, similar to the SESs in Definition~\ref{def:SEStransalt}. As before, this definition stresses the \emph{state} of the causal dependencies, but this time due to its dynamic growth. Thereafter, we prove the equivalence of both definitions.

\begin{defi}\label{def:GEStransalt}
  Let $ \gamma = \left( E, \incaus, \growingCausality \right) $ be a GES, 
  and $ X,Y \subseteq E $. \\
  Then $\transGa{(X,\caus{X})}{(Y,\caus{Y})}$, if 
  \begin{enumerate}
  \item $ X \subset Y $,
  \item 
    \begin{enumerate}
    \item
      $ \forall e \in Y {\setminus} X \logdot \left\{e'\mid
        (e',e)\in{\caus{X}}\right\} \subseteq X $,
    \item $ {\caus{Y}} = {\caus{X}} \cup {\added{X,Y}} $, and
    \end{enumerate}
  \item $ \forall t, a \in Y {\setminus} X \logdot \forall c \in E \logdot \addcause{a}{c}{t} \implies c \in X $.
  \end{enumerate}
\end{defi}

\noindent
Here each state explicitly shows all causality information (without the $ \acD $ function) and the growing of the causality relation is explicitly represented by the causal state $\caus{X}$. 

The following Lemma provides an explicit, non-recursive definition of causal states.

\begin{lem}\label{lma:GEScaus}
	Let $ \gamma = \left( E, \incaus, \growingCausality \right) $ be a GES and $ (X,\caus{X})$ be a state of $ \gamma $.\\
	Then $ {\caus{X}} = {\incaus \cup \added{\emptyset,X}} $.
\end{lem}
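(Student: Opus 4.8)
The plan is to mirror the proof of Lemma~\ref{lma:SEScaus}, replacing the set difference used there for droppers by the union used here for adders. Since $(X,\caus{X})$ is a state, there is a witnessing reachability chain $\emptyset =: C_0 \subset C_1 \subset \dots \subset C_n := X$ with $\transGa{(C_i,\caus{C_i})}{(C_{i+1},\caus{C_{i+1}})}$ for all $0\leq i < n$, and the base case is $\caus{C_0} = \caus{\emptyset} = \incaus$: the causal state of the empty configuration is the initial causality relation.

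First I would unfold condition~(2)(b) of Definition~\ref{def:GEStransalt} along this chain. Each step contributes $\caus{C_{i+1}} = \caus{C_i} \cup \added{C_i,C_{i+1}}$, so by associativity of union I obtain
\[
  \caus{X} = \incaus \cup \added{C_0,C_1} \cup \added{C_1,C_2} \cup \dots \cup \added{C_{n-1},C_n}.
\]
Then I would collapse this telescoping union to $\added{\emptyset,X}$. By the definition of $\added{}$ (Definition~\ref{def:GEStrans}) we have $\added{C_i,C_{i+1}} = \Set{(c,t)\mid \exists a\in C_{i+1}{\setminus} C_i \logdot \adds{a}{c}{t}}$, so the whole union equals $\Set{(c,t)\mid \exists a\in \bigcup_{i}(C_{i+1}{\setminus} C_i) \logdot \adds{a}{c}{t}}$. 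Because the witnessing chain is increasing with $C_0=\emptyset$ and $C_n=X$, its increments satisfy $\bigcup_{i}(C_{i+1}{\setminus} C_i) = C_n{\setminus} C_0 = X$, whence the union is exactly $\Set{(c,t)\mid \exists a\in X\logdot \adds{a}{c}{t}} = \added{\emptyset,X}$. Combining the two displays yields $\caus{X} = \incaus \cup \added{\emptyset,X}$, as required.

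I expect the difficulty here to be bookkeeping rather than conceptual: the single substantive step is justifying $\bigcup_{i}(C_{i+1}{\setminus} C_i) = X$, which uses only that the chain is monotone from $\emptyset$ to $X$ (guaranteed by condition~(1) of Definition~\ref{def:GEStransalt}). One further point worth noting is that the resulting closed form depends on $X$ alone and not on the chosen witness; as in Lemma~\ref{lma:SEScaus}, this is precisely what makes $\caus{X}$ well-defined as a function of the reached configuration $X$.
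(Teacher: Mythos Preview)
Your argument is correct and follows the same approach as the paper: pick a witnessing chain, unfold condition~(2)(b) of Definition~\ref{def:GEStransalt} to get the nested union $\incaus\cup\added{C_0,C_1}\cup\cdots\cup\added{C_{n-1},X}$, and then collapse it to $\incaus\cup\added{\emptyset,X}$ using the definition of $\added{}$. You simply spell out in more detail the ``basic set theory'' step that the paper leaves implicit.
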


\begin{proof}
	Let $\emptyset=:C_0,C_1,\dots,C_{n-1},C_n:=X$ be a configuration sequence such that we get $ \transSa{(C_0, \incaus)}{(C_1, \caus{C1})} \dots \transSa{(C_i, \caus{C_i})}{(C_{i+1}, \caus{C_{i+1}})} \dots \transSa{(C_{n-1}, \caus{C_{n-1}})}{(X, \caus{X})}$.\\
	By Definition~\ref{def:GEStransalt}, we have $ \caus{X} = (\dots(\incaus\cup\added{C_0,C_1})\cup \dots )\cup\added{C_{n-1},X} $, and with basic set theory, and the Definition of $ \added{X, Y} $ in \ref{def:GEStrans}, we get $ {\caus{X}} = {\incaus \cup \added{\emptyset,X}} $. 
\end{proof}

Now we can prove that both transition definitions of GESs coincide.

\begin{thm}
	Let $ \gamma = \left( E, \incaus, \growingCausality \right) $be a GES and $ X, Y \subseteq E $. Then \[\transG{X}{Y}\iff\transGa{(X,\caus{X})}{(Y,\caus{Y})}.\]
\end{thm}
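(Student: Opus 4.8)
The plan is to mirror the structure of the corresponding theorem for SESs proved above, replacing set difference by set union throughout. First I would observe that condition~(1), namely $X \subset Y$, and condition~(3), the non-concurrency requirement $\forall t,a \in Y\setminus X\logdot\forall c\in E\logdot \adds{a}{c}{t}\implies c\in X$, appear literally verbatim in both Definition~\ref{def:GEStrans} and Definition~\ref{def:GEStransalt}. Hence nothing needs to be shown for them, and the whole proof reduces to reconciling condition~(2) of Definition~\ref{def:GEStrans} with conditions~(2a) and (2b) of Definition~\ref{def:GEStransalt}.

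The heart of the argument is to establish, for every $e \in Y\setminus X$, the identity
\[\left(\ic{e}\cup\ac{X}{e}\right) = \left\{e'\mid (e',e)\in\caus{X}\right\}.\]
To do so I would invoke Lemma~\ref{lma:GEScaus}, which supplies the explicit, non-recursive form ${\caus{X}} = {\incaus \cup \added{\emptyset,X}}$, and then unfold the definitions of $\icD$, $\acD$, and $\added{\cdot,\cdot}$. Concretely, the incoming edges of $e$ in $\incaus$ are exactly $\ic{e} = \Set{e' \mid \enab{e'}{e}}$, while the incoming edges of $e$ in $\added{\emptyset,X} = \Set{(c,t)\mid\exists a\in X\logdot \adds{a}{c}{t}}$ are exactly $\ac{X}{e} = \Set{e' \mid \exists a\in X\logdot \adds{a}{e'}{e}}$. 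Splitting the set $\Set{e'\mid (e',e)\in \incaus\cup\added{\emptyset,X}}$ along the two relations therefore yields the claimed identity, which turns condition~(2) into condition~(2a).

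It then remains to check that (2b), the defining equation ${\caus{Y}} = {\caus{X}} \cup {\added{X,Y}}$, is consistent with Lemma~\ref{lma:GEScaus} applied to $Y$ and records no additional constraint on whether the transition is permitted. Since $\added{\emptyset,Y} = \added{\emptyset,X}\cup\added{X,Y}$ (splitting the adders of a dependency according to whether they lie in $X$ or in $Y\setminus X$, exactly as $\dropped{\emptyset,Y}$ was split in the SES theorem), I obtain ${\caus{Y}} = \incaus\cup\added{\emptyset,Y} = (\incaus\cup\added{\emptyset,X})\cup\added{X,Y} = {\caus{X}}\cup\added{X,Y}$, so (2b) holds automatically. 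Combining the three observations gives $\transG{X}{Y}\iff\transGa{(X,\caus{X})}{(Y,\caus{Y})}$.

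I expect the computation itself to be entirely routine; the only point requiring care is the set-theoretic bookkeeping that distinguishes $\added{\emptyset,X}$ from $\added{X,Y}$ and verifies that the "adder lies in $X$ versus in $Y\setminus X$" split is exhaustive at the level of added dependencies. No genuinely hard step arises, precisely because condition~(3) is already identical in both definitions, so the non-concurrency subtlety never interacts with the translation between the implicit $\acD$ formulation and the explicit causal state $\caus{X}$.
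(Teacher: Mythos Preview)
Your proposal is correct and follows essentially the same approach as the paper: both proofs note that conditions~(1) and~(3) coincide verbatim, establish the identity $\ic{e}\cup\ac{X}{e}=\{e'\mid (e',e)\in\caus{X}\}$ via Lemma~\ref{lma:GEScaus} and unfolding of $\icD$, $\acD$, $\added{\cdot,\cdot}$, and then verify that~(2b) follows from Lemma~\ref{lma:GEScaus} by splitting $\added{\emptyset,Y}$ into $\added{\emptyset,X}\cup\added{X,Y}$. Your write-up is in fact slightly more explicit about this last splitting step than the paper's one-line appeal to Lemma~\ref{lma:GEScaus}.
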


\begin{proof}
  Since all but condition \emph{(2) }in the transition definitions in \ref{def:GEStrans} and \ref{def:GEStransalt} coincide, we just show $ \forall e\in Y\setminus X\logdot (\left( \ic{e} \cup \ac{X}{e} \right) = \left\{e'\mid {(e',e)}\in{\caus{X}}\right\})$. This follows from definitions of $\acD$ and $\icD$ and Lemma \ref{lma:GEScaus}:
  \begin{align*}
		\ic{e} \cup \ac{X}{e}&\\
		=&\left\{e'\mid (e',e)\in{\incaus}\right\}\cup\left\{c\mid\exists a\in X\logdot\adds{a}{c}{e}\right\}\\%
		=&\left\{e'\mid (e',e)\in{\incaus} \vee\, e'\in \left\{c\mid\exists a\in X\logdot\adds{a}{c}{e}\right\} \right\}\\
		=&\left\{e'\mid (e',e)\in{\incaus} \vee\, (e',e)\in \left\{(c,e)\mid\exists a\in X\logdot\adds{a}{c}{e}\right\} \right\}\\
		=&\left\{e'\mid (e',e)\in{\incaus}\cup\left\{(c,t)\mid\exists a\in X\logdot \adds{a}{c}{t}\right\}\right\}\\=&\left\{e'\mid (e',e)\in{\incaus}\cup\added{\emptyset, X}\right\}\\
		=&\left\{e'\mid (e',e)\in{\caus{X}}\right\}
	\end{align*}
	As $ {\caus{Y}} = {\caus{X} \cup \added{X, Y}} $ by Lemma~\ref{lma:GEScaus}, 
        we get $ \transG{X}{Y} \iff \transGa{(X,\caus{X})}{(Y,\caus{Y})} $.
\end{proof}

\subsection{Expressive Power}

Disabling as defined in EBESs or the asymmetric event structure of \cite{Baldan20011} can be modeled by $ \growingCausality $. For example, $ \disa{b}{a} $ can be modeled by \addcause{b}{a}{a} as depicted in Figure~\ref{fig:GESExamples}(a). Analogously, conflicts can be modeled by $ \growingCausality $ through mutual disabling, as depicted in Figure~\ref{fig:GESExamples}(b). However, if we later combine shrinking and growing causality ESs in \S\ref{sec:DCES}, another dropper $ d $ can resolve the disabling.

Let us reconsider the example GES in Figure~\ref{fig:GESExamples}(c).
Initially, $ b $ is enabled and $ a $ is disabled. 
The occurrence of $ c $, which (enables $ a $ and) disables $ b $ by adding $ a $ as a cause with $ \addcause{c}{a}{b} $. By this, $ b $ may be \emph{temporarily disabled} by an occurrence of $c$ until $ a $ occurs and \emph{re-enables} $b$.
In \emph{inhibitor event structures} \cite{Baldan2004129}, another kind of disabling can be expressed, where an event $ b $ can be disabled by another event $ c $ until an event out of a \emph{set} $ X $ (instead of only a single event $a$, as above) occurs. This behavior|which may be called \emph{disjunctive re-enabling}|cannot be modeled in GESs, but in DCESs (\cf \S\ref{sec:DCES}).

Also resolvable conflicts can be modeled by a GES. For example the GES in Figure~\ref{fig:GESExamples}(d) with \addcause{a}{c}{b} and \addcause{b}{c}{a} models a conflict between $ a $ and $ b $ that can be resolved by $ c $. Note that this example depends on the idea that a modifier and its target cannot occur concurrently (\cf Definition~\ref{def:GESNoConcurTransition}). Note also that resolvable conflicts are a reason why families of configurations are not sufficient to describe the semantics of GESs or RCESs.

\subsection{GESs versus PESs}

The GESs are strictly more expressive than rPES and PESs, since they are in essence a generalization of the rPES (when the conflict relation is encoded as mutual disabling, \ie by mutual adding of an impossible cause) and they can express a disabling of an event (as in Fig. \ref{fig:GESExamples} (b)).

\begin{lem}
	\label{lma:rPESinGES}
	The GESs are strictly more expressive than the rPESs and PESs.
\end{lem}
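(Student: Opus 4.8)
The plan is to prove the two halves of ``strictly more expressive'' separately: an inclusion (every rPES can be simulated by a GES) and a separation (some GES cannot be simulated by any rPES). Since we restrict attention to finite configurations, PESs and rPESs have the same expressive power (as remarked after Definition~\ref{def:PES}), so it suffices to treat rPESs. For the inclusion I would compare the two classes through their families of reachable finite configurations---the semantic model used for rPESs---and for the separation I would descend to the finer transition graphs, which by the hierarchy of \S\ref{sec:Semantics} refine configurations.

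For the inclusion, fix an rPES $\pi = (E, \confOp, \enabOp)$ and build a GES $\gamma$ with $\configurations{\gamma} = \configurations{\pi}$. The construction keeps $E$ and $\enabOp$ and replaces the conflict relation---absent in GESs---by the impossible-cause trick sketched before the statement: add a fresh event $e_{\text{imp}}$ with a self-loop $\enab{e_{\text{imp}}}{e_{\text{imp}}}$, and for every $\conf{a}{b}$ add the triples $\adds{a}{e_{\text{imp}}}{b}$ and $\adds{b}{e_{\text{imp}}}{a}$. First I would check that $\gamma$ is a well-formed GES: the side conditions of Definition~\ref{def:GES} hold because $e_{\text{imp}}$ is fresh, so $\lnot(\enab{e_{\text{imp}}}{b})$ holds initially, and because $\confOp$ is irreflexive, so the modifier differs from both cause and target. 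Since the only cause of $e_{\text{imp}}$ is itself, $e_{\text{imp}}$ stays impossible forever, and every reachable configuration is a subset of $E$.

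Next I would establish $\configurations{\gamma} = \configurations{\pi}$. Downward-closure and the handling of enabling cycles transfer verbatim because $\enabOp$ is untouched. For conflicts, the key point is that once $a$ has occurred, $\ac{X}{b}$ contains $e_{\text{imp}}$, which can never be supplied; hence no trace of $\gamma$ admits both $a$ and $b$ exactly when $\conf{a}{b}$, reproducing conflict-freeness. Conversely, for any conflict-free, downward-closed, cycle-free set, every causal linearization is a valid trace of $\gamma$, since for a conflict-free history $\ac{X}{e}$ is empty and condition~(2) of Definition~\ref{def:GEStrans} reduces to $\ic{e} \subseteq X$. This two-way verification is the routine but delicate core of the inclusion: one must check that the added causes block precisely the conflicting pairs and neither create nor destroy any other configuration.

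For the separation I would exhibit a single GES whose behaviour no rPES can match, following the disabling example of Figure~\ref{fig:GESExamples}. Take $\gamma_0 = (\Set{a,b}, \emptyset, \Set{\adds{b}{a}{a}})$, so that the occurrence of $b$ makes $a$ depend on itself and thereby blocks it. Its transitions are $\transG{\emptyset}{\Set{a}}$, $\transG{\emptyset}{\Set{b}}$ and $\transG{\Set{a}}{\Set{a,b}}$, but \emph{not} $\transG{\Set{b}}{\Set{a,b}}$. I would then argue that no rPES reproduces this graph: by Definition~\ref{def:translationIntoRCES2} the derived transition relation of an rPES links every configuration to every strictly larger one, because the poset on each configuration is the restriction of the fixed order $\enabOp^{\ast}$ and the order-inclusion side condition is therefore automatic; so any rPES having both $\Set{b}$ and $\Set{a,b}$ as configurations necessarily also has $\trans{\Set{b}}{\Set{a,b}}$, which $\gamma_0$ lacks. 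The hard part is exactly this mismatch of semantic levels---rPESs are ``jump-complete'' whereas GES transitions demand a new event's causes be already present---so the comparison must be set up with care. Once it is, the disabling asymmetry separates the classes cleanly; alternatively, a resolvable-conflict GES such as $(\Set{a,b,c}, \emptyset, \Set{\adds{a}{c}{b}, \adds{b}{c}{a}})$, whose configuration family $\Set{\emptyset, \Set{a}, \Set{b}, \Set{c}, \Set{a,c}, \Set{b,c}, \Set{a,b,c}}$ omits $\Set{a,b}$ although $\Set{a}$, $\Set{b}$ and $\Set{a,b,c}$ are all configurations, yields the separation already at the level of configurations and could replace the transition-graph argument entirely.
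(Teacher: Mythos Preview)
Your proposal is correct and follows essentially the same route as the paper: the inclusion uses the identical impossible-event encoding $\gamma_\pi = (E\cup\{e_{\text{imp}}\}, {\enabOp}\cup\{(e_{\text{imp}},e_{\text{imp}})\}, \{(a,e_{\text{imp}},b),(b,e_{\text{imp}},a)\mid \conf{a}{b}\})$ and verifies equality of configurations, and the separation uses the same disabling GES $\gamma_0=(\{a,b\},\emptyset,\{\adds{b}{a}{a}\})$. The only differences are presentational---the paper argues the separation at the trace level (observing that $b$ and $ab$ force $ba$ in any rPES) whereas you phrase it via transition graphs and Definition~\ref{def:translationIntoRCES2}; your optional resolvable-conflict counterexample, which separates already at the level of configurations, is a clean alternative that the paper does not give.
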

\begin{proof}
	Let $ \pi = \left( E, \confOp, \enabOp \right) $ be a rPES. Then, its embedding $\gamma_\pi$ into a GES is given by $ \gamma_\pi:=(E\cup\{e_{\text{imp}}\},\enabOp\cup\{(e_{\text{imp}},e_{\text{imp}})\},\{(a,e_{\text{imp}},b),(b,e_{\text{imp}},a)\mid (a,b)\in\confOp)\} $ for a fresh (and impossible) event $e_{\text{imp}}$.

	Let $C$ be a configuration in $\pi$. By Definition \ref{def:rpesConfigs}, we have that $C$ is downward-closed, \ie $ \forall e, e' \in E \logdot \enab{e}{e'} \land e' \in C \implies e \in C $, conflict-free, \ie $ \forall e, e' \in C \logdot \neg \left( \conf{e}{e'} \right) $, and ${\enabOp}\cap{C^2}$ is free of cycles. Let further be $C=\Set{e_1,\dots,e_n}$ such that also all $C_j:=\Set{e_i\mid i\leq j}$ are configurations of $\pi$ (this is always possible since $C$ is conflict-free and ${\enabOp}\cap{C^2}$ is free of cycles).
	Now, we have $\transG{C_i}{C_{i+1}}$ for all $1\leq i\leq n{-}1$, as all three conditions in Definition \ref{def:GEStrans} are met. 
        The first condition is satisfied, as $C_i\subset C_{i+1}$ by construction. 
        The second condition is satisfied, because the only added causality is the disabling of conflicting events and thus does not affect events in $C$, and then it coincides with the downward-closure. 
        The third condition just prevents, \wrt our encoding, the concurrent occurrence of conflicting events.
	Let on the other hand $t=e_1,\dots e_n$ be a trace in $\gamma_\pi$.  By Definition \ref{def:GEStrans}, we have
\[\forall 1 \leq i \leq n \logdot  \left(\ic{e_i} \cup \ac{\overline{t_{i-1}}}{e_i} \right) \subseteq \overline{t_{i-1}}.\]
	We show that all $C_j:=\Set{e_i\mid i\leq j}$ with $j\leq n$ are conflict-free and downward-closed in $\pi$ and ${\enabOp}\cap{C^2}$ is free of cycles: Assume that $C_j$ is the smallest not conflict-free set then it contains $e_i$ and $e_j$ with $e_i\confOp e_j$ for some $i<j$. But since they are in conflict we have $\addcause{e_i}{e_{\text{imp}}}{e_j}\in\growingCausality$, where $e_{\text{imp}}$ is impossible. This contradicts 
	\[ \left(\ic{e_j} \cup \ac{\overline{t_{j-1}}}{e_j} \right) \subseteq \overline{t_{j-1}}\]
	because $e_{\text{imp}}$ is initially impossible and thus $e_{\text{imp}}\not\in\overline{t_{i-1}}$. If some $C_j$ would not be downward-closed, the $\ic{e_j}\subseteq \overline{t_{j-1}}$ would not hold. Let ${\enabOp}\cap{C_j^2}$ be the minimal not cycle-free relation, then $e_j$ must have introduced a cycle and thus be a cause of some $e_i$ with $i<j$, but then $\ic{e_i}\subseteq \overline{t_{i-1}}$ would not hold. Therefore, we have that $\gamma_\pi$ is a proper encoding of $\pi$.

	Now consider the GES $\gamma=(\Set{a,b},\emptyset,\Set{(b,a,a)})$ (as in Figure~\ref{fig:GESExamples}(b)). By Definition \ref{def:GEStrans}, $b$ and $ab$  are possible traces in $\gamma$, but $ba$ is forbidden. If there was an equivalent rPES $\pi_\gamma$, the trace $b$ would imply that $b$ does not depend on $a$. The trace $ab$ would imply that $\neg(a\confOp b)$, and therefore $ba$ would also be possible in $\pi_\gamma$, contradicting the equivalence assumption.

 Combining the  two results, we get that GESs are strictly more expressive than rPESs. As we restrict our attention to finite configurations and with Definition~\ref{def:pesConfigs}, rPESs and PESs have the same expressive power. Hence, GESs are strictly more expressive than PESs.
\end{proof}

\subsection{GESs versus EBESs}

As shown in Figure~\ref{fig:GESExamples}(a), GESs can model disabling. Nevertheless, EBESs and GESs are incomparable. 

On the one hand, GESs cannot model the disjunction in the enabling relation that EBESs inherit from BESs. The BES $ \beta_\gamma $ of Figure~\ref{fig:counterExamples} models a simple case of disjunction in the enabling relation.

\begin{lem}
	\label{lma:EBESninGES}
	There is no configuration-equivalent GES to $ \beta_\gamma $ of Figure~\ref{fig:counterExamples}.
\end{lem}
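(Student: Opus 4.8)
The plan is to compute the configurations of $\beta_\gamma$ explicitly and then argue that the disjunctive enabling of $c$ cannot be reproduced by the single initial-cause set that a GES provides. First I would read off $\configurations{\beta_\gamma}$ from Definition~\ref{def:BEStrace}: the bundle $\buEn{\Set{a,b}}{c}$ forces at least one of $a,b$ to precede $c$, while the conflict $\conf{a}{b}$ forbids a trace containing both, so the configurations are exactly $\emptyset$, $\Set{a}$, $\Set{b}$, $\Set{a,c}$, and $\Set{b,c}$. The salient features are that $\Set{c}\notin\configurations{\beta_\gamma}$, yet $c$ occurs alongside $a$ alone and alongside $b$ alone.

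Next I would assume, for contradiction, a GES $\gamma=(E,\enabOp,\growingCausality)$ with $\configurations{\gamma}=\configurations{\beta_\gamma}$; then the possible events are precisely $a,b,c$ (any other event of $E$ is impossible and does not affect configurations). Using the transition relation of Definition~\ref{def:GEStrans}, the transitions $\transG{\emptyset}{\Set{a}}$ and $\transG{\emptyset}{\Set{b}}$ give $\ic{a}=\ic{b}=\emptyset$. For $c$ I would examine the last step of any reachability chain into $\Set{a,c}$: its source $X$ is a configuration with $X\subset\Set{a,c}$ and, since $\Set{c}$ is not a configuration, $c\notin X$, whence $X\subseteq\Set{a}$; condition~(2) of Definition~\ref{def:GEStrans} applied to the added event $c$ then yields $\ic{c}\subseteq X\subseteq\Set{a}$. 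Symmetrically, the chain into $\Set{b,c}$ gives $\ic{c}\subseteq\Set{b}$.

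Combining the two bounds gives $\ic{c}\subseteq\Set{a}\cap\Set{b}=\emptyset$, so $\ic{c}=\emptyset$. But then $\transG{\emptyset}{\Set{c}}$ is a valid transition: condition~(2) holds because $\ic{c}\cup\ac{\emptyset}{c}=\emptyset$, and condition~(3) is vacuous since, by Definition~\ref{def:GES}, $\adds{a'}{c'}{t'}$ always requires $a'\notin\Set{c',t'}$, so no adder can have $c$ as both its adder and its target. Hence $\Set{c}\in\configurations{\gamma}$, contradicting $\Set{c}\notin\configurations{\beta_\gamma}$, which proves the lemma.

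I expect the conceptual point---that a GES stores one monotone initial-cause set $\ic{c}$ and therefore cannot express the genuinely disjunctive ``$a$ or $b$'' enabling of a bundle---to be immediate; the main obstacle is the careful bookkeeping of the reachability chains. In particular I must justify that the last transition into $\Set{a,c}$ can only originate from a subset of $\Set{a}$, which rests on $\Set{c}$ failing to be a configuration and on the strictness of $\transGOp$, and I must ensure the argument remains valid when a transition adds several events at once and when $E$ carries additional impossible events.
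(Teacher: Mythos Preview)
Your argument is correct and follows essentially the same approach as the paper: both proofs compute $\configurations{\beta_\gamma}=\{\emptyset,\{a\},\{b\},\{a,c\},\{b,c\}\}$ and then exploit that the reachability of $\{a,c\}$ forces $\ic{c}\subseteq\{a\}$. The only cosmetic difference is that the paper combines this with $\{c\}\notin\configurations{\gamma}$ to conclude $\enab{a}{c}$ and then derives the contradiction from $\{b,c\}$ being unreachable, whereas you use the symmetric bound $\ic{c}\subseteq\{b\}$ to get $\ic{c}=\emptyset$ and derive the contradiction from $\{c\}$ becoming reachable; the underlying idea is identical.
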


\begin{proof}
	Assume a GES $ \gamma = \left( E,  \enabOp, \growingCausality \right) $ such that $ \configurations{\gamma} = \configurations{\beta_\gamma} $.	According to \S\ref{sec:BES}, $ \configurations{\beta_\gamma} = \Set{\emptyset, \Set{ a }, \Set{ b }, \Set{ a, c }, \Set{ b, c } } $. Because $ \Set{ c } \notin \configurations{\beta_\gamma} $, $ \Set{ a, c } \in \configurations{\beta_\gamma} $, and by Definition~\ref{def:GEStraceDef} and $ \configTraces{\gamma} = \configurations{\gamma} $ (\cf Lemma~\ref{lma:GESConfigEquivalence}), $ a $ has to be an initial cause of $ c $ in $ \gamma $, \ie $ \enab{a}{c} $.
	But then, by Definition~\ref{def:GEStraceDef} and $ \configTraces{\gamma} = \configurations{\gamma} $ (\cf Lemma~\ref{lma:GESConfigEquivalence}), $ \Set{ b, c } \notin \configurations{\gamma} $ although $ \Set{ b, c } \in \configurations{\beta_\gamma} $. This violates our assumption, \ie no GES can be configuration equivalent to $ \beta_\gamma $.
\end{proof}

On the other hand, EBESs cannot model conditional causal dependencies as visualized by the GES $ \gamma_\xi $ of Figure~\ref{fig:counterExamples}.

\begin{lem}\label{lma:GESninEBES}
	There is no trace-equivalent EBES to $\gamma_\xi$ of Figure~\ref{fig:counterExamples}.
\end{lem}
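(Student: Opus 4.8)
The plan is to exhibit a handful of concrete traces of $\gamma_\xi$ that no EBES can match, exploiting the fact that EBES disabling is \emph{permanent} while $\gamma_\xi$ disables $c$ only \emph{conditionally}. First I would read off the structure: $\gamma_\xi = \left( \Set{a,b,c}, \emptyset, \Set{\addcause{a}{b}{c}} \right)$, so there is no initial causality and the occurrence of $a$ adds $b$ as a cause of $c$. By Definition~\ref{def:GEStraceDef} this means $\ic{e}=\emptyset$ for all $e$, and $\ac{H}{c}=\Set{b}$ exactly when $a\in H$ (and $\ac{H}{a}=\ac{H}{b}=\emptyset$ always); hence $c$ is unconstrained until $a$ has occurred, after which $c$ additionally requires $b$. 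Concretely $c, ca, abc \in \traces{\gamma_\xi}$ while $ac \notin \traces{\gamma_\xi}$, and these are all the traces I will need.

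Next I would assume, for contradiction, an EBES $\xi = (E, \disaOp, \buEnOp)$ with $\traces{\xi}=\traces{\gamma_\xi}$, and eliminate bundles on the critical events. Since the one-event trace $c$ lies in $\traces{\xi}$, bundle satisfaction (Definition~\ref{def:EBESconf}) at the empty prefix would force every $X\in\B{c}$ to satisfy $\emptyset\cap X\neq\emptyset$, which is impossible; thus $\B{c}=\emptyset$. Applying the same reasoning to the trace $a$ yields $\B{a}=\emptyset$. Consequently, on the pair $\Set{a,c}$ the EBES imposes no enabling constraint whatsoever, so the only remaining mechanism able to constrain these two events is the disabling relation $\disaOp$.

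The core step is then a short case analysis on $\disaOp$. Because $\B{a}=\B{c}=\emptyset$, the sequence $ac$ can fail to be a trace only through a violated disabling, and by Definition~\ref{def:EBESconf} the only disabling of $\Set{a,c}$ that the order ``$a$ before $c$'' violates is $\disa{c}{a}$ (read ``$a$ disables $c$''), which demands that $c$ precede $a$. Hence $\disa{c}{a}$ must hold. But disabling in an EBES is global and permanent: $\disa{c}{a}$ forbids $c$ from ever following $a$, and in particular rules out the trace $abc$, in which $a$ precedes $c$. This contradicts $abc \in \traces{\gamma_\xi}=\traces{\xi}$, so no trace-equivalent EBES can exist.

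I expect the main obstacle to be the temptation to reason about the whole trace set and to worry whether some clever combination of bundles and disablings could reproduce $\gamma_\xi$'s behaviour. The key realisation that tames this is that the distinguishing phenomenon---$c$ being forbidden after $a$ \emph{unless} $b$ also occurs---is a history-dependent re-enabling, whereas EBES bundles are static and EBES disabling is irrevocable. Pinning everything onto the single pair $(a,c)$ and the contrast between $ac\notin\traces{\gamma_\xi}$ and $abc\in\traces{\gamma_\xi}$ collapses the argument to the one forced disabling $\disa{c}{a}$, so the full trace enumeration never has to enter the proof.
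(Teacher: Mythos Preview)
Your argument is correct and follows essentially the same route as the paper: from the single-event traces $a$ and $c$ you correctly conclude $\B{a}=\B{c}=\emptyset$, then use the non-trace $ac$ together with a three-event trace in which $a$ precedes $c$ to force and simultaneously exclude $\disa{c}{a}$. The only cosmetic differences are that the paper uses the witness $bac$ rather than your $abc$, and runs the contrapositive the other way (deriving $\neg(\disa{c}{a})$ from $bac$ first and then obtaining $ac\in\traces{\xi}$), but the logical content is identical.
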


\begin{proof}
	Assume a EBES $ \xi = \left( E, \confOp, \buEnOp \right) $ such that $ \traces{\xi} = \traces{\gamma_\xi} $. By Definition~\ref{def:GEStraceDef}, we have $ a, c, ca, bac \in \traces{\gamma_\xi} $ and $ ac \notin \traces{\gamma_\xi} $. Because of $ a, c \in \traces{\gamma_\xi} $ and by Definition~\ref{def:EBESconf}, $ a $ and $ c $ have to be initially enabled in $ \xi $, \ie $ \buEnOp \cap \Set{ \buEn{X}{y} \mid y \in \Set{ a, c } } = \emptyset $.	Moreover, because of $ ca, bac \in \traces{\gamma_\xi} $, $ a $ cannot disable $ c $, \ie $ \neg \left( \disa{c}{a} \right) $. But then $ ac \in \traces{\xi} $. This violates our assumption, \ie there is no trace-equivalent EBES to $ \gamma_\xi $.
\end{proof}

Thus GESs are incomparable to BESs as well as EBESs.

\begin{thm}	\label{thm:GESvsEBES}
	GESs are incomparable to BESs and EBESs.
\end{thm}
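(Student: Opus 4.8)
The plan is to establish incomparability as a pair of non-embeddability claims in each direction and to observe that, since every BES is in particular an EBES (conflict being expressible as mutual disabling, as noted when EBESs are introduced in Section~\ref{sec:EBES}), the four statements needed for ``GESs incomparable to BESs and EBESs'' collapse to the two lemmas just proved. Recall that incomparability of two classes means that neither is at least as expressive as the other, i.e.\ each side has a structure with no behaviourally equivalent counterpart in the other class.

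First I would treat the direction ``no GES captures all (E)BESs.'' Lemma~\ref{lma:EBESninGES} already exhibits the BES $\beta_\gamma$ admitting no configuration-equivalent GES. Reading $\beta_\gamma$ as an EBES, the very same witness rules out GESs for the EBES side as well. To lift this from configurations to the finer semantics (posets, and the transition graphs by which GESs are compared), I would invoke the hierarchy of Section~\ref{sec:Semantics}: a poset refines its underlying configuration, and by Lemma~\ref{lem:posetsToTransitions} poset equivalence coincides with transition equivalence for these ES types; hence configuration-inequivalence already forces poset- and transition-inequivalence. Thus no GES is equivalent to $\beta_\gamma$ under any of the relevant notions, giving that neither BESs nor EBESs are less expressive than GESs.

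Next I would treat the opposite direction ``no (E)BES captures $\gamma_\xi$.'' Lemma~\ref{lma:GESninEBES} gives the GES $\gamma_\xi$ with no trace-equivalent EBES; since every BES is an EBES, there is no trace-equivalent BES either. For EBESs there is no causal ambiguity, so traces, configurations, posets, and transition graphs coincide in discriminating power (Lemma~\ref{lem:posetsToTransitions} and the discussion following it); therefore ``no trace-equivalent EBES'' is equivalent to ``no transition-equivalent EBES'', which matches the semantics used to compare GESs. This yields that GESs are not less expressive than EBESs, nor than BESs. Combining the two directions gives incomparability of GESs with both BESs and EBESs.

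The only genuine subtlety---and the step I would be most careful to spell out---is that the two lemmas are phrased with different equivalences (configuration- versus trace-equivalence), so one must check that each negative statement transfers to the common cross-type notion. The resolution is exactly the coincidence of the semantic models for these well-behaved ES classes established in Section~\ref{sec:Semantics}: using the coarsest model (configurations) for the negative statement in one direction, and exploiting the collapse of all models for EBESs in the other, lets a single witness per direction settle all four required non-embeddings at once.
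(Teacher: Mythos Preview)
Your proposal is correct and follows essentially the same route as the paper: the two witnesses $\beta_\gamma$ and $\gamma_\xi$ from Lemmas~\ref{lma:EBESninGES} and~\ref{lma:GESninEBES}, together with the observation that every BES is an EBES, settle all four non-embeddings. The paper justifies the semantic lifts slightly differently (arguing directly that different configurations force different posets for BESs, and that different traces force different transition graphs for GESs via Definition~\ref{def:GESNoConcurTransition}), whereas you route through Lemma~\ref{lem:posetsToTransitions}; both are fine and amount to the same argument.
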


\begin{proof}
	By Lemma~\ref{lma:EBESninGES}, there is no GES that is configuration equivalent to the BES $ \beta_\gamma $. Thus, no GES can have the same families of posets as the BES $ \beta_\gamma $, because two BESs with different configurations cannot have the same families of posets (\cf \S\ref{sec:BES}). Moreover, by the Definitions~\ref{def:BES} and \ref{def:EBES}, each BES is also an EBES. Thus, no GES can have the same families of posets as the EBES $ \beta_\gamma $.
	
	By Lemma~\ref{lma:GESninEBES}, there is no EBES and thus also no BES that is trace equivalent to the GES $ \gamma_\xi $. By Definition~\ref{def:GESNoConcurTransition}, two GESs with different traces cannot have the same transition graphs. Thus no EBES or BES can be transition equivalent to $ \gamma_\xi $.
\end{proof}

\subsection{GESs versus SESs}

GESs are also incomparable to SESs, because the addition of causes cannot be modeled by SESs. As a counterexample, we use the GES $ \gamma_\sigma $ of Figure~\ref{fig:counterExamples}.

\begin{lem}\label{lma:GESninSES}
	There is no trace-equivalent SES to $\gamma_\sigma$ of Figure~\ref{fig:counterExamples}.
\end{lem}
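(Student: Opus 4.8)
The plan is to pin down the traces of $\gamma_\sigma$, then assume toward a contradiction that some SES realizes them, and show that the constraints forced on such a SES make the forbidden sequence $ba$ a legal trace. First I would read off from Figure~\ref{fig:counterExamples} that $\gamma_\sigma = \left(\Set{a,b}, \emptyset, \Set{\adds{b}{a}{a}}\right)$, i.e.\ $b$ adds $a$ as a cause of $a$ itself, which is how a GES expresses that $b$ disables $a$. Applying Definition~\ref{def:GEStraceDef} with the empty initial enabling, I would obtain $\traces{\gamma_\sigma} = \Set{\epsilon, a, b, ab}$. The decisive feature is that $ba \notin \traces{\gamma_\sigma}$: once $b$ has occurred, the growing rule $\adds{b}{a}{a}$ turns $a$ into its own cause, so $a$ can never follow $b$.

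Next I would suppose a SES $\sigma = \left(E, \confOp, \enabOp, \shrinkingCausality\right)$ with $\traces{\sigma} = \traces{\gamma_\sigma}$ and extract what Definition~\ref{def:SEStraceDef} demands. Because both $a$ and $b$ occur as one-event traces, the trace condition at the first position reads $\ic{e} \setminus \dc{\emptyset}{e} \subseteq \emptyset$ for $e \in \Set{a,b}$; since $\dc{\emptyset}{e} = \emptyset$ (an empty history drops nothing), this forces $\ic{a} = \ic{b} = \emptyset$, so neither event has any initial cause whatsoever. From the trace $ab$ together with the symmetry of $\confOp$ I would read off $\neg\left(\conf{a}{b}\right)$.

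Finally I would test $ba$ directly against Definition~\ref{def:SEStraceDef}: $b$ is enabled first because $\ic{b} = \emptyset$, and for $a$ following $b$ we have $\ic{a} \setminus \dc{\Set{b}}{a} = \emptyset \subseteq \Set{b}$, while the conflict clause holds since $\neg\left(\conf{a}{b}\right)$. Hence $ba \in \traces{\sigma}$, contradicting $ba \notin \traces{\gamma_\sigma}$. The conceptual core of the argument---which I would emphasize rather than treat as an obstacle---is that the shrinking relation can only \emph{remove} causes and therefore only \emph{admit} more orderings; it is structurally incapable of forbidding $a$ after $b$ the way growing causality does. The one point needing care is that $E$ may carry additional impossible events, so I would stress that the first-position trace condition constrains $\ic{a}$ and $\ic{b}$ over all of $E$, impossible events included, which is precisely what lets the deduction $\ic{a}=\ic{b}=\emptyset$ go through. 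This mirrors the SES-versus-EBES argument used for $\xi_\sigma$ in the proof of Theorem~\ref{thm:SESvsEBES}.
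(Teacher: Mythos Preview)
Your argument is correct and follows the same approach as the paper: compute $\traces{\gamma_\sigma}=\Set{\epsilon,a,b,ab}$, use the singleton traces to force $\ic{a}=\ic{b}=\emptyset$ and the trace $ab$ to rule out a conflict, then conclude $ba\in\traces{\sigma}$ for the contradiction. Your additional remarks about $\dc{\emptyset}{e}=\emptyset$ and the possibility of extra impossible events in $E$ are helpful elaborations but do not change the route.
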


\begin{proof}
	Assume a SES $ \sigma = \left( E, \confOp, \enabOp, \shrinkingCausality \right) $ such that $ \traces{\sigma} = \traces{\gamma_\sigma} $. By Definition~\ref{def:GEStraceDef}, $ \traces{\gamma_\sigma} = \Set{ \epsilon, a, b, ab } $.
	Because of the trace $ ab \in \traces{\gamma_\sigma} $ and by Definition~\ref{def:SEStraceDef}, $ a $ and $ b $ cannot be in conflict, \ie $ \neg (\conf{a}{b} ) $ and $ \neg ( \conf{b}{a} ) $.	Moreover, because of the traces $ a, b \in \traces{\gamma_\sigma} $, there are no initial cases for $ a $ or $ b $, \ie $ \enabOp \cap \Set{ \enab{x}{y} \mid y \in \Set{ a, b } } = \emptyset $.	Thus, by Definition~\ref{def:SEStraceDef}, $ ba \in \traces{\sigma} $ but $ ba \notin \traces{\gamma_\sigma} $.	This violates our assumption, \ie no SES can be trace equivalent to $	\gamma_\sigma $.
\end{proof}

Then since BESs are incomparable to GESs, BESs are less expressive than DESs, and DESs are as expressive as SESs, we conclude that GESs and SESs are incomparable.

\begin{thm}\label{thm:GESvsSES}
	GESs and SESs are incomparable.
\end{thm}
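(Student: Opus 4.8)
The plan is to prove incomparability by exhibiting one separating example in each direction, reusing the witnesses already built for the comparison between GESs and Bundle ESs. One direction is essentially free; the other routes through Bundle and Dual ESs.

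For the claim that no SES captures every GES, I would simply invoke Lemma~\ref{lma:GESninSES}: the GES $\gamma_\sigma$ of Figure~\ref{fig:counterExamples} has no trace-equivalent SES. Since trace equivalence is implied by the finer poset and transition equivalences that one might use to compare the two classes, this already shows that $\gamma_\sigma$ cannot be reproduced by any SES.

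For the converse---that no GES captures every SES---the plan is to transport the known BES witness $\beta_\gamma$ into the SES world, realizing concretely the inclusion ``$\text{BES} \subseteq \text{DES} \approx \text{SES}$''. First I would regard $\beta_\gamma$ as a DES $\delta_\gamma$, which is legitimate because (Definitions~\ref{def:BES} and \ref{def:DES}) a DES arises from a BES by dropping the stability condition while using the same trace and configuration definitions, so $\configurations{\delta_\gamma} = \configurations{\beta_\gamma}$. Then, by Theorem~\ref{thm:DESintoSES}, there is a SES $\sigma = \ses{\delta_\gamma}$ with $\posetsEq{\delta_\gamma}{\sigma}$, whence $\configurations{\sigma} = \configurations{\beta_\gamma}$, since poset equivalence forces identical configurations (these are the carriers of the posets). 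I would take this $\sigma$ as the witnessing SES and argue by contradiction: if some GES $\gamma$ were equivalent to $\sigma$ under any of the relevant notions, then $\gamma$ would share $\sigma$'s configurations---the configuration set is recoverable from traces (as the sets $\overline{t}$), from posets (as their carriers), and from the transition graph (as its nodes) alike---hence $\configurations{\gamma} = \configurations{\beta_\gamma}$, contradicting Lemma~\ref{lma:EBESninGES}.

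The step I expect to be the only real obstacle is the bookkeeping of equivalence notions and event sets, since GESs are compared via transition graphs and SESs via posets, and the encoding $\ses{\cdot}$ introduces fresh impossible events. I would neutralize this by reducing every candidate equivalence to \emph{configuration equivalence}---which all the relevant semantics refine---and by observing that the impossible events added by $\ses{\cdot}$ never enter a configuration, so $\configurations{\sigma}$ stays equal to $\configurations{\beta_\gamma}$ and the purely configuration-based reasoning of Lemma~\ref{lma:EBESninGES}, which only inspects the events $a$, $b$, $c$, still applies verbatim. Combining the two directions then yields the incomparability of GESs and SESs.
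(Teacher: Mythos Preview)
Your proposal is correct and follows essentially the same approach as the paper: both directions use the same witnesses ($\gamma_\sigma$ via Lemma~\ref{lma:GESninSES}, and $\beta_\gamma$ via Lemma~\ref{lma:EBESninGES} together with the DES/SES correspondence). The only difference is presentational: the paper argues the second direction abstractly from the chain ``BES $\leq$ DES'' (cited from the literature), ``DES $=$ SES'' (Theorem~\ref{thm:SESvsDES}), and ``BES incomparable to GES'' (Theorem~\ref{thm:GESvsEBES}), whereas you unpack this by explicitly constructing the SES $\ses{\beta_\gamma}$ and invoking Lemma~\ref{lma:EBESninGES} directly; your version is slightly more self-contained, the paper's slightly more modular.
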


\begin{proof}
	By Lemma~\ref{lma:GESninSES}, no SES is trace equivalent to the GES $ \gamma_\sigma $. By Definition~\ref{def:GESNoConcurTransition}, two GESs with different traces cannot have the same transition graphs. Thus, no SES is transition equivalent to the GES $ \gamma_\sigma $.
	
	By \cite{Langerak:Thesis}, BESs are less expressive than EBESs and by \cite{Langerak97causalambiguity}, BESs are less expressive than DESs.	By Theorem~\ref{thm:GESvsEBES}, BESs and GESs are incomparable and, by Theorem~\ref{thm:SESvsDES}, DESs are as expressive as SESs.
	Thus, GESs and SESs are incomparable.
\end{proof}

\subsection{GESs versus RCESs}

As illustrated in Figure~\ref{fig:GESExamples}(d), GESs can model resolvable conflicts. Nevertheless, they are strictly less expressive than RCESs. We show first that each GES can be translated by Definition~\ref{def:translationIntoRCES} into a transition-equivalent RCES.

\begin{lem}
	\label{lma:GESinRCES}
	For each GES $ \gamma $ there is an RCES $ \rho $, namely $ \rho = \rces{\gamma} $, such that $ \transEq{\gamma}{\rho} $.
\end{lem}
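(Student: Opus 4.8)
The plan is to reuse the generic machinery of Definition~\ref{def:translationIntoRCES} and Lemma~\ref{lma:TransInRCES}, exactly as was done for SESs in Lemma~\ref{lma:SESinRCES}. Concretely, I would show that the GES transition relation $\transGOp$ is both \emph{strict} and \emph{dense} in the sense of Definition~\ref{def:translationIntoRCES}; the claim $\transEq{\gamma}{\rces{\gamma}}$ would then follow at once, since Lemma~\ref{lma:TransInRCES} gives that $\rho := \rces{\gamma}$ is an RCES whose transition relation coincides with $\transGOp$. Throughout I would work over the reachable configurations of $\gamma$, which agree with the trace-based ones by Lemma~\ref{lma:GESConfigEquivalence}, so that the quantification in Definition~\ref{def:translationIntoRCES} ranges over exactly the sets in play.

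Strictness is immediate: by condition~(1) of Definition~\ref{def:GEStrans}, $\transG{X}{Y}$ already forces $X \subset Y$. For density I would fix configurations with $X' \subseteq X \subset Y \subseteq Y'$, assume $\transG{X'}{Y'}$, and verify the three defining conditions of $\transG{X}{Y}$. Condition~(1) is the standing hypothesis $X \subset Y$. Condition~(3) is inherited directly: since $Y \subseteq Y'$ and $X' \subseteq X$ we have $Y \setminus X \subseteq Y' \setminus X'$, so any target $t$ and adder $a$ lying in $Y \setminus X$ also lie in $Y' \setminus X'$, whence condition~(3) for $\transG{X'}{Y'}$ forces every $c$ with $\adds{a}{c}{t}$ to satisfy $c \in X' \subseteq X$.

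The delicate part — and the step I expect to be the main obstacle — is condition~(2), because the added-causes function $\ac{\cdot}{e}$ is \emph{monotone increasing} in the history, whereas in the shrinking case the dropped-causes function worked in the favourable direction. Thus enlarging $X'$ to $X$ can only \emph{introduce} new required causes, and I must argue they are already present. I would fix $e \in Y \setminus X \subseteq Y' \setminus X'$. From condition~(2) for $\transG{X'}{Y'}$ I get $\ic{e} \subseteq X' \subseteq X$, which handles the initial causes. For an added cause I would take any $a \in X$ and any $c$ with $\adds{a}{c}{e}$ and split on the origin of $a$: if $a \in X'$, then $c \in \ac{X'}{e} \subseteq X' \subseteq X$ by condition~(2) for $\transG{X'}{Y'}$; if instead $a \in X \setminus X'$, then $a \in Y' \setminus X'$ (as $X \subseteq Y'$ and $a \notin X'$) and $e \in Y' \setminus X'$, so condition~(3) for $\transG{X'}{Y'}$, applied with target $e$ and adder $a$, yields $c \in X' \subseteq X$. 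In either case $c \in X$, hence $\ac{X}{e} \subseteq X$ and condition~(2) holds. It is precisely the no-concurrent-occurrence constraint (condition~(3)) that rescues the adders coming from $X \setminus X'$, so the crux is to recognise that this third condition, and not condition~(2) alone, is what makes density go through.

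Having established strictness and density, the lemma would close by Lemma~\ref{lma:TransInRCES}: $\rho = \rces{\gamma}$ is an RCES with the same transition relation as $\gamma$, and therefore $\transEq{\gamma}{\rho}$. The only subtlety worth double-checking is that the monotonicity step is applied in the correct direction, namely that the history grows when passing from $X'$ to $X$ while the successor shrinks from $Y'$ to $Y$; this is exactly the shape of the density hypothesis, so no additional case analysis beyond the one above should be needed.
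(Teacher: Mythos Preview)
Your proposal is correct and follows essentially the same route as the paper: verify that $\transGOp$ is strict and dense, then invoke Lemma~\ref{lma:TransInRCES}. The paper's density argument is terser---it asserts that condition~(3) of the outer transition forces $\ac{X}{e} = \ac{X'}{e}$ (in its variable naming) for the relevant $e$---whereas your explicit case split on whether the adder lies in the smaller history or in the gap makes the role of condition~(3) more transparent; but the content is the same.
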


\begin{proof}
	Let $ \gamma = \left( E,\enabOp, \growingCausality \right) $.	 By Definition~\ref{def:GESNoConcurTransition}, $ \transG{X}{Y} $ implies $ X \subseteq Y $.
	
	Assume $ X \subseteq X' \subseteq Y' \subseteq Y $ and $ \transG{X}{Y} $.
	By Definition~\ref{def:GESNoConcurTransition}, then we have that $\forall e \in ( Y' \setminus X' ) \logdot ( \ic{e} \cup \ac{X}{e} ) \subseteq X $, and $ \forall t, m \in Y\setminus X\logdot\forall c\in E \logdot \addcause{m}{c}{t} \implies c\in X $. Moreover, because $ \forall t, m \in Y\setminus X\logdot\forall c\in E \logdot \addcause{m}{c}{t} \implies c\in X $, $ \ac{X}{e} = \ac{X'}{e} $ for all $ e \in Y' \setminus X' $.
	As a consequence, we have that $ \forall e \in \left( Y' \setminus X' \right) . \left( \ic{e} \cup \ac{X'}{e} \right) \subseteq X' $ and $ \forall t, m \in Y'\setminus X'\logdot\forall c\in E \logdot \addcause{m}{c}{t} \implies c\in X' $. Thus, by Definition~\ref{def:GESNoConcurTransition}, $ \transG{X'}{Y'} $.
	
	By Lemma~\ref{lma:TransInRCES}, $ \rho = \rces{\gamma} $ is an RCES and $ \transEq{\gamma}{\rho} $.
\end{proof}

On the other hand, there is no GES that is transition equivalent to the RCES $ \rho_{\gamma} $ in Figure~\ref{fig:transgraphs}.
It models the case, where after $ a $ and $ b $ the event $ c $ becomes impossible, \ie it models disabling by a set instead of a single event.

\begin{lem}\label{lma:GESinRCESstrictly}
	There is no transition-equivalent GES to $\rho_{\gamma}$ of Figure~\ref{fig:transgraphs}.
\end{lem}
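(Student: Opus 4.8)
The plan is to exploit the defining feature of the growing-causality transition relation (Definition~\ref{def:GEStrans}): added causes accumulate \emph{monotonically} as a union over the history, so a target can fail to be enabled from a set $\{a,b\}$ only if it already fails to be enabled from $\{a\}$ or from $\{b\}$ alone. Since $\rho_\gamma$ disables $c$ precisely at the set $\{a,b\}$ while keeping $c$ enabled after $\{a\}$ and after $\{b\}$ (the missing arrow from $\{a,b\}$ to $\{a,b,c\}$ in Figure~\ref{fig:transgraphs}), no GES can match this behaviour.

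First I would assume, for contradiction, a GES $\gamma = \left( E, \enabOp, \growingCausality \right)$ with $\transEq{\gamma}{\rho_\gamma}$. By Definition~\ref{def:transEq} the event sets coincide, so $E = \{a,b,c\}$, and $\gamma$ and $\rho_\gamma$ share the same reachable configurations and the same transitions between them; in particular $\{a,b\}$ and $\{a,b,c\}$ are both configurations of $\gamma$. From the transition $\transG{\emptyset}{\{c\}}$, condition~(2) of Definition~\ref{def:GEStrans} forces $\ic{c} = \emptyset$ (alternatively, $\transG{\{a\}}{\{a,c\}}$ and $\transG{\{b\}}{\{b,c\}}$ give $\ic{c} \subseteq \{a\} \cap \{b\} = \emptyset$).

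Next I would show that neither $a$ nor $b$ adds any cause to $c$. The transition $\transG{\{a\}}{\{a,c\}}$ forces, again by condition~(2), that $\ac{\{a\}}{c} \subseteq \{a\}$; but the side condition of Definition~\ref{def:GES} requires $a \notin \{x,c\}$ for any $\adds{a}{x}{c}$, so $a \notin \ac{\{a\}}{c}$ and hence $\ac{\{a\}}{c} = \emptyset$. The symmetric argument from $\transG{\{b\}}{\{b,c\}}$ gives $\ac{\{b\}}{c} = \emptyset$. Because $\acD$ is a union over the history, this yields $\ac{\{a,b\}}{c} = \ac{\{a\}}{c} \cup \ac{\{b\}}{c} = \emptyset$.

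Finally I would verify that $\transG{\{a,b\}}{\{a,b,c\}}$ is \emph{forced} to hold in $\gamma$: condition~(1) is just $\{a,b\} \subset \{a,b,c\}$; condition~(2) reduces to $\left( \ic{c} \cup \ac{\{a,b\}}{c} \right) = \emptyset \subseteq \{a,b\}$; and condition~(3) is vacuous, since the only new event is $c$, so target and adder would both have to equal $c$, which $\adds{c}{\cdot}{c}$ forbids. Thus $\{a,b\}$ reaches $\{a,b,c\}$ in $\gamma$, contradicting the fact that no such edge occurs in $\rho_\gamma$. I expect the only real obstacle to be conceptual rather than computational: one must notice that the self-exclusion constraint $a \notin \{x,c\}$ is exactly what stops a single adder from blocking a target it can otherwise reach, and that the union form of $\acD$ then rules out any \emph{joint} blocking by $a$ and $b$ together.
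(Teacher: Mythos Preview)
Your proof is correct and follows essentially the same approach as the paper's: both derive $\ic{c}\cup\ac{\{a,b\}}{c}=\emptyset$ from the transitions into $\{c\}$, $\{a,c\}$, and $\{b,c\}$ and then force the forbidden edge $\transG{\{a,b\}}{\{a,b,c\}}$; your version is in fact more explicit than the paper's in invoking the side condition $a\notin\{c,t\}$ of Definition~\ref{def:GES} and in checking condition~(3), which the paper leaves implicit. The only harmless slip is the claim that Definition~\ref{def:transEq} forces $E=\{a,b,c\}$---that definition only equates reachable transition graphs---but nothing in your argument actually relies on this restriction.
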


\begin{proof}
	Assume a GES $ \gamma = \left( E, \enabOp, \growingCausality \right) $ such that $ \transEq {\gamma}{\rho_{\gamma}} $. Then $ \configurations{\gamma} = \configurations{\rho_{\gamma}} $.	By Definition~\ref{def:GESNoConcurTransition}, and  because of the configurations $ \Set{ a }, \Set{ b }, \Set{ c } \in \configurations{\rho_{\gamma}} $, there are no initial causes for $ a $, $ b $, or $ c $, \ie $ {\enabOp} \cap \Set{ \enab{x}{y} \mid y \in \Set{ a, b, c } } = \emptyset $. Moreover, because of the configurations $ \Set{ a, c }, \Set{ b, c } \in \configurations{\rho_{\gamma}} $, neither $ a $ nor $ b $ can add a cause to $ c $. Thus, we have $ \left( \ic{c} \cup \ac{\Set{ a, b }}{c} \right) = \emptyset \subseteq \Set{ a, b } $.	But then, by Definition~\ref{def:GESNoConcurTransition}, $ \transG{\Set{ a, b }}{\Set{ a, b, c }} $. Since $ \neg \left( \transRC[{\rho_\gamma}]{\Set{ a, b }}{\Set{ a, b, c }} \right) $, this violates our assumption, \ie there is no GES that is transition equivalent to $ \rho_{\gamma} $.
\end{proof}

Hence, GESs are strictly less expressive than RCESs.

\begin{thm}\label{thm:GESinRCESstrictly}
	GESs are strictly less expressive than RCESs.
\end{thm}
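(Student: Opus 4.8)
The plan is to read the theorem off as an immediate combination of the two preceding lemmas, exactly mirroring the argument for Theorem~\ref{thm:SESinRCESstrictly} in the shrinking case. The phrase ``strictly less expressive'' decomposes into two claims: first, that every GES admits a transition-equivalent RCES (so RCESs are at least as expressive as GESs), and second, that some particular RCES admits no transition-equivalent GES (so the containment is proper). Both halves have already been discharged, so the work here is just to assemble them.

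For the containment, I would invoke Lemma~\ref{lma:GESinRCES}: for an arbitrary GES $\gamma$, the construction $\rces{\gamma}$ of Definition~\ref{def:translationIntoRCES} yields an RCES with $\transEq{\gamma}{\rces{\gamma}}$. The only hypothesis that construction needs is that the GES transition relation be strict and dense in the sense of Definition~\ref{def:translationIntoRCES}, and this was verified inside the proof of Lemma~\ref{lma:GESinRCES} (strictness directly from Definition~\ref{def:GESNoConcurTransition}, density from the fact that the added-cause condition is preserved under the sandwich $X \subseteq X' \subseteq Y' \subseteq Y$, since $\ac{X}{e}=\ac{X'}{e}$ there). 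Hence every GES sits, up to transition equivalence, inside the RCESs.

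For strictness I would point to the witness $\rho_\gamma$ of Figure~\ref{fig:transgraphs} together with Lemma~\ref{lma:GESinRCESstrictly}, which shows that no GES is transition equivalent to it. The intuition worth stressing is that $\rho_\gamma$ realizes disabling by a \emph{set} of events\,---\,$c$ becomes impossible only after both $a$ and $b$ have occurred\,---\,whereas a GES can only add a cause through individual adders; any GES forced to match the configurations $\{a\},\{b\},\{c\},\{a,c\},\{b,c\}$ would necessarily also admit the transition $\transG{\{a,b\}}{\{a,b,c\}}$, which $\rho_\gamma$ forbids.

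I do not expect any real obstacle: the substantive content lives entirely in Lemmas~\ref{lma:GESinRCES} and \ref{lma:GESinRCESstrictly}, and the theorem follows simply by combining them, the first giving a transition-equivalent RCES for every GES and the second exhibiting $\rho_\gamma$ as an RCES with no transition-equivalent GES, so the inclusion of GESs into RCESs is strict.
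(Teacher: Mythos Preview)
Your proposal is correct and follows exactly the paper's approach: the paper's proof consists of the single line ``Follows from Lemma~\ref{lma:GESinRCES} and \ref{lma:GESinRCESstrictly},'' and you have reproduced precisely this decomposition, with the same witness $\rho_\gamma$ and the same reasoning behind each half. The additional intuition you supply (strictness and density for the embedding, disabling-by-a-set for the counterexample) merely recapitulates what is already inside the proofs of those two lemmas.
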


\begin{proof}
	Follows from Lemma~\ref{lma:GESinRCES} and \ref{lma:GESinRCESstrictly}.
\end{proof}

\section{Dynamic Causality}
\label{sec:DCES}

We have investigated shrinking and growing causality separately. In this section, we combine them and examine the resulting expressive power. Note that we again omit the binary conflict relation from this definition, because it can be modeled by adders (\cf \S\ref{sec:GES}). Since we slightly changed the definitions for the shrinking and growing causality relations, there is a slight difference to the version in \cite{dynamicCausality15}.

\begin{defi}
	\label{def:DCES}
	A \emph{Dynamic Causality Event Structure (DCES)} is a quadruple $ \Delta = $\linebreak
	$ \left( E, \enabOp, \shrinkingCausality, \growingCausality \right) $, where $E$ is a set of events, $\enabOp \; \subseteq E^2$ the initial causality relation, $ \shrinkingCausality \subseteq E^3$ is the shrinking causality relation, and $\growingCausality \subseteq E^3$ is the growing causality relation such that for all $d,c,t,a\in E$:
	\begin{enumerate}
		\item \label{eq:DCESDroppingExistentCauses} $ \drops{d}{c}{t} \land \nexists a	\in E \logdot \addcause{a}{c}{t} \Longrightarrow c \rightarrow t $
		\item \label{eq:DCESnoSelfDropping} $ \drops{d}{c}{t} \Longrightarrow d\notin\{c,t\} $
		\item \label{eq:DCESAddingMissingCauses} $ \addcause{a}{c}{t} \land \nexists d	\in E \logdot \drops{d}{c}{t}	\Longrightarrow \lnot (c \rightarrow t) $
		\item \label{eq:DCESnoSelfAdding} $ \addcause{a}{c}{t} \Longrightarrow a\notin\{c,t\} $
		\item \label{eq:ModifierCannotAddAndDropTheSameTarget} $ \addcause{a}{c}{t} \Longrightarrow \neg (\drops{d}{c}{t}) $.
		\end{enumerate}
\end{defi}

The Conditions~\ref{eq:DCESDroppingExistentCauses}, \ref{eq:DCESnoSelfDropping}, \ref{eq:DCESAddingMissingCauses}, and \ref{eq:DCESnoSelfAdding} are a generalization of the Conditions in the Definitions~\ref{def:SES} and \ref{def:GES}, respectively. If there are droppers and adders for the same causal dependency we do not specify whether this dependency is contained in $\enabOp$, because the semantics depends on the order in which the droppers and adders occur. Condition~\ref{eq:ModifierCannotAddAndDropTheSameTarget} prevents that a modifier adds and drops the same cause for the same target.

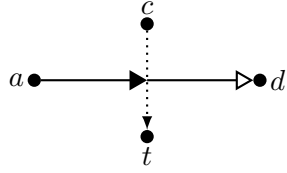
\begin{figure}[t]
	\centering
	\begin{tikzpicture}[]
		\event{c}{1.5}{1.5}{above}{$ c $};
		\event{t}{1.5}{0}{below}{$ t$};
		\event{a}{0}{0.75}{left}{$ a $};
		\event{d}{3}{0.75}{right}{$ d $};
		\draw[enablingAbsent] (c) edge (t);
		\draw[adding] (a) -- (1.5,0.75);
		\draw[dropping] (1.5,0.75) -- (d);
	\end{tikzpicture}
	\caption{An order sensitive DCES.}
	\label{fig:DCESorder}
\end{figure}

The order of occurrence of droppers and adders determines the causes of an event. For example assume \addcause{a}{c}{t} and \drops{d}{c}{t} (as depicted in Figure~\ref{fig:DCESorder}), then after $ad$, $t$ does not depend on $c$, whereas after $da$, $t$ depends on $c$. Thus, configurations like $\Set{a,d}$ are not expressive enough to represent the state of such a system.

Therefore, in a DCES, a state is a pair of a configuration $C$ and the current causal state, \ie, the relation $\caus{C}$ that contains the current causal dependencies.
Note that the initial state is the only configuration with an empty set of events; for other sets of events there can be different configuration.
The behavior of a DCES is defined by the transition relation between finite configurations.

\begin{defi}
	\label{def:DCTransition}
	Let $\Delta = \left( E, \caus{in}, \shrinkingCausality,
	\growingCausality\right) $ be a DCES and $X,Y\subseteq E$.\\ Then
	$\transDC{(X,\caus{X})}{(Y,\caus{Y})$ if:
	\begin{enumerate}
		\item \label{eq:ConfigContainement} $ X \subset Y $
		\item \label{eq:OnlyEnabledEvents} $ \forall e \in Y\setminus X \logdot \left\{e'\mid (e',e)\in{\caus{X}}\right\} \subseteq X $
		\item \label{eq:correctCausalityUpdate} $ {\caus{Y}} = {(\caus{X} \setminus \dropped{X,Y}) \cup \added{X, Y}}} $
		\item \label{eq:NCNoConcurrency} $ \forall (c,t) \in {\added{X, Y}} \cap {\dropped{X,Y}} \logdot (c \in X \vee t \in X) $
		\item \label{eq:NotToMuchConcurrency}$\forall t,a \in Y\setminus X \logdot \forall c \in E \logdot \addcause{a}{c}{t} \implies c \in X $.
	\end{enumerate}
\end{defi}

Condition~\ref{eq:ConfigContainement} ensures the accumulation of events. Condition~\ref{eq:OnlyEnabledEvents} ensures that only events that are enabled after $X$ can take place in $Y$. Condition~\ref{eq:correctCausalityUpdate} ensures the correct update of the current causality relation with respect to this transition. Condition~\ref{eq:NCNoConcurrency} prevents race conditions; it forbids the occurrence of an adder and a dropper of the same still relevant causal dependency within one transition. Condition~\ref{eq:NotToMuchConcurrency} ensures that DCESs generalizes the GESs, it prevents the concurrent occurrence of an adder and its target since they are not independent. The one exception is when the cause has already occurred; in that case, the adder does not change the target's predecessors.

\begin{defi}
	\label{def:DCESConf}
Let $\Delta = (E,\caus{in}, \shrinkingCausality,
	\growingCausality) $ and $\Delta' = (E',\caus{in}', \shrinkingCausality',
	\growingCausality') $  be two DCESs. We call them transition sequence equivalent if they allow for the same transition sequences with respect to the events and write $\eqts{\Delta}{\Delta'}$, \ie for any sequence $\emptyset=X_0, X_1, \dots, X_n$ with $X_i\subset X_{i+1}$ and $X_n\in E\cap E'$, we have current causality relations $\caus{X_0}=\caus{in},\caus{X_1},\dots,\caus{X_n}$ such that $\transDC{(X_i,\caus{X_i})}{(X_{i+1},\caus{X_{i+1}})}$ in $\Delta$, iff we have current causality relations $\caus{X_0}'=\caus{in}',\caus{X_1}'\dots,\caus{X_n}'$ such that $\transDC{(X_i,\caus{X_i}')}{(X_{i+1},\caus{X_{i+1}'})}$ in $\Delta'$.
\end{defi}

As visualized by the DCES in Figure~\ref{fig:DCESorder}, the order of events may be relevant for the behavior of a DCES. Therefore, we use state transition equivalence to compare DCESs. When we compare a DCES to GESs, SESs, RCESs, and EBES, then the state will be unique with respect to the configuration (\ie the current causality relation only depends on the configuration and not on the order), thus we can use the transition equivalence $ \transEq{}{} $ from Definition~\ref{def:transEq} for those comparisons.

\subsection{DCESs versus RCESs}

DCESs and RCESs are incomparable.

On the one hand, RCESs can express the disabling of an event $ c $ after a conjunction of events $ a $ and $ b $ as visualized by $ \rho_{\gamma} $ of Figure~\ref{fig:transgraphs}, whereas DCESs can model only disabling after single events.

\begin{lem}
	\label{lma:RCESNotinDCES}
	There is no transition-equivalent DCES to $ \rho_{\gamma} $ of Figure~\ref{fig:transgraphs}.
\end{lem}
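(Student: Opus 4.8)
The plan is to adapt the argument of Lemma~\ref{lma:GESinRCESstrictly}, which established the analogous statement for GESs, to the richer DCES setting in which the shrinking relation $\shrinkingCausality$ is also available. I would argue by contradiction: assume a DCES $\Delta = \left( E, \enabOp, \shrinkingCausality, \growingCausality \right)$ with $\transEq{\Delta}{\rho_\gamma}$. Since transition equivalence equates the entire reachable transition graphs, $\Delta$ must realise exactly the edges of $\rho_\gamma$ depicted in Figure~\ref{fig:transgraphs}; in particular it has the edges $\emptyset \to \Set{c}$, $\Set{a} \to \Set{a,c}$, and $\Set{b} \to \Set{b,c}$, but it must \emph{not} admit any edge $\Set{a,b} \to \Set{a,b,c}$.

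First I would extract constraints on $\Delta$ from these edges. Applying Condition~\ref{eq:OnlyEnabledEvents} of Definition~\ref{def:DCTransition} to the edge $\emptyset \to \Set{c}$, where the causal state at $\emptyset$ is the initial relation $\enabOp$, forces $\ic{c} = \emptyset$, so $c$ has no initial cause. Then from the edge $\Set{a} \to \Set{a,c}$ I would compute the causes of $c$ in the state $\caus{\Set{a}}$: since $c$ has no initial cause, Condition~\ref{eq:correctCausalityUpdate} makes this set exactly $\Set{c' \mid \adds{a}{c'}{c}}$. Enabledness of $c$ (Condition~\ref{eq:OnlyEnabledEvents}) requires it to be contained in $\Set{a}$, while Condition~\ref{eq:DCESnoSelfAdding} of Definition~\ref{def:DCES} gives $c' \neq a$; hence the set is empty, \ie $a$ adds no cause to $c$. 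The symmetric argument on $\Set{b} \to \Set{b,c}$ shows that $b$ adds no cause to $c$.

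Next I would show that in every reachable state whose configuration is $\Set{a,b}$ the event $c$ still has no causes. The only events that have occurred in such a state are $a$ and $b$, so the only adders that can have fired are $a$ and $b$; by the previous step neither contributes a pair pointing to $c$, whereas droppers can only remove pairs from the causal state. Hence, independently of the order in which $a$ and $b$ occurred, the causes of $c$ in $\caus{\Set{a,b}}$ are empty. This is the step I expect to be the main obstacle, precisely because in a DCES the causal state genuinely depends on the order of occurrences, so there is in general no closed form analogous to Lemmas~\ref{lma:SEScaus} and~\ref{lma:GEScaus}; the argument only goes through because it concerns the single event $c$, whose cause set starts empty and can be enlarged only by adders belonging to the current configuration.

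Finally I would verify that the forbidden transition is nevertheless enabled. Taking $X = \Set{a,b}$, $Y = \Set{a,b,c}$, so that $Y \setminus X = \Set{c}$, I check the five conditions of Definition~\ref{def:DCTransition}: Condition~\ref{eq:ConfigContainement} is immediate; Condition~\ref{eq:OnlyEnabledEvents} holds since the causes of $c$ are empty and $\emptyset \subseteq \Set{a,b}$; Condition~\ref{eq:correctCausalityUpdate} merely defines $\caus{Y}$; Condition~\ref{eq:NCNoConcurrency} is vacuous because $\added{X,Y}$ and $\dropped{X,Y}$ are disjoint by Condition~\ref{eq:ModifierCannotAddAndDropTheSameTarget} of Definition~\ref{def:DCES}; and Condition~\ref{eq:NotToMuchConcurrency} is vacuous because the only new event $c$ cannot be an adder of itself by Condition~\ref{eq:DCESnoSelfAdding}. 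Thus $\transDC{(\Set{a,b},\caus{\Set{a,b}})}{(\Set{a,b,c},\caus{\Set{a,b,c}})}$ holds in $\Delta$, yielding an edge $\Set{a,b} \to \Set{a,b,c}$ that is absent from $\rho_\gamma$, which contradicts $\transEq{\Delta}{\rho_\gamma}$ and completes the proof.
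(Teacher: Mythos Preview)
Your proof is correct and follows essentially the same approach as the paper's: argue by contradiction, use the singleton configurations to rule out initial causes of $c$, use the configurations $\Set{a,c}$ and $\Set{b,c}$ to rule out $a$ or $b$ adding a cause to $c$, and then exhibit the forbidden transition $\Set{a,b}\to\Set{a,b,c}$. Your write-up is in fact more explicit than the paper's, which does not spell out the five conditions of Definition~\ref{def:DCTransition} and does not explicitly invoke Condition~\ref{eq:DCESnoSelfAdding} of Definition~\ref{def:DCES} to exclude $\adds{a}{a}{c}$.
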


\begin{proof}
	Assume $ \Delta = \left( E, \enabOp,\shrinkingCausality, \growingCausality \right) $ such that $ \transEq{\Delta}{\rho_{\gamma}} $. Then $ \configurations{\Delta} = \configurations{\rho_{\gamma}} $.
	By Definition~\ref{def:DCTransition} and  because of the configurations $ \Set{ a }, \Set{ b }, \Set{ c } \in \configurations{\rho_{\gamma}} $, there are no initial causes for $ a $, $ b $, or $ c $, \ie $ \enabOp \cap \Set{ \enab{x}{y} \mid y \in \Set{ a, b, c } } = \emptyset $. Note that the relation $\shrinkingCausality$ cannot disable events.
	Finally, because of the configurations $ \Set{ a, c }, \Set{ b, c } \in \configurations{\rho_{\gamma}} $, neither $ a $ nor $ b $ can add a cause  to $ c $.
	Thus we have the state $(\Set{a,b}, \caus{\Set{a,b}})$, where $\Set{e'\mid (e',c)\in\caus{\Set{a,b}}}=\emptyset$.
	But then, by Definition~\ref{def:DCTransition}, $ \transDC{(\Set{ a, b },\caus{\Set{a,b}})}{(\Set{ a, b, c },\caus{\Set{a,b,c}})} $ for some current causality relations $\caus{\Set{a,b}}$ and $\caus{\Set{a,b,c}}$.
	Since $ \neg \left( \transRC{\Set{ a, b }}{\Set{ a, b, c }} \right) $, this violates our assumption, \ie there is no DCES that is transition equivalent to $ \rho_{\gamma} $.
\end{proof}

On the other hand, RCESs cannot distinguish the order of occurrences of events as it is done for the order of $ a $ and $ d $ in the DCES in Figure~\ref{fig:DCESorder}. Hence, both structures are incomparable.

\begin{thm}
	\label{thm:DCESvsRCES}
	DCESs and RCESs are incomparable.
\end{thm}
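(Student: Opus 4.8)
The plan is to prove incomparability by exhibiting, in each direction, a structure with no transition-equivalent counterpart in the other class. For the direction that DCESs do not subsume RCESs, I would simply appeal to Lemma~\ref{lma:RCESNotinDCES}: the RCES $\rho_\gamma$ of Figure~\ref{fig:transgraphs}, which disables $c$ only after the conjunction of $a$ and $b$, has no transition-equivalent DCES, because a DCES can withdraw or impose a dependency only through single modifier events and so cannot block $c$ exactly on $\Set{a,b}$ while leaving it available after $\Set{a}$ or $\Set{b}$.

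For the converse direction I would take the order-sensitive DCES $\Delta=\left(\Set{a,c,d,t},\emptyset,\Set{\drops{d}{c}{t}},\Set{\adds{a}{c}{t}}\right)$ of Figure~\ref{fig:DCESorder}, in which $c\rightarrow t$ is initially absent, added by $a$, and dropped by $d$, and argue that no RCES reproduces its runs. The key computation, which I would carry out explicitly from Definition~\ref{def:DCTransition} using the update rule $\caus{Y}={(\caus{X}\setminus\dropped{X,Y})\cup\added{X,Y}}$, is that $\Delta$ reaches the configuration $\Set{a,d}$ in two causally incompatible states: along $\emptyset,\Set{a},\Set{a,d}$ the dependency is first added by $a$ and then dropped by $d$, so $\caus{\Set{a,d}}=\emptyset$ and the step $\transDC{(\Set{a,d},\caus{\Set{a,d}})}{(\Set{a,d,t},\caus{\Set{a,d,t}})}$ is enabled; along $\emptyset,\Set{d},\Set{a,d}$ the drop by $d$ is vacuous and the add by $a$ is effective, so $(c,t)\in\caus{\Set{a,d}}$ while $c\notin\Set{a,d}$, whence $t$ cannot fire before $c$.

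The contradiction then runs as follows. Suppose an RCES $\rho$ were transition-equivalent to $\Delta$. Since $\Delta$ admits the run $\emptyset,\Set{a},\Set{a,d},\Set{a,d,t}$, the transition graph of $\rho$ must contain the edge $\Set{a,d}\to\Set{a,d,t}$; and since $\Delta$ admits the run $\emptyset,\Set{d},\Set{a,d}$, that graph must contain $\emptyset\to\Set{d}$ and $\Set{d}\to\Set{a,d}$. In an RCES the state is exactly the current configuration, so its transition sequences are precisely the directed paths of its transition graph and edges concatenate freely; hence $\rho$ would admit $\emptyset,\Set{d},\Set{a,d},\Set{a,d,t}$. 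But this is not a run of $\Delta$, since we computed that $t$ cannot fire from $\Set{a,d}$ once $a$ has acted last and $c\notin\Set{a,d}$. This contradicts transition equivalence, so no such $\rho$ exists, and the two classes are incomparable.

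The step I expect to be the main obstacle is pinning down the right notion of equivalence so that this argument is valid rather than a routine check: one must phrase it in terms of transition \emph{sequences} (as in Definition~\ref{def:DCESConf}) and not merely the projected set of configuration edges. The whole point is that for an RCES the reached configuration determines the admissible continuations, whereas for an order-sensitive DCES two distinct causal states share the configuration $\Set{a,d}$ and differ in what they enable; it is exactly this failure of free edge-composition—two legitimate edges into and out of $\Set{a,d}$ that may not be traversed consecutively—that no RCES can emulate.
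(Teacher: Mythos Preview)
Your proposal is correct and follows essentially the same approach as the paper: both directions use the same witnesses (Lemma~\ref{lma:RCESNotinDCES} with $\rho_\gamma$ for one direction, and the order-sensitive DCES of Figure~\ref{fig:DCESorder} for the other), and the core argument---that RCES transitions are determined by the configuration alone and hence cannot separate the two histories reaching $\Set{a,d}$---is exactly the paper's. Your version is more explicit, carrying out the causal-state computation and, importantly, flagging that the comparison must be read at the level of transition \emph{sequences} (Definition~\ref{def:DCESConf}) rather than a projected configuration graph; the paper's proof leaves this point implicit.
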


\begin{proof}
	The Theorem follows from Lemma~\ref{lma:RCESNotinDCES} and the order insensitivity of the RCESs: There is no RCES with the same behavior as the DCES in Figure~\ref{fig:DCESorder} since the transition behavior of RCESs is configuration based (Definition~\ref{def:RCES}) and therefore it simply cannot behave differently \wrt the order of events in the configuration $ \Set{a,d} $.
\end{proof}

\subsection{DCESs versus GESs and SESs}

By construction, DCESs are at least as expressive as GESs and SESs. To embed a GES (or SES) into a DCES, it suffices to choose $ \shrinkingCausality = \emptyset $ (or first $ \growingCausality = \emptyset $ and then add those tuples that encode the binary conflict).

\begin{defi}
	\label{def:InclusionIntoDCES}
	Let $ \sigma = \left( E, \confOp, \enabOp, \shrinkingCausality \right) $ be a SES. \\
        The embedding of $\sigma$  into a DCES is given by $ \emb{\sigma} = (E\uplus{\{e_{\text{imp}}\}},{\enabOp}\cup{\{(e_{\text{imp}},e_{\text{imp}})\}},\shrinkingCausality,\growingCausality) $, where $e_{\text{imp}}$ is a fresh (and impossible) event and
${\growingCausality}=\{(a,e_{\text{imp}},b),(b,e_{\text{imp}},a)\mid (a,b)\in\confOp)\}$.
	
	Similarly, let $ \gamma = (E,\enabOp,\growingCausality) $ be a GES. \\Then, its embedding into a DCES is given by $\emb{\gamma}=(E,\enabOp,\emptyset,\growingCausality)$.
\end{defi}

In the case of a SES, since $e_{\text{imp}}$ is fresh, there are no droppers for the newly added dependencies and thus $\emb{\sigma}$ is a DCESs in which the order of modifiers does not matter (\cf Definition~\ref{def:SingleStateDC}). Similarly, $\emb{\gamma}$ is a DCESs in which the order of modifiers does not matter. In the following,
we show that SESs (respectively GESs) and their embeddings are transition equivalent.
Note that, in the proofs, we refer to the technical concept of Single State Dynamic Causality ESs (SSDCs), which are only defined in the Appendix.

\begin{lem}
	\label{lma:SESinDCES}
	\label{lma:GESinDCES}
	Let $\mu$ be a GES or SES, then we have $\transEq{\emb{\mu} }{\mu}$.
\end{lem}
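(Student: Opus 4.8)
The plan is to reduce, in each case, the five-part DCES transition relation of Definition~\ref{def:DCTransition} to the native transition relation of $\mu$ (Definition~\ref{def:GEStrans} for a GES, Definition~\ref{def:SEStrans} for a SES) and then to conclude transition equivalence via Definition~\ref{def:transEq}. The crucial structural observation is that both embeddings of Definition~\ref{def:InclusionIntoDCES} are \emph{single-state}: in each one no causal dependency is ever both added and dropped, so the current causal state $\caus{X}$ is a function of the configuration $X$ alone (this is what makes $\emb{\mu}$ an SSDC, \cf Definition~\ref{def:SingleStateDC}). I would first record the explicit non-recursive form of this state, namely $\caus{X} = (\enabOp \setminus \dropped{\emptyset,X}) \cup \added{\emptyset,X}$, which is obtained by unfolding Condition~\ref{eq:correctCausalityUpdate} along a transition sequence exactly as in Lemmas~\ref{lma:SEScaus} and~\ref{lma:GEScaus}, using that the dropped pairs (first component in $E$) and the added pairs (first component $e_{\text{imp}}$) are disjoint. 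Because the state is configuration-determined, I may compare the configuration-indexed transition graphs of $\emb{\mu}$ and $\mu$ directly.

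The GES case is routine. With $\shrinkingCausality = \emptyset$ we have $\dropped{X,Y}=\emptyset$, so Condition~\ref{eq:correctCausalityUpdate} merely defines $\caus{Y} = \caus{X}\cup\added{X,Y}$ and Condition~\ref{eq:NCNoConcurrency} is vacuous. Rewriting the current causes of $e$ as $\{e'\mid(e',e)\in\caus{X}\} = \ic{e}\cup\ac{X}{e}$ turns Condition~\ref{eq:OnlyEnabledEvents} into the second clause of $\transG{X}{Y}$, while Conditions~\ref{eq:ConfigContainement} and~\ref{eq:NotToMuchConcurrency} are literally the first and third clauses of Definition~\ref{def:GEStrans}. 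Hence $\transDC{(X,\caus{X})}{(Y,\caus{Y})} \iff \transG{X}{Y}$.

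The SES case carries the real content. First I would note that $e_{\text{imp}}$ is impossible: it has the self-loop $\enab{e_{\text{imp}}}{e_{\text{imp}}}$ and, being fresh, has no dropper, so Condition~\ref{eq:OnlyEnabledEvents} can never admit it; thus every reachable configuration is a subset of $E$. Here $\added{\emptyset,X}$ consists precisely of the pairs $(e_{\text{imp}},b)$ for which some $a\in X$ satisfies $\conf{a}{b}$, so the current causes of a fresh event $e$ are $\{e'\mid(e',e)\in\caus{X}\} = (\ic{e}\setminus\dc{X}{e}) \cup \{e_{\text{imp}}\mid \exists a\in X\logdot \conf{a}{e}\}$. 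The hard part will be to check that the symmetric, global conflict-freeness demanded by $\transS{X}{Y}$ is faithfully reconstructed from the local, direction-sensitive DCES conditions. Since $e_{\text{imp}}\notin X$, Condition~\ref{eq:OnlyEnabledEvents} forces simultaneously $(\ic{e}\setminus\dc{X}{e})\subseteq X$ (the enabling clause of $\transS{X}{Y}$) and the absence of any $a\in X$ with $\conf{a}{e}$; Condition~\ref{eq:NotToMuchConcurrency} applied to the embedding's adders $\addcause{a}{e_{\text{imp}}}{t}$ forces $Y\setminus X$ to be conflict-free; and the reachable $X$ is itself conflict-free. Combining these three facts yields conflict-freeness of all of $Y$, \ie the conflict clause of $\transS{X}{Y}$. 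Condition~\ref{eq:NCNoConcurrency} is vacuous because $\added{X,Y}$ (causes $e_{\text{imp}}$) and $\dropped{X,Y}$ (causes in $E$) are disjoint, and Condition~\ref{eq:correctCausalityUpdate} only fixes $\caus{Y}$. This gives $\transDC{(X,\caus{X})}{(Y,\caus{Y})} \iff \transS{X}{Y}$ for all $X,Y\subseteq E$.

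Finally, in both cases I would conclude by induction on the length of a transition sequence from $\emptyset$ that $\emb{\mu}$ and $\mu$ have the same reachable configurations and the same transitions between them; since $e_{\text{imp}}$ never occurs in a reachable configuration, it contributes nothing to the reachable transition graph of $\emb{\mu}$. By Definition~\ref{def:transEq} this establishes $\transEq{\emb{\mu}}{\mu}$.
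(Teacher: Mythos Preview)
Your proposal is correct and follows essentially the same approach as the paper: identify the embedding as an SSDC, compute the explicit causal state $\caus{X}=(\incaus\setminus\dropped{\emptyset,X})\cup\added{\emptyset,X}$, and then match the five DCES conditions against the native transition relation. The only organisational differences are that the paper compares against the alternative, state-based transition definitions $\transGaOp$ and $\transSaOp$ (Definitions~\ref{def:GEStransalt} and~\ref{def:SEStransalt}) rather than the original ones you use, and for the SES case the paper establishes conflict-freeness of $Y$ by a single temporal contradiction argument (first of $a,b$ to occur permanently disables the other) whereas you split it into the three cases $X$-vs-$X$, $X$-vs-$(Y\setminus X)$, and $(Y\setminus X)$-vs-$(Y\setminus X)$; both arguments are sound.
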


\begin{proof}
	Let $\mu$ be a GES.  We show $ \transGa{(X,\caus{X})}{(Y,\caus{Y})} \iff \transDC{(X,\caus{X})}{(Y,\caus{Y})} $. Note that Condition~\ref{eq:NotToMuchConcurrency} in Definition~\ref{def:DCTransition} always holds for $ \emb{\mu} $, since it is a SSDC. The other conditions in Definition~\ref{def:DCTransition} are exactly as in Definition~\ref{def:GEStransalt} (since $\dropped{X,Y}$ is always empty).
	
	Let now $\mu$ be a SES. We prove $\transSa{(X,\caus{X_s})}{(Y,\caus{Y_s})}\iff \transDC{(X,\caus{X}')}{(Y,\caus{Y}')}$ where ${\caus{X}'}={(\incaus'\setminus\dropped{\emptyset,X})\cup\added{\emptyset,X}}$, ${\caus{Y}'}={(\incaus'\setminus\dropped{\emptyset,Y})\cup\added{\emptyset,Y}}$, and ${\incaus'}={{\incaus}\cup{\{(e_{\text{imp}},e_{\text{imp}})\}}}$, by Lemma~\ref{lma:SingleCausalState} and Definition~\ref{def:InclusionIntoDCES}.
	
	We consider $\Longrightarrow$ first. Definition~\ref{def:SEStransalt} ensures Condition~\ref{eq:ConfigContainement}. For Condition~\ref{eq:OnlyEnabledEvents}, consider ${\caus{X}'}={(\incaus'\setminus\dropped{\emptyset,X})\cup\added{\emptyset,X}}$ since $e_{\text{imp}}$ is not in the set of events of $\mu$, we only have to argue about the added part. But since $\added{\emptyset,X}=\{(e_{\text{imp}},a)\mid\exists b\in X\logdot \conf{a}{b}\}$, we only add dependencies for events in conflict to events in $X$. 
        With Definition~\ref{def:SEStransalt}, then Condition~\ref{eq:OnlyEnabledEvents} holds.
        For Condition~\ref{eq:correctCausalityUpdate}, we have to show $\caus{Y}'=(\caus{X}'\setminus\dropped{X,Y})\cup\added{X,Y}$, which unfolds to $ \caus{Y}' \; = \left( \left( \left( \incaus' \setminus \dropped{\emptyset,X} \right) \cup \added{\emptyset,X} \right) \setminus \dropped{X,Y} \right) \cup \added{X,Y} $; since $\emb{\mu}$ is a SSDC, we can reorder and simplify and obtain $\caus{Y}'=(\incaus'\setminus\dropped{\emptyset,Y})\cup\added{\emptyset,Y}$, wherefore this condition holds. Condition~\ref{eq:NCNoConcurrency} holds because $\emb{\mu}$ is a SSDC. Condition~\ref{eq:NotToMuchConcurrency} holds, because all $\adds{a}{e_{\text{imp}}}{b}$ result from conflicts $\conf{a}{b}$ in $\mu$ and thus, $Y$ cannot contain both.
	
	We consider $\Longleftarrow$ by proving the conditions of Definition~\ref{def:SEStransalt}: $X\subset Y$ follows from Condition~\ref{eq:ConfigContainement}.
	Assume that $Y$ is not conflict-free. Then, there are $a,b\in Y$ with $\conf{a}{b}$. Because $a,b \in Y$, there is a transition sequence in $\emb{\mu}$ leading to $a$ and $b$. But, by Definition~\ref{def:InclusionIntoDCES} $\adds{a}{e_{\text{imp}}}{b}$ (and $\adds{b}{e_{\text{imp}}}{a}$) and because of Condition~\ref{eq:NotToMuchConcurrency}, they cannot occur in the same transition.
	Since $e_{\text{imp}}$ is impossible and since there are no droppers for the dependencies between $e_{\text{imp}}$ and $a$ (or $b$), the first occurrence of $a$ or $b$ will add a dependency from $e_{\text{imp}}$ to the respective other event.
	Without loss of generality, assume $a$ happens first.
	So, $a$ will add the dependency $e_{\text{imp}} \to b$ to the causality relation (by Condition~\ref{eq:correctCausalityUpdate}) and, as argued above, this dependency cannot be dropped later.
	Therefore, $b$ can no longer occur (by Condition~\ref{eq:OnlyEnabledEvents}). Thus, $Y$ is conflict-free.
	The condition $ \forall e \in Y \setminus X \,. \left\{e'\mid (e',e)\in{\caus{X}}\right\} \subseteq X $ is a slightly more general version of Condition~\ref{eq:OnlyEnabledEvents}. And the last condition holds because of ${\caus{Y}}={\incaus\setminus\dropped{\emptyset,Y}}$ and by Lemma~\ref{lma:SEScaus}.
\end{proof}

We conclude that the extension from GESs and SESs to DCESs adds expressive power.

\begin{thm}
	\label{thm:DCESvsGSES}
	DCESs are strictly more expressive than GESs and SESs.
\end{thm}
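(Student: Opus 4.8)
The inclusion direction is already in hand: Lemma~\ref{lma:SESinDCES} shows that every SES and every GES is transition equivalent to its DCES embedding, so DCESs are at least as expressive as both classes. All that remains is to exhibit, for each of the two classes, a DCES that has no transition-equivalent member of that class. The plan is to recycle the incomparability of GESs and SESs (Theorem~\ref{thm:GESvsSES}) together with the embedding, rather than to re-prove an inexpressibility result from scratch.

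First I would handle strictness over SESs. Take the GES $\gamma_\sigma$ of Figure~\ref{fig:counterExamples}, for which Theorem~\ref{thm:GESvsSES} (through Lemma~\ref{lma:GESninSES}) guarantees that no SES is transition equivalent to it. Its embedding $\emb{\gamma_\sigma}$ is a DCES, and by the embedding lemma $\transEq{\emb{\gamma_\sigma}}{\gamma_\sigma}$. Since transition equivalence is by Definition~\ref{def:transEq} just equality of reachable transition graphs, it is transitive; hence if some SES $\sigma'$ satisfied $\transEq{\sigma'}{\emb{\gamma_\sigma}}$, we would obtain $\transEq{\sigma'}{\gamma_\sigma}$, contradicting the incomparability. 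Thus $\emb{\gamma_\sigma}$ is a DCES with no transition-equivalent SES.

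Symmetrically, for strictness over GESs I would pick a SES that no GES can match. Lemma~\ref{lma:EBESninGES} provides the BES $\beta_\gamma$ with no configuration-equivalent GES; because DESs, and hence BESs, are as expressive as SESs (Theorem~\ref{thm:SESvsDES}), I fix a SES $\sigma_0$ with $\configurations{\sigma_0}=\configurations{\beta_\gamma}$. As the nodes of a reachable transition graph are exactly the reachable configurations (Definition~\ref{def:ResConfConf}), transition equivalence refines configuration equivalence, so no GES is transition equivalent to $\sigma_0$ either. The embedding $\emb{\sigma_0}$ is then a DCES with $\transEq{\emb{\sigma_0}}{\sigma_0}$, and the same transitivity argument shows that no GES is transition equivalent to $\emb{\sigma_0}$. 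Combining the two cases gives the theorem.

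The conceptual reason behind both witnesses is order-sensitivity, which the DCES of Figure~\ref{fig:DCESorder} displays directly: the configuration $\{a,d\}$ is reached with two different causal states (after $ad$ the dependency $c\to t$ is absent, after $da$ it is present), so whether $t$ may fire next depends on the history. By Lemmas~\ref{lma:GEScaus} and~\ref{lma:SEScaus} the causal state of any GES or SES is a function of the configuration alone, so no such structure can be order-sensitive; this is ultimately what forces the strict inclusions. The main obstacle is not any hard computation but the bookkeeping across the three equivalence notions (trace, poset, and transition equivalence): I must be careful to apply transitivity of $\transEq{}{}$ only between structures whose states are configuration-determined, which is legitimate here precisely because the embeddings $\emb{\gamma_\sigma}$ and $\emb{\sigma_0}$ are order-insensitive and therefore possess genuine configuration-based transition graphs.
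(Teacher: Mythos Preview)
Your core argument is correct but takes a different route from the paper. The paper argues indirectly via RCESs: by Lemma~\ref{lma:GESinDCES} we have DCES $\geq$ SES and DCES $\geq$ GES; if either inclusion failed to be strict then DCES would coincide with SES (resp.\ GES), and since SES $<$ RCES and GES $<$ RCES (Theorems~\ref{thm:SESinRCESstrictly} and~\ref{thm:GESinRCESstrictly}) this would force DCES $<$ RCES, contradicting the incomparability of Theorem~\ref{thm:DCESvsRCES}. Your approach instead exhibits explicit DCES witnesses by pushing the GES/SES incomparability (Theorem~\ref{thm:GESvsSES}) through the embedding $\emb{\cdot}$. Both are valid; yours is more constructive and self-contained (it does not need the DCES--RCES comparison at all), while the paper's is shorter once Theorem~\ref{thm:DCESvsRCES} is already in hand.

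One remark on your final paragraph: the claim that ``the conceptual reason behind both witnesses is order-sensitivity'' is misleading. Your actual witnesses $\emb{\gamma_\sigma}$ and $\emb{\sigma_0}$ are SSDCs and hence order-\emph{insensitive}, as you yourself note at the end; what separates them from SESs (resp.\ GESs) is disabling via added self-loops (resp.\ disjunctive causality via droppers), not history dependence. Order-sensitivity as in Figure~\ref{fig:DCESorder} is what drives the incomparability with RCESs, which is the engine of the \emph{paper's} proof, not yours. This does not affect the correctness of your argument, but the intuition paragraph should be adjusted or dropped.
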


\begin{proof}
        With Theorem~\ref{thm:DCESvsRCES}, DCESs are incomparable to RCESs.
        With Theorem~\ref{thm:GESinRCESstrictly} and \ref{thm:SESinRCESstrictly}, RCESs are strictly more expressive than GESs and SESs.
        With Lemma~\ref{lma:GESinDCES}, DCES is at least as expressive as SESs and GESs.

        Now, if DCESs were \emph{not strictly} more expressive than SESs and GESs, then DCESs would become comparable to RCESs, which they are not.

\end{proof}

\subsection{DCESs versus EBESs}

To show that DCESs are strictly more expressive than EBESs, we use the disabling of GESs and the disjunctive causality of SESs. More precisely, EBESs cannot model the disjunctive causality without a conflict. As counterexample, we use the embedding of the SES $\sigma_\xi$ in Figure~\ref{fig:counterExamples}.

\begin{lem}
\label{lma:DCESninEBES}
	There is no EBES with the same configurations as the DCES $\emb{\sigma_\xi}$, where $\sigma_\xi$ is the SES given in Figure~\ref{fig:counterExamples}.
\end{lem}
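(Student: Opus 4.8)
The plan is to reduce to the very counterexample that already appears in the proof of Theorem~\ref{thm:SESvsEBES}. First I would pin down $\configurations{\emb{\sigma_\xi}}$. Since $\sigma_\xi$ has an empty conflict relation, Definition~\ref{def:InclusionIntoDCES} adds only the fresh, impossible event $e_{\text{imp}}$ with its self-loop and leaves the growing relation empty; hence $e_{\text{imp}}$ never occurs, and by Lemma~\ref{lma:SESinDCES} we have $\transEq{\emb{\sigma_\xi}}{\sigma_\xi}$, so the two share the same configurations. Reading these off the traces listed in the proof of Theorem~\ref{thm:SESvsEBES} gives $\configurations{\emb{\sigma_\xi}} = \configurations{\sigma_\xi} = \Set{\emptyset, \Set{a}, \Set{c}, \Set{a,b}, \Set{a,c}, \Set{b,c}, \Set{a,b,c}}$. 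The salient feature is disjunctive causality without conflict: $b$ requires one of $a$ or $c$ (so $\Set{b}$ is absent while $\Set{a,b}$ and $\Set{b,c}$ are present), yet $a$ and $c$ may co-occur since $\Set{a,c}\in\configurations{\emb{\sigma_\xi}}$.

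Next I would assume, for contradiction, an EBES $\xi=(E,\disaOp,\buEnOp)$ with $\configurations{\xi}=\configurations{\emb{\sigma_\xi}}$; exactly as in Theorem~\ref{thm:SESvsEBES}, impossible events of $\xi$ can be ignored, so its possible events are precisely $\Set{a,b,c}$. From $\Set{a},\Set{c}\in\configurations{\xi}$ together with $\Set{b}\notin\configurations{\xi}$, the bundle-satisfaction clause of Definition~\ref{def:EBESconf} forces $\B{a}=\B{c}=\emptyset$ while $\B{b}\neq\emptyset$ (an event can occur first iff it has no bundles). The configuration $\Set{a,b}$ can only be realised by the trace $ab$, so every $X\in\B{b}$ must meet $\Set{a}$, i.e.\ contains $a$; symmetrically $\Set{b,c}$, realised by $cb$, forces every $X\in\B{b}$ to contain $c$. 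Hence some bundle $X\in\B{b}$ contains both $a$ and $c$.

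Finally I would invoke the stability condition of Definition~\ref{def:EBES}: from $a,c\in X$ with $a\neq c$ we obtain $\disa{a}{c}$ and $\disa{c}{a}$. By the disabling clause of Definition~\ref{def:EBESconf}, any trace containing both $a$ and $c$ would require $a$ before $c$ and $c$ before $a$ simultaneously---impossible---so $\Set{a,c}\notin\configurations{\xi}$, contradicting the configurations computed above. I expect the only delicate points to be the bookkeeping of possible versus impossible events of $\xi$ and the claim that $\Set{a,b}$ and $\Set{b,c}$ admit no realising traces other than $ab$ and $cb$; both are dispatched exactly as in Theorem~\ref{thm:SESvsEBES}. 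Alternatively, the lemma follows in a single line by combining Lemma~\ref{lma:SESinDCES} with the configuration statement inside Theorem~\ref{thm:SESvsEBES}, but spelling out the stability argument makes the underlying mechanism explicit: the clash between disjunctive causality and the mutual-disabling demanded by EBES stability.
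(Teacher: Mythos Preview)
Your proposal is correct and follows essentially the same approach as the paper's own proof: compute the configurations of $\emb{\sigma_\xi}$ via Lemma~\ref{lma:SESinDCES}, observe that the absence of $\Set{b}$ forces a bundle on $b$, use the configurations $\Set{a,b}$ and $\Set{b,c}$ to show such a bundle contains both $a$ and $c$, invoke stability to get mutual disabling, and contradict $\Set{a,c}\in\configurations{\xi}$. Your version is in fact slightly more careful than the paper's in arguing that \emph{every} bundle in $\B{b}$ contains both $a$ and $c$ (the paper speaks of ``this bundle'' as if there were only one), and in explicitly noting why $\Set{a,b}$ can only be realised by the trace $ab$.
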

	
\begin{proof}
	We consider the embedding $\emb{\sigma_\xi}$ of the SES $\sigma_\xi$ in Figure~\ref{fig:counterExamples}, which models disjunctive causality.
	According to the Definitions~\ref{def:SEStrans}, because $\neg(a\#c)$ and $\ic{a}=\ic{c}=\emptyset$, we have $\transS{\emptyset}{\{a,c\}}$ and so $\{a,c\}\in\configurations{\sigma_\xi}$.
	Furthermore, there is no transition $\transS{\emptyset}{\{b\}}$, because $\ic{b}=\{a\}$.
	Yet, there are transitions $\transS{\{a\}}{\{a,b\}}$ and $\transS{\{c\}}{\{c,b\}}$, because $\ic{b}\setminus\dc{\{a\}}{b}\subseteq\{a\}$ ($\ic{b}\setminus\dc{\{c\}}{b}\subseteq\{c\}$ respectively).
	The transitions are translated to the embedding according to Lemma~\ref{lma:SESinDCES} and Definition~\ref{def:DCESConf}; the same holds for the configurations. 
	
	If we assume that there is some EBES $\xi$ with the configurations $ \emptyset $, $ \{a\} $, $ \{c\} $, $ \{a,c\} $, $ \{a,b\} $, $ \{b,c\} $ and $\{a,b,c\}$, then because of Definition~\ref{def:EBES} and since there is no configuration $\{b\}$, there must be a non-empty bundle $\buEn{X}{b}$ and because of the configurations $\{a,b\},\{b,c\}$, this bundle $X$ must contain $a$ and $c$. Then, the stability condition of Definition~\ref{def:EBES} implies $\disa{a}{c}$ and $\disa{c}{a}$, so $a$ and $c$ are in mutual conflict, which contradicts the assumption that we have $\{a,c\}\in\configurations{\xi}$. Thus, there is no EBES with the same configurations as $\emb{\sigma_\xi}.$
\end{proof}

On the other hand, we formally define an encoding of an EBES into a DCES in Definition~\ref{def:EBES2DCES} in the
Appendix, where disabling uses self-loops of target events, while droppers use auxiliary impossible events (to ensure that they do not intervene with the disabling).
Besides, we construct posets for the configurations of the encoding, and compare them with those of the original EBES.
In this way, we prove that DCESs are at least as expressive as EBESs. But since EBESs cannot model the disjunctive causality---without a conflict---of SESs, DCESs are strictly more expressive than EBESs.

\begin{thm}
	\label{thm:EBESslessExpThanDCESs}
	DCESs are strictly more expressive than EBESs.
\end{thm}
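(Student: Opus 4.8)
The plan is to prove this as a two-sided expressiveness separation: first that DCESs are \emph{at least} as expressive as EBESs, via the encoding of an arbitrary EBES into a DCES, and then that the inclusion is \emph{strict}, using the separating example already isolated in Lemma~\ref{lma:DCESninEBES}. Combining the two directions will give the claim.

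For the inclusion direction I would rely on the encoding of an EBES $\xi$ into a DCES supplied by Definition~\ref{def:EBES2DCES}. Two features of an EBES must be reproduced: its asymmetric disabling and its disjunctive (bundle) causality. A disabling $\disa{x}{y}$---read as ``$y$ disables $x$''---I would emulate by letting $y$ \emph{add} a self-loop cause on the target, i.e.\ an adder $\adds{y}{x}{x}$: once $y$ has occurred, $x$ acquires a cause it can never satisfy and is thus permanently blocked, exactly as in the EBES, while the order $x$-before-$y$ stays available because this cause is absent until $y$ fires. Each bundle $\buEn{X}{e}$ I would treat as in the SES-style construction of Definition~\ref{def:DESintoSES}: introduce a fresh impossible event as an initial cause of $e$ and let every member of $X$ drop it, turning the conjunctive requirement into the disjunctive ``some element of $X$ has occurred.'' The auxiliary impossible events keep the dropper machinery disjoint from the disabling self-loops, so the two mechanisms never interfere.

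I would then show that $\xi$ and its DCES encoding have the same families of posets, following the pattern of Theorems~\ref{thm:SESintoDES} and \ref{thm:DESintoSES}: verify that a sequence of distinct events is admissible in the encoding \ie by Definition~\ref{def:DCTransition}) if and only if it respects disabling and bundle satisfaction in $\xi$, deduce equality of configurations, and finally match the early-causality posets of the two structures. Since equal families of posets yield poset equivalence (Definition~\ref{def:posetsEq}), this establishes that DCESs are at least as expressive as EBESs. For strictness I would invoke Lemma~\ref{lma:DCESninEBES}: the DCES $\emb{\sigma_\xi}$ has configurations $\emptyset,\{a\},\{c\},\{a,c\},\{a,b\},\{b,c\},\{a,b,c\}$, and no EBES realises this configuration set. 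Because every poset in $\posets{\xi}$ has a configuration of $\xi$ as its carrier set, two EBESs with different configurations cannot have the same families of posets; hence no EBES is poset-equivalent to $\emb{\sigma_\xi}$, and the inclusion is proper.

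The hard part will be the poset-matching in the inclusion direction. One must check that the interaction between a disabling-induced self-loop and the dropper-encoded bundles neither creates nor destroys an order relation or a configuration absent from $\xi$. In particular, the delicate points are verifying that the self-loop adder $\adds{y}{x}{x}$ blocks $x$ strictly after $y$ yet leaves every trace in which $x$ precedes $y$ intact, and that Condition~\ref{eq:NotToMuchConcurrency} of Definition~\ref{def:DCTransition} forbids exactly the concurrent occurrences of $x$ and $y$ that the disabling already rules out (forcing $x \prec y$), without spuriously excluding genuinely concurrent events. Establishing that the \emph{early-causality posets} coincide---rather than merely the configurations---is where the bulk of the technical verification lies, and this is precisely the work deferred to the Appendix around Definition~\ref{def:EBES2DCES}.
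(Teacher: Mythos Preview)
Your proposal is correct and follows essentially the same route as the paper: the inclusion direction uses the encoding of Definition~\ref{def:EBES2DCES} (with exactly the two ingredients you describe, self-loop adders for disabling and fresh-impossible-cause droppers for bundles) and establishes poset equivalence, which is precisely Lemma~\ref{lma:EBESintoEBDC}; strictness is obtained from Lemma~\ref{lma:DCESninEBES}. Your identification of the delicate points---the interaction of the self-loop adders with Condition~\ref{eq:NotToMuchConcurrency} and the matching of early-causality posets rather than mere configurations---is accurate and corresponds to the technical work the paper carries out in the Appendix via the intermediate subclass EBDC (Definitions~\ref{def:EBDC} and~\ref{def:EBDCLposets}, Lemmata~\ref{lma:EBESPrecedence}--\ref{thm:EBESandEBDCconfigEqui}).
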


\begin{proof}
	Follows from the Lemmata~\ref{lma:DCESninEBES} and \ref{lma:EBESintoEBDC}.
\end{proof}

\subsection{HDESs}\label{sec:HDES}
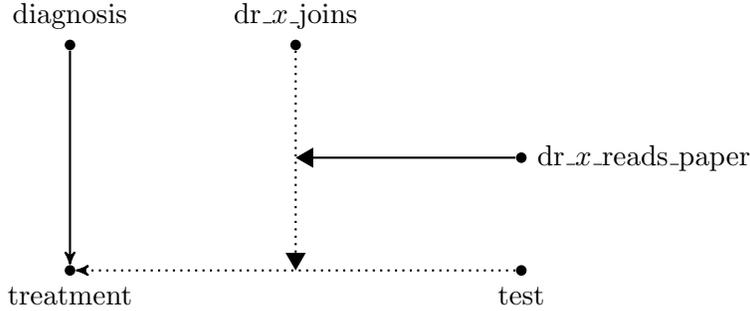
\begin{figure}
\centering
\begin{tikzpicture}[node distance=3cm,>=stealth']
			\node (diag) [circle,fill,minimum size=4pt,inner sep=0pt, label=above:diagnosis]{};
			\node (treat) [below of=diag,circle,fill,minimum size=4pt,inner sep=0pt, label=below:treatment]{};
			\node (docx) [right of=diag,circle,fill,minimum size=4pt,inner sep=0pt, label=above:dr\_{$x$}\_joins]{};
			\node (aux)[right of=docx,,inner sep=0pt ]{};
			\node (paper) [ below of=aux, node distance=1.5cm,circle,fill,minimum size=4pt,inner sep=0pt, label=right:dr\_{$x$}\_reads\_paper]{};
	
			\node (test) [below of=paper, node distance=1.5cm,circle,fill,minimum size=4pt,inner sep=0pt, label=below:test]{};
			\draw [->, thick] (diag) -- (treat);
			\draw [->,dotted,thick] (test) to node {} (treat);
			\draw[adding]	(paper) -- (3,-1.5) ;		
			\draw[adding, dotted]	(docx) -- (3,-3);
\end{tikzpicture}
\caption{An HDES example: Dotted arrows denote initially absent dependencies or initially absent dynamic rules.}
\label{fig:HDES}
\end{figure}

\emph{Higher order Dynamic Cau\-sa\-li\-ty ESs} (HDESs) are a recent generalization of DCESs \cite{HDES}. Figure~\ref{fig:HDES} presents a small example of a second-order addition: After a diagnosis, a treatment becomes possible. Now, a doctor might join the treatment team and nothing changes. However, if the doctor reads|before joining|a medical paper, in which an additional test before the treatment is suggested, the joining of the doctor leads to an additional precondition, namely the test for the treatment. This behavior is modeled with a second-order rule, $\addrule{\text{dr\_$x$\_reads\_paper}}{\adds{\text{dr\_$x$\_joins}}{\text{test}}{\text{treatment}}}$, and the zeroth-order rule  $\text{diagnosis}\rightarrow\text{treatment}$.

In general, a HDES consists of a set of events $E$ and a set of rules of arbitrary order of dynamics (also arbitrary combination of adding and deletion is allowed).
As a second generalization, it is possible to have sets of events as modifiers (in contrast to the singletons in the DCES approach), this allows for conjunctive adding and dropping where in DCESs only disjunctive modifications were possible. Accordingly, a HDES consists of a set of events and set of dynamic rules (causality is considered as zeroth-order dynamics), whereby this rule set is updated in each transition.

The strict inclusion \wrt the expressive power of RCESs into HDESs is proven in \cite{HDES}. The proof uses configuration structures that were introduced in~\cite{GG90}  and later relaxed in~\cite{ConfStr}. In~\cite{HDES}, a definition in-between these two is used.

\begin{defi}[\cite{HDES}]\label{def:ConfStr}
A \emph{configuration structure} is a pair $\mathcal{C}=(E,C)$ with $E$ a set and $C\subseteq\Powerset{E}$ a collection of subsets. We call the elements of $E$ \emph{events} and the elements of $C$ \emph{configurations}.
 For $x,y$ in $C$ we write $x\rightarrow_C y$ if $x\subseteq y$ and
\[\forall Z\logdot x\subseteq Z\subseteq y\Rightarrow Z\in C.\]
The relation $\rightarrow_C$ is called the \emph{step transition relation}.
\end{defi}

In \cite{HDES}, the authors define a map $ \HDES{\cdot} $ from any configuration structure (\wrt Definition~\ref{def:ConfStr}) into a HDES and prove that both have the same transition behavior.

\begin{thm}[\cite{HDES}]	\label{the:HDES}
Let $\mathcal{C}$ be a configuration structure.
 Then the HDES $\Delta=\HDES{\mathcal{C}}$ is transition equivalent to $\mathcal{C}$.
\end{thm}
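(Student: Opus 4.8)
The plan is to exhibit an explicit correspondence between the reachable states of the constructed HDES $\Delta = \HDES{\mathcal{C}}$ and the reachable configurations of $\mathcal{C}$, and then to check that under this correspondence the step transition relation $\rightarrow_C$ coincides with the transition relation of $\Delta$. Recall that an HDES carries, besides a configuration, a current rule state that is updated along every transition, just as the causal state $\caus{X}$ is updated in Definition~\ref{def:DCTransition}, but now at all orders of dynamics. So first I would fix the construction $\HDES{\cdot}$ from \cite{HDES}: its event set is $E$, possibly together with auxiliary impossible events used as guards, and its rule set is chosen so that, from any reachable configuration $X \in C$, the events that may be added and the admissible combinations in which they may be added jointly reproduce exactly the outgoing step transitions $X \rightarrow_C Y$. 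The central invariant to maintain is that whenever $X$ is reachable in $\Delta$, its current rule state is uniquely determined by $X$ and encodes precisely the local constraints that $\mathcal{C}$ imposes at $X$.

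With this invariant in hand, I would prove $\transEq{\Delta}{\mathcal{C}}$ by two inductions on the length of transition sequences. For the forward direction, assuming $X \rightarrow_C Y$ with $X$ reachable, I would show that the events in $Y \setminus X$ are jointly enabled in $\Delta$ from the state corresponding to $X$, so that the matching HDES transition exists. The key point is that the defining requirement of a step transition, namely that every intermediate $Z$ with $X \subseteq Z \subseteq Y$ again lies in $C$, is exactly what the encoding enforces through the set-valued conjunctive modifiers guarding the simultaneous occurrence of $Y \setminus X$: an intermediate non-configuration would leave some event unenabled and hence block the joint step. Along the way I must verify that the rule state reached at $Y$ is again the one prescribed by the invariant, which follows from the rule-update clause of the HDES semantics.

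For the reverse direction I would show, again by induction, that every transition of $\Delta$ out of a reachable state projects to a step transition of $\mathcal{C}$. Here the invariant guarantees that the enabling conditions consulted by $\Delta$ are precisely those derived from $\mathcal{C}$, so any admissible HDES step from $X$ to $Y$ forces both $Y \in C$ and all intermediate sets in $C$, i.e.\ $X \rightarrow_C Y$. Since we compare \emph{reachable} transition graphs, it also suffices to confirm that the auxiliary impossible events never enter a reachable configuration, so that the reachable state spaces on both sides are carried bijectively onto each other.

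I expect the main obstacle to be the faithful encoding of the interval condition together with order-insensitivity of the rule state. First-order, single-event modifiers in the style of DCESs cannot express that a \emph{combination} of events is jointly forbidden while each is individually admissible, nor can they globally reconfigure enabling when passing between configurations; this is exactly what forces the use of higher-order rules and set-valued modifiers. The delicate part of the argument is therefore showing that the constructed rules neither over-generate, by admitting a multi-event step whose interval contains a non-configuration, nor under-generate transitions, and that the updated rule state after a step depends only on the resulting configuration and not on the order in which its events occurred.
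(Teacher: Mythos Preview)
The paper does not actually prove this theorem: it is stated with a citation to \cite{HDES} and no proof is given here. The surrounding text only says that in \cite{HDES} the authors ``define a map $\HDES{\cdot}$ from any configuration structure \ldots\ into a HDES and prove that both have the same transition behavior,'' and then quotes the result. So there is nothing in the present paper to compare your proposal against.

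Your sketch is a plausible outline of how such a proof would go---maintain an invariant that the rule state at a reachable configuration $X$ is determined by $X$ and encodes exactly the local step constraints of $\mathcal{C}$, then do two inductions to match transitions in both directions---but whether this matches the actual argument in \cite{HDES}, and in particular whether the specific construction $\HDES{\cdot}$ there makes your invariant true, cannot be assessed from this paper alone. If you need to substantiate the theorem, you will have to consult \cite{HDES} directly.
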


Hence, DCESs and RCESs can both be embedded in HDESs. Since DCESs and RCESs are incomparable, then both kinds of structures are strictly less expressive than HDESs.

\section{Conclusions}
\label{sec:conclusion}

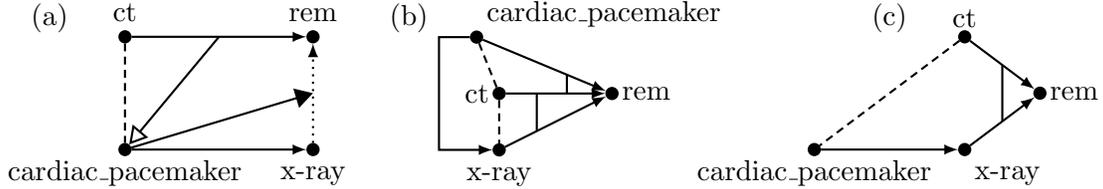
\begin{figure}
	\centering
	\begin{tikzpicture}
	\event{ct}{0}{1.5}{above}{\vphantom{Iq}ct};
	\event{cp}{0}{0}{below}{cardiac\_pacemaker};
	\event{r}{2.5}{1.5}{above}{\vphantom{Iq}rem};
	\event{x}{2.5}{0}{below}{\vphantom{Iq}x-ray};
	\node (a) at (-1, 1.75) {(a)};
	\draw[conflictPES] (ct) edge (cp);
	\draw[enablingAbsent] (x) -- (r);
	\draw[enablingPES] (ct) edge (r);
	\draw[enablingPES] (cp) edge (x);
	\draw[adding] (cp) -- (2.5, 0.75);
	\draw[dropping]	(1.25, 1.5) -- (cp);
	\end{tikzpicture}\hspace{0.2cm}
	\begin{tikzpicture}
	\event{ct}{1.5}{0}{left}{ct};
	\event{cp}{1.2}{0.75}{above right}{cardiac\_pacemaker};
	\event{r}{3}{0}{right}{rem};
	\event{x}{1.5}{-0.75}{below}{\vphantom{Iq}x-ray};
	\node (b) at (.3, 1) {(b)};
	\draw[enablingPES] (ct) edge (r);
	\draw[enablingPES] (cp) edge (r);
	\draw[enablingPES] (x) edge (r);
	\draw[-latex, thick] (cp) -- (.7,0.75) -- (.7,-0.75) -- (x);
	\draw[thick] (2.4, 0) -- (2.4, 0.25);
	\draw[thick] (2, 0) -- (2, -0.5);
	\draw[conflictPES] (ct) edge (cp);
	\draw[conflictPES] (ct) edge (x);
	\end{tikzpicture}\hspace{-0.7cm}
	\begin{tikzpicture}
	\event{ct}{2}{1.5}{above}{ct};
	\event{cp}{0}{0}{below}{cardiac\_pacemaker};
	\event{r}{3}{0.75}{right}{rem};
	\event{x}{2}{0}{below right}{\vphantom{Iq}x-ray};
	\node (c) at (1, 1.75) {(c)};
	\draw[enablingPES] (ct) edge (r);
	\draw[enablingPES] (cp) edge (x);
	\draw[enablingPES] (x) edge (r);
	\draw[conflictPES] (ct) edge (cp);
	\draw[thick] (2.5, 0.375) -- (2.5, 1.125);
	\end{tikzpicture}
	\vspace*{-1em}
	\caption{A DCES, a BES, and a DES modeling the medical example.}
	\label{fig:medical examples}
\end{figure}

We study the idea that causality may change during system runs of event structures. For this, we enhance a simple type of ESs---the rPESs---by means of additional relations capturing the changes in events' dependencies, driven by the occurrence of other events.

First, in \S\ref{sec:SES}, we limit our concern to the case where dependencies can only be dropped. We call the new resulting event structure Shrinking Causality ES (SES). We show that the exhibited dynamic causality can be expressed through a completely static perspective, by proving equivalence between SESs and DESs. By such a proof, we do not only show the expressive power of our new ES, but also the big enhancement in expressive power (\wrt rPESs) gained by adding only this one relation.

Later on, in \S\ref{sec:GES}, we study the complementary style where dependencies can be added, resulting in Growing Causality ES (GES). We show that this variant of dynamic causality can model both permanent and temporary disabling. Besides, it can be used to resolve conflicts and, furthermore, to force conflicts. Unlike the SESs, the GESs are not directly comparable to other types of ESs from the literature, except for PESs (and rPESs). GESs cannot model disjunction in the enabling relation but they are able to express conditions for causal relationships.

Finally, in \S\ref{sec:DCES}, we combine both approaches of dynamicity with a new type of event structures, which we call Dynamic Causality ES (DCES).
Therein, dependencies can be both added and dropped. For this new type of ESs, the following two---possibly surprising---facts can be observed:
\begin{enumerate*}[label=(1)]
	\item There are types of ESs that are incomparable to both SESs and GESs, but that are comparable to (here: strictly less expressive than) DCESs, i.e. the combination of SESs and GESs; one such type is EBESs.
	\item Though SESs and GESs are strictly less expressive than RCESs, their combination---the newly defined DCESs---is incomparable to RCESs, and  incomparable to or even strictly more expressive as any other type of configuration-based ESs.
\end{enumerate*}

To highlight the pragmatic advantages of dynamic-causality ESs over other classes of ESs, we refer to the example in the Introduction. Reichert et al.\ in \cite{Reichert:ForwardBackwardJumps} emphasize that the model of such processes should distinguish between the regular execution path and the exceptional one. Accordingly, they define two labels, \emph{REGULAR} and \emph{EXCEPTIONAL}, to be assigned to tasks.
Figure~\ref{fig:medical examples}(a) shows a DCES model of our example, where \textit{rem} represents the remainder of the treatment process, and \textit{ct} represents the computer tomography.
The initial causality in a DCES \eg \textit{ct} $\rightarrow$ \textit{rem} corresponds to the regular path of a process, while the changes carried by modifiers \eg \emph{cardiac\_pacemaker} correspond to the exceptional one.
Other static-causality ESs like a BES and a DES can model the fact that either the computer tomography XOR the X-ray is needed, as shown in Figure~\ref{fig:medical examples}(b) and \ref{fig:medical examples}(c). The same can be done by an equivalent RCES. However, we argue that none of these models can distinguish between regular and exceptional paths.

Thus, our main contributions are:
\begin{enumerate*}
	\item We provide a formal model that allows us to express dynamicity in causality. Using this model, we enhance the rPESs yielding SESs, GESs and DCESs.
	\item We show the equivalence of SESs and DESs.
	\item We show the incomparability of GESs to many other types of ESs.
	\item We show that DCESs are strictly more expressive than EBESs and thus strictly more expressive than many other existing types of ESs.
	\item We show that DCESs are incomparable to RCESs.
	\item The new model succinctly supports modern work-flow management systems.
\end{enumerate*}

In \cite{VaraccaExtrusion}, Crafa et al.\ defined an Event Structure semantics for the $\pi$-calculus based on Prime ESs. Since the latter do not allow for disjunctive causality required for their purpose, and in order to avoid duplications of events, they extended Prime ESs with a set of bound names, and altered the configuration definition to allow for such disjunction. With SESs---that can express disjunctive causality---this problem could possibly be addressed more naturally without copying events. Here, higher-order dynamicity (\cf \cite{HDES}) might help to deal with the instantiation of variables caused by communications involving bound names.

Up to now, we limit the execution of DCESs such that an interleaving between adders and droppers of the same causal dependency is forced. As future work, we intend to study the case where modifiers of the same dependency can occur concurrently---read: at the very same instant of time---in DCESs. 
Similarly, we want to investigate the situation of concurrent occurrence of an adder and its target in GESs.

More recently, \cite{HDES} studies the adding and dropping by sets of events and higher-order dynamics, \ie events that may change the role of events to adders, droppers or back to normal events.
We are currently developing a web tool (\href{http://hdes.mtv.tu-berlin.de}{http://hdes.mtv.tu-berlin.de}) for the modeling and simulation of DCESs and HDESs, containing the example DCESs and HDESs presented in this paper.

The set of possible changes in our newly defined ESs must still be declared statically. In \cite{Arbach:Thesis}, the author investigated the idea of \emph{evolutionary} ESs by supporting ad-hoc changes, such that new dependencies as well as events can be added to a structure.

\bibliographystyle{plain}
\bibliography{dynamicCausality}

\appendix

\section{Shrinking Causality}

In SESs both notions of configurations, traced-based and transition-based, coincide; and in different situations, the more suitable one can be used.

\begin{lem}\label{lem:SESconf}
	Let $ \sigma $ be a SES. Then $ \configTraces{\sigma} = \configurations{\sigma} $.
\end{lem}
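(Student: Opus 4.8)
The plan is to prove the two inclusions $\configTraces{\sigma}\subseteq\configurations{\sigma}$ and $\configurations{\sigma}\subseteq\configTraces{\sigma}$ separately, using the trace prefixes in one direction and a step-by-step linearization of a transition sequence in the other.

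For $\configTraces{\sigma}\subseteq\configurations{\sigma}$, I would take a trace $t=e_1\cdots e_n$ and consider its prefixes $C_i\mathrel{:=}\overline{t_i}$, so that $C_0=\emptyset$ and $C_n=\overline{t}$. The goal is to check $\transS{C_{i-1}}{C_i}$ for every $i$ against the three conditions of Definition~\ref{def:SEStrans}. The containment $C_{i-1}\subset C_i$ holds because the events of a trace are distinct, so $e_i\notin C_{i-1}$; conflict-freeness $\forall e,e'\in C_i\logdot\neg(\conf{e}{e'})$ is immediate from the conflict condition on $t$; and since $e_i$ is the unique element of $C_i\setminus C_{i-1}$, the enabling condition $\left(\ic{e_i}\setminus\dc{C_{i-1}}{e_i}\right)\subseteq C_{i-1}$ is exactly the bundle/enabling condition of the trace at position $i$. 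Chaining these single-event transitions yields $\emptyset\mathrel{\transSOp^*}\overline{t}$, and as $\overline{t}$ is finite we conclude $\overline{t}\in\configurations{\sigma}$.

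For $\configurations{\sigma}\subseteq\configTraces{\sigma}$, I would take $X\in\configurations{\sigma}$ with a witnessing derivation $\emptyset=X_0\mathrel{\transSOp}X_1\mathrel{\transSOp}\cdots\mathrel{\transSOp}X_m=X$. Since every step is strict ($X_k\subset X_{k+1}$), the increments $X_{k+1}\setminus X_k$ are nonempty, pairwise disjoint, and partition $X$. I would build a trace $t$ by listing the events of $X$ step by step: first all events of $X_1\setminus X_0$ in an arbitrary order, then those of $X_2\setminus X_1$, and so on. The resulting sequence consists of distinct events (by disjointness) with $\overline{t}=X$, and it is conflict-free because every listed event lies in $X=X_m$, which is conflict-free by the second condition of the final transition. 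The degenerate case $X=\emptyset$ is covered by the empty trace.

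The main obstacle is verifying the enabling condition of the trace, namely $\left(\ic{e_i}\setminus\dc{\overline{t_{i-1}}}{e_i}\right)\subseteq\overline{t_{i-1}}$ for each $i$, because a single transition may introduce several events at once, so $\dc{\overline{t_{i-1}}}{e_i}$ is computed against a prefix $\overline{t_{i-1}}$ that can be strictly larger than the pre-step configuration $X_k$ for which the transition only guarantees $\left(\ic{e_i}\setminus\dc{X_k}{e_i}\right)\subseteq X_k$. This is resolved by the monotonicity of $\dc{\cdot}{e}$ in its first argument, which is immediate from $\dc{H}{e}=\Set{e'\mid\exists d\in H\logdot\drops{d}{e'}{e}}$. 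By the chosen ordering, whenever $e_i\in X_{k+1}\setminus X_k$ we have $X_k\subseteq\overline{t_{i-1}}$, hence $\dc{X_k}{e_i}\subseteq\dc{\overline{t_{i-1}}}{e_i}$ and therefore $\ic{e_i}\setminus\dc{\overline{t_{i-1}}}{e_i}\subseteq\ic{e_i}\setminus\dc{X_k}{e_i}\subseteq X_k\subseteq\overline{t_{i-1}}$. This establishes $t\in\traces{\sigma}$ with $\overline{t}=X$, so $X\in\configTraces{\sigma}$, and together with the first inclusion completes the proof.
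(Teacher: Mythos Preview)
Your proof is correct and follows essentially the same approach as the paper: single-event transitions from trace prefixes in one direction, and an arbitrary linearization of each increment $X_{k+1}\setminus X_k$ in the other. You are in fact more explicit than the paper about the key point in the second direction, namely the monotonicity of $\dc{\cdot}{e}$ that bridges $\dc{X_k}{e_i}$ and $\dc{\overline{t_{i-1}}}{e_i}$; the paper simply records the transition conditions as (C1)--(C4) and asserts the enabling clause of the trace ``because of (C3) and (C4)''.
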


\begin{proof}
	Let $ \sigma = \left( E, \confOp, \enabOp, \shrinkingCausality \right) $.
	By Definition~\ref{def:SEStraceDef}, $ C \in \configTraces{\sigma} $ implies that there is some $ t = e_1 \cdots e_n $ such that $ \overline{t} \subseteq E $, $ \forall 1 \leq i, j \leq n \logdot \neg \left( \conf{e_i}{e_j} \right) $, $ \forall 1 \leq i \leq n \logdot \left( \ic{e_i} \setminus \dc{\overline{t_{i - 1}}}{e_i} \right) \subseteq \overline{t_{i - 1}} $, and $ C = \overline{t} $.
	Hence, by Definition~\ref{def:SEStrans}, $ \transS{\overline{t_i}}{\overline{t_{i + 1}}} $ for all $ 1 \leq i \leq n $ and $ \transS{\emptyset}{\Set{ e_1 }} $.
	Thus, by Definition~\ref{def:SESconf}, $ C \in \configurations{\sigma} $.
	
	By Definition~\ref{def:SESconf}, $ C \in \configurations{\sigma} $ implies that there are $ X_1, \ldots, X_n \subseteq E $ such that $ \transS{\transS{\transS{\emptyset}{X_1}}{\ldots}}{X_n} $ and $ X_n = C $.
	Then, by Definition~\ref{def:SEStrans}, we have:
	\begin{align}
		& \emptyset \subseteq X_1 \subseteq X_2 \subseteq \ldots \subseteq X_n \subseteq E & \tag{C1} \label{eq:C1}\\
		& \forall e, e' \in X_n \logdot \neg \left( \conf{e}{e'} \right) & \tag{C2} \label{eq:C2}\\
		& \forall e \in X_1 \logdot \left( \ic{e} \setminus \dc{\emptyset}{e} \right) \subseteq \emptyset & \tag{C3} \label{eq:C3}\\
		& \forall 1 \leq i < n \logdot \forall e \in X_{i + 1} \setminus X_i \logdot \left( \ic{e} \setminus \dc{X_i}{e} \right) \subseteq X_i & \tag{C4} \label{eq:C4}
	\end{align}
	Let $ X_1 = \Set{ e_{1, 1}, \ldots, e_{1, m_1} } $ and $ X_i \setminus X_{i - 1} = \Set{ e_{i, 1}, \ldots, e_{i, m_i} } $ for all $ 1 < i \leq n $.
	Then by Definition~\ref{def:SEStraceDef}, $ t = e_{1, 1} \cdots e_{1, m_1} \cdots e_{n, 1} \cdots e_{n, m_n} = e_1' \cdots e_k' $ is a trace such that $ \overline{t} \subseteq E $ (because of \eqref{eq:C1}), $ \neg \left( \conf{e_i'}{e_j'} \right) $ for all $ 1 \leq i, j \leq k $ (because of \eqref{eq:C1} and \eqref{eq:C2}), for all $ 1 \leq i \leq k $ and all $ 1 \leq j \leq m_i $ we have $ \left( \ic{e_{i, j}} \setminus \dc{\overline{t_{i - 1}}}{e_{i, j}} \right) \subseteq \overline{t_{i - 1}} $ (because of \eqref{eq:C3} and \eqref{eq:C4}), and $ \overline{t} = C $ (because $ X_n = C $).
	Thus $ C \in \configTraces{\sigma} $.
\end{proof}

Moreover the following technical Lemma relates transitions and the extension of traces by causally independent events.

\begin{lem}
	\label{lem:SEStransTraces}
	Let $ \sigma = \left( E, \confOp, \enabOp, \shrinkingCausality \right) $ be a SES and $ X, Y \in \configurations{\sigma} $.\\
	Then $ \transS{X}{Y} $ iff there are $ t_1 = e_1 \cdots e_n, t_2 = e_1 \cdots e_n e_{n + 1} \cdots e_{n + m} \in \traces{\sigma} $ such that $ X = \overline{t_1} $, $ Y = \overline{t_2} $, and $ \forall e, e' \in Y \setminus X \logdot \left( \ic{e} \setminus \dc{X}{e} \right) \subseteq X $.
\end{lem}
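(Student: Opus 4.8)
The statement is an equivalence, so the plan is to prove the two implications separately, moving freely between the transition view and the trace view via the fact that configurations and trace-based configurations coincide (Lemma~\ref{lem:SESconf}), and exploiting that the dropped-causes map $\dcD$ is monotone in its history argument: $H_1\subseteq H_2$ implies $\dc{H_1}{e}\subseteq\dc{H_2}{e}$, since a larger history can only supply more droppers.

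For ``$\Rightarrow$'', I would start from $\transS{X}{Y}$ and read off its three defining clauses from Definition~\ref{def:SEStrans}: strictness $X\subset Y$, conflict-freeness of $Y$, and $\forall e\in Y\setminus X\logdot(\ic{e}\setminus\dc{X}{e})\subseteq X$ — the last of which is already the extra condition demanded by the lemma. Since $X\in\configurations{\sigma}=\configTraces{\sigma}$, there is a trace $t_1=e_1\cdots e_n$ with $\overline{t_1}=X$. I would then fix an arbitrary enumeration $f_1,\dots,f_m$ of the finite set $Y\setminus X$, set $t_2:=e_1\cdots e_n f_1\cdots f_m$, and verify $t_2\in\traces{\sigma}$. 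The events are distinct (those of $t_1$ are distinct, the $f_k$ are new, and all lie in $Y$), and the conflict clause of the trace definition holds because every event of $t_2$ lies in the conflict-free set $Y$. The prefixes up to length $n$ inherit the enabling clause from $t_1$, so the only genuine point is the enabling clause for each appended $f_k$, whose prefix is $H_k:=X\cup\Set{f_1,\dots,f_{k-1}}\supseteq X$.

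This is where the key observation enters, and I expect the ordering-independence — that \emph{any} linearization of $Y\setminus X$ after $t_1$ yields a valid trace — to be the main obstacle. It is defused precisely by monotonicity of $\dcD$: from $X\subseteq H_k$ we get $\dc{X}{f_k}\subseteq\dc{H_k}{f_k}$, hence $\ic{f_k}\setminus\dc{H_k}{f_k}\subseteq\ic{f_k}\setminus\dc{X}{f_k}\subseteq X\subseteq H_k$, which is exactly the required trace condition at position $n+k$. Enlarging the history can only drop more causes, never reintroduce a cause that $X$ failed to provide, so no constraint on the order of the $f_k$ is needed. Thus $t_2$ is a trace with $\overline{t_2}=Y$, which together with the extra condition already noted establishes this direction.

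For ``$\Leftarrow$'', the argument is routine. Given $t_1,t_2$ as in the statement, distinctness of the events of $t_2$ gives $Y\setminus X=\Set{e_{n+1},\dots,e_{n+m}}$, which (being nonempty) yields $X\subset Y$; conflict-freeness of $Y=\overline{t_2}$ follows from the conflict clause of $t_2\in\traces{\sigma}$; and the third defining clause of $\transS{X}{Y}$ is verbatim the hypothesis $\forall e\in Y\setminus X\logdot(\ic{e}\setminus\dc{X}{e})\subseteq X$. By Definition~\ref{def:SEStrans} these three facts give $\transS{X}{Y}$, completing the equivalence.
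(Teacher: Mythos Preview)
Your proposal is correct and follows essentially the same route as the paper's proof: obtain $t_1$ from $X\in\configurations{\sigma}=\configTraces{\sigma}$, append an arbitrary linearization of $Y\setminus X$ and verify it is a trace, and for the converse read off the three clauses of Definition~\ref{def:SEStrans} directly. Your explicit monotonicity argument for $\dcD$ is exactly the point the paper leaves implicit when it asserts that ``an arbitrary linearization $e_{n+1}\cdots e_{n+m}$'' works, so you are just spelling out what the paper takes for granted.
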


\begin{proof}
	By Definition~\ref{def:SEStraceDef} and Lemma~\ref{lem:SESconf}, $ X \in \configurations{\sigma} $ implies that there is a trace $ t_1 = e_1 \cdots e_n \in \traces{\sigma} $ such that $ X = \overline{t_1} $.

	If $ \transS{X}{Y} $, then by Definition~\ref{def:SEStrans}, we have $ X \subseteq Y $, $ \forall e, e' \in Y \logdot \neg ( \conf{e}{e'} ) $, and\linebreak
	$ \forall e \in Y \setminus X \logdot ( \ic{e} \setminus \dc{X}{e} ) \subseteq X $. Then by Definition~\ref{def:SEStraceDef}, $ t_2 = e_1 \cdots e_n e_{n + 1} \cdots e_{n + m} \in \traces{\sigma} $ and $ Y = \overline{t_2} $ for an arbitrary linearization $ e_{n + 1} \cdots e_{n + m} $ of the events in $ Y \setminus X $, \ie with $ \Set{ e_{n + 1}, \ldots, e_{n + m} } = Y \setminus X $ such that $ e_{n + i} \neq e_{n + j} $ whenever $ 1 \leq i, j, \leq m $ and $ i \neq j $.
	
	If there is a trace $ t_2 = e_1 \cdots e_n e_{n + 1} \cdots e_{n + m} \in \traces{\sigma} $ such that $ Y = \overline{t_2} $ and $ \forall e, e' \in Y \setminus X \logdot \left( \ic{e} \setminus \dc{X}{e} \right) \subseteq X $ then $ X \subseteq Y $. Moreover, by Definition~\ref{def:SEStraceDef}, $ t_2 \in \traces{\sigma} $ implies $ \forall e, e' \in Y \logdot \neg \left( \conf{e}{e'} \right) $. Thus by Definition~\ref{def:SEStrans}, $ \transS{X}{Y} $.
\end{proof}

Note that the condition $ \forall e, e' \in Y \setminus X \logdot \left( \ic{e} \setminus \dc{X}{e} \right) \subseteq X $ states that the events in $ Y \setminus X $ are causally independent from each other.

In Definition~\ref{def:DESintoSES} we provide an encoding of a DESs into a SESs.
Of course it can be criticized that the encoding adds events (although they are fresh and impossible). But as the following example---with more bundles than events---shows it is not always possible to translate a DES into a SES without additional impossible events.

\begin{lem}
	\label{lem:DESninSES}
	There are DESs $ \delta = \left( E, \confOp, \buEnOp \right) $, as \eg $ \delta = \left( \Set{ a, b, c, d, e }, \emptyset, \buEnOp \right) $ with $ \buEnOp \; = \Set{ \buEn{\Set{ x, y }}{e} \mid x, y \in \Set{ a, b, c, d } \land x \neq y } $, that cannot be translated into a SES $ \sigma = \left( E, \confOp', \enabOp, \shrinkingCausality \right) $ such that $ \traces{\delta} = \traces{\sigma} $.
\end{lem}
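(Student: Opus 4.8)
The plan is to show that the displayed $\delta$ is a genuine witness by a counting argument: the event $e$ carries six bundles (all two‑element subsets of $\{a,b,c,d\}$), whereas a SES on the \emph{same} event set can impose on $e$ at most four ``bundle‑like'' constraints, one per initial cause. First I would pin down $\traces{\delta}$ via Definition~\ref{def:DESconf}. Since $\confOp=\emptyset$, the only constraint is bundle satisfaction at $e$: a prefix $S\subseteq\{a,b,c,d\}$ satisfies every bundle $\buEn{\{x,y\}}{e}$ iff it meets every two‑element subset, which happens exactly when $|S|\geq 3$. Hence $\traces{\delta}$ consists of all sequences of distinct events from $\{a,b,c,d,e\}$ in which, whenever $e$ occurs, at least three of $a,b,c,d$ precede it; in particular every ordering of every subset of $\{a,b,c,d\}$ is a trace.

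Then I would assume, toward a contradiction, a SES $\sigma=(\{a,b,c,d,e\},\confOp',\enabOp,\shrinkingCausality)$ with $\traces{\sigma}=\traces{\delta}$ and read off structural constraints using Definition~\ref{def:SEStraceDef}. Because each singleton $a,b,c,d$ is a trace, $\ic{a}=\ic{b}=\ic{c}=\ic{d}=\emptyset$; because all orderings of all subsets of $\{a,b,c,d\}$ (and traces mixing them with $e$) occur, $\confOp'$ contains no relevant pair; and $\ic{e}\subseteq\{a,b,c,d\}$, since $\enab{e}{e}$ would make $e$ impossible. The key reduction is that, writing $D_c:=\droppers{c}{e}\subseteq\{a,b,c,d\}\setminus\{c\}$, the SES condition $\ic{e}\setminus\dc{S}{e}\subseteq S$ for a history $S\subseteq\{a,b,c,d\}$ is equivalent to requiring, for every $c\in\ic{e}$, that the set $B_c:=\{c\}\cup D_c$ meets $S$. (Here I use that droppers only drop \emph{initial} causes and that $\dc{S}{e}$ depends only on $S$ as a set, and that every such $S$ is realizable as a trace prefix.) Thus $e$ fires after $S$ iff $S$ hits every $B_c$, so $\sigma$ imposes on $e$ exactly a bundle family $\{B_c\}_{c\in\ic{e}}$ of at most four bundles, each contained in $\{a,b,c,d\}$.

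Finally I would match this bundle‑hitting predicate against ``$|S|\geq 3$''. Every bundle must meet every three‑element subset, which forces $|B_c|\geq 2$; and for every two‑element subset $S$ some bundle must be disjoint from $S$, i.e. contained in the two‑element complement $\overline{S}$, which together with $|B_c|\geq 2$ forces that bundle to equal $\overline{S}$. As $S$ ranges over all $\binom{4}{2}=6$ two‑element subsets, $\overline{S}$ ranges over all of them, so all six two‑element subsets of $\{a,b,c,d\}$ must occur among the $B_c$. Since there are at most $|\ic{e}|\leq 4$ of them, this is impossible, contradicting $\traces{\sigma}=\traces{\delta}$. I expect this counting step---precisely, the observation that anchored droppers yield at most one constraint per initial cause while the threshold function ``$3$ of $4$'' genuinely requires all six pairs as clauses---to be the crux; the remainder is routine reading of the two trace definitions.
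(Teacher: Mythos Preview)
Your argument is correct and, in fact, cleaner than the paper's own proof. The paper proceeds more concretely: after observing that $\confOp'=\emptyset$ and that $e$ must have at least three initial causes, it fixes (without loss of generality) $\enab{a}{e},\enab{b}{e},\enab{c}{e}$, then uses specific traces of $\delta$ ending in $e$ to force particular droppers and derives a bad trace in $\sigma$; it finishes by appealing to the fact that with $E=\{a,b,c,d,e\}$ there are only finitely many SESs, so one can check exhaustively. Your approach instead abstracts the SES enabling condition for $e$ into a hitting--set problem against the family $\{B_c\}_{c\in\ic{e}}$ with $B_c=\{c\}\cup\droppers{c}{e}$, and then kills all cases at once by the counting observation that matching the threshold predicate ``$|S|\geq 3$'' on a four--element ground set forces all $\binom{4}{2}=6$ two--element subsets to appear among the $B_c$, while $|\ic{e}|\leq 4$ (and each $B_c$ is anchored at $c$). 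This avoids the WLOG reduction and the exhaustive fallback, and it makes transparent why the obstruction is intrinsic: SESs yield one bundle per initial cause, whereas the ``$3$--of--$4$'' predicate genuinely needs six. The trade--off is that the paper's version is shorter to state, while yours is self--contained and does not rely on a finiteness argument.
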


\begin{proof}
	Assume a SES $ \sigma = \left( E, \confOp, \enabOp, \shrinkingCausality \right) $ such that $ E = \{ a, b, c, d,$ $ e \} $ and $ \traces{\sigma} = \traces{\delta} $.
	According to \S\ref{sec:DES}, $ \traces{\delta} $ contains all
	sequences of distinct events of $ E $ such that $ e $ is not the first, second, or third event, \ie for $ e $ to occur in a trace it has to be preceded by at least three of the other events.
	Since by Definition~\ref{def:SEStraceDef} conflicts cannot be dropped, $ \traces{\sigma} = \traces{\delta} $ implies $ \confOp' = \emptyset $.
	Moreover, since $ e $ has to be preceded by at least three other events that can occur in any order, $ \enabOp $ has to contain at least three initial causes for $ e $. Without loss of generality let $ \enab{a}{e} $, $ \enab{b}{e} $, and $ \enab{c}{e} $.
	Because of the traces $ abd, acd \in \traces{\delta} $, we need the droppers \drops{d}{b}{e} and \drops{d}{c}{e}. Then $ ad \in \traces{\sigma} $ but $ ad \notin \traces{\delta} $.
	In fact if we fix $ E = \Set{ a, b, c, d, e } $ there are only finitely many different SESs $ \sigma = \left( E, \confOp', \enabOp, \shrinkingCausality \right) $ and for none of them $ \traces{\delta} = \traces{\sigma} $ holds.
\end{proof}

Note that the above lemma implies that no encoding of the above DES can result into a SES with the same events such that the DES and its encoding have same configurations or posets.

\section{Alternative Partial Order Semantics in DES and SES}
\label{app:partialOrderSemantics}

To show that DES and SES are not only behavioral equivalent ES models but are also very closely related at the structural level we consider the remaining four intensional partial order semantics for DES of \cite{Langerak97causalambiguity}.

Liberal causality is the least restrictive notion of causality in \cite{Langerak97causalambiguity}. Here each set of events from bundles pointing to an event $ e $ that satisfies all bundles pointing to $ e $ is a cause.

\begin{defi}[Liberal Causality]
	Let $ \delta = \left( E, \confOp, \buEnOp \right) $ be a DES, $ e_1 \cdots e_n $ one of its traces, $ 1 \leq i \leq n $, and $ \buEn{X_1}{e_i}, \ldots, \buEn{X_m}{e_i} $ all bundles pointing to $ e_i $.
	A set $ U $ is a cause of $ e_i $ in $ e_1 \cdots e_n $ if
	\begin{itemize}[noitemsep]
		\item $ \forall e \in U \logdot \exists 1 \leq j < i \logdot e = e_j $,
		\item $ U \subseteq \left( X_1 \cup \ldots \cup X_m \right) $, and
		\item $ \forall 1 \leq k \leq m \logdot X_k \cap U \neq \emptyset $.
	\end{itemize}
	Let $ \posetsLib{t} $ be the set of posets obtained this way for a trace $ t $.
\end{defi}

Bundle satisfaction causality is based on the idea that for an event $ e $ in a trace each bundle pointing to $ e $ is satisfied by exactly one event in a cause of $ e $.

\begin{defi}[Bundle Satisfaction Causality]
	Let $ \delta = \left( E, \confOp, \buEnOp \right) $ be a DES, $ e_1 \cdots e_n $ one of its traces, $ 1 \leq i \leq n $, and $ \buEn{X_1}{e_i}, \ldots, \buEn{X_m}{e_i} $ all bundles pointing to $ e_i $.
	A set $ U $ is a cause of $ e_i $ in $ e_1 \cdots e_n $ if
	\begin{itemize}[noitemsep]
		\item $ \forall e \in U \logdot \exists 1 \leq j < i \logdot e = e_j $ and
		\item there is a surjective mapping $ f : \Set{ X_k } \to U $ such that $ f\!\left( X_k \right) \in X_k $ for all $ 1 \leq k \leq m $.
	\end{itemize}
	Let $ \posetsBsat{t} $ be the set of posets obtained this way for a trace $ t $.
\end{defi}

Minimal causality requires that there is no subset which is also a cause.

\begin{defi}[Minimal Causality]
	Let $ \delta = \left( E, \confOp, \buEnOp \right) $ be a DES and let $ e_1 \cdots e_n $ be  one of its traces, $ 1 \leq i \leq n $, and $ \buEn{X_1}{e_i}, \ldots, \buEn{X_m}{e_i} $ all bundles pointing to $ e_i $.
	A set $ U $ is a cause of $ e_i $ in $ e_1 \cdots e_n $ if
	\begin{itemize}[noitemsep]
		\item $ \forall e \in U \logdot \exists 1 \leq j < i \logdot e = e_j $,
		\item $ \forall 1 \leq k \leq m \logdot X_k \cap U \neq \emptyset $, and
		\item there is no proper subset of $ U $ satisfying the previous two conditions.	
	\end{itemize}
	Let $ \posetsMin{t} $ be the set of posets obtained this way for a trace $ t $.
\end{defi}

Late causality contains the latest causes of an event that form a minimal set.

\begin{defi}[Late Causality]
	Let $ \delta = \left( E, \confOp, \buEnOp \right) $ be a DES, $ e_1 \cdots e_n $ one of its traces, $ 1 \leq i \leq n $, and $ \buEn{X_1}{e_i}, \ldots, \buEn{X_m}{e_i} $ all bundles pointing to $ e_i $.
	A set $ U $ is a cause of $ e_i $ in $ e_1 \cdots e_n $ if
	\begin{itemize}[noitemsep]
		\item $ \forall e \in U \logdot \exists 1 \leq j < i \logdot e = e_j $,
		\item $ \forall 1 \leq k \leq m \logdot X_k \cap U \neq \emptyset $,
		\item there is no proper subset of $ U $ satisfying the previous two conditions, and
		\item $ U $ is the latest set satisfying the previous three conditions.	
	\end{itemize}
	Let $ \posetsLat{t} $ be the set of posets obtained this way for a trace $ t $.
\end{defi}

As derived in \cite{Langerak97causalambiguity}, it holds that
\begin{align*}
	\posetsLat{t}, \posetsEar{t} \subseteq \posetsMin{t} \subseteq \posetsBsat{t}
	\subseteq \posetsLib{t}
\end{align*}
for all traces $ t $.
Moreover a behavioral partial order semantics is defined and it is shown that two DESs have the same posets \wrt the behavioral partial order semantics iff they have the same posets \wrt the early partial order semantics iff they have the same traces.

Bundle satisfaction causality is---as the name suggests---closely related to the existence of bundles. In SESs there are no bundles. Of course, as shown by the encoding $ \des{\cdot} $ in Definition~\ref{def:SESintoDES}, we can transform the initial and dropped causes of an event into a bundle. And of course if we do so an SES $ \sigma $ and its encoding $ \des{\sigma} $ have exactly the same families of posets. But because bundles are no native concept of SESs, we cannot directly map the definition of posets \wrt bundle satisfaction to SESs.

To adapt the definitions of posets in the other three cases we have to replace the condition $ U \subseteq \left( X_1 \cup \ldots \cup X_m \right) $ by $ U \subseteq \left( \Set{ e \mid \enab{e}{e_i} \lor \exists e' \in E \logdot \drops{e}{e'}{e_i} } \right) $ and replace the condition $ \forall 1 \leq k \leq m \logdot X_k \cap U \neq \emptyset $ by $ \left( \ic{e_i} \setminus \dc{U}{e_i} \right) \subseteq U $ (as in Definition~\ref{def:SESposets}). The remaining conditions remain the same with respect to traces as defined in Definition~\ref{def:SEStraceDef}.
Let $ \posetsLib{t} $, $ \posetsMin{t} $, and $ \posetsLat{t} $ denote the sets of posets obtained this way for a trace $ t \in \traces{\sigma} $ of a SES $ \sigma $ \wrt liberal, minimal, and late causality. Moreover, let $ \posetsX{x}{\delta} = \bigcup_{t \in \traces{\delta}}{\posetsX{x}{t}} $ and $ \posetsX{x}{\sigma} = \bigcup_{t \in \traces{\sigma}}{\posetsX{x}{t}} $ for all $ x \in \Set{ \operatorname{lib}, \operatorname{bsat}, \operatorname{min}, \operatorname{late} } $.

Since again the definitions of posets in DESs and SESs are very similar the encodings $ \des{\cdot} $ and $ \ses{\cdot} $ preserve families of posets. The proof is very similar to the proofs of the Theorems~\ref{thm:SESintoDES} and \ref{thm:DESintoSES}.

\begin{thm}
	For each SES $ \sigma $ there is a DES $ \delta $, namely $ \delta = \des{\sigma} $, and for each DES $ \delta $ there is a SES $ \sigma $, namely $ \sigma = \ses{\delta} $, such that $ \posetsX{x}{\sigma} = \posetsX{x}{\delta} $ for all $ x \in \Set{ \operatorname{lib}, \operatorname{min}, \operatorname{late} } $.
\end{thm}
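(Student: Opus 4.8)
The plan is to follow the structure of the proofs of Theorems~\ref{thm:SESintoDES} and~\ref{thm:DESintoSES}, which already settle the case of early causality, and to verify that the very same correspondence of cause sets carries over to the liberal, minimal, and late notions. First I would recall from Lemmas~\ref{lem:SEStoDES-Semantik} and~\ref{lem:DEStoSES} that $\traces{\sigma} = \traces{\des{\sigma}}$, $\configurations{\sigma} = \configurations{\des{\sigma}}$ and likewise $\traces{\delta} = \traces{\ses{\delta}}$, $\configurations{\delta} = \configurations{\ses{\delta}}$. Hence both sides range over exactly the same set of traces, and it suffices to fix a common trace $t = e_1 \cdots e_n$ together with an index $1 \leq i \leq n$ and to show that the family of \emph{cause sets} $U$ of $e_i$ in $t$ is the same whether computed in the DES or in the SES.

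The heart of the argument is to match, term by term, the two conditions that differ between the DES definitions and their SES adaptations. The common clause $\forall e \in U \logdot \exists 1 \leq j < i \logdot e = e_j$ depends only on the trace order and is literally identical on both sides. For the encoding $\des{\sigma}$, each bundle pointing to $e_i$ has by Definition~\ref{def:SESintoDES} the shape $\Set{x} \cup \droppers{x}{e_i}$ for some initial cause $\enab{x}{e_i}$, so the union of all such bundles equals $\Set{g \mid \enab{g}{e_i}} \cup \Set{g \mid \exists e' \logdot \drops{g}{e'}{e_i}}$; here I would invoke the SES constraint $\drops{d}{c}{t} \implies \enab{c}{t}$ of Definition~\ref{def:SES} to see that these two descriptions of the union agree, which identifies the liberal union conditions on the two sides. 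The satisfaction conditions match because $\forall k \logdot X_k \cap U \neq \emptyset$ is equivalent to demanding, for every initial cause $x$ of $e_i$, that $x \in U$ or some dropper of $x$ lies in $U$, and this is exactly $\left( \ic{e_i} \setminus \dc{U}{e_i} \right) \subseteq U$; this is the same computation already used for early causality in the proof of Theorem~\ref{thm:SESintoDES}, and it is insensitive to the particular causality notion.

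For the reverse encoding $\ses{\delta}$ the same two equivalences must be re-established, and this is the step I expect to be the main obstacle, because $\ses{\delta}$ introduces the fresh, impossible events $x_i$ (one per bundle) as the new initial causes. The key observation is that any cause set $U$ consists of events occurring in $t$, and since every $x_i$ carries a self-loop $\enab{x_i}{x_i}$ and has no dropper it is impossible and therefore never appears in a trace, so $U$ contains no $x_i$. Unfolding $\ic{e_i} = \Set{x_j \mid \buEn{X_j}{e_i}}$ and $\dc{U}{e_i} = \Set{x_j \mid U \cap X_j \neq \emptyset}$ according to Definition~\ref{def:DESintoSES} then turns the SES satisfaction condition $\left( \ic{e_i} \setminus \dc{U}{e_i} \right) \subseteq U$ into $\Set{x_j \mid U \cap X_j = \emptyset} \subseteq U$, which, as $U$ avoids all $x_j$, forces $U \cap X_j \neq \emptyset$ for every $j$, i.e.\ the DES satisfaction condition; similarly the SES union condition collapses to $U \subseteq \bigcup_j X_j$. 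Once these correspondences are in place for both encodings, the remaining clauses of the liberal, minimal, and late definitions refer only to the trace order and to the common family of satisfying sets (minimality as ``no proper subset satisfies'', lateness as ``latest such set''), so the selected cause sets coincide verbatim. This yields the same precedence relation $\prec$ and hence the same poset for each trace, and taking the union over all (shared) traces gives $\posetsX{x}{\sigma} = \posetsX{x}{\des{\sigma}}$ and $\posetsX{x}{\delta} = \posetsX{x}{\ses{\delta}}$ for every $x \in \Set{\operatorname{lib}, \operatorname{min}, \operatorname{late}}$.
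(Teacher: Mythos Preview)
Your proposal is correct and follows essentially the same approach as the paper's proof: reduce the question to matching, for each encoding, the ``satisfaction'' condition $(\ic{e_i}\setminus\dc{U}{e_i})\subseteq U \iff \forall k\, X_k\cap U\neq\emptyset$ (already established for early causality in Theorems~\ref{thm:SESintoDES} and~\ref{thm:DESintoSES}) and, for the liberal case only, the additional ``union'' condition; after that, the minimal and late selections depend solely on the common family of satisfying sets and the shared trace order. Your treatment of $\ses{\delta}$ is in fact more explicit than the paper's, which simply appeals to Definitions~\ref{def:SESintoDES} and~\ref{def:DESintoSES} without spelling out why the fresh impossible events $x_j$ cannot enter $U$.
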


\begin{proof}
	The definitions of posets in DESs and SESs \wrt minimal and late causality differ in exactly the same condition and its replacement as the definitions of posets in DESs and SESs \wrt early causality. Thus the proof in these two cases is similar to the proofs of the Theorems~\ref{thm:SESintoDES} and \ref{thm:DESintoSES}.
	
	If $ \sigma = \left( E, \confOp, \enabOp, \shrinkingCausality \right) $ is a SES then, by Lemmas~\ref{lem:SEStoDES} and~\ref{lem:SEStoDES-Semantik}, $ \delta = \des{\sigma} = \left( E, \confOp, \buEnOp \right) $ is a DES such that $ \traces{\sigma} = \traces{\delta} $ and $ \configurations{\sigma} = \configurations{\delta} $.
	If $ \delta = \left( E, \confOp, \buEnOp \right) $ is a DES then, by Lemma~\ref{lem:DEStoSES}, $ \sigma = \ses{\delta} = \left( E, \confOp, \enabOp, \shrinkingCausality \right) $ is a DES such that $ \traces{\delta} = \traces{\sigma} $ and $ \configurations{\delta} = \configurations{\sigma} $.
	In both cases let $ t = e_1 \cdots e_n \in \traces{\sigma} $, $ 1 \leq i \leq n $, and $ \buEn{X_1}{e_i}, \ldots, \buEn{X_m}{e_i} $ be all bundles pointing to $ e_i $.
	
	In the case of liberal causality, for $ U $ to be a cause for $ e_i $ the definition of posets in SESs requires $ U \subseteq \left( \Set{ e \mid \enab{e}{e_i} \lor \exists e' \in E \logdot \drops{e}{e'}{e_i} } \right) $ and $ ( \ic{e_i} \setminus \dc{U}{e_i} ) \subseteq U $.
	The second condition holds iff $ \forall 1 \leq k \leq m \logdot X_k \cap U \neq \emptyset $ as shown in the proofs of the Theorems~\ref{thm:SESintoDES} and \ref{thm:DESintoSES}.
	By the Definitions~\ref{def:SESintoDES} and \ref{def:DESintoSES}, the first conditions holds iff $ U \subseteq \left( X_1 \cup \ldots \cup X_m \right) $.
	So, by the definitions of posets in DESs and SESs \wrt liberal causality, $ \posetsLib{\sigma} = \posetsLib{\delta} $.
\end{proof}

\section{Growing Causality}

As in SESs, both notions of configurations of GESs, traced-based and
transition-based coincide; in different situations the more suitable one can be used.

\begin{lem}
	\label{lma:GESConfigEquivalence}
	Let $ \gamma $ be a GES. Then $ \configTraces{\gamma} = \configurations{\gamma} $.
\end{lem}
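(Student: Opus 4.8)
The plan is to follow the proof of Lemma~\ref{lem:SESconf} (its SES counterpart) and establish the two inclusions separately.

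For $\configTraces{\gamma} \subseteq \configurations{\gamma}$, I would take a trace $t = e_1 \cdots e_n$ and show that each single-event extension is a transition, i.e.\ $\transG{\overline{t_{i-1}}}{\overline{t_i}}$ for all $1 \leq i \leq n$, where $\overline{t_i} = \overline{t_{i-1}} \cup \Set{e_i}$. Condition~(1) of Definition~\ref{def:GEStrans} holds because the events of a trace are distinct; condition~(2) is precisely the trace condition of Definition~\ref{def:GEStraceDef} applied to $e_i$; and condition~(3) holds vacuously, since $\overline{t_i} \setminus \overline{t_{i-1}} = \Set{e_i}$ forces any adder $a$ and target $t$ drawn from this set to coincide with $e_i$, whereas Definition~\ref{def:GES} forbids $\addcause{e_i}{c}{e_i}$. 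Composing these transitions yields $\emptyset \transGOp^* \overline{t}$, hence $\overline{t} \in \configurations{\gamma}$.

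For the reverse inclusion $\configurations{\gamma} \subseteq \configTraces{\gamma}$, I would take a transition sequence $\emptyset = X_0 \transGOp X_1 \transGOp \cdots \transGOp X_n = C$ and build a trace by concatenating, for each step $i$, an arbitrary linearization of the freshly added events $X_i \setminus X_{i-1}$. The only nontrivial point is to verify the trace condition $(\ic{e_j} \cup \ac{\overline{t_{j-1}}}{e_j}) \subseteq \overline{t_{j-1}}$ for each such event $e_j$, say lying in batch $i$. The initial causes are handled by transition condition~(2), giving $\ic{e_j} \subseteq X_{i-1} \subseteq \overline{t_{j-1}}$. The delicate part concerns the added causes: the prefix $\overline{t_{j-1}}$ contains not only $X_{i-1}$ but also those events of the same batch $X_i \setminus X_{i-1}$ placed before $e_j$, and such an event might be an adder for $e_j$.

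This is exactly the step I expect to be the main obstacle, and it is where transition condition~(3) becomes essential. Given $c \in \ac{\overline{t_{j-1}}}{e_j}$ witnessed by an adder $a \in \overline{t_{j-1}}$ with $\addcause{a}{c}{e_j}$, I would distinguish two cases. If $a \in X_{i-1}$, then $c \in \ac{X_{i-1}}{e_j} \subseteq X_{i-1}$ by transition condition~(2). If instead $a \in X_i \setminus X_{i-1}$, then both the adder $a$ and the target $e_j$ belong to the same batch, so transition condition~(3) of Definition~\ref{def:GEStrans} forces $c \in X_{i-1}$. In either case $c \in X_{i-1} \subseteq \overline{t_{j-1}}$, so the trace condition holds independently of the chosen linearization. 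Hence $\overline{t} = C$ is a trace-based configuration, which completes the proof.
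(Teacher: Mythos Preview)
Your proposal is correct and follows essentially the same approach as the paper's proof: both directions are handled identically, with the forward direction using single-event transitions and the reverse direction linearizing each batch and invoking condition~(3) to control adders from the same batch. Your case analysis on the adder $a$ makes explicit what the paper compresses into the remark that, by (D4), $\ac{\overline{t_{i-1}} \cup X_i}{e_{i,j}} = \ac{\overline{t_{i-1}}}{e_{i,j}}$; and your observation that condition~(3) is vacuous for singleton steps (because Definition~\ref{def:GES} forbids $a = t$) is a detail the paper leaves implicit.
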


\begin{proof}
	Let $ \gamma = \left( E, \enabOp, \growingCausality \right) $.
	
	By Definition~\ref{def:GEStraceDef}, $ C \in \configTraces{\gamma} $ implies that there is some $ t = e_1 \cdots e_n $ such that $ \overline{t} \subseteq E $, $ \forall 1 \leq i \leq n \logdot \forall 1 \leq i \leq n \logdot \left( \ic{e_i} \cup \ac{\overline{t_{i - 1}}}{e_i} \right) \subseteq \overline{t_{i - 1}} $, and $ C = \overline{t} $. Hence by Definition~\ref{def:GESNoConcurTransition}, $ \transG{\overline{t_i}}{\overline{t_{i + 1}}} $ for all $ 1 \leq i \leq n $ and $ \transG{\emptyset}{\Set{ e_1 }} $.	Thus by Definition~\ref{def:GESConfigurations}, $ C \in \configurations{\gamma} $.
	
	By Definition~\ref{def:GESConfigurations}, $ C \in \configurations{\gamma} $ implies that there are $ X_1, \ldots, X_n \subseteq E $ such that $ \transG{\transG{\transG{\emptyset}{X_1}}{\ldots}}{X_n} $ and $ X_n = C $.
	Then by Definition~\ref{def:GESNoConcurTransition}, we have:
	\begin{align}
		& \emptyset \subseteq X_1 \subseteq X_2 \subseteq \ldots \subseteq X_n \subseteq E & \tag{D1} \label{eq:D1}\\
		& \forall e \in X_1 \logdot \left( \ic{e} \cup \ac{\emptyset}{e} \right) \subseteq \emptyset & \tag{D2} \label{eq:D2}\\
		& \begin{array}{l} \forall 1 \leq i < n \logdot \forall e \in X_{i + 1} \setminus X_i \logdot\\ \hspace{3em} \left( \ic{e} \cup \ac{X_i}{e} \right) \subseteq X_i \end{array} & \tag{D3} \label{eq:D3}\\
		& \begin{array}{l} \forall 1 \leq i < n \logdot \forall t, m \in X_{i + 1} \setminus X_i \logdot\forall c\in E \logdot\\ \hspace{3em} \addcause{m}{c}{t} \implies c\in  X_i  \end{array} & \tag{D4} \label{eq:D4}
	\end{align}
	Let $ X_1 = \Set{ e_{1, 1}, \ldots, e_{1, m_1} } $ and $ X_i \setminus X_{i - 1} = \Set{ e_{i, 1}, \ldots, e_{i, m_i} } $ for all $ 1 < i \leq n $.
	Then by Definition~\ref{def:GEStraceDef}, $ t = e_{1, 1} \cdots e_{1, m_1} \cdots e_{n, 1} \cdots e_{n, m_n} = e_1' \cdots e_k' $ is a trace such that $ \overline{t} \subseteq E $ (because of \eqref{eq:D1}),  for all $ 1 \leq i \leq k $ and all $ 1 \leq j \leq m_i $ we have $ \left( \ic{e_{i, j}} \cup \ac{\overline{t_{i - 1}}}{e_{i, j}} \right) \subseteq \overline{t_{i - 1}} $ (because of \eqref{eq:D2}, \eqref{eq:D3}, and, by \eqref{eq:D4}, $ \ac{\overline{t_{i - 1}} \cup X_i}{e_{i, j}} = \ac{\overline{t_{i - 1}}}{e_{i, j}} $), and $ \overline{t} = C $ (because $ X_n = C $).
	Thus $ C \in \configTraces{\gamma} $.
\end{proof}

\section{Dynamic Causality}

To compare DCESs to other ESs we define the Single State Dynamic Causality ESs (SSDCs) as a subclass of DCESs in that no modifier can add and drop the same dependency.

\begin{defi}
	\label{def:SingleStateDC}
	Let SSDC be a subclass of DCESs such that $\varrho$ is a SSDC iff
\[\dropped{\emptyset,E}\cap\added{\emptyset, E}=\emptyset.\]
\end{defi}

Since there are no adders and droppers for the same causal dependency, the order of modifiers does not matter and thus there are no two different states sharing the same configuration, \ie each configuration represents a state. Thus it is enough for SSDC to consider transition equivalence with respect to configurations, \ie $\transEqOp$.

\begin{lem}
	\label{lma:SingleCausalState}
	Let $\varrho =(E,\incaus,\shrinkingCausality,\growingCausality)$ be a SSDC. Then for the current causality relation $\caus{X}$ of any
state $(X, \caus{X}) \in \reachables{\varrho}$ it holds \[\caus{X} = (\incaus\setminus\dropped{\emptyset,X})\cup\added{\emptyset, X}.\]
\end{lem}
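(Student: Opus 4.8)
The plan is to prove the closed-form expression by induction on the number $n$ of transition steps in a witnessing sequence $(\emptyset, \incaus) = (C_0, \caus{C_0}), (C_1, \caus{C_1}), \dots, (C_n, \caus{C_n}) = (X, \caus{X})$ with $\transDC{(C_i, \caus{C_i})}{(C_{i+1}, \caus{C_{i+1}})}$ for all $0 \leq i < n$. In the base case $n = 0$ we have $X = \emptyset$ and $\caus{X} = \incaus$ (the initial causal state is the initial causality relation, \cf Definition~\ref{def:DCESConf}); since $\dropped{\emptyset, \emptyset} = \added{\emptyset, \emptyset} = \emptyset$, the right-hand side collapses to $\incaus$, so the claim holds.

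For the inductive step I would write $X' := C_{n-1}$, so that the last transition is $\transDC{(X', \caus{X'})}{(X, \caus{X})}$ with $X' \subset X$. By the induction hypothesis $\caus{X'} = (\incaus \setminus \dropped{\emptyset, X'}) \cup \added{\emptyset, X'}$, and by Condition~\ref{eq:correctCausalityUpdate} of Definition~\ref{def:DCTransition} we have $\caus{X} = (\caus{X'} \setminus \dropped{X', X}) \cup \added{X', X}$. First I would record two elementary observations. Since $X' \subseteq X$ and every modifier in $X$ lies either in $X'$ or in $X \setminus X'$, both operators split additively along the last transition:
\[\dropped{\emptyset, X} = \dropped{\emptyset, X'} \cup \dropped{X', X}, \qquad \added{\emptyset, X} = \added{\emptyset, X'} \cup \added{X', X}.\]
Moreover $\added{\emptyset, X'} \subseteq \added{\emptyset, E}$ and $\dropped{X', X} \subseteq \dropped{\emptyset, E}$ by monotonicity in the second component, so the defining SSDC condition $\dropped{\emptyset, E} \cap \added{\emptyset, E} = \emptyset$ (Definition~\ref{def:SingleStateDC}) yields $\added{\emptyset, X'} \cap \dropped{X', X} = \emptyset$.

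With these facts the inductive step is a routine set calculation: substituting the induction hypothesis and distributing the deletion of $\dropped{X', X}$ over the union, the disjointness $\added{\emptyset, X'} \cap \dropped{X', X} = \emptyset$ guarantees that this deletion leaves the previously added part $\added{\emptyset, X'}$ untouched, so $\caus{X}$ simplifies to $(\incaus \setminus (\dropped{\emptyset, X'} \cup \dropped{X', X})) \cup \added{\emptyset, X'} \cup \added{X', X}$, which equals $(\incaus \setminus \dropped{\emptyset, X}) \cup \added{\emptyset, X}$ by the splitting identities. The one genuinely delicate point, and the crux of the argument, is exactly this commutation of the current-step deletion past the accumulated additions; it is valid \emph{only} because no dependency is simultaneously addable and droppable, which is precisely what the SSDC restriction enforces. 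Without that hypothesis a dependency could be added at an early step and dropped at the final one, and then the recursive causal-state update would disagree with the proposed closed form.
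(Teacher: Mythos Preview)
Your proof is correct and follows essentially the same induction-on-transition-sequence argument as the paper's own proof, with the same base case and the same use of Condition~\ref{eq:correctCausalityUpdate} together with the SSDC disjointness to collapse the nested expression. If anything, your version is more explicit: you spell out the splitting identities and isolate the disjointness $\added{\emptyset, X'} \cap \dropped{X', X} = \emptyset$ as the crux, whereas the paper simply asserts that the expression ``can be reordered'' because $\varrho$ is a SSDC.
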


\begin{proof}
	If $ X = \emptyset $ then, by the Definitions~\ref{def:SEStrans} and \ref{def:GEStrans}, $ \dropped{\emptyset,X} = \emptyset $ and $ \added{\emptyset, X} = \emptyset $. Thus $ \caus{\emptyset} \; = \; \incaus $.
	
	Assume $ \transDC{(U,\caus{U})}{(X,\caus{X})} $. By induction, we have $ \caus{U} \; = (\incaus \setminus \dropped{\emptyset,U}) \cup \added{\emptyset, U} $. By Condition~\ref{eq:correctCausalityUpdate} in Definition~\ref{def:DCTransition}, we have $ \caus{X} \; = (\caus{U} \setminus \dropped{U,X}) \cup \added{U,X} $. By combining these two equations we obtain $ \caus{X} \; = ((\incaus \setminus \dropped{\emptyset,U}) \cup \added{\emptyset, U}) \setminus \dropped{U,X}) \cup \added{U,X} $. But, since $\varrho$ is a SSDC, this can be reordered to $ \caus{X} = (\incaus \setminus (\dropped{\emptyset,U} \cup \dropped{U,X} )) \cup (\added{\emptyset, U} \cup \added{U,X}) $ and simplified to $ \caus{X} = (\incaus \setminus \dropped{\emptyset,X}) \cup \added{\emptyset,X} $.
\end{proof}

In SSDC Condition~\ref{eq:correctCausalityUpdate} holds whenever $ X \subset Y $.

\begin{lem}
	\label{lma:SSDCstateProp}
	Let $\varrho=(E,\incaus,\shrinkingCausality,\growingCausality) $ be a SSDC and let $(X,\caus{X})$ and $(Y,\caus{Y})$ be two states of $\varrho$ with $X\subset Y$ then Condition~\ref{eq:correctCausalityUpdate} of Definition~\ref{def:DCTransition} holds for those two states.
\end{lem}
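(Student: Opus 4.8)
The plan is to reduce the statement to the explicit, non-recursive description of causal states already provided by Lemma~\ref{lma:SingleCausalState}, and then to finish by elementary set algebra that uses exactly once the defining property of SSDCs.

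First I would invoke Lemma~\ref{lma:SingleCausalState} twice, once for $X$ and once for $Y$. Since both $(X,\caus{X})$ and $(Y,\caus{Y})$ are states of $\varrho$, the lemma applies directly and yields $\caus{X} = (\incaus \setminus \dropped{\emptyset,X}) \cup \added{\emptyset,X}$ and $\caus{Y} = (\incaus \setminus \dropped{\emptyset,Y}) \cup \added{\emptyset,Y}$. This is what lets me avoid unfolding the transition relation of Definition~\ref{def:DCTransition} by hand; everything afterwards is purely about manipulating these two closed-form expressions.

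Next, using $X \subset Y$ together with the definitions of $\dropped{\cdot,\cdot}$ and $\added{\cdot,\cdot}$ from Definitions~\ref{def:SEStrans}(5) and \ref{def:GEStrans}, I would split the global sets along $X$: because $Y = X \cup (Y{\setminus}X)$, the set of droppers active in $Y$ is the union of those active in $X$ and those in $Y{\setminus}X$, so $\dropped{\emptyset,Y} = \dropped{\emptyset,X} \cup \dropped{X,Y}$, and likewise $\added{\emptyset,Y} = \added{\emptyset,X} \cup \added{X,Y}$. Substituting these into the formula for $\caus{Y}$ reduces the goal $\caus{Y} = (\caus{X} \setminus \dropped{X,Y}) \cup \added{X,Y}$ to a set-theoretic identity in $\incaus$, $\dropped{\emptyset,X}$, $\dropped{X,Y}$, $\added{\emptyset,X}$ and $\added{X,Y}$.

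The key step, and the only place the SSDC hypothesis enters, is to commute the removal of $\dropped{X,Y}$ past the added part. By Definition~\ref{def:SingleStateDC} we have $\dropped{\emptyset,E} \cap \added{\emptyset,E} = \emptyset$; since every local set satisfies $\dropped{X,Y} \subseteq \dropped{\emptyset,E}$ and $\added{\emptyset,X} \subseteq \added{\emptyset,E}$, these local sets are disjoint as well, whence $\added{\emptyset,X} \setminus \dropped{X,Y} = \added{\emptyset,X}$. Distributing $\setminus\,\dropped{X,Y}$ over the union defining $\caus{X}$ and applying this disjointness gives $\caus{X} \setminus \dropped{X,Y} = (\incaus \setminus \dropped{\emptyset,Y}) \cup \added{\emptyset,X}$; adjoining $\added{X,Y}$ then recovers $(\incaus \setminus \dropped{\emptyset,Y}) \cup \added{\emptyset,Y}$, which is $\caus{Y}$ by the formula above. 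I expect the only real obstacle to be bookkeeping: one must make the containments $\dropped{X,Y} \subseteq \dropped{\emptyset,E}$ and $\added{\emptyset,X} \subseteq \added{\emptyset,E}$ explicit, so that the global disjointness of Definition~\ref{def:SingleStateDC} can legitimately be transferred to the local sets occurring in the computation.
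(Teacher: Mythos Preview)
Your proposal is correct and follows essentially the same route as the paper: apply Lemma~\ref{lma:SingleCausalState} to both $X$ and $Y$, then reduce Condition~\ref{eq:correctCausalityUpdate} to a set-algebra identity that goes through thanks to the SSDC disjointness of Definition~\ref{def:SingleStateDC}. The paper's proof simply asserts that the expression ``could be reordered and simplified'' to the closed form for $\caus{Y}$, whereas you spell out the splitting $\dropped{\emptyset,Y}=\dropped{\emptyset,X}\cup\dropped{X,Y}$ (and analogously for $\added{}$) and isolate the one place where disjointness is needed; this extra bookkeeping is accurate and arguably clearer than the paper's version.
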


\begin{proof}
	Condition~\ref{eq:correctCausalityUpdate} of Definition~\ref{def:DCTransition} states that $ \caus{Y} = (\caus{X} \setminus \dropped{X,Y}) \cup \added{X, Y} $ but since $ \caus{X} = (\incaus \setminus \dropped{\emptyset,X}) \cup \added{\emptyset, X} $ by Lemma~\ref{lma:SingleCausalState}, we have to show that $ \caus{Y} = ((\incaus \setminus \dropped{\emptyset,X}) \cup \added{\emptyset, X}) \setminus \dropped{X,Y}) \cup \added{X, Y} $. Because $\varrho$ is a SSDC, this could be reordered and simplified to $ \caus{Y} = (\incaus \setminus \dropped{\emptyset,Y}) \cup \added{\emptyset, Y} $ that holds by Lemma~\ref{lma:SingleCausalState}.
\end{proof}

\begin{lem}
	\label{lma:SSDCConfInTrans}
	Let $\varrho=(E,\incaus,\shrinkingCausality,\growingCausality) $ be a SSDC and $\transDC{(X,\caus{X})}{(Y,\caus{Y})}$ a transition in $\varrho$. Then for all $X',Y'$ with $X\subseteq X'\subset Y'\subseteq Y$ there is a transition $\transDC{(X',\caus{X'})}{(Y',\caus{Y'})}$ in $\varrho$, where $ \caus{X'} \; = (\incaus \setminus \dropped{\emptyset,X'}) \cup \added{\emptyset, X'} $ and $ \caus{Y'}\; = (\incaus \setminus \dropped{\emptyset,Y'}) \cup \added{\emptyset, Y'} $.
\end{lem}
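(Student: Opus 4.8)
The plan is to verify directly the five conditions of Definition~\ref{def:DCTransition} for the candidate transition $\transDC{(X',\caus{X'})}{(Y',\caus{Y'})}$, where the causal states $\caus{X'}$ and $\caus{Y'}$ are exactly those prescribed in the statement (and which, by Lemma~\ref{lma:SingleCausalState}, are the genuine causal states of the configurations $X'$ and $Y'$ in the SSDC $\varrho$). Three of the conditions are immediate. Condition~\ref{eq:ConfigContainement} ($X'\subset Y'$) is part of the hypothesis. Condition~\ref{eq:correctCausalityUpdate} is precisely the content of Lemma~\ref{lma:SSDCstateProp}, applied to the two states $(X',\caus{X'})$ and $(Y',\caus{Y'})$ with $X'\subset Y'$. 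For Condition~\ref{eq:NCNoConcurrency}, I would note that $\added{X',Y'}\subseteq\added{\emptyset,E}$ and $\dropped{X',Y'}\subseteq\dropped{\emptyset,E}$, and these two global sets are disjoint by Definition~\ref{def:SingleStateDC}; hence $\added{X',Y'}\cap\dropped{X',Y'}=\emptyset$ and Condition~\ref{eq:NCNoConcurrency} holds vacuously.

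The recurring technical observation for the remaining two conditions is that $X\subseteq X'\subset Y'\subseteq Y$ forces $Y'\setminus X'\subseteq Y\setminus X$: any $e\in Y'\setminus X'$ satisfies $e\in Y'\subseteq Y$ and, since $X\subseteq X'$ and $e\notin X'$, also $e\notin X$, so $e\in Y\setminus X$. Condition~\ref{eq:NotToMuchConcurrency} then follows quickly: for targets $t$ and adders $a$ in $Y'\setminus X'$ with $\addcause{a}{c}{t}$, both $t,a\in Y\setminus X$, and so the original Condition~\ref{eq:NotToMuchConcurrency} (for the given transition $\transDC{(X,\caus{X})}{(Y,\caus{Y})}$) yields $c\in X\subseteq X'$, as required.

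The main obstacle is Condition~\ref{eq:OnlyEnabledEvents}, namely $\left\{e'\mid(e',e)\in\caus{X'}\right\}\subseteq X'$ for every $e\in Y'\setminus X'$. In the degenerate case $X=X'$ the claim is immediate, since then $\caus{X'}=\caus{X}$ and $e\in Y\setminus X$, so the original Condition~\ref{eq:OnlyEnabledEvents} applies verbatim. When $X\subset X'$, I would first use Lemma~\ref{lma:SSDCstateProp} on the states $(X,\caus{X})$ and $(X',\caus{X'})$ to rewrite $\caus{X'}=(\caus{X}\setminus\dropped{X,X'})\cup\added{X,X'}$, and then split a predecessor pair $(e',e)\in\caus{X'}$ into two cases. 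If $(e',e)\in\caus{X}$, then $e'\in X\subseteq X'$ by the original Condition~\ref{eq:OnlyEnabledEvents}, again because $e\in Y\setminus X$. If instead $(e',e)\in\added{X,X'}$, there is an adder $a\in X'\setminus X$ with $\adds{a}{e'}{e}$; since $X'\subseteq Y$ we have $a\in Y\setminus X$, and $e\in Y\setminus X$, so the original Condition~\ref{eq:NotToMuchConcurrency} (with target $e$) forces $e'\in X\subseteq X'$. In either case $e'\in X'$.

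Having established all five conditions, I conclude $\transDC{(X',\caus{X'})}{(Y',\caus{Y'})}$ in $\varrho$. The only subtlety to keep track of is that Lemma~\ref{lma:SSDCstateProp} is stated for strict inclusions, which is why I treat $X=X'$ separately in the argument for Condition~\ref{eq:OnlyEnabledEvents}; the prescribed formulas for $\caus{X'}$ and $\caus{Y'}$ reduce correctly in that boundary case because both $\dropped{X,X'}$ and $\added{X,X'}$ are then empty. The essential structural fact driving the whole proof is the disjointness of adders and droppers in a SSDC, which both makes the causal state a function of the configuration alone (Lemma~\ref{lma:SingleCausalState}) and renders the race-condition constraint~\ref{eq:NCNoConcurrency} trivial.
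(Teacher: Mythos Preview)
Your proof is correct and follows the same approach as the paper: directly verify the five conditions of Definition~\ref{def:DCTransition} using $Y'\setminus X'\subseteq Y\setminus X$, $X\subseteq X'$, and the SSDC property. Your treatment of Condition~\ref{eq:OnlyEnabledEvents} is in fact more careful than the paper's; the paper simply cites $(Y'\setminus X')\subseteq(Y\setminus X)$ and $X\subseteq X'$, whereas you explicitly handle the possibility that $\caus{X'}$ contains pairs $(e',e)\in\added{X,X'}$ absent from $\caus{X}$, and correctly dispatch that case via Condition~\ref{eq:NotToMuchConcurrency} of the original transition.
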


\begin{proof}
	Condition~\ref{eq:ConfigContainement} of Definition~\ref{def:DCTransition} holds by assumption. Condition~\ref{eq:OnlyEnabledEvents} holds since $(Y'\setminus X') \subseteq (Y\setminus X) $ and $ X \subseteq X' $. For Condition~\ref{eq:correctCausalityUpdate} we show $ \caus{Y'} \; = (\caus{X'} \setminus \dropped{X',Y'}) \cup \added{X', Y'} $ by assumption it follows $ \caus{Y'} \; = ((\incaus \setminus \dropped{\emptyset,X'}) \cup \added{\emptyset, X'}) \setminus \dropped{X',Y'}) \cup \added{X', Y'} $ and since $\varrho$ is a SSDC we can reorder and simplify to $ \caus{Y'} \; = (\incaus \setminus \dropped{\emptyset,Y'}) \cup \added{\emptyset, Y'} $that holds by assumption. Condition~\ref{eq:NCNoConcurrency} holds because $ (\added{X',Y'} \cap \dropped{X',Y'}) = \emptyset $ since $ \varrho $ is a SSDC. Finally Condition~\ref{eq:NotToMuchConcurrency} holds since $ (Y' \setminus X') \subseteq (Y \setminus X) $ and $ X \subseteq X' $.
\end{proof}

To compare DCESs with EBESs, we define a another sub-class of DCESs.

\begin{defi}
	\label{def:EBDC}
	Let EBDC denotes a subclass of SSDC with the additional requirements:
	\begin{enumerate}
		\item \label{eq:EBDCOnlyDisabling} $\forall c,a,t \in E \logdot		\addcause{a}{c}{t} \Longrightarrow c = t$
		\item \label{eq:EBDCOnlyDisablingII} $\forall c,d,t \in E \logdot		\drops{d}{c}{t} \Longrightarrow c \neq t$
		\item \label{eq:NoCausalAmbiguity} $\forall c,d_1,\ldots , d_n, t \in E \logdot \drops{d_1}{c}{t}\wedge\dots\wedge\drops{d_n}{c}{t} \Longrightarrow\\ \forall a,b \in \{c,d_1,\ldots , d_n\} \logdot \left(a \neq b \Longrightarrow		\adds{a}{b}{b}\in\growingCausality\wedge\adds{b}{a}{a}\in\growingCausality\right)$
	\end{enumerate} 
\end{defi}

The first condition translates disabling into $ \growingCausality $ and the second ensures that disabled events cannot be enabled again. The third condition reflects causal unambiguity by $ \shrinkingCausality $ such that either the initial cause or one of its droppers can happen. 

We adapt the notion of precedence.

\begin{defi}
	\label{def:EBDCLposets}
	Let $\vartheta$ be a EBDC and $X \in \configurations{\vartheta}$ we define the precedence relation $<_X \subseteq X\times X$ as $e <_X e'\Longleftrightarrow (e \rightarrow e') \lor (\addcause{e'}{e}{e}) \lor (\exists c\in E\logdot \drops{e}{c}{e'})$. Let $\leq_X$ be the reflexive and transitive closure of $<_X$.
\end{defi}

The relation $<_X$ indeed represents a precedence relation, and its reflexive transitive closure is a partial order.

\begin{lem}
	\label{lma:EBESPrecedence}
	Let $\vartheta=(E,\enabOp,\shrinkingCausality,\growingCausality)$ be a EBDC, $X \in \configurations{\vartheta}$, and let $e,e' \in X \logdot e<_X e'$. Let also $ \transDC{\transDC{(X_0,\caus{X_0})}{\ldots}}{(X_n, \caus{X_n})}$ with $ X_0 = \emptyset $ and $ X_n = X $ be the transition sequence of $X$ then $\exists X_i \in \{X_0,\ldots,X_n\} \logdot e \in X_i \land e' \notin X_i$.
\end{lem}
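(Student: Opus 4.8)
The plan is to locate the unique step of the given transition sequence at which $e'$ first appears and to show that $e$ is already present there; the configuration just before $e'$ enters then serves as the witness $X_i$. Concretely, since $e' \in X_n = X$ and $e' \notin X_0 = \emptyset$, there is a unique index $j$ with $e' \in X_{j+1} \setminus X_j$, so that $e' \notin X_j$ holds automatically. It therefore suffices to prove $e \in X_j$ in each of the three disjuncts of Definition~\ref{def:EBDCLposets} defining $e <_X e'$. Before the case analysis I would record two facts. First, $e \neq e'$: in the case $\adds{e'}{e}{e}$ this is Condition~\ref{eq:DCESnoSelfAdding}, in the case $\drops{e}{c}{e'}$ it is Condition~\ref{eq:DCESnoSelfDropping}, and in the case $e \rightarrow e'$ an equality $e = e'$ would make $e$ impossible (a self-loop that, by Condition~\ref{eq:EBDCOnlyDisablingII}, can never be dropped) and hence $e \notin X$, contradicting the hypothesis. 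Second, I would establish the \emph{self-loop permanence} fact: once a self-dependency $(x,x)$ belongs to the current causality relation it is never removed---by Condition~\ref{eq:EBDCOnlyDisablingII} self-dependencies have no droppers, so by Lemma~\ref{lma:SingleCausalState} they persist---and hence, by Condition~\ref{eq:OnlyEnabledEvents}, $x$ can never enter the configuration afterwards. From this I would derive the corollary that two mutually disabling events, i.e.\ with $\adds{a}{b}{b}$ and $\adds{b}{a}{a}$, never both occur in a reachable configuration.

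The case $\adds{e'}{e}{e}$ is the most direct. Since $e \in X$, $e$ enters at some step; it cannot enter together with $e'$ by Condition~\ref{eq:NotToMuchConcurrency}, and it cannot enter after $e'$, for once $e'$ has occurred the dependency $(e,e)$ lies in the current causality relation by Lemma~\ref{lma:SingleCausalState} and self-loop permanence then forbids $e$ from entering. Hence $e$ enters strictly before $e'$, so $e \in X_j$.

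The remaining two cases both exploit Condition~\ref{eq:NoCausalAmbiguity} of Definition~\ref{def:EBDC}, which forces a cause and all of its droppers to be pairwise mutually disabling. In the case $e \rightarrow e'$, every dropper of the dependency $(e,e')$ mutually disables $e$ (take $c = e$, $t = e'$ in Condition~\ref{eq:NoCausalAmbiguity}); since $e \in X$, the mutual-disabling corollary shows no such dropper lies in $X \supseteq X_j$, so by Lemma~\ref{lma:SingleCausalState} the dependency $(e,e')$ survives into $\caus{X_j}$, and Condition~\ref{eq:OnlyEnabledEvents} applied to $e' \in X_{j+1}\setminus X_j$ yields $e \in X_j$. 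In the case $\drops{e}{c}{e'}$, note first that $c \rightarrow e'$: by Condition~\ref{eq:EBDCOnlyDisablingII} we have $c \neq e'$, and EBDC admits no adder for a dependency with $c \neq e'$ by Condition~\ref{eq:EBDCOnlyDisabling}, so Condition~\ref{eq:DCESDroppingExistentCauses} applies. By Condition~\ref{eq:NoCausalAmbiguity}, $c$ and all droppers of $(c,e')$ mutually disable, so since $e \in X$ is one such dropper, neither $c$ nor any other dropper lies in $X$. Consequently $c \notin X_j$, so $(c,e') \notin \caus{X_j}$ (otherwise Condition~\ref{eq:OnlyEnabledEvents} would force $c \in X_j$); by Lemma~\ref{lma:SingleCausalState} this means some dropper of $(c,e')$ already occurred in $X_j$, and since $e$ is the unique dropper of $(c,e')$ lying in $X$, that dropper must be $e$ itself, whence $e \in X_j$.

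The main obstacle is the dropper case: there the precedence of $e$ over $e'$ is not witnessed by a surviving causal dependency but must be reconstructed indirectly, by arguing that the cause $c$ is necessarily \emph{absent} and that $e$ is the only event capable of having removed it. This is exactly where Condition~\ref{eq:NoCausalAmbiguity} of EBDC is indispensable; the routine but essential bookkeeping throughout is the repeated use of the explicit description $\caus{X_m} = (\incaus \setminus \dropped{\emptyset, X_m}) \cup \added{\emptyset, X_m}$ from Lemma~\ref{lma:SingleCausalState} to decide membership of individual pairs in the current causality relation.
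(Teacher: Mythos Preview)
Your proof is correct and follows essentially the same three-case analysis as the paper, invoking the same key ingredients (Condition~\ref{eq:NoCausalAmbiguity} of Definition~\ref{def:EBDC} for mutual disabling, self-loop permanence via Condition~\ref{eq:EBDCOnlyDisablingII}, and Lemma~\ref{lma:SingleCausalState}). The only cosmetic difference is dual: the paper fixes the first step $X_f$ at which $e$ appears and shows $e' \notin X_f$, whereas you fix the step $X_j$ just before $e'$ enters and show $e \in X_j$.
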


\begin{proof}
	Let $(X_f, \caus{X_f})$ be the first occurrence of $e$ in the sequence $ \transDC{\transDC{(X_0,\caus{X_0})}{\ldots}}{(X_n, \caus{X_n})}$, so according to Condition~\ref{eq:ConfigContainement} of Definition~\ref{def:DCTransition} it is enough to prove that $e' \notin X_f$.
	
	First, assume that $e \rightarrow e'$, then $(e,e') \in \incaus$. Then, by Definition~\ref{def:DCTransition}, to obtain $(e,e') \notin \caus{X_{f-1}}$ we need a dropper $d \in X_{f-1}$ for $e \rightarrow e'$ (according to Condition~\ref{eq:correctCausalityUpdate}).
	But that is impossible, since $e$ and $d$ will be mutually disabling each other, because of Condition~\ref{eq:NoCausalAmbiguity} of Definition~\ref{def:EBDC}.
	So $(e,e') \in \caus{X_{f-1}}$ and thus $e'\notin X_f$ because of Condition~\ref{eq:OnlyEnabledEvents} of Definition~\ref{def:DCTransition}.
	
	Second, assume that $\addcause{e'}{e}{e}$.
	If $e' \in X_{f}$ then, by Condition~\ref{eq:NotToMuchConcurrency} of Definition~\ref{def:DCTransition}, it follows $e'\in X_{f-1}$.
	Let $f'$ be minimal with $e'\in X_{f'}$ by Condition~\ref{eq:correctCausalityUpdate} we have $(e,e)\in\caus{X_{f'}}$, and---since $\vartheta$ is a EBDC and therefore a SSDC---there is no dropper for $e \rightarrow e$ and we have $(e,e)\in \caus{X_{f-1}}$. Thus, by Condition~\ref{eq:OnlyEnabledEvents}, $e\notin X_f$ contradicting our definition, and therefore the assumption $e' \in X_{f}$ must be wrong.
	
	Third, assume $\exists c\in E\logdot \drops{e}{c}{e'}$. Then since EBDC are a subclass of SSDC we have $\nexists a \in E \logdot \addcause{a}{c}{e'}$ according to Definition~\ref{def:SingleStateDC}. Then $c \rightarrow e'$ by Condition~\ref{eq:DCESDroppingExistentCauses} of Definition~\ref{def:DCES} that means $(c,e') \in \incaus$. Let us assume that $e' \in X_f$ then either $c$ or another dropper $d$ with $\drops{d}{c}{e'}$ occurred before $e'$ that is impossible because of the mutual disabling in Condition~\ref{eq:NoCausalAmbiguity} of Definition~\ref{def:EBDC}. So $e' \notin X_f$.
\end{proof}

\begin{lem}
	\label{lma:EBESPrecIsOrder}
	$\leq_X$ is a partial order over $X$. 
\end{lem}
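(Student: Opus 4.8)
The plan is to exploit the fact that, by Definition~\ref{def:EBDCLposets}, $\leq_X$ is the reflexive and transitive closure of $<_X$. Reflexivity and transitivity therefore hold by construction, so the whole task reduces to establishing \emph{antisymmetry}, which amounts to showing that $<_X$ admits no nontrivial cycle on $X$.

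The key device I would use is that $<_X$ is compatible with the temporal order in which events enter the generating transition sequence. Fix a sequence $\transDC{\transDC{(X_0,\caus{X_0})}{\ldots}}{(X_n,\caus{X_n})}$ with $X_0 = \emptyset$ and $X_n = X$, and for each $e \in X$ let $\tau(e)$ be the least index $i$ with $e \in X_i$; since the $X_i$ form a strictly increasing chain, $\tau$ is well defined even though a single transition may add several events at once. Lemma~\ref{lma:EBESPrecedence} states precisely that whenever $e <_X e'$ there is some $X_i$ containing $e$ but not $e'$, which forces $\tau(e) \le i < \tau(e')$; hence $e <_X e' \implies \tau(e) < \tau(e')$.

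With this monotonicity in hand, the conclusion is routine. Any pair $e \leq_X e'$ with $e \ne e'$ is witnessed by a chain $e = c_0 <_X c_1 <_X \cdots <_X c_k = e'$ of length $k \ge 1$, along which $\tau$ strictly increases, so $\tau(e) < \tau(e')$. If we had both $e \leq_X e'$ and $e' \leq_X e$ with $e \ne e'$, we would obtain $\tau(e) < \tau(e')$ and $\tau(e') < \tau(e)$ simultaneously, a contradiction; therefore $e = e'$, and $\leq_X$ is antisymmetric.

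The main obstacle in this argument---namely, ruling out the three concrete ways a cycle in $<_X$ could arise (via an initial causal dependency, via a self-adding disabling $\addcause{e'}{e}{e}$, or via a dropper)---has already been dispatched in Lemma~\ref{lma:EBESPrecedence}, where the EBDC conditions (especially the mutual-disabling Condition~\ref{eq:NoCausalAmbiguity}) together with the no-concurrency conditions of Definition~\ref{def:DCTransition} are invoked. Consequently the present proof carries essentially no further combinatorial difficulty; its only subtleties are to observe that $\tau$ is well defined and that the strict-increase property lifts from single $<_X$-steps to the whole transitive closure.
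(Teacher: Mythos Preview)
Your proposal is correct and follows essentially the same approach as the paper: both define a first-occurrence index along a generating transition sequence (the paper's $h,j$ are your $\tau(e),\tau(e')$), invoke Lemma~\ref{lma:EBESPrecedence} to obtain strict monotonicity of that index along $<_X$, lift this to $\leq_X$ via chains, and derive antisymmetry by the obvious contradiction. If anything, your write-up is slightly more explicit in the lifting step than the paper's.
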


\begin{proof}
	Let $e,e' \in X \logdot e <_X e'$ and let $(\emptyset = X_0, \caus{X_0}) \ldots(X_n = X, \caus{X_n})$ be the transition sequence of $X$.
	Let $X_h$ ($X_j$) be the configurations where $e$ (respectively $e'$) occurs first.
	By Lemma~\ref{lma:EBESPrecedence} then $h<j$.
	Since $\leq_X$ is the reflexive and transitive closure of $<_X$ we have $e \leq_X e' \Longrightarrow h \leq j$.
	For anti-symmetry, assume that $e' \leq_X e$. By Lemma~\ref{lma:EBESPrecedence}, we have $j \leq h$, but $h \leq j$. Hence $h = j$.
	The equality $h = j$ implies that $e = e'$ because otherwise $h < j$ and $j<h$ that is a contradiction.
\end{proof}

Let $ \posets{\vartheta} = \{(X, \leq_X) \mid X \in \configurations{\vartheta}\} $ denote the set of posets of the EBDC $\vartheta$. We show that the transitions of a EBDC $\vartheta$ can be extracted from its posets.

\begin{thm}
	\label{thm:EBDCTrFromPosets}
	Let $\vartheta=(E,\incaus,\shrinkingCausality,\growingCausality)$ be a EBDC and $(X,\caus{X}), (Y,\caus{Y}) \in \reachables{\vartheta}$ with $X\subset Y$. Then \[\left(\forall e,e'\in Y\logdot e\neq e'\land e'\leq_{Y} e\implies e'\in
X \right)\Longleftrightarrow\transDC{(X,\caus{X})}{(Y,\caus{Y})}.\]
\end{thm}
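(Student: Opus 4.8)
The plan is to prove the two implications separately, leaning throughout on two structural facts. First, since an EBDC is in particular an SSDC, Lemma~\ref{lma:SingleCausalState} gives the closed form $\caus{X} = (\incaus\setminus\dropped{\emptyset,X})\cup\added{\emptyset,X}$, so the causal state is a function of the configuration. Second, condition~\ref{eq:NoCausalAmbiguity} of Definition~\ref{def:EBDC} makes every cause and each of its droppers (and the droppers pairwise) \emph{mutually disabling} via self-loop adders $\adds{u}{v}{v}$, and I would first record as a sub-claim that two such events can never both occur in a reachable configuration: in the transition sequence, when $u$ occurs, condition~\ref{eq:correctCausalityUpdate} puts $(v,v)$ into the causal state, and by condition~\ref{eq:EBDCOnlyDisablingII} this pair has no dropper, so condition~\ref{eq:OnlyEnabledEvents} blocks $v$ forever. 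This mutual-exclusion fact is the workhorse. I would also note that, since $\leq_Y$ is the reflexive–transitive closure of $<_Y$ within $Y$ and $<_Y$ is irreflexive on events of the reachable set $Y$ (self-loops give impossible events, and $\addcause{}{}{}$, $\drops{}{}{}$ forbid $e<_Y e$), the left-hand condition~(A) is equivalent to the one-step statement: for all distinct $a,b\in Y$ with $a<_Y b$ we have $a\in X$, i.e.\ $Y\setminus X$ consists of $\leq_Y$-maximal elements.

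For $\transDC{(X,\caus{X})}{(Y,\caus{Y})}\Rightarrow(\mathrm{A})$, I would take distinct $a,b\in Y$ with $a<_Y b$, assume for contradiction $a\in Y\setminus X$, and split on the three disjuncts of Definition~\ref{def:EBDCLposets}. If $a\rightarrow b$, then $(a,b)$ cannot be dropped by $X$ (a dropper would be mutually disabling with $a\in Y$), so $(a,b)\in\caus{X}$; if $b\in Y\setminus X$ condition~\ref{eq:OnlyEnabledEvents} forces $a\in X$, while if $b\in X$ the reachability of $X$ plus the same no-dropper fact forces $a$ into the configuration where $b$ first appears. If $\addcause{b}{a}{a}$, then $b\in X$ places the self-cause $(a,a)$ permanently in $\caus{X}$ and blocks $a$ via condition~\ref{eq:OnlyEnabledEvents}, whereas $b\in Y\setminus X$ triggers condition~\ref{eq:NotToMuchConcurrency} (adder $b$ and target $a$ both in $Y\setminus X$) forcing $a\in X$. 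If $\drops{a}{c}{b}$, then condition~\ref{eq:DCESDroppingExistentCauses} gives $c\rightarrow b$ with $a,c$ mutually disabling; no dropper of $(c,b)$ can lie in $X$ (any such dropper is mutually disabling with $a\in Y$), so $(c,b)\in\caus{X}$, and either condition~\ref{eq:OnlyEnabledEvents} (if $b\in Y\setminus X$) or reachability of $X$ (if $b\in X$) forces $c\in X$, contradicting that $c,a\in Y$ are mutually disabling. In every case $a\notin X$ is untenable.

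For the converse, I would assume~(A) and verify the five conditions of Definition~\ref{def:DCTransition}. Condition~\ref{eq:ConfigContainement} is the hypothesis; condition~\ref{eq:correctCausalityUpdate} holds for any $X\subset Y$ in an SSDC by Lemma~\ref{lma:SSDCstateProp}; condition~\ref{eq:NCNoConcurrency} is vacuous since $\added{X,Y}\cap\dropped{X,Y}\subseteq\added{\emptyset,E}\cap\dropped{\emptyset,E}=\emptyset$ by Definition~\ref{def:SingleStateDC}. For condition~\ref{eq:NotToMuchConcurrency}, any $a,t\in Y\setminus X$ with $\addcause{a}{c}{t}$ would have $c=t$ by condition~\ref{eq:EBDCOnlyDisabling}, giving $t<_Y a$, whence~(A) forces $t\in X$, a contradiction; so the premise never holds.

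The delicate condition is~\ref{eq:OnlyEnabledEvents}, and I expect this to be the main obstacle. Given $e\in Y\setminus X$ and $(e',e)\in\caus{X}$, I must show $e'\in X$. Using the closed form of $\caus{X}$, the case $(e',e)\in\added{\emptyset,X}$ is impossible: by condition~\ref{eq:EBDCOnlyDisabling} it forces $e'=e$ and $(e,e)\in\caus{X}$, which by mutual exclusion keeps $e$ out of $Y$. So $(e',e)\in\incaus$, giving $e'<_Y e$; the difficulty is to establish $e'\in Y$ before invoking~(A). I would inspect the configuration $Y_{k-1}$ just before $e$ first appears in the transition sequence of $Y$: a dropper $d$ of $(e',e)$ present there cannot lie in $X$ (else $(e',e)\in\dropped{\emptyset,X}$, excluding this case) and cannot lie in $Y\setminus X$ either, because $d<_Y e$ would force $d\in X$ by~(A); hence $(e',e)$ is undropped at that step, so $e'\in Y_{k-1}\subseteq Y$, and~(A) yields $e'\in X$. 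This self-referential use of~(A) to exclude droppers in $Y\setminus X$ — thereby simultaneously securing $e'\in Y$ and ruling out that a dropper admits $e$ without $e'$ — is the crux of the argument.
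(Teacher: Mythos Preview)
Your overall strategy matches the paper's: for $(\mathrm{A})\Rightarrow\transDC{}{}$ you verify the five conditions of Definition~\ref{def:DCTransition}, and for the converse you essentially re-derive Lemma~\ref{lma:EBESPrecedence} inline by case-splitting on $<_Y$, whereas the paper simply cites that lemma (applied along a transition sequence to $Y$ that passes through $X$). Your handling of the $\incaus$-subcase of Condition~\ref{eq:OnlyEnabledEvents} is in fact more careful than the paper's, which silently assumes $e'\in Y$ before invoking~(A); your detour through the transition sequence to $Y$, ruling out droppers in $Y_{k-1}$ via~(A), correctly supplies this missing step.

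There is, however, a genuine gap in your $\added{\emptyset,X}$-subcase. From $(e',e)\in\added{\emptyset,X}$ and Condition~\ref{eq:EBDCOnlyDisabling} you correctly obtain $e'=e$ and an adder $a\in X$ with $\adds{a}{e}{e}$, but then you invoke your mutual-exclusion sub-claim to conclude $e\notin Y$. That sub-claim needs \emph{both} $\adds{a}{e}{e}$ and $\adds{e}{a}{a}$; a general EBDC does not guarantee the second direction, since Condition~\ref{eq:NoCausalAmbiguity} only forces mutual disabling among a cause and its droppers, not for arbitrary adders. With one-way disabling, $a$ and $e$ can coexist in a reachable configuration (let $e$ occur first), so ``$e\notin Y$'' does not follow, and you have no transition sequence to $Y$ through $X$ yet to force $a$ before $e$. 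The fix is immediate and is exactly what the paper does: since $a\in X\subseteq Y$ and $e\in Y$, the clause $\addcause{a}{e}{e}$ in Definition~\ref{def:EBDCLposets} gives $e<_Y a$, and then~(A) yields $e\in X$, contradicting $e\in Y\setminus X$.
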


\begin{proof}
	We assume $\left(\forall e,e'\in Y\logdot e\neq e'\land e'\leq_{Y} e\implies e'\in X \right)$.
	Condition~\ref{eq:ConfigContainement} holds by assumption.
	For Condition~\ref{eq:OnlyEnabledEvents} we have to show that $\forall e\in Y\setminus X\logdot \Set{e'\mid (e',e)\in\caus{X}}\subseteq X$.
	By Lemma~\ref{lma:SingleCausalState}, we have  $\caus{X}=(\incaus\setminus\dropped{\emptyset,X})\cup\added{\emptyset,X}$ for all $(e',e)\in\incaus$.
	\begin{itemize}[noitemsep]
		\item The inclusion $\Set{e'\mid (e', e) \in (\incaus\setminus\dropped{\emptyset,X})} \subseteq X$ holds by assumption and because $e'\leq_Y e$ implies $e'\in X$.
		\item For all $(e',e')\in\added{\emptyset,X}$ of the form $\adds{e}{e'}{e'}$ we have $e'\leq_Y e$ and thus $e'\notin Y\setminus X$.
	\end{itemize}
	For Condition~\ref{eq:correctCausalityUpdate} we have to show $\caus{Y}=(\caus{X}\setminus\dropped{X,Y})\cup\added{X,Y}$.
	By Lemma~\ref{lma:SingleCausalState} we have $\caus{X}=(\incaus\setminus\dropped{\emptyset,X})\cup\added{\emptyset,X}$.
	We substitute $\caus{X}$ in $\caus{Y}=(\caus{X}\setminus\dropped{X,Y})\cup\added{X,Y}$ by $\caus{X}=(\incaus\setminus\dropped{\emptyset,X})\cup\added{\emptyset,X}$.
	This reduces to $\caus{Y}=(\incaus\setminus\dropped{\emptyset,Y})\cup\added{\emptyset,Y}$ that holds again by Lemma~\ref{lma:SingleCausalState}.
	Condition~\ref{eq:NCNoConcurrency} holds trivially because $\vartheta$ is a SSDC and thus $\dropped{\emptyset,E}\cap\added{\emptyset, E}=\emptyset$.
	Consider now $\adds{e}{e'}{e'}$. If $e,e' \in Y$, we have $e'\leq_Y e$. Therefore by assumption $e' \in X$ that fulfills Condition~\ref{eq:NotToMuchConcurrency}.
	
	Let us now assume $\transDC{(X,\caus{X})}{(Y,\caus{Y})}$, and $e,e'\in Y$ with $e\neq e'$ and $e'\leq_{Y} e$ by Lemma~\ref{lma:EBESPrecedence} it follows $e'\in X$.
\end{proof}

The following defines an encoding of an EBES into an EBDC. Furthermore the encoding preserves posets. Figure~\ref{fig:exampleEBES2DCES} provides an example where conflicts with impossible events are dropped for simplicity.

\begin{defi}
	\label{def:EBES2DCES}
	Let $\xi = \left( E, \disaOpT,\buEnOpT, l \right)$ be an EBES, $ \Set{ X_i }_{i \in I} $ an enumeration of its bundles, and $ \Set{ x_i }_{i \in I}$ a set of fresh events, \ie $ \Set{ x_i }_{i \in I} \cap E = \emptyset $.
	
	Then $\dces{\xi} = (E', \incaus, \shrinkingCausality,\growingCausality)$ such that:
	\begin{enumerate}
		\item $ E' = E \cup \Set{ x_i }_{i \in I}$
		\item $\incaus=\Set{ \enab{x_i}{e} \mid \buEn{X_i}{e} } \cup \Set{ \enab{x_i}{x_i} \mid i \in I } $
		\item \label{eq:EBES2DCESShrinking} $\shrinkingCausality \Set{ \drops{d}{x_i}{e} \mid d \in X_i \land \buEn{X_i}{e} }$
		\item \label{eq:EBES2DCESGrowing} $\growingCausality = \{\adds{e'}{e}{e} \mid \disa{e}{e'} \}$.
	\end{enumerate}
\end{defi}

\begin{figure}[tb]
	\centering
	\begin{tikzpicture}
		\event{a1}{-1}{1}{above}{$ a $};
		\event{a2}{-0.2}{1}{above}{$ b $};
		\event{a3}{-0.2}{0}{below left}{$ c $};
		\event{a5}{.6}{0}{right}{$ d $};
		\draw[enablingPES] (a1) edge (a3);
		\draw[enablingPES] (a2) edge (a3);
		\draw[enablingPES] (a3) .. controls (-0.9, 0.3) .. (a1);
		\draw[thick]	(-0.5, 0.37) -- (-0.2, 0.5);
		\draw[conflictPES] (a1) edge (a2);
		\draw[conflictEBES] (a3) edge (a5);
		\node (a) at (-1.7, 1) {(a)};
		\event{b1}{3}{1}{above}{$ a $};
		\event{b2}{3.8}{1}{above}{$ b $};
		\event{b3}{4.3}{0.7}{below right}{$ e_1 $};
		\event{b4}{3.8}{0}{below left}{$ c $};
		\event{b6}{5}{0}{below}{$ d $};
		\event{b5}{3}{0}{left}{$ e_2 $};
		\draw[enablingPES] (b5) edge (b1);
		\draw[enablingPES] (b3) edge (b4);
		\draw[enablingPES] (b3) .. controls (4.8, 1.2) and (3.8, 1.2) .. (b3);
		\draw[enablingPES] (b4) .. controls (4.3, .3) and (4.3, -.5) .. (b4);
		\draw[enablingPES] (b5) .. controls (2.5, -.5) and (3.5, -.5) .. (b5);
		\draw[adding]		(b6) -- (4.2 , 0);
		\draw[conflictPES]  (b1) edge (b2);
		\draw[dropping]		(3.1, .4) -- (b4);
		\draw[dropping]		(4, .4) -- (b1);
		\draw[dropping]		(4, .4) -- (b2);
		\node (b) at (2.3, 1) {(b)};
	\end{tikzpicture}
	\caption{An EBES and its poset-equivalent DCES.}
	\label{fig:exampleEBES2DCES}
\end{figure}
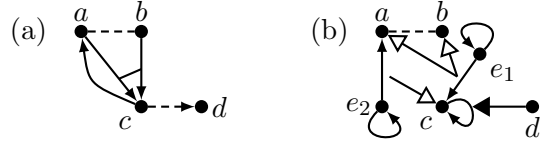

We show that the encoding yields an EBDC.

\begin{lem}
	\label{lma:EmdeddingIsEBDC}
	Let $\xi$ be an EBES. Then $\dces{\xi}$ is an EBDC.
\end{lem}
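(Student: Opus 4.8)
The plan is to check in turn that $\dces{\xi}$ meets the five well-formedness conditions of Definition~\ref{def:DCES}, the single-state condition of Definition~\ref{def:SingleStateDC}, and finally the three extra requirements of Definition~\ref{def:EBDC}. Two structural facts about the encoding drive nearly everything: by construction every adder of $\dces{\xi}$ has the reflexive shape $\adds{a}{e}{e}$ (cause equals target), and every bundle-dropper has the shape $\drops{d}{x_i}{e}$ whose cause $x_i$ is a fresh event, so that $x_i \notin E$ and $x_i$ never itself occurs as a dropper. I would also normalise $\xi$ so that no bundle contains its own target, which is harmless since the stability condition of Definition~\ref{def:EBES} makes such a target impossible anyway.

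From these facts the DCES axioms and the SSDC condition are short. Condition~\ref{eq:DCESDroppingExistentCauses} holds because each $\drops{d}{x_i}{e}$ comes with $\enab{x_i}{e}$ in $\incaus$; Condition~\ref{eq:DCESnoSelfDropping} because $d \in X_i \subseteq E$ while $x_i$ is fresh and $d \neq e$ after the normalisation; Conditions~\ref{eq:DCESAddingMissingCauses} and \ref{eq:ModifierCannotAddAndDropTheSameTarget} because no bundle-dropper has its cause in $E$, while $\incaus$ carries no reflexive pair over $E$, so that $\lnot(e \rightarrow e)$ holds for every target $e \in E$; and Condition~\ref{eq:DCESnoSelfAdding} because $\disaOp$ is irreflexive in $\xi$. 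For the single-state property I would simply observe that $\dropped{\emptyset,E'}$ consists only of non-reflexive pairs $(x_i,e)$ whereas every adder is reflexive, so $\added{\emptyset,E'}$ consists only of pairs $(e,e)$; the two are therefore disjoint, and $\dces{\xi}$ is an SSDC. The very same reflexive-versus-non-reflexive dichotomy gives the first two EBDC requirements of Definition~\ref{def:EBDC}.

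The substance of the proof is the third EBDC requirement, causal unambiguity. Because the cause of every dropper is a fresh $x_i$ attached to the unique bundle $\buEn{X_i}{e}$, the droppers sharing a given cause $x_i$ are exactly $\Set{\drops{d}{x_i}{e} \mid d \in X_i}$, so the set $\Set{c,d_1,\dots,d_n}$ occurring in the condition is precisely $\Set{x_i} \cup X_i$, and I must exhibit the full mutual-disabling clique over this set inside $\growingCausality$. For a pair of distinct events of $X_i$ this drops out of the stability condition of $\xi$ (which yields $\disa{e_1}{e_2}$ in both directions) via the clause $\growingCausality = \Set{\adds{e'}{e}{e} \mid \disa{e}{e'}}$. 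The genuinely delicate pairs are those involving the impossible cause $x_i$: these are exactly the conflicts with impossible events that Figure~\ref{fig:exampleEBES2DCES} suppresses for readability, and I expect them to be the main obstacle. The care needed is to realise these mutual disablings among $\Set{x_i}\cup X_i$ in $\growingCausality$ while keeping them consistent with the impossibility device $\enab{x_i}{x_i}$ and with Condition~\ref{eq:DCESAddingMissingCauses}, since a reflexive adder whose target is $x_i$ must not silently re-add the already-present self-dependency; reconciling these two roles of the fresh events is where the argument has to be made precise.
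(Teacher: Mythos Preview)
Your overall decomposition matches the paper's: verify the DCES axioms, then the SSDC property, then the three EBDC requirements. For Condition~\ref{eq:NoCausalAmbiguity} the paper simply invokes stability of $\xi$ (bundle members mutually disable one another) and notes that the clause $\growingCausality=\{\adds{e'}{e}{e}\mid\disa{e}{e'}\}$ therefore supplies the required adders; it says nothing about the pairs involving the fresh cause $x_i$.

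You are right to flag those pairs. Condition~\ref{eq:NoCausalAmbiguity} literally demands $\adds{x_i}{d}{d}$ and $\adds{d}{x_i}{x_i}$ for every $d\in X_i$, and since $x_i$ is not in the event set of $\xi$ these adders are \emph{not} produced by Definition~\ref{def:EBES2DCES} as written. The paper's proof simply does not cover this case; the remark accompanying Figure~\ref{fig:exampleEBES2DCES} (``conflicts with impossible events are dropped for simplicity'') suggests the authors tacitly intended such adders to be present but never wrote them into the formal encoding. Your further observation that inserting $\adds{d}{x_i}{x_i}$ would collide with Condition~\ref{eq:DCESAddingMissingCauses} (because $x_i\to x_i$ is already in $\incaus$ and has no dropper) is also correct. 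So you have identified a genuine inconsistency between Definitions~\ref{def:EBES2DCES} and~\ref{def:EBDC} and the statement of this lemma, one that the paper's own proof glosses over rather than resolves; the downstream Lemmata~\ref{lma:EBESPrecedence}--\ref{lma:EBESintoEBDC} survive only because the $x_i$ are impossible and hence never appear in a configuration.
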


\begin{proof}
	First $\dces{\xi}$ is a DCES. The Definitions of $\incaus,\shrinkingCausality$, and $\growingCausality$ in Definition~\ref{def:EBES2DCES} ensure the Conditions~\ref{def:DCES}.\ref{eq:DCESDroppingExistentCauses} and \ref{def:DCES}.\ref{eq:DCESAddingMissingCauses}. According to the definition of $\shrinkingCausality$ in Condition~\ref{def:EBES2DCES}.\ref{eq:EBES2DCESShrinking} all dropped causes are the fresh events that cannot be added by $\growingCausality$ (Condition~\ref{def:EBES2DCES}.\ref{eq:EBES2DCESGrowing}). So Condition~\ref{def:DCES}.\ref{eq:ModifierCannotAddAndDropTheSameTarget} also holds.

	Second, $\dces{\xi} $ is a SSDC since all dropped events are fresh and because these fresh events are never added by $\growingCausality$, the conditions of Definition~\ref{def:SingleStateDC} are satisfied.
	
	Third, $\dces{\xi}$ is a EBDC. The Conditions~\ref{def:EBDC}.\ref{eq:EBDCOnlyDisabling} and \ref{eq:EBDCOnlyDisablingII} hold by definition. Bundle members in $\xi$ mutually disable each other. Then according to the Condition~\ref{def:EBES2DCES}.\ref{eq:EBES2DCESGrowing}, the Condition~\ref{def:EBDC}.\ref{eq:NoCausalAmbiguity} holds. Therefore $\dces{\xi} $ is a EBDC.
\end{proof}

Before comparing an EBES with its encoding according to posets, we show that they have the same configurations.

\begin{lem}
	\label{thm:EBESandEBDCconfigEqui}
	Let $\xi = \left(E, \disaOpT, \buEnOpT\right)$ be an EBES. Then $\configurations{\xi} = \configurations{\dces{\xi}}$.
\end{lem}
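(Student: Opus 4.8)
The plan is to prove the two configuration sets equal by mutual inclusion, exploiting that $\dces{\xi}$ is an EBDC and in particular an SSDC (Lemma~\ref{lma:EmdeddingIsEBDC}), so that the current causality of every reachable state is the explicit, order-independent relation
\[
  \caus{X} \;=\; \bigl(\incaus \setminus \dropped{\emptyset,X}\bigr) \cup \added{\emptyset,X}
\]
supplied by Lemma~\ref{lma:SingleCausalState}. First I would record two structural facts about the encoding of Definition~\ref{def:EBES2DCES}. The self-loops $\enab{x_i}{x_i}$ are never dropped (droppers only remove $x_i$ as a cause of the genuine target $e$, never of $x_i$ itself), so each fresh $x_i$ is permanently impossible and hence every configuration of $\dces{\xi}$ is a subset of $E$. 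Second, for a genuine target $y\in E$ and any $X\subseteq E$, I would read off from the displayed formula that the current causal predecessors of $y$ are the impossible events of the still-unsatisfied bundles, namely $\Set{x_i \mid \buEn{X_i}{y} \land X\cap X_i=\emptyset}$, together with $y$ itself exactly when some disabler of $y$ (an event $e''$ with $\disa{y}{e''}$) already lies in $X$. The decisive observation — the bridge lemma that will carry most of the content — is that this predecessor set is contained in $X$ precisely when every bundle $X_i\in\B{y}$ is satisfied by $X$ and no disabler of $y$ has occurred, which are exactly the EBES trace conditions of Definition~\ref{def:EBESconf} for appending $y$ to the history $X$.

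For the inclusion $\configurations{\xi}\subseteq\configurations{\dces{\xi}}$, I would take an EBES trace $t=e_1\cdots e_n$ with $\overline{t}=X$ and build the one-event-at-a-time transition sequence $\emptyset \transDC{}{} \Set{e_1} \transDC{}{}\cdots\transDC{}{} \overline{t}$. For the step adding $e_i$, Condition~\ref{eq:ConfigContainement} is immediate, Condition~\ref{eq:correctCausalityUpdate} holds by the SSDC update (Lemma~\ref{lma:SSDCstateProp}), Condition~\ref{eq:NCNoConcurrency} holds because added pairs are self-loops $(e,e)$ while dropped pairs have an impossible first component so their intersection is empty, and Condition~\ref{eq:NotToMuchConcurrency} is vacuous since a single event is added and $\disa{}{}$ is irreflexive. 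The only substantive check is Condition~\ref{eq:OnlyEnabledEvents}, which reduces by the bridge computation to the bundle satisfaction and the disabling order $\disa{e_i}{e_j}\Rightarrow i<j$ already guaranteed by $t$.

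For the reverse inclusion I would start from a reachable $X\in\configurations{\dces{\xi}}$, refine its transition sequence into single-event steps using Lemma~\ref{lma:SSDCConfInTrans} (valid because $\dces{\xi}$ is an SSDC), and read off the order of addition as a candidate sequence $t=e_1\cdots e_k$ of distinct events of $E$ with $\overline{t}=X$. I would then verify that $t$ is an EBES trace: bundle satisfaction follows directly from Condition~\ref{eq:OnlyEnabledEvents} via the bridge computation, and the disabling order $\disa{e_i}{e_j}\Rightarrow i<j$ follows because, were the disabler $e_j$ already present at the step adding $e_i$, the self-loop $(e_i,e_i)$ would lie in $\caus{Y_{i-1}}$ and Condition~\ref{eq:OnlyEnabledEvents} would force the contradiction $e_i\in Y_{i-1}$. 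The main obstacle throughout is the careful bookkeeping of $\caus{X}$ — correctly reading ``unsatisfied bundle $\Rightarrow$ surviving impossible predecessor $x_i$'' and ``present disabler $\Rightarrow$ self-predecessor'' off the formula of Lemma~\ref{lma:SingleCausalState} — together with the justified passage from arbitrary (possibly multi-event) transitions to single-event steps, which is exactly what the SSDC refinement lemma provides.
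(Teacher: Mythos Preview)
Your proposal is correct and follows essentially the same approach as the paper: mutual inclusion via single-event transition sequences, using the SSDC machinery (Lemmas~\ref{lma:SingleCausalState}, \ref{lma:SSDCstateProp}, \ref{lma:SSDCConfInTrans}) to control the current causality relation and to refine arbitrary transitions into single-event steps. The only cosmetic difference is that for the disabling-order check in the reverse direction the paper appeals to the precedence relation of Definition~\ref{def:EBDCLposets} and Lemma~\ref{lma:EBESPrecedence}, whereas you argue directly from the self-loop $(e_i,e_i)\in\caus{Y_{i-1}}$; the content is identical.
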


\begin{proof}
	First, $\forall X \subseteq E \logdot X\in \configurations{\xi}\Longrightarrow X\in \configurations{\dces{\xi}}$.
	
	According to \S\ref{sec:EBES}, $X\in \configurations{\xi}$ means there is a trace $t = e_1,\ldots,e_n$ in $\xi$ such that $X = \bar{t}$. Let us prove that $t$ corresponds to a transition sequence in $\dces{\xi}$ leading to $X$. \Ie let us prove that there exists a transition sequence $\transDC{(X_0, \caus{X_0})} {\ldots} \transDC {}{(X_n,\caus{X_n})}$ with $X_0=\emptyset,\caus{X_0}=\incaus$ and $X_n=X$ such that $X_i = X_{i-1} \cup \Set{e_i}$ for $1 \leq i \leq n$, and $\caus{X_i}$ is defined according to Lemma~\ref{lma:SingleCausalState}. This means we have to prove that $\transDC{(X_{i-1}, \caus{X_{i-1}})} {(X_{i}, \caus{X_{i}})}$ for $1 \leq i \leq
n$.
	
	Since $X_i = X_{i-1} \cup \Set{e_i}$, we have $X_{i-1}\subset X_i$ and so Condition~\ref{eq:ConfigContainement} in Definition~\ref{def:DCTransition} holds.
	Next, let us prove that $\forall e\in X_i\setminus X_{i-1} \logdot  \{e'\mid (e',e)\in\caus{X_{i-1}}\}\subset X_{i-1}$, \ie $\{e'\mid (e',e)\in(\incaus\setminus\dropped{\emptyset,X_{i-1}})\cup\added{\emptyset,X_{i-1}}\} \subseteq X_{i-1}$.
	By Lemma~\ref{lma:SingleCausalState} and the Definition of $\dces{\xi}$, $\incaus$ contains only fresh (and impossible) events as causes and the elements $x \in X_i$ of bundles ${X_i}\buEnOpT{e}$ are droppers of these fresh events.
	But since each of these bundles is satisfied, each of these fresh events in $\incaus$ is dropped. Furthermore, there cannot be added causality in $\dces{\xi}$ for $e$, except disabling of $e$, but this is not possible since it occurs in a configuration. Therefore $\{e'\mid (e',e)\in(\incaus\setminus\dropped{\emptyset,X_{i-1}})\cup\added{\emptyset,X_{i-1}}\} \subseteq X_{i-1}$ for all $e \in X_i$ and all $1\leq i \leq n$.
	Condition~\ref{eq:correctCausalityUpdate} follows from Lemma~\ref{lma:SSDCstateProp}.
	Condition~\ref{eq:NCNoConcurrency} of Definition~\ref{def:DCTransition} holds by Definition~\ref{def:SingleStateDC}.
	Since in the transition $\transDC {(X_{i-1}, \caus{X_{i-1}})} {(X_i, \caus{X_i})}$ only one event---namely $e_i$---occurs, the last Condition~\ref{eq:NotToMuchConcurrency} is satisfied.
	
	On the other hand let $X\in\configurations{\dces{\xi}}$ and $\transDC{(X_0, \caus{X_0})} {\ldots} \transDC {}{(X_n,\caus{X_n})}$ such that $X_0=\emptyset, \caus{X_0}=\incaus$.
	By Lemma~\ref{lma:SSDCConfInTrans}, we can assume $e_{i+1}:=X_{i+1}\setminus X_i$.
	We have to show that $t = e_1,\ldots,e_n$ is a trace in $\xi$.
	By Definition~\ref{def:EBESconf}, this means $ \forall 1 \leq i, j \leq n \logdot \disa{e_i}{e_j} \implies i < j $ and $ \forall 1 \leq i \leq n \logdot \forall Y \subseteq E \logdot \buEn{Y}{e_i} \implies \overline{t_{i - 1}} \cap Y \neq \emptyset $.
	Assume $\disa{e_i}{e_j}$.
	By Definition~\ref{def:EBES2DCES}, we have $\adds{e_j}{e_i}{e_i}$.
	By Definition~\ref{def:EBDCLposets} and Lemma~\ref{lma:EBESPrecedence}, then $e_i<_X e_j$ and $\exists i'<n\logdot e_i\in X_{i'} \wedge e_j\notin X_{i'}$.
	But, since $e_j\in X$, the first Condition holds.
	Next assume $\buEn{Y}{e_i}$.
	By Definition~\ref{def:EBES2DCES}, there is a fresh and impossible cause $x_i$ for $e_i$ in the initial causality ($(x_i,e_i),(x_i,x_i)\in\incaus$), there is no dropper for $(x_i,x_i)$, and the elements of $Y$ are exactly the droppers for $(x_i,e_i)$ ($y\in Y\iff\drops{y}{x_i}{e_i}$).
	Since $e_i\in X_i$ and by Condition~\ref{eq:OnlyEnabledEvents} of Definition~\ref{def:DCTransition}, it follows $(x_i,e_i)\notin \caus{X_{i-1}}$ (since $x_i$ never becomes enabled).
	By Condition~\ref{eq:correctCausalityUpdate}, there must be a $j<i$ such that $\drops{e_j}{x_i}{e_i}$ (since $(x_i,e_i)\in\incaus$), thus we have $e_j\in\overline{t_{i - 1}} \cap Y$ and we are done.
\end{proof}

Finally we show that the encoding preserves posets.

\begin{lem}
	\label{lma:EBESintoEBDC}
	For each EBES $\xi$ there is a DCES, namely $\dces{\xi}$, such that $\posets{\xi} = \posets{\dces{\xi}}$.
\end{lem}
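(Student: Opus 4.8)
The plan is to reduce the claim to a configuration-wise comparison of precedence relations, using the two facts already established for the encoding. Set $\vartheta := \dces{\xi}$. By Lemma~\ref{lma:EmdeddingIsEBDC}, $\vartheta$ is an EBDC, so its posets are well defined through Definition~\ref{def:EBDCLposets}, i.e.\ $\posets{\vartheta} = \Set{(X, \leq_X) \mid X \in \configurations{\vartheta}}$. By Lemma~\ref{thm:EBESandEBDCconfigEqui}, $\configurations{\xi} = \configurations{\vartheta}$; in particular every configuration is a subset of $E$, since each fresh event $x_i$ carries the self-loop $\enab{x_i}{x_i}$ and is therefore impossible. Hence it suffices to show that for each common configuration $C$ the EBES order $\leq_C^{\xi}$ (the reflexive-transitive closure of $\prec_C$) and the EBDC order $\leq_C^{\vartheta}$ of Definition~\ref{def:EBDCLposets} coincide. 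As both are reflexive-transitive closures, I would reduce this to comparing the generating relations $\prec_C$ and $<_C$ on $C \times C$.

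The heart of the argument is then a direct unfolding of the three relations of $\dces{\xi}$ from Definition~\ref{def:EBES2DCES}. First I would note that the disjunct ``$e \to e'$'' of $<_C$ is vacuous on real events: $\incaus$ consists only of pairs $\enab{x_i}{e}$ together with the self-loops $\enab{x_i}{x_i}$, so no two events of $E$ are related by $\enab{}{}$. Thus for $e, e' \in C \subseteq E$ we have $e <_C e'$ iff $\addcause{e'}{e}{e}$ or $\exists c \in E \logdot \drops{e}{c}{e'}$. For the first disjunct, $\growingCausality = \Set{ \adds{e'}{e}{e} \mid \disa{e}{e'} }$ gives $\addcause{e'}{e}{e}$ iff $\disa{e}{e'}$. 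For the second, $\shrinkingCausality = \Set{ \drops{d}{x_i}{e} \mid d \in X_i \land \buEn{X_i}{e} }$ gives $\exists c \logdot \drops{e}{c}{e'}$ iff there is a bundle $X \in \B{e'}$ with $e \in X$. Combining, $e <_C e'$ iff $\disa{e}{e'}$ or $\exists X \in \B{e'} \logdot e \in X$, which is exactly the clause defining $e \prec_C e'$ for EBESs. So $<_C$ and $\prec_C$ agree on $C \times C$, and taking reflexive-transitive closures yields $\leq_C^{\xi} = \leq_C^{\vartheta}$ on $C$.

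Assembling the pieces, $\posets{\xi} = \Set{(C, \leq_C^{\xi}) \mid C \in \configurations{\xi}} = \Set{(C, \leq_C^{\vartheta}) \mid C \in \configurations{\vartheta}} = \posets{\dces{\xi}}$. I expect the only delicate point to be bookkeeping rather than genuine mathematics: one must keep the two $\leq_C$ notions apart, confirm that configurations never contain the auxiliary impossible events $x_i$ (so that restricting both relations to $C \times C$ discards nothing), and match the argument order of the triples $\adds{a}{c}{t}$ between the encoding and Definition~\ref{def:EBDCLposets}. Once the encoding is unfolded the two generating relations coincide syntactically, so no further reasoning about the transition semantics is needed beyond what Lemma~\ref{thm:EBESandEBDCconfigEqui} already supplies.
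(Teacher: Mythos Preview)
Your proposal is correct and follows essentially the same approach as the paper: both reduce to showing $\prec_C = <_C$ on each common configuration by unfolding Definition~\ref{def:EBES2DCES}, noting that the $\enab{e}{e'}$ disjunct is vacuous on real events and matching the adder and dropper clauses to disabling and bundle membership respectively. Your write-up is in fact slightly cleaner in making explicit the appeal to Lemma~\ref{lma:EmdeddingIsEBDC} and the reason configurations avoid the auxiliary events $x_i$.
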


\begin{proof}
	Let $p = (X,\leq_X)$, then $X\in \configurations{\xi}$ by the definition of posets of EBESs. Then according to Lemma~\ref{thm:EBESandEBDCconfigEqui}: $X \in\configurations{\dces{\xi}}$. On the other hand, let $\leq_X'$ be the partial order defined for $X$ in $\dces{\xi}$ as in Definition~\ref{def:EBDCLposets}. This means that we should prove that $\leq_X = \leq_X'$. Since $\leq_X, \leq_X'$ are the reflexive and transitive closures of $\prec_X, <_X$ respectively, we have to prove that $\prec_X = <_X$. In other words we have to prove $\forall e,e'\in X \logdot e \prec_X e' \Leftrightarrow e'<_X e$.
	
	Let us start with $e \prec_X e'\Longrightarrow e<_X e'$. According to \S\ref{sec:EBES} $e \prec_X e'$ means $\exists Y\subseteq E \logdot e \in \buEn{Y}{e}' \lor \disa{e}{e'}$. If $\exists Y\subseteq E \logdot e \in \buEnOpT{Y}{e}'$ then $\exists c\in E' \logdot \drops{e}{c}{e'}$, where $E'$ is the set of events of $\dces{\xi}$, by the definition of $\dces{\xi}$ (Definition~\ref{def:EBES2DCES}). This means $e<_X e'$ according to the definition of $<_X$ (Definition~\ref{def:EBDCLposets}). If $\disa{e}{e'}$, then $ \neg \disa{e'}{e}$ because otherwise $e$ and $e'$ are in conflict. This means $\addcause{e'}{e}{e}$ according to Definition~\ref{def:EBES2DCES} that means $e<_X e'$ according to Definition~\ref{def:EBDCLposets}.
	
	Let us consider the other direction: $e <_X e'\Longrightarrow e\prec_X e'$. $e<_X e'$ means $\exists c\in E' \logdot \drops{e}{c}{e'} \lor \addcause{e'}{e}{e}$, where $E'$ is the set of events of $\dces{\xi}$, according to the definition of $<_X$ in Definition~\ref{def:EBDCLposets}. The third option, where $\enab{e}{e'}$, is rejected, because all initial causes in $\dces{\xi}$ are the fresh impossible events. If $\exists c\in E' \logdot \drops{e}{c}{e'}$ then $\exists Y \subseteq E \logdot \disa{e\in Y}{e'}$ according to the definition of $\shrinkingCausality$ in $\dces{\xi}$. This means $e \prec_X e'$ by the definition of $\prec_X$ in \S\ref{sec:EBES}. If on the other hand $\addcause{e'}{e}{e}$, then $\disa{e}{e'}$ according to the definition of $\dces{\xi}$ that means $e \prec_X$ in \S\ref{sec:EBES}.
	
	So we have $\prec_X = <_X$ that implies $\leq_X =\leq'_X$. 
\end{proof}

\end{document}